\documentclass[letterpaper,english,11pt]{scrartcl}

\newif\ifreview\reviewfalse

\usepackage{standalone}

\usepackage{amsmath,amsfonts,amsthm,amssymb}

\allowdisplaybreaks

\usepackage{relsize}
\usepackage{mathtools}
\usepackage{braket}
\usepackage{aligned-overset}

\usepackage[margin=1in,bottom=1.3in]{geometry}
\usepackage{authblk}
\usepackage{enumitem,framed}

\usepackage{longtable}

\usepackage[linesnumbered,ruled,nokwfunc,longend]{algorithm2e}
\DontPrintSemicolon
\SetKwInOut{Input}{Input}\SetKwInOut{Output}{Output}
\SetKwProg{fetry}{foreach }{ try}{} 
\SetKwBlock{catch}{catch }{end}

\usepackage{graphicx}

\usepackage{tikz}
\usetikzlibrary{calc}

\usepackage{pgfplots}
\pgfplotsset{compat=1.18}
\usepgfplotslibrary{fillbetween}

\tikzstyle{vertex} = [shape=circle,draw=black]
\tikzstyle{namedVertex} = [shape=circle,draw=black]
\tikzstyle{vertexF} = [shape=circle,draw=black,fill=white]
\tikzstyle{namedVertexF} = [shape=circle,draw=black,fill=white]
\tikzstyle{namedVertexW} = [shape=circle,draw=white,fill=white]
\tikzstyle{edge} = [draw,->,ultra thick]
\tikzstyle{labeledNodeS}=[circle, color=black!75!white, draw, inner sep = 0.1em, minimum size = 1.5em, scale=1.25]
\tikzstyle{normalEdge}=[very thick, >=stealth]

\definecolor{colA}{rgb}{.8,0,0}
\definecolor{colB}{rgb}{0,0,.8}
\definecolor{colC}{rgb}{0,.5,.3}
\definecolor{colD}{rgb}{.8,0,0}
\definecolor{colE}{rgb}{.8,.8,0}
\definecolor{colF}{rgb}{.8,0,.8}

\colorlet{fcolAr}{colA!70!white}
\colorlet{fcolBr}{colB!70!white}
\colorlet{fcolCr}{colC!70!white}
\colorlet{fcolDr}{colD!70!white}
\colorlet{fcolEr}{colE!70!white}
\colorlet{fcolFr}{colF!70!white}

\newcommand{\colName}[1]{{%
    \ifthenelse{\equal{#1}{A}}{\color{red}XXX}{}%
    \ifthenelse{\equal{#1}{B}}{\color{colB}blue}{}%
    \ifthenelse{\equal{#1}{C}}{\color{colC}green}{}%
    \ifthenelse{\equal{#1}{D}}{\color{colD}red}{}%
    \ifthenelse{\equal{#1}{E}}{\color{colE}yellow}{}%
    \ifthenelse{\equal{#1}{F}}{\color{colF}pink}{}%
}}

\newcommand{\colNo}[1]{{%
    \ifthenelse{\equal{#1}{A}}{\color{red}XXX}{}%
    \ifthenelse{\equal{#1}{B}}{\color{colB}1}{}%
    \ifthenelse{\equal{#1}{C}}{\color{colC}2}{}%
    \ifthenelse{\equal{#1}{D}}{\color{colD}3}{}%
    \ifthenelse{\equal{#1}{E}}{\color{colE}4}{}%
    \ifthenelse{\equal{#1}{F}}{\color{colF}5}{}%
}}

\tikzstyle{fcolA}=[color=colA,opacity=.7]
\tikzstyle{fcolB}=[color=colB,opacity=.7]
\tikzstyle{fcolC}=[color=colC,opacity=.7]
\tikzstyle{fcolD}=[color=colD,opacity=.7]
\tikzstyle{fcolE}=[color=colE,opacity=.7]
\tikzstyle{fcolF}=[color=colF,opacity=.7]

\providecommand{\QUEUEWIDTH}{18pt}
\providecommand{\QUEUEUHEIGHT}{10pt}
\providecommand{\FLOWUWIDTH}{3.5pt}

\newif\ifdebug\debugfalse

\usepackage{ifthen}

\usepackage{inputenc}
\usepackage{listofitems}

\definecolor{DFcolA}{rgb}{.8,0,0}
\definecolor{DFcolB}{rgb}{0,0,.8}
\definecolor{DFcolC}{rgb}{0,.8,0}
\definecolor{DFcolD}{rgb}{.8,.8,0}
\definecolor{DFcolE}{rgb}{.8,0,.8}
\definecolor{DFcolF}{rgb}{0,.8,.8}

\colorlet{DFfcolAr}{DFcolA!70!white}
\colorlet{DFfcolBr}{DFcolB!70!white}
\colorlet{DFfcolCr}{DFcolC!70!white}
\colorlet{DFfcolDr}{DFcolD!70!white}
\colorlet{DFfcolEr}{DFcolE!70!white}
\colorlet{DFfcolFr}{DFcolF!70!white}

\tikzstyle{DFfcolA}=[color=DFcolA,opacity=.7]
\tikzstyle{DFfcolB}=[color=DFcolB,opacity=.7]
\tikzstyle{DFfcolC}=[color=DFcolC,opacity=.7]
\tikzstyle{DFfcolD}=[color=DFcolD,opacity=.7]
\tikzstyle{DFfcolE}=[color=DFcolE,opacity=.7]
\tikzstyle{DFfcolF}=[color=DFcolF,opacity=.7]

\providecommand{\flowcolorlist}{DFfcolCr,DFfcolFr,DFfcolCr,DFfcolDr,DFfcolEr,DFfcolFr}

\providecommand{\QUEUEWIDTH}{.3}
\providecommand{\QUEUEUHEIGHT}{1}
\providecommand{\FLOWUWIDTH}{4pt}

\providecommand{\FLOWSTYLE}{parallel}

\newcommand{\drawFlow}[7][]{%
	\setsepchar[/]{,}
	\readlist\flowcoms{#7}
	\readlist\collist\flowcolorlist
	
	\def\totwidth{0}
	\foreachitem \width \in \flowcoms{\pgfmathadd{\totwidth}{\width}\xdef\totwidth{\pgfmathresult}}
	\def\cumwidth{\totwidth}
	
	\foreachitem \width \in \flowcoms{
		\ifthenelse{\equal{\width}{0.0} \OR \equal{\width}{0}}{}{
			\itemtomacro\collist[\widthcnt]\currentcol
            \ifthenelse{\equal{\FLOWSTYLE}{parallel}}{
                \path(#2)to[#1]coordinate[pos=0](tempEdgeStart)coordinate[pos=1](tempEdgeEnd) (#3);
                \draw[\currentcol,line width=\width*\FLOWUWIDTH,dash pattern={on 1pt off #5*#4 on #6*#4 off #4}, dash phase=1pt]($(tempEdgeStart)!{(.5*\totwidth-\cumwidth+.5*\width)*\FLOWUWIDTH}!90:(tempEdgeEnd)$) to[#1] ($(tempEdgeEnd)!{(.5*\totwidth-\cumwidth+.5*\width)*\FLOWUWIDTH}!-90:(tempEdgeStart)$);
            }{
                \draw[\currentcol,line width=\cumwidth*\FLOWUWIDTH,dash pattern={on 1pt off #5*#4 on #6*#4 off #4}, dash phase=1pt](#2) to[#1] (#3);
            }
			\pgfmathsubtract{\cumwidth}{\width}
			\xdef\cumwidth{\pgfmathresult}
		}
	}
}

\newcommand{\drawEdgeFlow}[5][]{%
	\ifthenelse{\equal{#5}{}}{}{
		\setsepchar[,]{I}
		\readlist\flowblocks#5 
		\foreachitem\flowblock\in\flowblocks{%
			\setsepchar[,]{/}
			\readlist\flowparts\flowblock 
			\itemtomacro\flowparts[3]\flowsplit
			\drawFlow[#1]{#2}{#3}{#4}{\flowparts[1]}{\flowparts[2]}{\flowsplit}
		}
	}
}

\newcommand{\drawQueueSegment}[3]{
	\setsepchar[/]{,}
	\readlist\flowcoms{#3}
	\readlist\collist\flowcolorlist
	
	\coordinate(tempA)at($(#1)+(-.5*\QUEUEWIDTH,0)$);
	\foreachitem \width \in \flowcoms {
		\itemtomacro\collist[\widthcnt]\currentcol
		\fill[\currentcol](tempA) rectangle +(\width*\QUEUEWIDTH,#2*\QUEUEUHEIGHT);        
		\coordinate(tempA)at($(tempA)+(\width*\QUEUEWIDTH,0)$);
	}
}

\newenvironment{placeQueue}[2]{%
	\begin{scope}[rotate around={#2:(#1)}]
		\coordinate(queuebase)at(#1);%
	}{%
	\end{scope}%
}

\newcommand{\drawEdgeQueue}[4][]{%
	\ifthenelse{\equal{#4}{}}{}{
		\begin{placeQueue}{#2}{#3}
			\setsepchar[,]{I}
			\readlist\queueblocks#4 
			\foreachitem\queueblock\in\queueblocks{%
				\setsepchar[,]{/}
				\readlist\queueparts\queueblock 
				\itemtomacro\queueparts[2]\queuesplit
				\drawQueueSegment{queuebase}{\queueparts[1]}{\queuesplit}
				\coordinate (queuebase)at($(queuebase)+(0,\queueparts[1]*\QUEUEUHEIGHT)$);
			}                      
		\end{placeQueue}
	}
}

\ifdebug
\renewcommand{\drawEdgeFlow}[5][]{%
	\drawFlow[#1]{#2}{#3}{#4}{0}{1}{1}%
}
\fi

\usepackage{csvsimple}

\NewDocumentCommand{\CharF}{O{}}{%
	\ifthenelse{\equal{#1}{}}%
	{1}%
	{1_{#1}}%
}

\usepackage{thm-restate}
\usepackage{xspace}
\usepackage{color, colortbl}
\usepackage{hyperref}
\usepackage{cleveref}

\newcommand{\refsym}[1]{%
	\ensuremath{(%
		\ifthenelse{\equal{#1}{1}}{\ast}%
		{\ifthenelse{\equal{#1}{2}}{\#}%
		{\ifthenelse{\equal{#1}{3}}{\triangle}%
		{\ifthenelse{\equal{#1}{4}}{\bigcirc}%
		{\ifthenelse{\equal{#1}{5}}{\diamond}
		{\color{red}NaN}}}}}%
	)}%
}
\newcommand{\symoverset}[2]{\toverset{\refsym{#1}}{#2}}
\newcommand{\toverset}[2]{\overset{\text{#1}}{#2}}

\newcommand{\Croverset}[2]{\overset{\text{\Crefshort{#1}}}{#2}}
\newcommand{\Crefshort}[1]{%
	{%
		\Crefname{definition}{Def.}{Def.}%
		\Crefname{lemma}{Lem.}{Lem.}%
		\Crefname{proposition}{Prop.}{Prop.}%
		\Crefname{corollary}{Cor.}{Cor.}%
		\Crefname{theorem}{Thm.}{Thm.}%
		\Crefname{observation}{Obs.}{Obs.}%
		\Crefname{claim}{Cl.}{Cl.}%
		\Crefname{section}{Sec.}{Sec.}%
		\Crefname{subsection}{Sec.}{Sec.}%
		\Crefname{example}{Ex.}{Ex.}%
		\Crefname{equation}{Eq.}{Eq.}%
		\Cref{#1}%
	}%
}

\makeatletter
\DeclareRobustCommand{\Crefnosort}[1]{%
  \begingroup\@cref@sortfalse\Cref{#1}\endgroup
}
\makeatother

\usepackage{stmaryrd}
\usepackage[font={small,it}]{caption}

\definecolor{darkgreen}{rgb}{0.0, 0.5, 0.0}

\newcommand{\R}{\mathbb{R}}
\newcommand{\Rnn}{\mathbb{R}_{\geq 0}}
\newcommand{\N}{\mathbb{N}}

\newcommand{\comment}[1]{}
\newcommand{\infunc}{B}
\newcommand{\func}{A}
\newcommand{\Inter}{\R}
\newcommand{\objfunc}{\vartheta}
\newcommand{\confunc}{\xi}
\newcommand{\emptyarg}{\,\cdot\,} 

\newcommand{\eflow}{y}
\newcommand{\wflow}{f}
\NewDocumentCommand{\sysop}{O{\ }}{system-optimal{#1}}
\NewDocumentCommand{\umini}{O{\ }}{\ensuremath{\ell^u}-minimal{#1}}
\NewDocumentCommand{\mini}{O{\ }}{\ensuremath{\ell}-minimal{#1}}

\newcommand{\GAS}{{\tilde{\GA}}}
\newcommand{\GVS}{{\tilde{\GV}}}
\newcommand{\ssource}{{\tilde{\source}}}
\newcommand{\sdest}{{\tilde{\dest}}}

\newcommand{\q}{q}

\newcommand{\decwttime}{\hat{\Psi}}
\NewDocumentCommand{\Nl}{O{\trav(\cdot,\cdot)}}{\ell^{#1}} 
\newcommand{\startint}{{(-\infty,\,}}

\usepackage{bm}

\definecolor{LightCyan}{rgb}{0.88,1,1}

\makeatletter
\def\mybig#1{{\hbox{$\left#1\vbox to23\p@{}\right.\n@space$}}}
\makeatother

\newcommand{\dup}[2]{\langle#1,#2\rangle}

\theoremstyle{definition}
\newtheorem{definition}{Definition}[section]
\newtheorem{assumption}[definition]{Assumption}

\theoremstyle{plain}
\newtheorem{theorem}[definition]{Theorem}

\newtheorem{lemma}[definition]{Lemma}
\newtheorem{corollary}[definition]{Corollary}
\newtheorem{claim}{Claim}
\newtheorem{subclaim}{Subclaim}[claim]

\Crefname{claim}{Claim}{Claims}
\Crefname{assumption}{Assumption}{Assumptions}

\theoremstyle{remark}

\newtheorem{example}[definition]{Example}

\newenvironment{introthm}[2][]{\begin{framed}\textbf{\Cref{#2}}\ifthenelse{\equal{#1}{}}{}{ (#1)}:\\}{\end{framed}}

\newlist{thmparts}{enumerate}{1}
\setlist[thmparts]{
	label=\alph*)
}

\usepackage{xstring}
\usepackage{etoolbox}
\makeatletter
\apptocmd{\cref@getref}{\xdef\@lastusedlabel{#1}}{}{error}

\Crefformat{thmpart}{%
	\StrCount{\@lastusedlabel}{:}[\LastColonPos]%
	\StrBefore[\LastColonPos]{\@lastusedlabel}{:}[\ThmLabel]%
	\nameCref{\ThmLabel}~#2\ref*{\ThmLabel}#1#3%
}
\Crefmultiformat{thmpart}{%
	\StrCount{\@lastusedlabel}{:}[\LastColonPos]%
	\StrBefore[\LastColonPos]{\@lastusedlabel}{:}[\ThmLabel]%
	\nameCref{\ThmLabel}~#2\ref*{\ThmLabel}#1#3%
}{ and~#2#1#3}{, #2#1#3}{ and~#2#1#3}
\makeatother

\newenvironment{proofClaim}[1][]{\ifthenelse{\equal{#1}{}}{\begin{proof}}{\begin{proof}[#1]}}{\end{proof}}

\newlist{proofbycases}{enumerate}{1}
\setlist[proofbycases]{
	leftmargin=0em,
	labelwidth=-.5em,
    parsep=0pt,
    listparindent=\parindent,
	label=\boldmath\bfseries\sffamily\arabic*. Case: \protect\casedescr:,
	ref=\arabic*,
	align=left
}

\newcommand{\proofitem}[1]{\def\pidescr{#1}%
	\item}
\newlist{structuredproof}{enumerate}{3}
\setlist[structuredproof]{
	leftmargin=1em,
    labelwidth=.3em,
    parsep=0pt,
    listparindent=\parindent,
	label=\boldmath\bfseries\sffamily\protect\pidescr:,
	align=left
}
 \setlist[structuredproof,2]{
     leftmargin=1em
 }
 \setlist[structuredproof,3]{
     leftmargin=1em
 }

\newenvironment{proofbyinduction}{\begin{description}[leftmargin=0em,parsep=0pt,listparindent=\parindent]}{\end{description}}
\newcommand{\inductionclaim}{\item[Induction Claim: ]}
\newcommand{\basecase}[1]{\item[Base Case ({\boldmath\bfseries#1}):]}
\newcommand{\basecases}[1]{\item[Base Cases ({\boldmath\bfseries#1}):]}
\newcommand{\inductionstep}[1]{\item[Induction Step ({\boldmath\bfseries#1}):]}

\DeclareFontFamily{OT1}{pzc}{}
\DeclareFontShape{OT1}{pzc}{m}{it}{<-> s * [1.10] pzcmi7t}{}
\DeclareMathAlphabet{\mathpzc}{OT1}{pzc}{m}{it}
 
\newcommand{\myparagraph}[1]{\paragraph{#1.}}

\makeatletter
\newcommand{\fixed@sra}{$\vrule height 2\fontdimen22\textfont2 width 0pt\shortrightarrow$}
\newcommand{\shortarrow}[1]{%
  \mathrel{\text{\rotatebox[origin=c]{\numexpr#1*45}{\fixed@sra}}}
}
\makeatother

\newcommand{\norm}[1]{\lVert #1 \rVert}
\newcommand{\abs}[1]{\lvert#1\rvert}

\NewDocumentCommand{\wlg}{O{\ }}{w.l.o.g.#1} 

\newcommand{\wrt}{w.r.t.\ }

\newcommand{\eps}{\varepsilon}

\usepackage{comment}

\newcommand{\di}{\;\mathrm{d}}

\newcommand{\prices}{p}

\newcommand{\arc}{e}

\newcommand{\g}{g}
\newcommand{\vot}{\gamma}
\newcommand{\exit}{T}
\newcommand{\arr}{A}
\newcommand{\trav}{D}

\newcommand{\hori}{\R}

\newcommand{\Routes}{\mathcal{W}}
\newcommand{\wa}{w}
\newcommand{\inflow}{r}

\newcommand{\wir}{\Lambda}
\newcommand{\cl}[1]{\mathrm{cl}(#1)}

\newcommand{\id}{\text{id}}

\newcommand{\GA}{E}
\newcommand{\GV}{V}
\newcommand{\source}{s}
\newcommand{\dest}{{d}}
\NewDocumentCommand{\sink}{O{\ }}{destination#1}
\NewDocumentCommand{\Sink}{O{\ }}{Destination#1}
\NewDocumentCommand{\wellposed}{O{\ }}{well-posed#1}
\newcommand{\edgesFrom}[1]{\delta^+(#1)}
\newcommand{\edgesTo}[1]{\delta^-(#1)}
\NewDocumentCommand{\MeasFuncUInt}{O{u}}{L^{0,#1}_+(\hori)}

\NewDocumentCommand{\stwalk}{O{\source}O{\dest}}{\ensuremath{#1,#2}-walk}
\NewDocumentCommand{\stpath}{O{\source}O{\dest}}{\ensuremath{#1,#2}-path}
\newcommand{\stwalki}{\stwalk[\source_i][\dest_i]}

\newcommand{\refrunind}[2]{%
\ifthenelse{\equal{#2}{1}}{
	\ensuremath{%
		\ifthenelse{\equal{#1}{1}}{n}%
		{\ifthenelse{\equal{#1}{2}}{m}%
		{\ifthenelse{\equal{#1}{3}}{j}%
		{\ifthenelse{\equal{#1}{4}}{l}%
		{\ifthenelse{\equal{#1}{5}}{i}
		{\color{red}NaN}}}}}%
	}%
 }
{
\ensuremath{%
		\ifthenelse{\equal{#1}{1}}{N}%
		{\ifthenelse{\equal{#1}{2}}{M}%
		{\ifthenelse{\equal{#1}{3}}{J}%
		{\ifthenelse{\equal{#1}{4}}{L}%
		{\ifthenelse{\equal{#1}{5}}{I}
		{\color{red}NaN}}}}}%
	}%
}
}
 
\newcommand{\n}[1]{\ifthenelse{\equal{#1}{}}{\refrunind{1}}{\refrunind{#1}{1}}}
\newcommand{\capn}[1]{\ifthenelse{\equal{#1}{}}{\refrunind{1}}{\refrunind{#1}{2}}}

\newcommand{\Pf}{P}
\newcommand{\op}{\nabla}

\newcommand{\diffe}{\Delta}
\newcommand{\ain}{\rho}

\newcommand{\wto}{\rightharpoonup}

\NewDocumentCommand{\seql}{O{1}O{\Routes}O{L(\hori)}}{\otimes^{#1}_{#2}#3}
\newcommand{\const}{C}

\newcommand{\ofeas}{\mathcal{M}}
\NewDocumentCommand{\edom}{mO{}}{\mathcal H_{#1}^{#2}}

\newcommand{\wttime}{{\Psi}}
\newcommand{\ttime}{\hat{\Psi}}
\NewDocumentCommand{\auto}{O{\ }}{autonomous#1}
\NewDocumentCommand{\Aauto}{O{\ }}{An autonomous#1}
\newcommand{\aauto}{an autonomous }
\newcommand{\Auto}{Autonomous }
\newcommand{\mto}[1]{\overset{#1}{\to}}
\NewDocumentCommand{\Dwg}{O{\wa}O{\g}}{\mathfrak D_{#1}^{#2}}

\newcommand{\DestCyc}{\mathcal{C}^\dest}
\usepackage{ifthen}

\usepackage{xr}
\makeatletter
\newcommand{\oref}[1]{%
  \@ifundefined{r@#1}{%
   {\cite[\Cref*{FD-ext-#1}]{GHS24FD}}%
  }{%
   \Cref{#1}%
  }%
}
\makeatother

\newcommand{\ourref}[1]{%
\ifstrequal{#1}{lem: elluExistenceProperties}{%
\cite[\Cref*{FD-ext-lem: elluExistencePropertiesShort}]{GHS24FD}%
}%
{%
\cite[\Cref*{FD-ext-#1}]{GHS24FD}%
}%
}

\Crefname{rsttheorem}{Theorem}{Theorems}
\Crefname{parttheorem}{Theorem}{Theorems}
\Crefname{innerrsttheorem}{Theorem}{Theorems}

 \usepackage{apptools}

\ifreview
    \usepackage[disable]{todonotes}
\else
    \usepackage[textsize=tiny,textwidth=2cm,shadow,loadshadowlibrary]{todonotes}
\fi
\usepackage{booktabs}
\newcommand{\lgcom}[2][]{\todo[color=green!70!blue!60]{LG: #2}} 
\newcommand{\jscom}[1]{\todo[color=blue!50!white]{JS: #1}} 

\ifreview
    \newcommand{\BigPicture}[2][0]{#2}
\else
    
    \newcommand{\BigPicture}[2][0]{#2}
\fi

\title{Tolls for Dynamic Equilibrium Flows}

\ifreview
    \author{}
    \date{\vspace{-2cm}}
\else
    \author{Lukas Graf, Tobias Harks and Julian Schwarz} 

    \affil{\small University of Passau, Faculty of Computer Science and Mathematics, 94032 Passau\\
    \href{mailto:julian.schwarz@uni-passau.de}{\{\texttt{lukas.graf,tobias.harks,julian.schwarz\}@uni-passau.de}}}
\fi

 \newcommand{\leb}{\sigma}
  \newcommand{\eqperdef}{\overset{\text{def}}{=}}
 \newcommand{\defpereq}{\eqperdef}
 
 \newcommand{\SimpCyc}{\mathcal{C}^{\mathrm{simp}}}
 \newcommand{\Indi}{\mathbf{1}}
 \newcommand{\tEnd}{t_f}
 
\newtheorem{innerrstassumption}{Assumption}
\newenvironment{rstassumption}[1]
{\renewcommand\theinnerrstassumption{#1}\innerrstassumption}
{\endinnerrstassumption}
\usepackage{nicefrac}
 
\begin{document}

\maketitle
\begin{abstract}   
 We consider dynamic network flows and study the following  question:
Which dynamic edge flows can be implemented as
tolled dynamic equilibrium flows?
We study this question for the ``heterogeneous-user'' model, where the flow particles 
are partitioned into populations characterized by their own 
source,destination-pairs and a cost function associating with any walk and departure time some costs. 
As our two main results, we first provide 
a duality-based characterization of implementability of dynamic edge
flows for the multi-source, multi-\sink case. 
Secondly, we derive both, 
a combinatorial and duality-based characterization of implementability of dynamic edge
flows for the multi-source, single-\sink case. 
Both results are derived under a fairly general network loading model. 
For the proof, we make several
technical contributions:
We formulate a novel infinite dimensional optimization problem, where the goal is to minimize the aggregated costs of the particles with respect to the fixed network loading induced by the given edge flow. This requires the recently introduced concept
of \auto network loadings (cf.~Graf et al.~\cite{GHS24FD}) for which we show several new structural insights. 
In particular, compared to~\cite{GHS24FD}, we give an alternative (tighter) characterization of the existence of \auto network loadings for our setting 
by deriving a generalization of a result of M.A.~Zarecki\u{\i} on the Lusin~$N^{-1}$ property of absolutely continuous monotone functions which may also be of independent interest. 
These insights  allow us to prove the stated characterizations under the assumption of strong duality. 
Finally, for the case of a single-\sink[,] we are able to provide a non-trivial proof that this assumption is always fulfilled for finitely supported edge flows with costs representing weighted travel times.\footnote{Parts of these results have been published in the proceedings of of the 2025 Annual ACM-SIAM Symposium on Discrete Algorithms (SODA25)~\cite{graf2025tolls} for the more restricted setting of single-source, single-\sink networks and weighted travel times as private costs. }

\end{abstract} 

\ifreview 
\vspace{2cm}
\else
\clearpage
\tableofcontents
\clearpage
\fi

\section{Introduction}   
Traffic congestion in urban areas has been recognized
as a critical factor not only affecting the transportation sector but also the economic and social life of people. 
It causes increased travel delays and higher chances of traffic accidents but it also aggravates environmental pollution in terms of increased energy consumption and increasing pollutant emissions. A prominent mechanism
to alleviate congestion effects is to impose congestion prices. The idea here is to charge anonymous tolls for traversing
edges of the network in order to influence the resulting equilibrium flow so that certain desiderata such as low travel times, low emissions, or enough capacities for
emergency safeways are guaranteed; see Yang et al.~\cite{Yang2020} for recent work quantifying the effect of congestion prices on traffic speeds and welfare gains for the Bejing traffic district. Congestion or toll pricing has a long history in the 
economics literature as a mechanism to implement equilibria with desirable properties (see Pigou~\cite{Pigou20} and Knight~\cite{Knight1924} for early works on the subject) but in the context of traffic flows it has been mainly applied 
to static traffic assignment problems, implementing Wardrop equilibria in static flows. 
For this model, the seminal results by Cole, Dodis and Roughgarden~\cite{ColeDR03}, Fleischer, Jain and Mahdian~\cite{Fleischer04}, Karakostas and Kollioupous~\cite{Karakostas04} and Yang and Huang~\cite{Yang04}
give complete characterizations of static edge flows that are implementable as toll-based Wardrop equilibria even if the users have heterogeneous preferences over travel times and paid tolls.\footnote{A toll-based Wardrop equilibrium is an ordinary Wardrop equilibrium with respect to changed travel time functions, where the static edge tolls along the chosen path are added to the path travel time.}

In the domain of traffic modeling it has been recognized  that static flow models are often too coarse to accurately model realistic traffic congestion phenomena, especially because the inherent tempo-spatial propagation of vehicles through the network is neglected in static models.
In this regard,  \emph{dynamic flows}  have emerged as the predominant framework in the field as an accurate but often very complex model to describe traffic flows.  
Since their introduction by by Ford and Fulkerson~\cite{Ford62}, 
there has been an exciting development in the theoretical computer science and mathematics community in terms of our understanding of 
dynamic flows in the context of optimization (see e.g.~\cite{AndersonP84,Gale59,FleischerT98,Philpott90}) as well as from a game theoretic perspective (see e.g.~\cite{CominettiCL15,CominettiCO22,GHKM23,GHS18,Koch11,OlverSK21,OlverSK26,Vickrey69}).

\subsection{Related Work}
In spite of the aforementioned advancement and in contrast to the static case, only few papers investigate
the power of (dynamic) tolls for dynamic flows  from a purely theoretical perspective. 
While the surveys by Yang and Huang \cite{Yang2005} and Lombardi et al.~\cite{Lombardi21} characterize the literature on dynamic road pricing as extensive -- particularly within traffic engineering literature -- Lombardi et al.\ point out that 
``[i]n almost all approaches evaluated in this survey, the overall structure consists of a
pricing strategy followed by a simulation study that evaluates the pricing strategy.''
In this regard, we focus in the following on works that, 
at a minimum, provide a mathematical model together with formal statements. 

Frascaria and Olver~\cite{FrascariaO22} consider single-source, single-destination networks with travel times based on the Vickrey model 
and a fixed volume of homogeneous users with free departure time choice. 
They tackle the problem of computing a corresponding dynamic flow that minimizes the total costs, which are the sum of travel time and scheduling costs where the scheduling costs are incurred based on the arrival times at the destination. Moreover, they show that the optimal flow that they compute is implementable as tolled dynamic equilibrium by defining an infinite dimensional linear program and then using feasible dual variables as tolls. 
Note that for their proof, it is essential that the underlying travel times are constant for the optimal flow.
In addition, they also show how their result carries over to the case of fixed departure times.  
The latter result  falls within our model, but   only demonstrates the implementability of a specific flow 
for a set of homogeneous users with a specific cost function and common source and destination. 
In contrast, we allow for heterogeneous users with generic costs and characterize the implementability of any flow 
within multi-source, multi-destination networks.

 Rosner, Schröder, and Vargas Koch~\cite{VargasKoch2025nashflowstimetolls} recently investigated the structure of equilibria arising in networks equipped with constant edge tolls within the Vickrey queuing model under constant network inflow rates.
They propose a procedure based on a linear programming formulation to compute steady states, that is, phases of an equilibrium flow that persist for all future times once reached. 
Moreover, they show that the resulting tolled equilibria need not be unique \wrt induced costs and may fail to reach a steady-state phase within finite time.

Wie and Tobin~\cite{WieTobin1998} considered a dynamic traffic assignment model based on an optimal control formulation, where the goal is to minimize the total travel time. The state variables of this formulation are the edge flow volumes and the control variables are the edge inflows. Under a strong differentiability condition on the queue exit-flow function and a convexification of the non-linear problem, they
prove that an optimal solution satisfies a standard constraint qualification condition which yields the existence of associated dual variables that are used as dynamic tolls. Note, that these tolls depend on the users' destination (i.e.\ are not anonymous) and only apply to the case of homogeneous users. Moreover, the Vickrey model does not satisfy the differentiability condition and, thus, their results only apply to their specific model.

 Ma, Ban and Szeto~\cite{Ma2017} considered a so-called double queue formulation (queues at the entry and exit of an edge) and also argued
 that an optimal solution with respect to a slightly different objective (accounting also  for emissions) can be implemented by tolls. They argue that a corresponding optimal control formulation can be used to derive dual variables. However, no formal proof of the existence of an optimal control solution, nor
of any regularity condition  is given to support these claims.

Yang and Meng~\cite{Yang1998}  and also Yang and Huang~\cite[Chapter 13]{Yang2005} studied a model where time is discretized leading to a space-time expanded network. This reduces the toll problem to a static problem and they derived tolls implementing an optimal (discrete-time) flow using the standard 
finite-dimensional linear programming formulation as used in~\cite{Fleischer04, Karakostas04, Yang04}.
For static models, Harks and Schwarz~\cite{NonconvexPricing} presented a general duality framework which generalized the previously mentioned works~\cite{ColeDR03,Fleischer04,Karakostas04,Yang04}. Their framework, however, only applies to finite-dimensional resource allocation problems and is, thus, not applicable to the model considered in this paper. There are (strong) duality results for variants of system-optimal dynamic flow problems known, see
Anderson and Philpott~\cite{AndersonP84}, Philpott~\cite{Philpott90} and Koch et al.~\cite{KochN14,KochNS11}.
All these models, however, assume constant flow-independent travel times.

We  recently introduced in~\cite{GHS24FD}  the concept of \auto network loadings, where walk inflow rates are loaded in the network according to a fixed (i.e.~flow independent) travel time function. 
 We then showed that these \auto network loadings are a useful tool for studying flow models with flow dependent travel times as well: More precisely, we considered the same dynamic flow model as in this paper and, first, gave  a characterization of those  walk inflow rates that lead to feasible dynamic edge flows after loading according to the fixed travel times. 
 Based on this and several other structural results on \auto flows, we ultimately 
  derived a flow
decomposition theorem for dynamic $s$,$d$ edge flows with general flow dependent travel times.

At various places in our proofs in this work, we will use the concept of \auto network loadings. 
In this regard, we use the results which we derived in~\cite{GHS24FD} but also derive new structural results on \auto flows.

\subsection{Our Results}
As described above, for dynamic equilibrium flows -- regardless of which network loading model is chosen --  not much is known 
regarding the power of dynamic congestion pricing.
In this paper, we consider the fundamental question of 
which dynamic network flows are implementable
as a tolled dynamic equilibrium, i.e.~for which flows there exist anonymous dynamic tolls on the edges such that a tolled dynamic equilibrium exists that induces this flow.  
We will consider this basic question for heterogeneous populations, where
each population  is characterized by a source,destination-pair and a private flow- and time-dependent
cost function for each walk.  
This includes in particular the case in which all users aim to minimize their travel time but 
  different populations
come with  different valuations of time (VoT), that is, they value spent money (due to to potential tolls)
and perceived travel times differently. 
Moreover, all of our results are based on a quite general network-loading model which includes
the linear edge-delay model and the Vickrey queuing model as special cases.
 
 We summarize our main results in the following. As a key step in our approach, we 
introduce an infinite dimensional program~\eqref{opt: Master} parameterized in the flow $u \in L_+(\hori)^\GA$ which we aim to implement. 
Under natural assumptions, we then show two characterizations: One for the multi-destination case (where different populations may have different destinations) and one for single-destination case (where all populations share the same destination):

\begin{introthm}[Multi-Destination - Duality Based Characterization]{thm: mainMSMS}  
For heterogeneous users in multi-source, multi-\sink networks with \eqref{opt: Master} admitting strong duality \wrt $L_+^\infty(\hori)^\GA$, the following statements are equivalent:
    \begin{enumerate}
        \item $u$ is implementable via bounded tolls $\prices \in L_+^\infty(\hori)^\GA$.
        \item  There exists a walk inflow $h^*$ which is optimal for  \eqref{opt: Master} and induces $u$. 
 \end{enumerate}
\end{introthm}

\begin{introthm}[Single-Destination -- Duality Based Characterization]{thm: mainMSSS}  
For heterogeneous users in multi-source, single-\sink networks with \eqref{opt: Master} admitting strong duality \wrt $\MeasFuncUInt$, the following statements are equivalent:
    \begin{enumerate}
        \item $u$ is implementable via tolls $\prices \in \MeasFuncUInt$.
        \item  There exists a walk inflow $h^*$ which is optimal for  \eqref{opt: Master}  and induces $u$. 
 \end{enumerate}
\end{introthm}
In the above, $\MeasFuncUInt$ denotes the set of all admissible tolls, that is, nonnegative real-valued tolls that, while not necessarily being bounded themselves, lead to finite total toll-induced costs $\sum_{\arc\in\GA}\int_\R\prices_\arc\cdot u_\arc\di\leb<\infty$ for the flow $u$. 
In this regard, note that the prerequisite of strong duality \wrt $L_+^\infty(\hori)^\GA$ in the multi-destination case is a stronger condition than the prerequisite of strong duality \wrt $\MeasFuncUInt$ in the single-destination case as  $L_+^\infty(\hori)^\GA$ can be seen as a subset of  $\MeasFuncUInt$. 

In the single-\sink case, we are also able to extend the above duality-based characterization via a combinatorial condition:

\begin{introthm}[Single-Destination -- Combinatorial Characterization]{thm: mainSingleSink}   
For heterogeneous users in multi-source, single-\sink networks with \eqref{opt: Master} admitting strong duality \wrt $\MeasFuncUInt$, the following statements are equivalent:
    \begin{enumerate}
        \item $u$ is implementable via tolls $\prices \in \MeasFuncUInt$.
        \item  There exists a walk inflow $h^*$ which is optimal for  \eqref{opt: Master}  and induces $u$. 
        \item Any flow-carrying cycle starting at the destination must have zero private costs for every population.
 \end{enumerate}
\end{introthm}
For the above situation, this result has, in particular, the following consequences: First, 
if there are no outgoing edges from the common destination, 
every  multi-source, single-\sink flow is implementable. 
Second, as long as free-flow travel times are strictly  positive, any dynamic multi-source, single-\sink flow~$u$ is implementable iff it has no outflow from the destination. 
Note that for homogeneous populations the latter statement can also be shown more directly
by adopting the approach of Fotakis and Spirakis~\cite{FotakisS08} 
for the static case.
Indeed, based on their approach, 
we show in  \Cref{thm:ImplementabilityHomogeneous} how to explicitly construct implementing tolls in this setting. 
Those tolls are then continuous and can even be chosen piece-wise linear and computed in finite time 
if the travel times are piecewise-wise linear as well. 
 

Finally, since our characterizations heavily rely on the master problem \eqref{opt: Master} admitting strong duality, 
we derive a sufficient condition for this duality to hold. 
\begin{introthm}[Single-Destination - Strong Duality]{thm: MasterZeroDual}   
For heterogeneous users in multi-source, single-\sink networks, the master problem \eqref{opt: Master} admits strong duality \wrt $L_+^\infty(\hori)^\GA$ 
if there exists a constant 
such that for any feasible solution the costs of any used walk are bounded by this constant.
\end{introthm}
Remark that the sufficient condition for strong duality given in the above theorem is in particular fulfilled if the private costs represent weighted travel times and $u$ has finite support. This is, because any feasible solution for \eqref{opt: Master} may only use walks and departure times for which the corresponding arrival time at the destination is contained in the support of~$u$.

\subsection{Technical Contributions}
The proofs of our main results heavily rely on the formulation of the master problem~\eqref{opt: Master}. 
Given a dynamic edge-load vector~$u \in L_+(\hori)^\GA$, 
we  define the following infinite dimensional optimization problem of the abstract form

 \begin{align}
    \inf_{h} \; & \sum_{\wa \in \Routes}\dup{\wttime_\wa (u,\cdot)}{h_\wa }\tag{${\mathrm{P}}(u)$}\label{opt: Master}  \\
    \text{s.t.: } & \ell^u(h) \leq u \label{ineq: Master}\\
    &(h_\wa)_{\wa \in \Routes} \in \edom{\Routes}[u] \cap \wir.\nonumber
\end{align}

Here, $\Routes$ contains all tuples of the form $\wa=(\hat{\wa},i)$ with $\hat{\wa}$ being a finite \stwalki{} and $i \in I$ some commodity. 
The decision variables of~\eqref{opt: Master} are the walk inflow rates $h_{\wa},\wa \in \Routes$ where $h_{\wa}$  
represents the walk inflow rate of commodity $i$ into the walk $\hat{\wa}$. 
The value~$\wttime_{\wa}(u,t)$ denotes for any time $t \in \hori$ and walk $\hat{\wa}$ 
the private costs for commodity $i$. 
In this regard, 
the objective   aggregates over all walks~$\wa$ and commodities~$i$ the 
experienced private costs of particles sent under $h$ into the different walks under the fixed traversal times and induced private costs of $u$.  
The key idea in the definition of~\eqref{opt: Master} is that feasibility of a walk inflow rate~$h$ is defined with respect to the  \emph{fixed} travel times induced by the flow~$u$ rather than with respect to the travel times induced by~$h$.
This modeling step requires the concept of \auto network loadings introduced in our previous work~\cite{GHS24FD}. 
Roughly speaking, $\ell^u_{\wa}(h_{\wa})$  describes how the particles 
sent under $h_{\wa}$ into the walk $\hat{\wa}$  would hypothetically propagate throughout
the network under the fixed travel times of~$u$. However, not every walk inflow rate leads to a well-defined edge flow when loaded according to~$u$ (cf.~\oref{exa: noarcflow}). Thus, the feasibility set of~\eqref{opt: Master} is exactly tailored to address this problem 
by intersecting 
the  set  of all possible walk inflow rates~$\wir$   further with the maximal set~$\edom{\Routes}[u]$ of  walk inflow rates leading to a well-defined edge flow.  
 Note that, otherwise,~\eqref{opt: Master}
would not be well-defined over the  infinite dimensional space of possible walk inflow rates.

\vspace{.5\topskip}
In \textbf{\oref{sec:uBasedNetworkLoadings}~(\nameref*{sec:uBasedNetworkLoadings})}, we 
derive several structural insights on \auto network loadings. We start by
considering in \Cref{sec:uBasedNetworkLoadings:Exis} the existence of \auto network loadings corresponding to a walk inflow rate into a walk~$\wa$. In  \ourref{lem: elluExistenceProperties}, we characterized the latter   for an underlying travel time function $\trav$ as follows: 
For all edges $\arc$ on the walk $\wa$, no flow of positive measure is sent into the walk  in such a way that these flow particles all arrive during a null set of times under $\trav$. 
In this paper, we show that for any edge inflow rate vector $\g$ that admits an edge outflow rate vector~$\g^-$, 
 the aforementioned property  is automatically  satisfied for null sets consisting only of arrival times at~$\arc$ such that the corresponding intermediate arrival times at this and all previous edges coincide with times of positive inflow under~$\g$ (\Cref{thm: arrLusin} and \Cref{cor: LuExAltChara}).
This is in particular applicable when we consider a flow $u$ fulfilling flow conservation as this guarantees the existence of edge outflow rates, as any edge outflow is the edge inflow of other edges. 
Thus, we end up with a tighter characterization of \auto network loadings for the \auto network loading induced by $u$. 
 In order to prove this, we derive in \Cref{lem: absconLus} a generalization of a result of M.A.\ Zarecki\u{\i} on the Lusin~$N^{-1}$ property of absolutely continuous monotone functions which may also be of independent interest.  
By exploiting the existence of an edge outflow vector $\g^-$, we show in \Cref{lem:  g>0Arr'>0} that for all 
walks and almost all departure times at which one arrives at all edges on the walk only if they have positive inflow under $\g$,  
the corresponding arrival time function has a positive derivative. 
Note that \Cref{lem:  g>0Arr'>0} is a  statement about general network loadings and, hence, may also be relevant outside the context of \auto network loadings.

Next, we establish in \oref{lem: aggCostsVSwalkCosts} that  the 
walk- and edge-based definitions of the total cost coincide. This will be important in \Cref{sec: SingleSinkImpl,sec:CharImplementability}, where it is necessary to switch between these two definitions.

In preparation of the aforementioned combinatorial characterization of implementable multi-source, single-\sink flows, 
we show in  \Cref{lem: DifferenceGeneral} that for any feasible solution~$\tilde{h}$ of the master problem with corresponding induced \auto 
  edge flow $\tilde{\g}$,  the difference  $u-\tilde{\g}$ is decomposable into flows on zero-cycles and $\dest$-cycles.   

Finally, we 
show  in \Cref{claim: ZeroDualityGapTildeH} the existence of a largest common flow of a walk and edge flow in the sense that this flow sends as much of the  walk flow as possible without exceeding the edge load of the edge flow. 
This result will play a crucial role for showing that~\eqref{opt: Master} fulfills strong duality in the single-\sink case.

\vspace{.5\topskip}
In \textbf{\oref{sec:Implementability}~(\nameref*{sec:Implementability})}, we come back to the question of implementability. 

In \oref{sec:CharImplementability}, we 
demonstrate that implementability is tightly connected to the master problem~\eqref{opt: Master}.
For this, we consider two types of strong duality of the mater problem \eqref{opt: Master}, depending on which set of dual functions 
we allow. While we allow for \Cref{thm: mainMSSS} in the multi-source, single-\sink case the set of all admissible tolls as dual functions, we consider for  multi-source, multi-\sink networks in \Cref{thm: mainMSMS} the only subset of admissible tolls that are uniformly bounded.  
Under the assumption of \eqref{opt: Master} admitting zero duality gap \wrt the respective set of tolls, we prove in \Cref{thm: mainMSMS,thm: mainMSSS}
that implementability of $u$ via such tolls is equivalent to the master problem \eqref{opt: Master} admitting an optimal solution $h^*$ inducing $u$. 

Proving that the latter condition is necessary for implementability is quite straight-forward and exploits the fact that the tuple of walk flow and tolls  implementing $u$ admits zero duality gap for~\eqref{opt: Master}.\footnote{Note that we already proved this in~\cite{GHS24SO} which appeared as one-page-abstract in the proceedings of the 26th ACM Conference on Economics and Computation (EC26)~\cite{EC25}}

For sufficiency, 
we start with a tuple of walk flow and tolls  admitting zero duality gap which exists due to the assumption of~\eqref{opt: Master} fulfilling strong duality.   
Roughly speaking, these tolls guarantee that for almost all points in time, 
all walks   in which flow can be send under~$u$ without resulting in an undefined network loading do not have smaller costs than the walks utilized in the optimal solution. In order to ensure that this holds for \emph{all} walks, we increase the tolls suitably on all edges not used under~$u$. 
We remark that the tighter characterization of existence of \auto network loadings in \Cref{thm: arrLusin}  is crucial in order to show that this adjustment is actually sufficient. Moreover, the need for this adjustment is the reason for us to distinguish between the two types of strong duality mentioned above: 
The adjustment in \Cref{thm: mainMSSS} is quite involved and exploits the fact that all commodities share the same \sink whereas the adjustment in \Cref{thm: mainMSMS} is simpler, yet crucially relies on dual tolls being uniformly  bounded.  
 
In \Cref{sec: SingleSinkImpl}, 
we prove 
 a combinatorial characterization of implementable flows $u$ for single-\sink network (\oref{thm: mainSingleSink}), stating that 
 $u$  does not contain  cycles starting at the destination with strictly positive private costs. 

In order to prove that the latter condition is necessary for implementability, 
we consider a pair of implementing tolls and walk inflows $(\prices^*,h^*)$ together with an arbitrary cycle starting at the destination which has only utilized edges under~$u$.  
We then show for every edge on this cycle  (by induction over their position) the following property: 
For any other walk 
over which flow is sent into this edge under $h^*$, the total costs experienced by this flow from that edge onward is zero. This, in particular, implies that the private costs of that edge are zero. 

For sufficiency, we show that the  combinatorial property implies the existence of an optimal solution for the master problem \eqref{opt: Master} 
with tight inequality which, in turn, shows the implementability of $u$ via \Cref{thm: mainMSSS}. 
 This is done by 
demonstrating that, under the assumption of $u$ fulfilling the combinatorial property, any optimal solution to~\eqref{opt: Master} induces an edge flow that only differs from~$u$ in flow on cycles of zero private costs. 
Then, the challenge is to consistently add these cycles to such an optimal solution in order to construct  an optimal solution to~\eqref{opt: Master} with tight inequality. 
Here, we rely heavily on the insights obtained in \cite{GHS24FD} regarding (pure) flow decompositions.

\vspace{.5\topskip}
Finally, in \textbf{\oref{sec:ExistenceOptSolutions}~(\nameref*{sec:ExistenceOptSolutions})}, we show that a whole class of optimization problems, including~\eqref{opt: Master} for multi-source, single-\sink networks and finitely supported~$u$, exhibits strong duality (\Cref{thm: ZeroDualityGapGeneral,thm: AlmostMasterZeroDual,thm: MasterZeroDual}). This result does not
follow from standard regularity conditions of infinite dimensional linear programs (see e.g.~\cite{anderson1983review,boct2008revisiting,flores2013strong}). 
In fact, it is known that strong duality 
 is not guaranteed in general for infinite-dimensional programs~\cite{RomeijnSB92} and we 
 even provide an example (\Cref{ex: DualGap}) of a single commodity flow with \emph{unbounded} support 
 whose corresponding master problem does not admit strong duality.  
 We think that our positive strong duality result might be of independent interest for the optimization community as well.

\subsection{Paper Organization}

After presenting the general model in \oref{sec: Model}, we start in \oref{sec:CostBalancingTolls} as a warm-up with the simpler case of a homogeneous populations of users sharing the same cost functions and \sink[.] 

For the rest of the paper, we then consider the much harder question of \emph{heterogeneous} users: In \oref{sec:uBasedNetworkLoadings}, we derive the aforementioned insights into \auto network loadings.
Then, in \oref{sec:Implementability}, we show come back to the question of implementability and provide the promised  characterizations. 
Finally, in
\oref{sec:ExistenceOptSolutions}, we show that the 
$u$-parameterized infinite dimensional master problem fulfills strong duality in the single-\sink case.

\section{Model}\label{sec: Model} 

We consider 
a directed graph $G=(\GV,\GA)$ with nodes $V$ and edges $\GA \subseteq V \times V$. 
We use $\edgesFrom{v}$ to refer to the set of edges leaving a node $v$ and $\edgesTo{v}$ for the set of edges entering $v$. Furthermore, 
we denote  for any pair $v,v' \in \GV$ by $\hat{\Routes}_{v,v'}$ the countable set of (finite) \stwalk[v][v']s in~$G$ and by $\hat{\Routes}=\bigcup_{v,v'\in V}\hat{\Routes}_{v,v'}$ the set of all finite walks in $G$. Here, a \stwalk[v][v'] $\hat{\wa}$ is a tuple 
of edges $\hat{\wa} = (\arc_1,\ldots,\arc_k) \in \hat{\Routes}$ with $\arc_j=(v_j,v_{j+1})\in \GA$ for all $j\in [k]:=\{1,\ldots,k\}$ for some $(v_j)_{j\in[k+1]}\in \GV^{k+1}$ with $v_1 = v$ and $v_{k+1} = v'$. We use $\hat\wa[j] \coloneqq e_j$ to refer to the $j$-th edge on walk~$\hat\wa$.
Moreover, we write $v\in \hat{\wa}$ and $\arc \in \hat{\wa}$ to say that there exists  some $j \in [k]$ and $\hat{v},v'\in \GV$ with $(\hat{v},v') = \hat{\wa}[j]$ and $v \in \{\hat{v},v'\}$, respectively $\arc=\hat{\wa}[j]$, and use $\abs{\hat{\wa}} \in \N_0$ for the length (=number of edges) of~$\hat{\wa}$.
Note, that we explicitly allow walks to contain cycles and include the same edge multiple times. 
Here, we call a walk $\hat c=(\arc_1,\ldots,\arc_m)$ a cycle if $\arc_1 \in \edgesFrom{v}$ and $\arc_m \in \edgesTo{v}$ for some node  $v \in \GV$. In case that $v= \dest$, we call $\hat c$ a $\dest$-cycle and similarly call  any walk $\hat{\wa}$ a $\dest$-walk if $\hat{\wa}[1]\in \edgesFrom{\dest}$.  
A walk $\hat{\wa}$ is called simple, if it does not visit a node twice except possibly the starting node, i.e.~for all $v\in \GV$ there exists at most one $\arc \in \hat{\wa}$ with $\arc \in\delta^+(v)$. We denote by $\SimpCyc$ the finite set of simple cycles,  by $\DestCyc$ the set of all finite (not necessarily simple) $\dest$-cycles and by $\hat{\Routes}^\dest$ the set of finite $\dest$-walks. 
Moreover, for a  walk $\hat{\wa}$ and $j\leq |\hat{\wa}|$, we denote by  $\hat{\wa}_{\geq j}$ and $\hat{\wa}_{>j}$ the sub-walk of $\hat{\wa}$ starting with $\hat{\wa}[j]$, respectively $\hat{\wa}[j+1]$. Analogously, we define $\hat{\wa}_{\leq j}$ and $\hat{\wa}_{<j}$.  
Furthermore, for two walks $\hat{\wa}^1 = (\arc_1,\ldots,\arc_{k_1}),\hat{\wa}^2=(\arc_1^2,\ldots,\arc^2_{k_2})$
with $\hat{\wa}^1$ ending in a node $v$ and $\hat{\wa}_2$ starting in it, 
we write $(\hat{\wa}^1,\hat{\wa}^2):=(\arc_1^1,\ldots,\arc^1_{k_1},\arc^2_1,\ldots,\arc^2_{k_2})$.

Next, we consider $\hori$ as our planning horizon during which flow particles can traverse the network. 
Since dynamic flows will be described by Lebesgue-integrable functions on~$\hori$, we equip $\hori$ with its Borel $\sigma$-algebra $\mathcal{B}(\hori)$. We denote by $\sigma$ the Lebesgue measure on $(\hori,\mathcal{B}(\hori))$ 
and by $L(\hori)$ and $L^\infty(\hori)$ the space of ($\sigma$-equivalence classes of)  $\sigma$-integrable and essentially bounded, respectively, real-valued functions over $\hori$  equipped with the standard norm induced topology and the partial order induced by $L_+(\hori)$ and $L_+^\infty(\hori)$, respectively, i.e.\ the subsets of nonnegative integrable functions. 
For any countable set~$M$, we denote by $\seql[1][M][L(\hori)]$ the set of 
vectors $(h_m)_{m \in M} \in L(\hori)^M$ whose sum $\sum_{m \in M}h_m \in L(\hori)$ is finite, i.e.
\begin{align*}
    \seql[1][M][L(\hori)] &:= \big\{ h \in L(\hori)^{M} \mid \norm{h} := \sum_{m \in M} \norm{h_m} < \infty \big\}.
\end{align*}
 This defines again a Banach space (cf.~\cite[Section 16.11]{guide2006infinite}) whose topological dual is 
 \begin{align*}
     \seql[\infty][M][L^\infty(\hori)]:= \big\{ f \in L^\infty(\hori)^M \mid \norm{f} := \sup_{m \in M} \norm{f_m}_\infty < \infty \big\}. 
\end{align*} 
We denote the bilinear form between this dual pair  by $\dup{f}{h} \coloneqq \sum_{m \in M}\int_\hori f_m\cdot h_m\di\sigma$ for $f\in \seql[\infty][M][L^\infty(\hori)],h\in\seql[1][M][L(\hori)]$. Here, we use $\int_\hori f\di\sigma$ to denote the integral of $f$ over~$\hori$ with respect to the Lebesgue measure~$\sigma$.\footnote{We use this notation instead of writing $\int_{\hori} f(t)\di t$ to stay consistent with the proofs of some of the more technical lemmas where we also have to consider integrals with respect to other measures.} 
Similarly, we denote for any two vectors of non-negative measurable functions  (not necessarily contained in $L^\infty(\hori)^M$ or  $L(\hori)^M$)    $\tilde{f},\tilde{h}:\hori \to \R^M_+$ the sum over their integrals via $\dup{\tilde f}{\tilde h} \coloneqq \sum_{m \in M}\int_\hori \tilde f_m\cdot \tilde h_m\di\sigma \in \R_+ \cup\{\infty\}$.   
Moreover, 
for any such non-negative measurable function $\tilde{h}:\hori \to \R^M_+$, we define via 
\begin{align*}
    \MeasFuncUInt[\tilde{h}]:= \big\{ \tilde{f}: \hori \to \R_+^M\,\big\vert\, \tilde{f} \text{ is measurable and } \dup{\tilde{f}}{\tilde{h}} < \infty \big\} 
\end{align*}
 the set containing all measurable non-negative functions $\tilde{f}:\hori \to \R_+^M$ 
 with bounded product. 
We say that a sequence $(h_n)_{ \in \N}$ converges weakly in $\seql[1][M][L(\hori)]$ to some $h \in \seql[1][M][L(\hori)]$ and write $h_n \wto h$, if $\dup{f}{h_n} \to \dup{f}{h}$ for any $f\in\seql[\infty][M][L^\infty(\hori)]$. 
Analogously to $\seql[1][M][L(\hori)]$, we define $\seql[1][M][L(\hori)^\GA]$ where we use $\norm{\g} := \sum_{\arc \in \GA}\norm{\g_\arc}$ for $\g \in L(\hori)^\GA$.

Finally, we have a finite set of commodities/populations~$I$ where each commodity comes with its own fixed network inflow rate $\inflow_i\in L_+(\hori)$ where $\inflow_i \neq 0$ specifying for (almost) every point in time at what rate particles of that commodity enter the network at the commodity specific source node~$\source_i$. The particles then aim to traverse the network by choosing an \stwalk[\source_i][\dest_i] where $\dest_i$ is the commodity specific \sink node of which we assume that it is connected to and different from $\source_i$.  
We denote by $\Routes \coloneqq \cup_{i \in I} \Routes_i \coloneqq \bigcup_{i \in I}\hat{\Routes}_{\source_i,\dest_i} \times\{i\}$ the set of walk-commodity pairs. By some abuse of notation we will also refer to elements $\wa=(\hat\wa,i) \in \Routes$ as walks and extend the above introduced terminology $\wa[j]:= \hat{\wa}[j]$, $v \in \wa :\Leftrightarrow v \in \hat{\wa}$, $\arc \in \wa :\Leftrightarrow \arc \in \hat{\wa}$, $\abs{\wa}:=\abs{\hat{\wa}}$ and $\wa_{\sim j}:=(\hat{\wa}_{\sim j},i)$ for ${\sim} \in \{<,\leq,\geq,>\}$.

\subsection{Dynamic Flows}
The main concept underlying dynamic flows are the traversal time functions:  

\myparagraph{Traversal time functions} Within our model any vector of edge inflow rates $\g\in L_+(\hori)^\GA$ induces    corresponding nonnegative  \emph{edge traversal time} functions
$\trav_\arc(\g,\cdot):\hori \to \R_+,\arc \in \GA$ 
 with 
$\trav_\arc(\g,t)$ denoting the time needed to traverse $\arc$ 
when entering the latter at time~$t$. 
To any such edge traversal time function, we also define two related functions: 
Firstly, the \emph{edge exit time} function $\exit
_\arc(\g,t):= t + \trav_\arc(\g,t)$ denoting the time a particle exits edge~$\arc$ when entering at~$t$. 
Secondly, the \emph{edge arrival time} function $\arr_{\wa,j}(\g,\cdot)$ denoting the time a particle arrives at the tail of the $j$-th edge of some walk~$\wa$ when entering~$\wa$ at time~$t$. More precisely, for an arbitrary walk $\wa$ we define $\arr_{\wa,1}(\g,\cdot):= \id$ and then, recursively, $\arr_{\wa,j}(\g,\cdot):= \exit_{\wa[{j-1}]}(\g,\cdot) \circ \arr_{\wa,j-1}(\g,\cdot)$ for $j\in\{2,\ldots,|\wa|\}$. Additionally, we define $\arr_{\wa,|\wa|+1}(\g,\cdot):= \exit_{\wa[|\wa|]}(\g,\cdot)\circ\arr_{\wa,|\wa|}(\g,\cdot)$
 denoting the arrival time at the \sink[.] 
 We assume that $\trav_\arc(\g,\cdot)$ is (locally) absolutely continuous\footnote{We omit from now on the term locally and call a function $\trav:\hori \to \R$ absolutely continuous  if it is absolutely continuous on every closed interval $[a,b]\subseteq \hori$ in the sense of \cite[Definition 5.3.1]{Bogachev2007I}. We remark that the main properties of absolutely continuous functions carry directly over to the locally absolutely continuous ones. All properties we require are gathered in \Cref{lem: PropAbsCon}.}  and adheres to the first-in first-out principle (FIFO), that is, $\exit_\arc(\g,\cdot)$ is a non-decreasing function. Note, that this also implies that both $\exit_\arc(\g,\cdot)$ and $\arr_{\wa,j}(\g,\cdot)$ are absolutely continuous as well (cf.\ \Cref{lem: PropAbsCon:Conca}).

With this, we can now formally describe dynamic flows. We will use two types of these flows:
Aggregated edge flows and walk flows: 

\myparagraph{Walk Flows}
For any countable collection of finite walks $\Routes'$, a 
 \emph{walk flow} or \emph{walk-inflow function} is a vector $h\in L_+(  \hori)^{{\Routes'}}$
with $h_{{\wa}}(t)$ representing the inflow rate at time $t\in \hori$ into the walk ${\wa}\in  {\Routes'}$. 
For $h \in L_+(\hori)^\Routes$, we often use $h^i:=(h_\wa)_{\wa \in \Routes_i}$ to denote the walk inflow rate of commodity $i\in I$.

\myparagraph{Aggregated Edge Flows} 
An (aggregated) \emph{edge flow} is a vector $\g \in L_+(\hori)^\GA$, where $\g_\arc(t)$ denotes the rate at which particles (of all commodities combined) enter edge~$\arc$ at time~$t$. 

We then connect the two types of flows by saying that an edge flow~$\g$ is \emph{induceable} by  a walk flow $h \in L_+(\hori)^{\Routes'}$  if: 
  \begin{align}\label{eq: DefEdgeFlow}
    \int_{\startint t]} \g_\arc\di\sigma = \sum_{\wa \in \Routes'}\sum_{j: \wa[j] = \arc}\int_{\arr_{\wa,j}(\g,\cdot)^{-1}({\startint t]})} h_\wa\di\sigma \text{  for all  $\arc \in \GA$ and $t \in \hori$.} 
\end{align}  
We denote by 
\begin{align*}
    \wir:= \Big\{ h\in L_+(  \hori)^\Routes \Big\vert\,\sum_{\wa\in \Routes_i} h_\wa=\inflow_i,i\in I,   \text{\eqref{eq: DefEdgeFlow} has a solution $\g$} \Big\} \subseteq \seql
\end{align*}
the set of \emph{admissible walk flows} (\wrt the commodity set $I$). 
We assume that we have a network loading operator (\wrt $\trav(\cdot,\cdot)$) $\Nl:\wir \to L_+(\hori)^\GA, h\mapsto \g$ sending each $h \in \wir$ to a solution $\g$  of \eqref{eq: DefEdgeFlow}. We then say that $h \in \wir$ induces $\g \in L_+(\hori)^\GA$ if $\Nl(h) = \g$  holds and call $\Nl(\wir) := \{\Nl(h) \in L_+(\hori)^\GA \mid h \in \wir\}$ the set of induced edge flows.

\myparagraph{\Auto Network Loadings}  
  If the travel times are independent of the edge inflow rates $\g$, that is, if there
exists a function $\tilde{\trav}:\hori \to \R^\GA$ such that $\trav(\g,t) = \tilde{\trav}(t)$ for all $\g \in L_+(\hori)^\GA$, we follow \cite{GHS24FD} in calling the corresponding network loading   \emph{\auto[}] and write~$\Nl[\tilde{\trav}(\cdot)]$.

 This type of network loading was  investigated by us in \cite{GHS24FD} and plays a key role in setting up the 
  master problem \eqref{opt: Master}. It allows us to describe for the given edge flow~$u$
how particles of a different walk flow~$h$  would hypothetically propagate throughout the network under the fixed travel times induced by~$u$. For this, we use the \auto network loading induced by $u$, that is, we consider the \auto network loading $\Nl[\trav^u(\cdot)]$
corresponding to the travel time function $\trav^u(\cdot):=\trav(u,\cdot)$ induced by $u$ under the initial  (in general non-\auto[)] travel time function~$\trav(\cdot,\cdot)$.

In \cite{GHS24FD}, we investigated the concept of \auto network loadings in a more general framework: For an arbitrary but fixed (flow-independent) absolutely continuous travel time function $\trav(\cdot):\hori \to \R_+^\GA, t \mapsto (\trav_\arc(t))_{\arc \in \GA}$ fulfilling FIFO, we analyzed the existence and properties of 
edge flows ${\g} \in L_+(\hori)^\GA$ which can be induced 
by a walk flow vector $h \in L_+(\hori)^{\Routes'}$ sending into all walks $\wa \in \Routes'$  flow $h_\wa$  under the \emph{fixed} traversal time functions $\trav$ for an arbitrary countable collection of walks $\Routes'$, i.e.~flows fulfilling for all $\arc \in \GA$: 
\begin{align}\label{eq: DefUEdgeFlow}
    \int_{\startint t]} \g_\arc\di\sigma =\sum_{\wa\in \Routes'} \sum_{j: \wa[j] = \arc}\int_{\arr_{\wa,j}^{-1}({\startint}t])}h_\wa\di\sigma \text{ for all } t \in \hori. 
\end{align}
Here, we denote as above by $\exit_\arc$ and $\arr_{\wa,j}$ the exit and arrival time function corresponding to $\trav$.

In the above situation \eqref{eq: DefUEdgeFlow}, 
we write $\Nl[\trav(\cdot)]_{\Routes'}(h) =  (\Nl[\trav(\cdot)]_{\Routes',\arc}(h))_{\arc \in \GA} := (\g_\arc)_{\arc \in \GA}$, say that $\Nl[\trav(\cdot)]_{\Routes'}(h)$ is well-defined, exists and   that the walk inflow rate $h$ induces the latter under $\trav$. 
We denote by $\edom{\Routes'}[\trav(\cdot)]$ the maximal subset of $L_+(\hori)^{\Routes'}$ for which $\Nl[\trav(\cdot)]_{\Routes'}(h)$ is well-defined and exists. 

Moreover,   we say 
that the walk inflow $h_\wa$ into a walk $\wa$ under the \emph{fixed} traversal time functions $\trav$ induces the \auto flow $\g^\wa_j \in L_+(\hori)$ on the $j$-th edge of $\wa$ if  $\g^\wa_j$ fulfills:
\begin{align}\label{eq: DefUEdgeFlowSingle}
    \int_{\startint t]} \g^\wa_j\di\sigma =  \int_{\arr_{\wa,j}^{-1}(\startint t])}h_\wa\di\sigma \text{ for all } t \in \hori. 
\end{align} 
In the above situation \eqref{eq: DefUEdgeFlowSingle}, we write $\Nl[\trav(\cdot)]_{\wa,j}(h_\wa):= \g^\wa_j$ for any $j \leq \abs{\wa}+1$ and say that the induced \auto edge flow $\Nl[\trav(\cdot)]_{\wa,j}(h_\wa)$ exists. Here, for $j = \abs{\wa} +1$, the latter is to be interpreted as the network outflow rate of the induced flow.  
We denote by $\edom{\wa,j}[\trav(\cdot)]$ the maximal subset of $L_+(\hori)$ for which $\Nl[\trav(\cdot)]_{\wa,j}(h_\wa)$ exists. 
If, for some edge~$\arc \in \GA$, the induced \auto edge flow $\Nl[\trav(\cdot)]_{\wa,j}(h_\wa)$ exists for all $j\leq \abs{\wa}$ with $\wa[j] = \arc$, we can define the total induced \auto flow of
$h_\wa$ under $\trav$ on this edge via $\Nl[\trav(\cdot)]_{\wa,\arc}(h_\wa) := \sum_{j:\wa[j]  =\arc}\Nl[\trav(\cdot)]_{\wa,j}(h_\wa)$ and we denote via $\edom{\wa,\arc}$ the maximal subset of $L_+(\hori)$ for which $\Nl[\trav(\cdot)]_{\wa,\arc}(h_\wa)$ exists.
The latter exists for all $\arc \in \GA$ iff $\Nl[\trav(\cdot)]_\wa(h_\wa) := \Nl[\trav(\cdot)]_{\{\wa\}}(h_\wa)$ exists in which case $\Nl[\trav(\cdot)]_{\wa}(h_\wa)= (\Nl[\trav(\cdot)]_{\wa,\arc}(h_\wa))_{\arc\in \GA}$.

Throughout this paper, we require several structural insights derived in \cite{GHS24FD}  
concerning the structure of \auto network loadings.
For the convenience of the reader, we collected the relevant definitions and statements in \Cref{sec:AppuBasedNetworkLoadings}.
Moreover, for the sake of readability, we will denote the \auto network loading operator~$\Nl[\trav(u,\cdot)]$ induced by~$u$ simply by~$\ell^u$ and $\edom{\Routes'}[\trav(u,\cdot)]$ via~$\edom{\Routes'}[u]$.  

We finish this section by giving an example for a well-studied flow propagation model that falls within our model, namely the \emph{Vickrey queuing model}. 
Here, let us remark that this model usually only uses $\R_+$ as planning horizon. However, any flow propagation model that is defined only on a subinterval $J$ of $\hori$ can be extended to a model on the whole $\hori$ by simply extending the flows by $0$ on $\hori \setminus J$ and the travel times constantly on $\hori \setminus J$, i.e.~$\trav(\g,t) = \trav(\g,t_0)$ for $t<  t_0$ and $\trav(\g,t) = \trav(\g,t_f)$ for $t > t_f$ where $t_0$ is the start and $t_f$ the end point of the interval $J$. 
To see this, just observe that if a pair $h,\g$ and travel times $\trav(\g,\cdot)$ defined on $J$ fulfills the equality stated in \eqref{eq: DefEdgeFlow}  for all intervals of the form $(t_0,t),t \in J$, then the above described extension of $h,\g$ and $\trav(\g,\cdot)$ fulfills \eqref{eq: DefEdgeFlow}. 
\begin{example}\label{ex:VickreyModel}
    In   the \emph{Vickrey queuing model}, each edge~$\arc \in \GA$ comes with a free flow travel time~$\tau_\arc > 0$ and a service rate~$\nu_\arc > 0$. The traversal time function~$\trav_\arc$ is then defined as the (unique) solution to a system of equations in terms of the corresponding edges flows $\g \in L_+(\hori)^\GA$:
         \begin{align}\label{eq: Vick}
              \trav_\arc(\g,t) = \tau_\arc + \frac{\q_\arc(\g,t)}{\nu_e} \quad\text{ and }\quad \q_\arc(\g, t) = \int_0^t \g_\arc\di\sigma-\int_{\exit_\arc(\g,\cdot)^{-1}([0,t+\tau_\arc])} \g_\arc \di\sigma
         \end{align}
    together with the conditions that the queue is always non-negative and  that the derivative of $t \mapsto \int_{\exit_\arc(\g,\cdot)^{-1}([0,t])} \g_\arc \di\sigma$ (i.e.\ the outflow rate of edge~$\arc$) is bounded by $\nu_\arc$ almost everywhere.\footnote{Note that, by \cite[Proposition~3.19e)]{GrafThesis}, this definition is equivalent to the more common definition of Vickrey flows in terms of in- and outflow rates.}
    Here, $\q_\arc(\g,t)$ denotes the flow volume in the queue of edge~$\arc$ at time~$t$ and, for~\eqref{eq: Vick}, we extend $\g$ by~$0$ outside of~$\hori$.
\end{example}

\subsection{Toll-Based Dynamic Equilibria and  Implementability}\label{sec:CostBalancingTolls}

In addition to the physical aspects of dynamic flows described in the previous section, there is also a behavioral aspect: We think of a dynamic flow as being made up of individual flow particles that each individually and selfishly want to minimize their overall cost which is the sum of private costs and tolls. A flow wherein every particle achieves this goal 
is a (tolled) dynamic equilibrium.

This is formalized by a function $\wttime:\Nl(\wir)\times \hori \to \R_+^\Routes$ where  $\wttime_{(\hat{\wa},i)}(\g,t)$ denotes the private cost a (hypothetical) particle belonging to commodity~$i$ and entering walk~$\wa = (\hat{\wa},i) \in \Routes_i$ at time~$t$ would experience under the edge flow $\g \in \Nl(\wir)$. 
A prime example for such private costs is the experienced travel time weighted with a commodity specific value of time parameter $\gamma_i>0$, i.e.
\begin{align}\label{eq: PC=WTT}
    \wttime_{(\hat{\wa},i)}(\g,t) := \vot_i\cdot \sum_{j \leq|{\wa}|} \trav_{{\wa}[j]}(\g,\arr_{{\wa},j}(\g,t)) = \vot_i\cdot \big(\arr_{\wa,\abs{\wa}+1}(\g,t)-t\big).
\end{align}
Furthermore, given  a vector-valued  function $\prices: \hori \to \R^\GA_+$ 
 associating with each edge $\arc \in \GA$ and every entry time $t\in \hori$ a nonnegative and finite toll $\prices_\arc(t)$, 
 we define the resulting total toll along walk $\wa=(\hat{\wa},i)$ for the entry time~$t$ under~$\g$ via $\Pf^{{\prices}}_{(\hat{\wa},i)}(\g,t) := \sum_{j \leq|{\wa}|}  \prices_{{\wa}[j]}(\arr_{{\wa},j}(\g,t))$. Note that the edge tolls~ $\prices_\arc(t)$ are anonymous in the sense that any particle entering edge~$\arc$ at time~$t$ must pay the same toll~$\prices_\arc(t)$, regardless of its identity (i.e.\ the \sink[,] VoT or chosen route).

\begin{definition}
    For any  toll function $\prices: \hori \to \R^\GA_+$,  a \emph{$\prices$-dynamic user equilibrium} ($\prices$-DUE) 
is a walk inflow rate function $h\in \wir$ with corresponding edge flow $\g = \Nl(h)$ which satisfies Wardrop's first principle, that is for all  $\wa \in \Routes_i,i\in I$ and almost all $t \in \hori$ it satisfies:
\begin{align*}
    h_\wa(t)>0 \implies \Psi_\wa(\g,t) + \Pf^\prices_\wa(\g,t) \leq \Psi_{\wa'}(\g,t) + \Pf^\prices_{\wa'}(\g,t)\text{ for all } \wa'\in \Routes_i.
\end{align*}
\end{definition}

Note that for $\prices = 0$, the notion of $\prices$-DUE coincides with the classical notion of a DUE, cf.~\cite{Friesz93,ZhuM00}.
\begin{definition}[Implementability]
A vector $u \in L_+(\hori)^\GA$ is \emph{implementable} via tolls $\prices \in \MeasFuncUInt$ and walk inflow rates $h \in \wir$, 
if  $h$ is a  $\prices$-DUE which induces $u$ (i.e.\ $\Nl(h) = u$).
We say that $u$ is \emph{implementable} if there exist tolls $\prices\in \MeasFuncUInt$ and walk inflow rates $h\in \wir$ such that $u$ is implementable via $(\prices,h)$.
 \end{definition}

As a motivating example for such tolls, let us first consider the case of homogeneous populations sharing a single destination, that is, we assume that all 
users share the same \sink[~]$\dest$ and the same private costs. Moreover, we assume that the private costs are separable over the edges:   
  \begin{assumption}\label{ass: PCSep}
  	The private costs are separable over the edges, that is, there exists for all $i\in I$ a  measurable function  $\psi^i : \hori \to \R^\GA_+$ such that for all $(\hat{\wa},i)\in \Routes_i$:
  	\begin{align*}
  		\wttime_{(\hat{\wa},i)}(u,\cdot) :=  \sum_{j \leq|\hat{\wa}|} \psi^i_{\hat{\wa}[j]}(\arr_{\hat{\wa},j}(u,\cdot))  .
  	\end{align*}
  \end{assumption}
  Note that this separability is only required for the private costs induced by $u$ but does not need to hold for other flows. 
  We require the above assumption in order to be able to associate private costs to subwalks of walks contained in $\Routes$. In this regard,  under \Cref{ass: PCSep}, we can extend the function $\wttime$ to all finite walks $\hat{\wa}$.

\begin{theorem}\label{thm:ImplementabilityHomogeneous}
	Consider a network with a single \sink and private costs fulfilling \Cref{ass: PCSep} and being the same for all commodities. Let
	$u \in \Nl(\wir)$ be an edge flow with finite support $[t_0,t_f]$ for some $t_f\in \R$ and induced by $h \in \wir$.  If 
	\begin{thmparts}
		\item there exist $M,\eps > 0$ such that we have $\eps \leq \trav_\arc(u,t) \leq M$ and $\psi_\arc(t) \leq M$  for all $\arc \in E$ and $t \in \hori$, and
		\item $u_\arc \equiv 0$ for all  $\arc \in \edgesFrom{\dest}$,
	\end{thmparts}
	then there exist bounded tolls $\prices$ such that $u$ is implementable via $\prices$ and $h$. 
	If, in addition, all traversal times~$\trav_\arc(u,\cdot)$ and edge costs $\psi_\arc$ are continuous and/or piecewise linear (with finitely many break points), the tolls can be chosen continuous and/or piecewise linear as well and, in the latter case, we can compute them in finite time.
\end{theorem}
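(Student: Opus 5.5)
Following Fotakis and Spirakis, the plan is to build tolls from a time-dependent potential, namely the cost-to-go to $\dest$ under the fixed travel times of $u$. Since $u$ is fixed, all traversal times $\trav_\arc(u,\cdot)$ and arrival functions are fixed, so the toll design problem reduces to a shortest-path-type problem in a time-expanded sense. I define two labels for every node $v$ and time $t$. The first is
\[
\Phi_v(t):=\inf\Big\{\wttime_{\hat\wa}(u,t)\;\Big|\;\hat\wa\text{ a } v,\dest\text{-walk}\Big\},
\]
the minimal private cost from $v$ to $\dest$ when starting at time $t$. The second is $\Phi^u_v(t)$, the \emph{maximal} private cost over $v,\dest$-walks using only edges carrying flow in a neighborhood of the relevant times. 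I set $\Phi_\dest=\Phi^u_\dest\equiv 0$.

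The toll on $\arc=(v,w)$ is chosen to balance costs along used edges:
\[
\prices_\arc(t):=\Big(C_v(t)-\psi_\arc(t)-C_w(\exit_\arc(u,t))\Big)\ \text{ on the set where } u_\arc \text{ carries flow},
\]
and a large constant $K$ elsewhere. Here $C_v$ is a potential with $C_\dest\equiv 0$ satisfying $C_v(t)\ge \psi_\arc(t)+C_w(\exit_\arc(u,t))$ for all edges $\arc=(v,w)$ that carry flow. This is a longest-path potential over used walks, which is finite because of the assumptions. Every used $\hat\wa$ then has total cost (private plus toll) equal to $C_{s}(t)$ by telescoping. For an unused walk, each edge either contributes at least the telescoped amount (on used edges, with equality) or contributes $K$. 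Choosing $K$ larger than the maximal potential difference ensures any walk with an unused edge is worse, since potentials stay nonnegative. With assumption~b), no used walk leaves $\dest$, so walks are well-founded at $\dest$.

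The key steps, in order, are as follows. First, I use a) and the finite support $[t_0,t_f]$ to show that any walk entered at a time where $h$ is positive, and using only flow-carrying edges at the corresponding arrival times, reaches $\dest$ within at most $(t_f-t_0)/\eps+1$ edges. This holds because each edge takes at least $\eps$ time and flow must remain inside the support. Second, I define $C_v(t)$ as the supremum over such used $v,\dest$-walks of the private cost, computed by backward recursion in time in steps of length $\eps$. Since $\trav\ge\eps$, $C_v$ on $[t,t+\eps)$ depends only on $C_w$ on $[t+\eps,\infty)$, and $C_v\le M\cdot(\text{max length})\cdot$ bounded. Beyond $t_f$ I set $C\equiv 0$. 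Third, I verify measurability and nonnegativity of the tolls; nonnegativity follows from the supremum definition. I also need $\prices$ to lie in $\MeasFuncUInt$, which holds since the tolls are bounded and $u$ is integrable. Fourth, I check the equilibrium condition: for almost every $t$ with $h_\wa(t)>0$, all edges of $\wa$ carry flow at the arrival times, up to null sets handled by absolute continuity of the arrival functions. The cost of $\wa$ is then $C_{s}(t)$, and every other walk costs at least this amount.

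For the regularity addendum, when $\trav$ and $\psi$ are continuous or piecewise linear, the backward recursion preserves these properties up to max operations. The indicator ``used'' set must then be replaced by a continuous interpolation of $K$, using a toll of the form $\max\{\text{balanced toll},\,K\cdot\text{distance-based cutoff}\}$. The finite $\eps$-step recursion gives finite computability in the piecewise linear case.

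The main obstacle I expect is the third and fourth steps: handling the set of ``used'' edge-time pairs rigorously. Flow in $u$ is defined only almost everywhere, and particles of $h$ may arrive at an edge at a null set of times where $u_\arc$ is zero. I would try to define usage via the support of the loaded flows $\ell_{\wa,j}(h_\wa)$ rather than via $u$ directly, so that the equalities along used walks hold for almost all departure times by construction.
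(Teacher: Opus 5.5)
Your existence argument takes a different route from the paper. You balance costs only along used edge--time pairs, using a longest-path potential $C_v$ over \emph{used} walks, and charge a constant $K$ everywhere else.

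This part can be made rigorous:
\begin{itemize}
\item A run-by-run telescoping shows that $K\ge\sup C$ suffices: a walk containing $m\ge1$ blocks of penalized edges costs at least $C_{\source_i}(t)+m(K-\sup C)$.
\item The pointwise obstacle you flag disappears with the right choice of representatives. Fix a representative of $h$ such that $h_\wa(t)>0$ implies $\arr_{\wa,j}(u,t)\in[t_0,t_f]$ for all $j$ and that $\wa$ avoids $\edgesFrom{\dest}$. Then declare $(\arc,\tau)$ used iff $\tau\in\arr_{\wa,j}(u,\cdot)(\{h_\wa>0\})$ for some $\wa,j$ with $\wa[j]=\arc$.
\item With this definition, every used pair has a used continuation to $\dest$, namely the suffix of its witness walk. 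Hence your tolls are nonnegative, and every flow-carrying walk telescopes \emph{exactly} to $C_{\source_i}(t)$.
\end{itemize}

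The genuine gap is the regularity part of the statement. Your potentials and tolls are tied to the merely measurable usage pattern of $u$. So continuity or piecewise linearity of $\trav(u,\cdot)$ and $\psi$ does not transfer, and no continuous interpolation of $K$ can repair this, because the tolls on \emph{used} edges are already wrong.

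Here is an example satisfying all hypotheses:
\begin{itemize}
\item Two parallel $\source,\dest$-edges with $\trav\equiv1$, $\psi_{\arc_1}\equiv1$, $\psi_{\arc_2}\equiv2$.
\item Flow $u_{\arc_1}=1_{[0,1/2]}$ and $u_{\arc_2}=1_{[1/2,1]}$.
\end{itemize}
Your $C_\source$ jumps from $1$ to $2$ at $t=1/2$, and it forces $\prices_{\arc_1}=0$ on $[0,1/2)$. However, any implementing tolls satisfy $\prices_{\arc_1}-\prices_{\arc_2}\le1$ a.e.\ before $1/2$ and $\prices_{\arc_1}-\prices_{\arc_2}\ge1$ a.e.\ after $1/2$. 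Continuous tolls therefore need $\prices_{\arc_1}(1/2)=1+\prices_{\arc_2}(1/2)\ge1$, which is incompatible with your construction. Finite computability fails for the same reason, since $C_v$ requires knowledge of the full support of $u$.

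The missing idea, which is the paper's proof, is to take the potential over \emph{all} walks:
\begin{itemize}
\item Delete the edges leaving $\dest$ and put toll $0$ on them.
\item Lower the costs to $0$ after $t_f+1$.
\item Set $\pi_v(t):=\sup\{\wttime_\wa(u,t)\mid \wa\text{ a } v,\dest\text{-walk}\}$.
\end{itemize}
Because $\trav\ge\eps$ and costs vanish after $t_f+1$, this supremum is a maximum over a finite set of walks that does not depend on where $u$ is positive. Defining $\prices_\arc(t):=\pi_v(t)-\psi_\arc(t)-\pi_{v'}(\exit_\arc(u,t))\ge0$ for $\arc=(v,v')$ then makes \emph{every} $v,\dest$-walk cost exactly $\pi_v(t)$. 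No penalty $K$ and no used-set bookkeeping is needed. Continuity, piecewise linearity and finite computability are inherited from $\trav(u,\cdot)$ and $\psi$; in the example above one gets $\prices_{\arc_1}\equiv1$ and $\prices_{\arc_2}\equiv0$.
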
 
This can be shown by adapting the approach taken by Fotakis and Spirakis in \cite{FotakisS08} for static, atomic single-commodity flows to the dynamic setting and constructing so-called \emph{cost-balancing} tolls. These are tolls that ensure that for any node $v \in \GA$ all \stwalk[v]s consisting only of used edges have the same total cost while all other \stwalk[v] have at least that total cost. This then directly implies that any walk-decomposition of the given edge flow is a $\prices$-equilibrium.

For the static case, in which edges are equipped with edge-load dependent costs, such tolls can be constructed for any acyclic network (or, more generally, for any flow that only uses an acyclic subnetwork) as follows: For every node~$v$ define a node potential $\pi_v$ equal to the longest (\wrt edge costs) \stpath[v] and then set edge-tolls such that the sum of edge-costs and tolls equals the difference between the node potentials, i.e.\ $\prices_{vv'} \coloneqq \pi_v - c_{vv'} - \pi_{v'}$ where $c_{vv'}$ denotes the cost of edge~$vv'$. 

Intuitively, the same approach can be used for dynamic flows by considering the corresponding time-expanded network. Our assumption on the lower bound on the edge traversal times then ensure that this network is acyclic and can be chosen ``finite''. The formal proof of \oref{thm:ImplementabilityHomogeneous} (which can be found in \oref{sec:CostBalancingTollsProof}), however, does not even need this time expanded network but instead directly defines time-dependent node potentials and, based on those, edge tolls. Continuity (and piecewise linearity) of those tolls then follows directly from the continuity (and piecewise linearity) of the edge traversal times and private costs. Note that the assumption that the traversal times are piecewise linear holds, for example, for flows with piecewise constant edge inflow rates within the Vickrey queuing model or the linear edge delay model.
 
The following example shows why this same approach cannot work for heterogeneous population and also that we cannot expect to always find continuous tolls there, even in the special case where private costs represent weighted travel times:
\begin{example}\label{ex:NonContinuityForHeterogeneousUser}
    Consider a network consisting of only two parallel edges $\arc_1$ and $\arc_2$ with fixed (time and flow independent) traversal times of $1$ and $2$, respectively (cf.\ \oref{fig:NonContinuityForHeterogeneousUser}). Moreover, there are two commodities with different values of time $\vot_2 < \vot_1$ and private costs defined by~\eqref{eq: PC=WTT} (i.e., weighted travel times). We now want to induce an edge flow that distributes the flow equally over both edges. 

    \begin{figure}
        \centering
        \BigPicture[1]{%
        \begin{tikzpicture}[declare function={inflow(\x)=.5+(\x<.5)*1 + and(\x>=.5,\x<1.5)*(.5*cos(deg(pi*(x-.5)))+.5);}]
            \node[namedVertex] (s) at (0,0) {$s$};
            \node[namedVertex] (d) at (4,0) {$\dest$};

            \draw[edge, bend left] (s) to node[above]{$\trav_{\arc_1} \equiv 1$} (d);
            \draw[edge, bend right] (s) to node[below]{$\trav_{\arc_2} \equiv 2$} (d);
        
            \begin{scope}[xshift=.4cm,yshift=2.4cm]
                \begin{axis}[xmin=0,xmax=2,ymax=2, ymin=0, samples=500,width=4.5cm,height=2.8cm,anchor=west,
    				 axis y line*=left, axis lines=left, xtick={1},xticklabels={},ytick={1},yticklabels={}]
                    \addplot[blue,  thick,domain=0:2]  (\x,1) node[above,pos=.5]{$u_{\arc_1}$};
    			\end{axis}
            \end{scope}  
            
            \begin{scope}[xshift=.4cm,yshift=-2.4cm]
                \begin{axis}[xmin=0,xmax=2,ymax=2, ymin=0, samples=500,width=4.5cm,height=2.8cm,anchor=west,
    				 axis y line*=left, axis lines=left, xtick={1},xticklabels={},ytick={1},yticklabels={}]
                    \addplot[blue, thick,domain=0:2]  (\x,1) node[above,pos=.5]{$u_{\arc_2}$};
    			\end{axis}
            \end{scope}  

            \begin{scope}[yshift=1.5cm, xshift=6.5cm]
                \begin{axis}[xmin=0,xmax=2,ymax=2, ymin=0, samples=500,width=4.5cm,height=2.8cm,anchor=west,
    				 axis y line*=left, axis lines=left, xtick={1},xticklabels={$t$},ytick={1},yticklabels={}]
                    \addplot[blue, thick,domain=0:2]  {inflow(\x)} node[above,pos=.7]{$\inflow_1$};
    			\end{axis}
            \end{scope} 

            \begin{scope}[yshift=-1.5cm, xshift=6.5cm]
                \begin{axis}[xmin=0,xmax=2,ymax=2, ymin=0, samples=500,width=4.5cm,height=2.8cm,anchor=west,
    				 axis y line*=left, axis lines=left, xtick={1},xticklabels={$t$},ytick={1},yticklabels={}]
                    \addplot[blue, thick,domain=0:2]  {2-inflow(\x)} node[above,pos=.7]{$\inflow_2$};
    			\end{axis}
            \end{scope} 

            \begin{scope}[yshift=1.5cm, xshift=12cm]
                \begin{axis}[xmin=0,xmax=2,ymax=2, ymin=0, samples=500,width=4.5cm,height=2.8cm,anchor=west,
    				 axis y line*=left, axis lines=left, xtick={1},xticklabels={$t$},ytick={.6,1.6},yticklabels={$\vot_2$,$\vot_1$}]
                    \addplot[blue, thick,domain=0:1]  {1.6} ;
                    \addplot[blue, thick,domain=1:2]  {.6} node[above,pos=.5]{$\prices_{\arc_1}$};
    			\end{axis}
            \end{scope} 

            \begin{scope}[yshift=-1.5cm, xshift=12cm]
                \begin{axis}[xmin=0,xmax=2,ymax=2, ymin=0, samples=500,width=4.5cm,height=2.8cm,anchor=west,
    				 axis y line*=left, axis lines=left, xtick={1},xticklabels={$t$},ytick={0}]
                    \addplot[blue, thick,domain=0:2]  {0} node[above,pos=.75]{$\prices_{\arc_2}$};
    			\end{axis}
            \end{scope} 
        \end{tikzpicture}
        }
        \caption{The network discussed in \oref{ex:NonContinuityForHeterogeneousUser} together with an edge flow (left), network inflow rates of two commodities (center) and tolls implementing this flow (right).}
        \label{fig:NonContinuityForHeterogeneousUser}
    \end{figure}
    
 We argue in the following that in this example a) the approach of Fotakis and Spirakis~(\cite{FotakisS08}) is not applicable, b) implementing tolls need to be discontinuous and c) choosing the correct walk-decomposition is essential for the implementability. 
\begin{enumerate}[label = \alph*)]
    \item The central idea of the Fotakis and Spirakis-approach for the homogeneous case is to use cost-balancing tolls. Since here, however, the two commodities have different value of time parameters, it is evident that there cannot be any tolls which are cost balancing for both commodities at the same time.
    
    \item Consider the time~$t$ such that commodity~1's network inflow rate accounts for more than half the combined network inflow rate directly before~$t$ and less than half directly after. Then, any toll implementing the given flow must be discontinuous at time~$t$ (as it must be cost balancing for commodity~1 before time~$t$ and cost-balancing for commodity~2 after time~$t$).  In \Cref{fig:NonContinuityForHeterogeneousUser} (right) we give an example for such tolls.
    
    \item The walk flow that always sends as much of commodity~1's particles as possible into edge~$\arc_1$ implements $u$ by the tolls shown in \oref{fig:NonContinuityForHeterogeneousUser} (right) whereas any walk flow wherein both edges are always used by both commodities cannot  implement $u$ by any tolls (as that would require cost-balancing tolls for both commodities at the same time).  
\end{enumerate}

The last point shows that, in contrast to the homogeneous case under cost-balancing tolls, it is possible in the heterogeneous case to have an edge flow~$u$ and tolls~$\prices$ such that some walk flows~$h$ that induce~$u$ also implement~$u$ (together with~$\prices$) while others do not and can, in fact, never implement $u$ by any tolls. 
This, in particular, implies that when we want to show that a given edge flow is implementable in the heterogeneous setting, we not only have to construct tolls but also a suitable walk-decomposition. Hence, in order to prove our result for heterogeneous populations, we need a more in-depth understanding on how walk flows induce edge flows. This will be the topic of the next \namecref{sec:uBasedNetworkLoadings}.
\end{example}

\section{Structural Results on \Auto Network Loadings}\label{sec:uBasedNetworkLoadings}
 In this section, we will derive several crucial structural properties of \auto network loadings. Here, we  consider the same general framework as in \cite{GHS24FD}, that is the  setting described in the paragraph about \auto network loadings in \Cref{sec: Model}: We are given an arbitrary but fixed (flow-independent) absolutely continuous travel time function $\trav:\hori \to \R_+^\GA, t \mapsto (\trav_\arc(t))_{\arc \in \GA}$ fulfilling FIFO with the corresponding \auto network loading operator $\Nl[\trav(\cdot)]$. This will be the only type of network loading utilized in this entire section and hence we can drop for the sake of readability the superscript and simply write $\ell := \Nl[\trav(\cdot)]$ and $\edom{\Routes'}:= \edom{\Routes'}[\trav(\cdot)]$.

\subsection{A Characterization of Existence}\label{sec:uBasedNetworkLoadings:Exis}

In \ourref{lem: elluExistenceProperties}, we characterized for general travel times $\trav:\hori\to \R_+^\GA$  the existence of the
 \auto network loading on the $j$-th edge of a walk for a corresponding inflow rate $h_\wa \in L_+(\hori)$ via the property:
     \begin{align}\label{eq: nlexists}
        h_\wa = 0 \text{ on } \arr_{\wa,j}^{-1}(\mathfrak T) \text{ for all null sets }\mathfrak T \subseteq \hori.  
    \end{align}
    When it comes to the question of implementability of a vector $u \in \Nl(\wir)$, we will consider the \auto network loading $\Nl[\trav(u,\cdot)]$ induced by $u$. In particular, compared to the general setting of an arbitrary travel times $\trav:\hori\to \R_+^\GA$ and corresponding \auto network loading $\Nl[\trav(\cdot)]$, we have the additional information that $u$ is a flow \wrt $\trav(u,\cdot)$ and, by flow conservation, that $u$ admits \wrt $\trav(u,\cdot)$ edge outflow rates $u^-_\arc$ for all edges $\arc \in \GA\setminus\cup_{i \in I}\edgesTo{\dest_i}$ (cf.~\oref{lem: outflow}). 
    Here, we  call $\eflow_\arc^- \in L_+(\hori)$ the edge outflow rate   corresponding to the edge inflow rate  $\eflow_\arc$ if it fulfills  
\begin{align}\label{eq: outflowCondition}
    \int_{\mathfrak T}\eflow_\arc^- \di\sigma =   \int_{\exit_\arc^{-1}(\mathfrak T)}\eflow_\arc \di\sigma \text{ for all }\mathfrak T \in \mathcal{B}(\hori).
\end{align}
    
    The existence of such edge outflow rates 
allows us to 
give a tighter characterization of
the existence of \auto network loadings (\wrt $u$) for this situation by 
showing that the condition in \eqref{eq: nlexists} is a priori satisfied for all null sets contained in a certain subset of $\hori$. 

Similar to the previous sections, we prove in the following a more general statement in the setting of \aauto network loading $\ell$ \wrt an arbitrary $\trav:\hori\to \R_+^\GA$ (being absolutely continuous and fulfilling FIFO). 
More precisely, we show that for an edge vector $\g \in L_+(\hori)^\GA$ that admits an edge outflow vector $\g^- \in L_+(\hori)^\GA$ \wrt $\trav$, there exists a representative for which the arrival time function at the $j$-th edge of a walk $\wa$ is differentiable with positive derivative for  every point in time 
at which one arrives at all edges $\arc \in \wa$ only if there is inflow under $\g_\arc$.   
This set of time points is denoted by  
$\Dwg$ and the aforementioned insights allows us to show further that the preimage under $\arr_{\wa,j}$ of any null set contained in $\arr_{\wa,j}(\Dwg)$  
  is again a null set, that is, \eqref{eq: nlexists} is a priori satisfied for these null sets.  
  This then ultimately allows us to give a tighter characterization of the existence of  \auto network loadings under the assumption of the existence of such an edge vector  $\g$.

A key ingredient in order to show this is the following  \namecref{lem: absconLus} which states
  that (locally) absolutely continuous, non-decreasing functions (like the arrival time functions) satisfy Lusin's $N^{-1}$ property on the image of all points where the function's derivative exists and is non-zero. This result is a (non-trivial) generalization of a result by Zarecki\u{\i} (cf.\ \cite[Exercise 5.8.54(ii)]{Bogachev2007I}) stating that this property holds for strictly increasing functions (on an interval) whenever the set of points where the function's derivative exists and equals zero is a null set.

Before we come to the proof, let us briefly sketch its main steps: 
It is easily verified that $\func$ is injective on $\Inter_{\func'>0}$ and, thus, admits a strictly increasing inverse function $\infunc$ \wrt this set. 
We then extend this inverse function to  the whole image set $\func(\R)$ 
in a differentiable way and call this function $\tilde{\infunc}$. 
Now, for any set $\mathfrak T$ as described in \Cref{lem: absconLus}, we can express its preimage $ \func^{-1}(\mathfrak T)$ as the image $\tilde{\infunc}(\mathfrak T)$. This, then, allows us to apply \cite[Lemma 7.10]{bruckner1997real}, giving an upper bound on the mass 
 of $\tilde{\infunc}(\mathfrak T)$ by integrating the derivative of $\tilde{\infunc}$ over the initial set $\mathfrak T$. Since the latter is a null set, this integral is zero, showing the desired statement.

\begin{theorem}\label{lem: absconLus}
    Let $\func:\Inter\to \R$ be a (locally) absolutely continuous and  non-decreasing function. Define $\Inter_{\func'>0}:=\{t \in \Inter\mid \func'(t) \text{ exists and } \func'(t) >0\}$.
    Then $\func$ fulfills Lusin's $N^{-1}$ property on $\func(\Inter_{\func'>0})$, that is, for every set $\mathfrak{T}\subseteq \func(\Inter_{\func'>0})$  with $ \sigma(\mathfrak{T})=0$, we have $\sigma(\func^{-1}(\mathfrak{T})) = 0$. 
\end{theorem}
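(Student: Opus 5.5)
Since $\func$ is non-decreasing, the preimage of a single value is a point or a closed interval, so $\func^{-1}(\mathfrak T)$ splits into two parts. The first is the set of points lying in nondegenerate level intervals of $\func$. There are only countably many such intervals, and each of them has zero derivative in its interior. Hence each such interval can meet $\Inter_{\func'>0}$ at most in its endpoints. I will use this to show that the relevant preimage points that actually lie in $\func^{-1}(\mathfrak T)$ form a null set, up to a countable set.

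Concretely, I define $S:=\func^{-1}(\mathfrak T)$ and split it as $S=(S\cap \Inter_{\func'>0})\cup(S\cap Z)\cup(S\cap N)$. Here $Z:=\{t\mid \func'(t)\text{ exists and }=0\}$, and $N$ is the null set where $\func'$ does not exist; absolute continuity gives differentiability a.e. The set $S\cap N$ is null trivially. For $S\cap\Inter_{\func'>0}$ I apply \cite[Lemma 7.10]{bruckner1997real} to the function $\func$ itself. That lemma states: if $f$ is differentiable at every point of a set $E$ with $|f'|\ge c>0$ there, then $\sigma^*(E)\le c^{-1}\sigma^*(f(E))$. I would use it in the form of Vitali covering bounds of $\sigma(E)$ by $\int_{f(E)}$ of the inverse derivative. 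Writing $\Inter_{\func'>0}=\bigcup_n E_n$ with $E_n:=\{\func'\ge 1/n\}$, I get $\sigma^*(S\cap E_n)\le n\,\sigma(\func(S\cap E_n))\le n\,\sigma(\mathfrak T)=0$. This avoids constructing the inverse $\tilde\infunc$ from the sketch, though that route is equivalent.

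The main obstacle is the set $S\cap Z$. These are points $t$ with $\func'(t)=0$ but $\func(t)\in\mathfrak T\subseteq\func(\Inter_{\func'>0})$, so there is some $s\in\Inter_{\func'>0}$ with $\func(s)=\func(t)$. If $s\neq t$, then $\func$ is constant on the interval between them, which would force $\func'(s)=0$ if $s$ were an interior point. So $t$ and $s$ are the two endpoints of a nondegenerate level interval. There are countably many such intervals, so at most countably many such $t$ arise, and these form a null set. If $s=t$, then $\func'(t)>0$, contradicting $t\in Z$. Hence $S\cap Z$ is countable and therefore null. The claim follows as a union of three null sets. The delicate point to check carefully is the applicability of the cited lemma, namely that it needs only pointwise differentiability on $E_n$ with the lower bound and no measurability of $E_n$; otherwise I would reprove it directly via a Vitali covering of $S\cap E_n$.
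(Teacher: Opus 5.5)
Your proof strategy is sound, but two steps need repair. The first is an invalid inference in your treatment of $S\cap Z$. From $A(s)=A(t)$ with $s\neq t$ you correctly get that $A$ is constant between them and that $s$ is not interior to the level interval. You then conclude that $t$ and $s$ are \emph{both} endpoints, and nothing forces this for $t$. Interior points of a level interval have $A'=0$, so the admissible $t$ would form a countable union of nondegenerate intervals rather than a countable set, and your null-set conclusion would not follow.

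The missing observation is that endpoints are excluded as well. If $A$ is constant on $[s,t]$ with $s<t$, all right difference quotients at $s$ vanish, so $A'(s)$, if it exists, equals $0$; the same holds at a right endpoint. Hence no nondegenerate level interval meets $\R_{A'>0}$ at all, not even in its endpoints as your first paragraph allows. Consequently $A$ is injective on $\R_{A'>0}$, and every $y\in A(\R_{A'>0})$ has a one-point preimage lying in $\R_{A'>0}$. So $S\subseteq\R_{A'>0}$ and $S\cap Z=S\cap N=\emptyset$. This is the paper's first claim, and it makes your three-way split unnecessary.

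Second, the inequality you attribute to \cite[Lemma 7.10]{bruckner1997real} is not what the paper uses that lemma for. There it is an upper bound on an image, $\sigma^*(f(E))\le\int_E|f'|\di\sigma$. Your reverse bound $\sigma^*(E)\le c^{-1}\sigma^*(f(E))$ fails for non-injective $f$; take $f(x)=1-|1-x|$ on $[0,2]$. For non-decreasing $A$ it does hold, and your Vitali fallback proves it, but you must use monotonicity explicitly. Fix $t\in E\subseteq E_n$, $0<\delta<1/n$, and $h$ small. Then $A([t-h,t+h])=[A(t-h),A(t+h)]$ has length at least $2h(1/n-\delta)$ and lies in a given open $U\supseteq A(E)$. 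Take a disjoint Vitali subfamily $\{I_k\}$ covering $E$ up to a null set. The intervals $A(I_k)$ have pairwise disjoint interiors \emph{because} $A$ is non-decreasing, so $\sigma^*(E)\le\sigma(U)/(1/n-\delta)$. Since $A(E)\subseteq\mathfrak T$ is null, $\sigma(U)$ can be made arbitrarily small.

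With these repairs your proof is correct and takes a different route from the paper's. The paper proceeds as follows:
\begin{enumerate}
\item It inverts $A$ on $\R_{A'>0}$.
\item It extends the inverse to a non-decreasing $\tilde B$ on $A(\R)$, which requires $\sigma(A(\R)\setminus A(\R_{A'>0}))=0$ and special handling of $\inf A(\R)$ and $\sup A(\R)$.
\item It proves $\tilde B'=1/A'$ on $A(\R_{A'>0})$.
\item It applies the image estimate once: $\sigma(A^{-1}(\mathfrak T))=\sigma(\tilde B(\mathfrak T))\le\int_{\mathfrak T}|\tilde B'|\di\sigma=0$.
\end{enumerate}
You stay on the domain side instead. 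You stratify $\R_{A'>0}=\bigcup_n E_n$ and prove $\sigma^*(E)\le n\,\sigma^*(A(E))$ for $E\subseteq E_n$ directly. This bypasses building and analysing $\tilde B$ and is more self-contained, at the cost of reproving a covering estimate rather than citing one.
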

\begin{proof}
We start with two observations which we will use multiple times throughout the proof: 
\begin{claim}\label{claim: t_1<t_2}
    For any  $t_1< t_2 \in \R$ with $A(t_1) = A(t_2)$, we have that 
$t_1,t_2 \notin \R_{A'>0}$.
\end{claim}
\begin{proofClaim}
    By $\func$ being non-decreasing, we have that 
$\func(t) =\func(t_1)$ for all $t \in [t_1,t_2]$. Thus, the left (resp.\ right) derivative at $t_2$ (resp.\ $t_1$) is equal to zero 
which shows the claim. 
\end{proofClaim}

\begin{claim}\label{claim: infsup}
    We have that $\inf \func(\R),\sup \func(\R)  \notin \func(\Inter_{\func'>0})$. 
\end{claim}
\begin{proofClaim}
    In case that the infimum (resp.~supremum) is not attained, it clearly is also not contained in $\func(\Inter_{\func'>0})$. 
    Hence, consider the case where the infimum or supremum is attained. 
    Then, by $\func$ being non-decreasing and continuous, there exist $x_1,x_2\in \R$, respectively, such that 
    $\func^{-1}(\inf \func(\R)) = (-\infty,x_1]$ or $\func^{-1}(\inf \func(\R)) =  [x_2,\infty)$. In particular, 
    by \Cref{claim: t_1<t_2}, it follows that $(-\infty,x_1] \cap \R_{A'>0} = \emptyset$ or  $[x_2,\infty) \cap \R_{A'>0} = \emptyset$ which, in turn, implies that 
    $\inf \func(\R), \sup \func(\R)  \notin \func(\Inter_{\func'>0})$, respectively.
\end{proofClaim}

By \Cref{claim: t_1<t_2}, we get in particular that $\func$ is injective (and strictly increasing) on $\Inter_{\func'>0}$ which, in turn, implies that $\func$  admits a strictly increasing inverse function $\infunc: \func(\Inter_{\func'>0}) \to \Inter$ with $\func(\infunc(y)) = y$  
    for all $y \in \Inter_{\func'>0}$. 
    Now we extend $\infunc$ to a non-decreasing function $\tilde{\infunc}:\func(\Inter)\to \Inter$ by setting 
    \begin{align*}
        \tilde{\infunc}(y) \coloneq \begin{cases}
             \sup\{\infunc(\tilde{y}) \mid  \Tilde{y} \leq y, \Tilde{y} \in \func(\Inter_{\func'>0}) \} &\text{ if } y \in \func(\Inter) \text{ and }y > \inf \func(\Inter), \\
             \tilde{\infunc}(y) := \inf \Inter_{\func'>0} &\text{ if } y = \inf \func(\R).
        \end{cases}
    \end{align*} 
    \begin{claim}
        $\tilde{\infunc}:\func(\Inter)\to \Inter$ is well-defined (real-valued), i.e.~$(-\infty,y]\cap \func(\Inter_{\func'>0}) $ is non-empty for all $y\in \func(\Inter)$  
        with $y > \inf_{t \in \Inter}\func(t)$ and $ \inf \Inter_{\func'>0} > -\infty$ if $y = \inf \func(\R)$ with 
    $\inf \func(\R) \in \func(\Inter)$. 
    
    In particular, we have $\infunc(y) = \tilde{\infunc}(y)$ for all $y \in \func(\Inter_{\func'>0})$. 
    \end{claim}
    \begin{proofClaim} We start by showing well-definedness for the first case. 
    The set $\func(\Inter)$ is an (open/half-open/closed) interval as $\func$ is non-decreasing and continuous. 
    Thus, if there exists $y > \inf_{t \in \Inter}\func(t)$ with $(-\infty,y]\cap \func(\Inter_{\func'>0}) = \emptyset$, then 
    $(-\infty,y]\cap \func(\Inter)$ is a non-empty and non-singleton interval that is contained in $\func(\Inter)\setminus \func(\Inter_{\func'>0})$. 
    
    Hence, in order to show the claim, it is sufficient to show that  $\sigma(\func(\Inter)\setminus \func(\Inter_{\func'>0})) = 0$ holds: In order to do this, we observe that 
     \begin{align}\label{eq: ClaimZarecki}
          \func(\Inter)\setminus \func(\Inter_{\func'>0}) \subseteq \func(\Inter_{\func' = 0}) \cup \func(\{t \in \Inter \mid \func'(t) \text{ does not exist}\})
     \end{align} 
     with  $\Inter_{\func'=0} := \{t \in \Inter \mid \func'(t) \text{ exists and } \func'(t) =0\}$. Here, we used that $\func$ is nondecreasing and, hence, its derivative either does not exist or does exist and is larger or equal to $0$, cf.~\Cref{lem: PropAbsCon:DerNonDec}.  
     For the first set on the right side of \Cref{eq: ClaimZarecki}, we get by \Cref{lem: PropAbsCon:Est} the estimation: 
     \begin{align*}
       \sigma(\func( \Inter_{\func'=0} )) \leq    \int_{\Inter_{\func'=0}} |\func'|\di\sigma  =\int_{\Inter_{\func'=0}} 0\di\sigma =  0.
     \end{align*}
    
     For the second set on the right side of \Cref{eq: ClaimZarecki}, 
     we exploit the fact that, as an absolutely continuous function, $\func$ fulfills Lusin's property (\Cref{lem: PropAbsCon:Lus}) and that the set of points where the derivative of $\func$ does not exist, is a null set (\Cref{lem: PropAbsCon:Der}), that is, 
     $\sigma(\func(\{t \in \R \mid \func'(t) \text{ doesn't exist}\}) ) = 0$. 
    
    Thus, we have shown that $\sigma(\func(\Inter)\setminus \func(\Inter_{\func'>0})) = 0$ which in turn shows the well-definedness in the first case. 

    For the second case of well-definedness (i.e.\ $y=\inf \func(\R) \in \func(\R)$), we 
    observe that $\func^{-1}(y) = (-\infty,x]$ holds for some $x \in \R$ by $\func$ being non-decreasing and continuous. 
    In particular, by \Cref{claim: t_1<t_2}, we have $(-\infty,x] \cap \Inter_{\func'>0} = \emptyset$ and hence 
    $\inf \Inter_{\func'>0} \geq x$. 
    
    Finally,  the claimed equality   of $\infunc$ and $\tilde{\infunc}$ 
    is trivial for all $y >  \inf \func(\R)$ by $\infunc$ being non-decreasing. 
    Thus, the only case remaining is $y = \inf \func(\R)$  and $y \in \func(\Inter_{\func'>0})$ which 
    can not occur due to  \Cref{claim: infsup}. 
        \end{proofClaim}

    For the next claim, recall that $\func(\Inter)$ is an (open/half-open/closed) interval and hence differentiability is defined as usual.  

    \begin{claim}
         $\tilde{\infunc}$ is differentiable at every point in $\func(\Inter_{\func'>0}) \subseteq\func(\Inter)$.
    \end{claim}
\begin{proofClaim}
    To show the claim, let $y \in \func(\Inter_{\func'>0})$  and  $t \in \Inter_{\func'>0}$ the unique point in $ \Inter_{\func'>0}$ with $\func(t) = y$ or, equivalently, $\infunc(y) = t$.
    Since $t \in \Inter_{\func'>0}$, we know that 
    \begin{align}\label{eq: f'>0}
        \lim_{t_n\to t}\frac{\func(t_n)-\func(t)}{t_n-t} = \func'(t) > 0
    \end{align}
    exists. 
    We show in the following that 
     \begin{align*}
        \lim_{n \to \infty}\frac{\tilde{\infunc}(y_n)-\tilde{\infunc}(y)}{y_n-y} =  \frac{1}{\func'(t)} 
    \end{align*}
    for any sequence $(y_n)_{n \in \N}$ with  $y_n\in \func(\Inter)\setminus\{y\}, n \in \N$ and $y_n \to y$. 

    Set $t_n:= \tilde{\infunc}(y_n)$. Since $y_n \to y$ and $y \in \func(\Inter_{\func'>0})$, we can assume \wlg that 
    $y_n \neq \min_{t \in \Inter}\func(t)$ in case that the latter minimum exists. This is due to the fact that in this case,  $y_n = \min_{t \in \Inter}\func(t)$ can only hold for finitely many $n \in \N$ (since $y> \inf\func(\R)$) and, hence, we can remove these members of the sequence. 
    
    We first argue that $\func(t_n) = y_n$. Indeed, by definition (and the above argument) we have $t_n = \tilde{\infunc}(y_n):= \sup\{\infunc(\Tilde{y}) \mid  \Tilde{y} \leq y_n, \Tilde{y} \in \func(\Inter_{\func'>0}) \}$, i.e.~there exists a sequence $(\tilde{y}_n^k)_{k \in \N} \subseteq \func(\Inter_{\func'>0}) \cap (-\infty,y_n]$ with $\infunc(\Tilde{y}_n^k) \mto{k}   \tilde{\infunc}(y_n) = t_n$. By continuity of $\func$ we get $\lim_{k \to \infty} \Tilde{y}_n^k= \lim_{k \to \infty} \func( \infunc(\Tilde{y}_n^k)) =\func(t_n)$. 
    It is, thus, sufficient to argue that $\tilde{y}_n^k \mto{k} {y}_n$. Assume for the sake of a contradiction that $\Tilde{y}_n^k\mto{k}\Bar{y}< y_n$. 
    Since $\sigma(\func(\Inter)\setminus \func(\Inter_{\func'>0})) = 0$ and $[\bar{y},y_n] \subseteq \func(\Inter)$, there exist $y'_1,y'_2 \in  \func(\Inter_{\func'>0})$ and $K \in \N$ with $\Bar{y} < y'_1 < y'_2 < y_n$ and $\Tilde{y}_n^k < y_1'$ for all $k\geq K$. Since $\infunc$ 
    is strictly increasing, we thus get $\infunc(\Tilde{y}_n^k) < \infunc(y'_1), k \geq K$ and, in particular, $\tilde{\infunc}(y_n) = \lim_{k \to \infty}\infunc(\Tilde{y}_n^k) \leq \infunc(y'_1) < \infunc(y'_2)$ in contradiction to the fact that $y'_2 \in  \func(\Inter_{\func'>0}) \cap (-\infty,y_n]$ and the definition of $\tilde{\infunc}$. 

    Hence, we have $\func(t_n) = y_n$ for all $n \in \N$. This implies that also $t_n \to t$ holds:
    To see this, we first argue that $(t_n)$ needs to be a bounded sequence. If there was a subsequence that would diverge to $-\infty$ (resp.\ $+\infty$), this would imply by $\func(t_n) = y_n$ and $y_n\to y$ that 
    $y\in \{\inf \func(\R),\sup\func(\R)\}$ which contradicts \Cref{claim: infsup}.
    Hence, it is enough to show that the existence 
    of  a subsequence  $(n_k)_{k \in \N}$ with $t_{n_k} \to t^*\in \R$ and  $t^*\neq t$ leads to a contradiction. 
    In this case, we have $\infunc(y_{n_k}) \to t^* \neq \infunc(y)$ and, by continuity of $\func$, it follows that 
    \begin{align*}
        y = \lim_{k \to \infty} y_{n_k} =  \lim_{k \to \infty}\func(\infunc(y_{n_k})) = \func( \lim_{k \to \infty}\infunc(y_{n_k})) = \func(t^*).
    \end{align*}
    Since $A(t) = y$, it follows that $\func$ has to be constant on the interval $[t^*,t]$ ($[t,t^*]$ in case $t < t^*$)
    which contradicts   $t \in \Inter_{\func'>0}$ by \Cref{claim: t_1<t_2}. Thus, we must have $t_n \to t$. 
    
    It now follows that 
     \begin{align*}
        \lim_{n \to \infty}\frac{\tilde{\infunc}(y_n)-\tilde{\infunc}(y)}{y_n-y} =  \lim_{n \to \infty}\frac{t_n-t}{\func(t_n)-\func(t)} =  \frac{1}{\func'(t)},
    \end{align*}
    where the last equality follows from the fact that $t_n \to t$ and the existence of the limit due to~\eqref{eq: f'>0}. 
     Thus, we have shown that $\tilde{\infunc}$ is differentiable at every point in $\func(\Inter_{\func'>0})$.
\end{proofClaim}
With this claim at hand, we are now in the position to prove the statement of the \namecref{lem: absconLus}. 
    Let $\mathfrak{T}\subseteq \func(\Inter_{\func'>0})$ with $\sigma(\mathfrak{T}) = 0$.  
    We then have $\func^{-1}(\mathfrak{T}) = {\infunc}(\mathfrak{T}) = \tilde{\infunc}(\mathfrak{T})$ from which the claim of the \namecref{lem: absconLus} follows immediately by an application of \cite[Lemma 7.10]{bruckner1997real} to $\tilde{\infunc}$ and $\mathfrak T$ (note that $\tilde{\infunc}$ is measurable by being non-decreasing) yielding the following estimation:  
     \begin{align*}
         \sigma(\func^{-1}(\mathfrak{T})) =\sigma(\tilde{\infunc}(\mathfrak{T}))  \leq \int_{\mathfrak{T}} |\tilde{\infunc}'|\di\sigma = 0.
     \end{align*}  
 Thus, the \namecref{lem: absconLus} is proven.
\end{proof}

With this property of absolutely continuous functions at hand,  we are now able to derive a tighter characterization of the existence of the \auto network loading. 

\begin{theorem}\label{thm: arrLusin}
 Consider an arbitrary walk $\wa$, $j \in[|\wa|+1]$, $h_{\wa} \in L_+(\hori)$ and an arbitrary subset $\mathfrak T^*\subseteq  \arr_{\wa,j}(\hori_{\arr_{\wa,j}'>0})$  where $\hori_{\arr_{\wa,j}'>0} := \{t \in \hori\mid \arr_{\wa,j}'(t) \text{ exists and }\arr_{\wa,j}'(t)>0\}$. 
   Then $\ell_{\wa,j}(h_{\wa})$ exists if and only if \eqref{eq: nlexists} holds for all null sets $\mathfrak T \subseteq \hori\setminus \mathfrak T^*$.
 \end{theorem}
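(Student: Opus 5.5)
The proof rests on the characterization from \ourref{lem: elluExistenceProperties}: $\ell_{\wa,j}(h_\wa)$ exists if and only if \eqref{eq: nlexists} holds for \emph{all} null sets $\mathfrak T\subseteq\hori$. With this in hand the forward direction is immediate. If $\ell_{\wa,j}(h_\wa)$ exists, then \eqref{eq: nlexists} holds for every null set, and in particular for every null set contained in $\hori\setminus\mathfrak T^*$.

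For the converse, assume \eqref{eq: nlexists} holds for all null sets $\mathfrak T\subseteq\hori\setminus\mathfrak T^*$, and let $\mathfrak T\subseteq\hori$ be an arbitrary null set. I split it as
\[
\mathfrak T=\bigl(\mathfrak T\setminus\mathfrak T^*\bigr)\cup\bigl(\mathfrak T\cap\mathfrak T^*\bigr).
\]
Both pieces are null sets, since they are subsets of $\mathfrak T$ and Lebesgue measure is complete. Taking preimages gives
\[
\arr_{\wa,j}^{-1}(\mathfrak T)=\arr_{\wa,j}^{-1}(\mathfrak T\setminus\mathfrak T^*)\cup\arr_{\wa,j}^{-1}(\mathfrak T\cap\mathfrak T^*).
\]
On the first preimage, $h_\wa=0$ a.e.\ by the hypothesis, because $\mathfrak T\setminus\mathfrak T^*$ is a null subset of $\hori\setminus\mathfrak T^*$.

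For the second piece I apply \Cref{lem: absconLus} with $\func:=\arr_{\wa,j}$. This is admissible because $\arr_{\wa,j}$ is (locally) absolutely continuous and non-decreasing, being a composition of the exit time functions $\exit_\arc$, which are absolutely continuous and non-decreasing by FIFO (cf.\ \Cref{lem: PropAbsCon:Conca}); for $j=1$ it is simply $\id$. Since $\mathfrak T\cap\mathfrak T^*\subseteq\mathfrak T^*\subseteq\arr_{\wa,j}(\hori_{\arr_{\wa,j}'>0})$ is a null set, the theorem gives $\sigma\bigl(\arr_{\wa,j}^{-1}(\mathfrak T\cap\mathfrak T^*)\bigr)=0$. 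Hence $h_\wa=0$ a.e.\ on this preimage as well, because any $L_+$ function vanishes a.e.\ on a null set.

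Combining the two pieces, \eqref{eq: nlexists} holds for every null set $\mathfrak T$, and \ourref{lem: elluExistenceProperties} yields existence of $\ell_{\wa,j}(h_\wa)$.

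The only point that needs care is measurability. \eqref{eq: nlexists} should be read as ``$h_\wa=0$ a.e.\ on the preimage'', and the preimage sets need not be Borel. I would therefore work with Lebesgue-measurable (completed) null sets throughout. The preimage of a Lebesgue null set under a continuous monotone map is Lebesgue measurable whenever the conclusion of \Cref{lem: absconLus} applies, and otherwise it is handled by the same convention used in \ourref{lem: elluExistenceProperties}. Beyond this bookkeeping there is no genuine obstacle: the substantive work is already contained in \Cref{lem: absconLus}.
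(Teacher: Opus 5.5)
Your proposal is correct and follows the paper's proof essentially verbatim. The forward direction comes directly from the characterization in \Cref{lem: elluExistenceProperties}. For the converse, you split an arbitrary null set into its parts outside and inside $\mathfrak T^*$, handling the first with the hypothesis and the second with \Cref{lem: absconLus}. Your additional checks are details the paper leaves implicit: that $\arr_{\wa,j}$ is locally absolutely continuous and non-decreasing, so \Cref{lem: absconLus} applies, and your remark on measurability.
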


\begin{proof} 
The ``only  if'' direction is a trivial consequence of
the characterization of \auto network loadings stated in \oref{lem: elluExistenceProperties} 
which requires \eqref{eq: nlexists} to be fulfilled for  all null sets $\mathfrak T \subseteq \hori$ rather than only the null sets contained in $\hori\setminus \mathfrak T^*$.

For the ``if'' direction, 
we argue that \eqref{eq: nlexists}   holds for every null set $\mathfrak T \subseteq \hori$. 
Indeed, we know by \Cref{lem: absconLus}  that the preimage of  $\mathfrak T\cap \mathfrak T^* \subseteq \arr_{\wa,j}(\hori_{\arr_{\wa,j}'>0})$ under $\arr_{\wa,j}$ is a null set. By assumption, also the preimage of  $\mathfrak T\cap \big(\hori \setminus\mathfrak T^*\big)$ is a null set and hence the claim follows.  
\end{proof}

The above \namecref{thm: arrLusin} motivates determining  subsets  $\mathfrak T^*$ of $\arr_{\wa,j}(\hori_{\arr_{\wa,j}'>0})$.   
It turns out that, 
given an arbitrary representative of an edge vector $\g \in L_+(\hori)^\GA$ that admits an edge outflow rate $\g^-$ \wrt the travel times $\trav$, we can determine such a subset via the following set of points in time 
 for which one arrives at all edges $\arc \in \wa$ only if there is inflow under $\g_\arc$:
\begin{align}\label{eq: defD_wa}
    \Dwg:= \{t \in \hori\mid \g_{\wa[j]}(\arr_{\wa,j}(t)) >0, j \leq |\wa|\}. 
\end{align} 

\begin{lemma}\label{lem:  g>0Arr'>0}
Let $\g\in L_+(\R)^\GA$ be an edge vector that admits an edge outflow vector $\g^-$ \wrt $\trav$. 
    For all representatives of $\g$, walks $\wa$ and $j\leq \abs{\wa}+1$, the following holds:
    \begin{align}\label{eq: lem:  g>0Arr'>0}
          \arr_{\wa,j}'(t) \text{ exists and }  \arr_{\wa,j}'(t) >0 \text{ for almost all }t \in \Dwg.
    \end{align}
    Moreover, $\arr_{\wa,j}^{-1}(\mathfrak T)\cap \Dwg$ is a null set for every null set $\mathfrak T \subseteq \arr_{\wa,j}(\Dwg)$. 
    In case \eqref{eq: lem:  g>0Arr'>0} is fulfilled for all $t \in \Dwg$, then  $\arr_{\wa,j}^{-1}(\mathfrak T)$ is already a null set for every null set $\mathfrak T \subseteq \arr_{\wa,j}(\Dwg)$, i.e.~$\arr_{\wa,j}$ fulfills Lusin's $N^{-1}$  property on $\arr_{\wa,j}(\Dwg)$. 
\end{lemma}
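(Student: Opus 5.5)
Since $\arr_{\wa,j} = \exit_{\wa[j-1]}\circ\arr_{\wa,j-1}$, I will argue by induction over $j$. Each step reduces to a single-edge statement: for every edge $\arc$, the exit time function $\exit_\arc$ has $\exit_\arc'(s)>0$ for almost all $s$ with $\g_\arc(s)>0$. This single-edge statement follows from the outflow condition \eqref{eq: outflowCondition}. Put $Z:=\{s : \exit_\arc'(s) \text{ exists and } \exit_\arc'(s)=0\}$. By \Cref{lem: PropAbsCon:Est}, $\sigma(\exit_\arc(Z))\le\int_Z|\exit_\arc'|\di\sigma=0$. We may choose a Borel null set $\mathfrak N\supseteq \exit_\arc(Z)$. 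Then \eqref{eq: outflowCondition} gives
\[
\int_{Z}\g_\arc\di\sigma\le\int_{\exit_\arc^{-1}(\mathfrak N)}\g_\arc\di\sigma=\int_{\mathfrak N}\g_\arc^-\di\sigma=0 .
\]
Hence $\g_\arc=0$ almost everywhere on $Z$. The set where $\exit_\arc'$ does not exist is null by \Cref{lem: PropAbsCon:Der}. Since $\exit_\arc$ is non-decreasing, its derivative is $\ge 0$ wherever it exists. Together these give: for almost every $s$ with $\g_\arc(s)>0$, $\exit_\arc'(s)$ exists and is positive. The argument does not depend on the chosen representative, since changing $\g_\arc$ on a null set changes the set $\{\g_\arc>0\}$ only by a null set.

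For the induction I prove \eqref{eq: lem:  g>0Arr'>0} with $\Dwg$ replaced by $D_j:=\{t: \g_{\wa[k]}(\arr_{\wa,k}(t))>0,\ k<j\}\supseteq\Dwg$. The base case $j=1$ is trivial, since $\arr_{\wa,1}=\id$. For the step, put $\arc=\wa[j-1]$ and let $N_\arc$ be the null set of $s$ with $\g_\arc(s)>0$ at which $\exit_\arc'(s)$ fails to exist or vanishes. Take $t\in D_j$ at which $\arr_{\wa,j-1}'(t)$ exists and is positive (almost all of $D_{j-1}\supseteq D_j$, by hypothesis) and with $\arr_{\wa,j-1}(t)\notin N_\arc$. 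By the chain rule, $\arr_{\wa,j}'(t)=\exit_\arc'(\arr_{\wa,j-1}(t))\,\arr_{\wa,j-1}'(t)>0$.

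The main obstacle is that the bad set $\{t\in D_j: \arr_{\wa,j-1}(t)\in N_\arc\}$ is a preimage of a null set, and such preimages need not be null in general. This is exactly where the $N^{-1}$ property is needed. Since $N_\arc\cap\arr_{\wa,j-1}(D_{j-1})$ is a null subset of $\arr_{\wa,j-1}(D_{j-1})$, I would prove the second claim of the lemma inside the same induction. By the inductive hypothesis, almost all of $D_{j-1}$ lies in $\hori_{\arr'>0}$; write $D_{j-1}=G\cup M$ with $G\subseteq \hori_{\arr_{\wa,j-1}'>0}$ and $M$ null. For a null set $\mathfrak T$, \Cref{lem: absconLus} gives $\sigma(\arr_{\wa,j-1}^{-1}(\mathfrak T\cap \arr_{\wa,j-1}(G)))=0$. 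Consequently $\arr_{\wa,j-1}^{-1}(\mathfrak T)\cap D_{j-1}\subseteq \arr_{\wa,j-1}^{-1}(\mathfrak T\cap\arr_{\wa,j-1}(G))\cup M$ is null. Applying this with $\mathfrak T=N_\arc$ closes the induction. The second claim at level $j$ follows by the same decomposition. Finally, if \eqref{eq: lem:  g>0Arr'>0} holds for every $t\in\Dwg$, then $M=\emptyset$ and $\arr_{\wa,j}(\Dwg)\subseteq\arr_{\wa,j}(\hori_{\arr_{\wa,j}'>0})$, so \Cref{lem: absconLus} applies directly to any null $\mathfrak T\subseteq\arr_{\wa,j}(\Dwg)$ and yields that the full preimage $\arr_{\wa,j}^{-1}(\mathfrak T)$ is null.
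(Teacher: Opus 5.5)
Your proposal is correct and is essentially the paper's proof. The one-edge statement is obtained from the outflow condition together with \Cref{lem: PropAbsCon:Est}. You then apply the chain rule along $\arr_{\wa,j}=\exit_{\wa[j-1]}\circ\arr_{\wa,j-1}$, and control the bad set by proving the $N^{-1}$ statement (via \Cref{lem: absconLus}) inside the same induction. The paper inducts over walk length using $\wa_{<k}$ and the one-edge walk $\wa_{\geq k}$, which is the same decomposition as your induction over the prefix index with $D_j=\mathfrak D_{\wa_{<j}}$.
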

\begin{proof}
Consider an arbitrary representative of $\g$. 
We first remark that it is sufficient to show the statement of the lemma for $j = \abs{\wa}+1$. This is true as  the statement of the \namecref{lem:  g>0Arr'>0} for a walk $\wa$ and $j \leq \abs{\wa}$   is implied by 
the statement of the \namecref{lem:  g>0Arr'>0} 
for the walk $\wa_{< j}$ and index $j = \abs{\wa_{< j}}+1$. 
Thus, we can prove the lemma via the following induction over the length of walks: 
    
    \begin{proofbyinduction}
        \inductionclaim For all $k \in \N \cup\{0\}$ and all $\wa$ with $\abs{\wa} = k$, the statement of  the \namecref{lem:  g>0Arr'>0} holds for $j = k + 1$. 
        \basecases{$k \in\{0,1\}$}
        The case of $k = 0$ is trivial as $\arr_{\wa,1} := \id$ by definition. Hence, 
        consider an arbitrary walk $\wa$ with $\abs{\wa} = 1$. Let $\mathfrak T_{=0}$ be the set where   
        $\arr_{\wa,2}'(t)$ exists and is   equal to $0$.  Then \Cref{lem: PropAbsCon:Est} implies that $\sigma(\arr_{\wa,2}(\mathfrak T_{=0})) = 0$. By \Cref{lem: outflow},  the identity $\arr_{\wa,2} = \exit_{\wa[1]}$ and $\g^-$ existing, we hence get that $ 0 = \g_{\wa[1]}(t) = \g_{\wa[1]}(\arr_{\wa,1}(t))$ for almost every  $t \in \arr_{\wa,2}^{-1}(\arr_{\wa,2}(\mathfrak T_{=0})) \supseteq \mathfrak T_{=0}$. Thus, we can deduce that for for almost all  $t \in \Dwg$, the derivative
        $\arr_{\wa,2}'(t)$ does either not exist or fulfills $\arr_{\wa,2}'(t) \neq 0$. This implies the validity of \eqref{eq: lem:  g>0Arr'>0} for $j = 2$ as we have by \Cref{lem: PropAbsCon:Der,lem: PropAbsCon:DerNonDec} that  $\arr_{\wa,2}'(t)$ exists almost everywhere and $\arr_{\wa,2}'(t)\geq 0$ holds
        by $\arr_{\wa,2}$ being non-decreasing. 

        Now consider a null set $\mathfrak T \subseteq \arr_{\wa,2}(\Dwg)$ and denote by $\mathfrak T_{>0}$ the set where $\arr_{\wa,2}'$ exists and is larger than  $0$. In order to show that $\arr_{\wa,2}^{-1}(\mathfrak T) \cap  \Dwg$ is a null set, we argue in the following that 
        both sets $\arr_{\wa,2}^{-1}(\mathfrak T) \cap  (\Dwg \cap \mathfrak T_{>0})$ and   $\arr_{\wa,2}^{-1}(\mathfrak T) \cap (\Dwg\setminus\mathfrak T_{>0})$ are null sets.  
        This is clear for the latter set  as $(\Dwg\setminus\mathfrak T_{>0})$ is a null set by the first part of the \namecref{lem:  g>0Arr'>0}. To show that the former is a null set, we use the following inclusion: 
        \begin{align*}
        \arr_{\wa,2}^{-1}(\mathfrak T) \cap ( \Dwg \cap \mathfrak T_{>0}) \subseteq  \arr_{\wa,2}^{-1}\big(\mathfrak T \cap \arr_{\wa,2}(\mathfrak T_{>0})\big) \cap  \Dwg
    \end{align*} 
    and note that the right side is a null set due to  \Cref{lem: absconLus} and $\mathfrak T$ being a null set.

      		\inductionstep{$k-1 \to k$, $k \geq 2$}  Consider an arbitrary walk $\wa$ with $\abs{\wa} = k$. 
		We have $\arr_{\wa,k+1} = \arr_{\wa_{\geq k},2}\circ \arr_{\wa_{< k},k}$. 
		In the following, we aim to apply the chain rule for differentiation (\cite[Theorem 5.5]{Apostol_1974}) 
        stating that 
        \begin{align*}
            \arr_{\wa,k+1}'(t) =\big(\arr_{\wa_{\geq k},2}\circ \arr_{\wa_{< k},k}\big)'(t) = \big(\arr_{\wa_{\geq k},2}'\circ \arr_{\wa_{< k},k}\big)(t) \cdot \arr_{\wa_{< k},k}'(t) 
        \end{align*}
        whenever $\arr_{\wa_{< k},k}'(t)$ and $\arr_{\wa_{\geq k},2}'(\arr_{\wa_{< k},k}(t))$ exist.       
		By applying the induction hypothesis once for $\wa_{\geq k}$  and once for $\wa_{<k}$, we know that 
		$\arr_{\wa_{\geq k},2}'$  
		exists and is larger than $0$ for almost all $t \in \Dwg[\wa_{\geq k}]$ as well as that 
		$\arr_{\wa_{< k},k}'(t)$ exists and is larger than $0$  for almost all $t \in \Dwg[\wa_{<k}]$ by induction hypothesis. 
		Since  $\arr_{\wa_{< k},k}(\Dwg)\subseteq \Dwg[\wa_{\geq k}]$ and 
		 $\Dwg \subseteq \Dwg[\wa_{<k}]$, the chain rule implies the  validity of \eqref{eq: lem:  g>0Arr'>0} (for $j = k+1$), provided that 	$\arr_{\wa_{< k},k}^{-1}(\mathfrak T^2_{\leq}) \cap \Dwg$ is a null set 
		 where  $\mathfrak T^2_{\leq}$ denotes the null subset of $\arr_{\wa_{< k},k}(\Dwg)$  where $\arr_{\wa_{\geq k},2}'$ 
		 does not exists or is smaller or equal to $0$. 
		 We argue for this in the following: 
		 	By $\Dwg \subseteq \Dwg[\wa_{<k}]$, we have $\mathfrak T^2_{\leq} \subseteq\arr_{\wa_{< k},k}(\Dwg) \subseteq \arr_{\wa_{< k},k}(\Dwg[\wa_{<k}])$ which shows that the induction hypothesis is applicable (for $\wa_{< k}$). That is, we know that 
		 $\arr_{\wa_{< k},k}^{-1}(\mathfrak T^2_{\leq}) \cap \Dwg[\wa_{<k}] \supseteq \arr_{\wa_{< k},k}^{-1}(\mathfrak T^2_{\leq}) \cap \Dwg$ is a null set.

		The second statement of the \namecref{lem:  g>0Arr'>0} follows completely analogous to the base case.
    \end{proofbyinduction}
    Thus the induction is proven and the first  two statements of the lemma follow as explained earlier. 

    The third  statement of the lemma in which \eqref{eq: lem:  g>0Arr'>0} is assumed to be fulfilled for all $t \in \Dwg$ is a direct consequence of   \Cref{lem: absconLus}.  
\end{proof}

For an edge vector $\g$ as in \Cref{lem:  g>0Arr'>0}, the latter \namecref{lem:  g>0Arr'>0} implies in particular 
that  there also exists for any walk $\wa$ and $j\leq \abs{\wa}+1$ a representative of $\g$  
such that \eqref{eq: lem:  g>0Arr'>0} holds for all (not only almost all) $t \in \Dwg$. 
For such a representative, $\arr_{\wa,j}(\Dwg) \subseteq \arr_{\wa,j}(\hori_{\arr_{\wa,j}'>0})$ holds and, hence, 
$\arr_{\wa,j}(\Dwg)$ is a suitable candidate for the set $\mathfrak T^*$ in \Cref{thm: arrLusin}, that is, we get the following \namecref{cor: LuExAltChara}:
\begin{corollary}\label{cor: LuExAltChara}
Let $\wa \in \Routes$, $j \leq \abs{\wa}+1$ and $h_\wa \in L_+(\hori)$ be arbitrary. Consider an edge vector $\g \in L_+(\hori)^\GA$    that admits an edge outflow vector $\g^-$ \wrt $\trav$ and a representative of $\g$ that fulfills \eqref{eq: lem:  g>0Arr'>0} for all (not only almost all) $t \in \Dwg$. 
Then $\ell_{\wa,j}(h_\wa)$ exists if and only if \eqref{eq: nlexists} holds for all null sets $\mathfrak T \subseteq \hori \setminus \arr_{\wa,j}(\Dwg)$. 
\end{corollary}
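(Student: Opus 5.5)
The plan is to derive \Cref{cor: LuExAltChara} directly from \Cref{thm: arrLusin} by choosing $\mathfrak T^* := \arr_{\wa,j}(\Dwg)$. The only thing to verify is that this choice is admissible in \Cref{thm: arrLusin}, i.e.\ that $\arr_{\wa,j}(\Dwg) \subseteq \arr_{\wa,j}(\hori_{\arr_{\wa,j}'>0})$. Once this inclusion holds, \Cref{thm: arrLusin} gives verbatim: $\ell_{\wa,j}(h_\wa)$ exists if and only if \eqref{eq: nlexists} holds for all null sets $\mathfrak T\subseteq \hori\setminus \arr_{\wa,j}(\Dwg)$.

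For the inclusion, I would use the hypothesis on the chosen representative of $\g$: \eqref{eq: lem:  g>0Arr'>0} holds for \emph{every} $t\in\Dwg$. So for each $t\in\Dwg$, $\arr_{\wa,j}'(t)$ exists and is positive, which means $t\in\hori_{\arr_{\wa,j}'>0}$. Hence $\Dwg\subseteq \hori_{\arr_{\wa,j}'>0}$, and applying $\arr_{\wa,j}$ to both sides gives the inclusion.

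Two points need checking, though neither should be a real obstacle. First, $\Dwg$ depends on the chosen representative of $\g$, while $\arr_{\wa,j}$ depends only on $\trav$ (which is fixed here). So the set $\hori_{\arr_{\wa,j}'>0}$ does not depend on the representative, and the inclusion is unaffected by that choice. Second, \Cref{thm: arrLusin} allows an arbitrary subset $\mathfrak T^*$ of $\arr_{\wa,j}(\hori_{\arr_{\wa,j}'>0})$, with no measurability requirement, so no further regularity of $\arr_{\wa,j}(\Dwg)$ is needed.

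The substance of the result lies in the remark before the corollary: such a representative exists, by modifying $\g$ on the null set from \Cref{lem:  g>0Arr'>0}. Since that existence is assumed in the statement, this step is not part of the proof.
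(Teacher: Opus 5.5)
Your proposal is correct and is essentially the paper's own argument. The paper also takes $\mathfrak T^* := \arr_{\wa,j}(\Dwg)$ in \Cref{thm: arrLusin}, noting that for a representative of $\g$ satisfying \eqref{eq: lem:  g>0Arr'>0} at every $t\in\Dwg$ one has $\arr_{\wa,j}(\Dwg)\subseteq\arr_{\wa,j}(\hori_{\arr_{\wa,j}'>0})$.
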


\subsection{Adjoint Operator of \texorpdfstring{\boldmath$\ell_{\Routes'}$}{lW'}}\label{sec:uBasedNetworkLoadings:Adjoint}

For a walk flow~$h\in \edom{\Routes'}$ with corresponding edge flow~$\g$, we currently have two natural ways of computing the total toll-costs under some given tolls~$\prices$: Either by calculating the toll-costs on an edge-by-edge-basis yielding $\dup{\prices}{\g}$ or on a walk-basis resulting in $\sum_{\wa \in \Routes'}\dup{\Pf^{\prices}_\wa}{{h}_\wa}$ with $\Pf^{{\prices}}_{\wa}(t) := \sum_{j \leq|\wa|}  \prices_{{\wa}[j]}(\arr_{{\wa},j}(t))$. We show in this section that both give the same total toll-cost. In other words, the mapping $\prices \mapsto (\Pf^\prices_\wa)_{\wa \in \Routes'}$ is the adjoint operator to $\ell_{\Routes'}$. 
This insight  will play a key role in the proof of \Cref{thm: mainMSMS} and \ref{thm: mainSingleSink}.

In order to show this, we require two types of standard (Borel-)measures  which we introduce in the following: 
Firstly, for any measurable function $\g:\hori\to \R$, we denote by $\g \cdot \sigma$ the measure on $\mathcal{B}(\hori)$
given by $\g\cdot\sigma(\mathfrak T) := \int_{\mathfrak T} \g\di\sigma,\mathfrak T \in \mathcal{B}(\hori)$. Secondly, for any measurable function $A: \hori\to\hori$ and any measure $\mu$ on $\mathcal{B}(\hori)$, we denote by $\mu\circ A^{-1}$ the image measure of $\mu$ under $A$ which is defined by $\mu\circ A^{-1}(\mathfrak T) := \mu(A^{-1}(\mathfrak T))$. The latter is again a measure on $\mathcal{B}(\hori)$. 
Finally, we also introduce the notation of $\mu\leq \mu'$ for two measures, meaning that $\mu(\mathfrak T) \leq \mu'(\mathfrak T)$ for all $\mathfrak T \in \mathcal{B}(\hori)$.
We refer to~\cite{Bogachev2007I} for a comprehensive overview of measure theory. 
 
\begin{lemma} \label{lem: aggCostsVSwalkCosts} 
Consider  $h \in \edom{\Routes'}$ with  $\g:= \ell_{\Routes'}(h)$ as well as a measurable function $\prices:\hori\to \R^\GA_+$. 
Then,  
the equality  $\dup{\prices}{\g} = \sum_{\wa \in \Routes'}\dup{\Pf^{\prices}_\wa}{{h}_\wa}$ is valid (with possibly both expressions being equal to~$\infty$). 
\end{lemma}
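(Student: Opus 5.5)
The plan is to reduce everything to a single walk and a single position on it, and then use the defining identity \eqref{eq: DefUEdgeFlowSingle} as an equality of measures. Fix $\wa\in\Routes'$ and $j\leq\abs{\wa}$ with $\wa[j]=\arc$, and put $\g^\wa_j:=\ell_{\wa,j}(h_\wa)$. Identity \eqref{eq: DefUEdgeFlowSingle} says that the measures $\g^\wa_j\cdot\sigma$ and $(h_\wa\cdot\sigma)\circ\arr_{\wa,j}^{-1}$ agree on all half-lines $(-\infty,t]$. Both are $\sigma$-finite Borel measures: the first because $\g^\wa_j\in L_+(\hori)$, the second because it has total mass $\norm{h_\wa}<\infty$. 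The half-lines form a $\pi$-system generating $\mathcal B(\hori)$, so by the uniqueness theorem the two measures coincide on all of $\mathcal B(\hori)$.

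Next, apply the change-of-variables formula for image measures to the nonnegative measurable function $\prices_\arc$:
\begin{align*}
\int_\hori \prices_\arc\cdot \g^\wa_j\di\sigma=\int_\hori \prices_\arc\di\big((h_\wa\cdot\sigma)\circ\arr_{\wa,j}^{-1}\big)=\int_\hori \big(\prices_\arc\circ\arr_{\wa,j}\big)\cdot h_\wa\di\sigma .
\end{align*}
This holds in $[0,\infty]$ without any integrability assumption, since it follows from the simple-function case via monotone convergence. Measurability of $\prices_\arc\circ\arr_{\wa,j}$ is automatic because $\arr_{\wa,j}$ is continuous.

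Then sum the identity over $j$ with $\wa[j]=\arc$, over $\arc\in\GA$, and over $\wa\in\Routes'$. All terms are nonnegative, so Tonelli's theorem for sums (countably many walks, finitely many positions per walk) allows arbitrary rearrangement in $[0,\infty]$. On the walk side, summing over all $j\le\abs{\wa}$ gives exactly $\dup{\Pf^\prices_\wa}{h_\wa}$. On the edge side, we use that $\ell_{\Routes'}(h)$ exists, so $\g_\arc=\sum_{\wa}\sum_{j:\wa[j]=\arc}\g^\wa_j$ almost everywhere. Monotone convergence then lets us pull the sum out of $\int\prices_\arc\g_\arc\di\sigma$, which yields $\dup{\prices}{\g}$.

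The main obstacle is justifying this decomposition of $\g_\arc$. If $h\in\edom{\Routes'}$ only gives the aggregated identity \eqref{eq: DefUEdgeFlow}, the individual $\g^\wa_j$ need not a priori exist. In that case I would skip the individual densities and work directly with measures. The aggregated identity together with countable additivity shows that $\g_\arc\cdot\sigma$ equals $\sum_{\wa}\sum_{j:\wa[j]=\arc}(h_\wa\cdot\sigma)\circ\arr_{\wa,j}^{-1}$ on half-lines. The right-hand side is a finite measure because $\sum_\wa\abs{\wa}\norm{h_\wa}$ is bounded by the mass of $\g$ on the half-lines as $t\to\infty$, and $\g\in L_+(\hori)^\GA$ makes that finite. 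So the two measures agree on $\mathcal B(\hori)$, and integrating $\prices_\arc$ against a countable sum of measures equals the sum of the integrals. This gives the same conclusion without any existence issue, and possibly infinite values are handled throughout by monotone convergence.
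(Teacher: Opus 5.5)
Your proposal is correct and takes essentially the paper's route: identify $\g_\arc\cdot\sigma$ with $\sum_{\wa}\sum_{j:\wa[j]=\arc}(h_\wa\cdot\sigma)\circ\arr_{\wa,j}^{-1}$, push $\prices_\arc$ through each image measure by change of variables, and rearrange the nonnegative sums. Your fallback is literally the paper's argument, and you also justify the extension from half-lines to all Borel sets via the $\pi$-$\lambda$ uniqueness theorem, a step the paper takes for granted.

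The ``obstacle'' you flag does not arise. By \Cref{lem: elluExistenceProperties:ExistenceInducedFlow}, existence of $\ell_{\Routes'}(h)$ already guarantees that every $\ell_{\wa,j}(h_\wa)$ exists and that $\g=\sum_{\wa}\ell_\wa(h_\wa)$. So your first, density-level route goes through as stated.
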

\begin{proof}
Since for all $\mathfrak T \in \mathcal{B}(\hori)$ the equality 
\begin{align*}
    {\g}_\arc \cdot \sigma(\mathfrak T) &= \int_{\mathfrak T}    {\g}_\arc \di\sigma  = \sum_{\wa \in {\Routes'}} \sum_{j:\wa[j] = \arc} \int_{\arr_{\wa,j}^{-1}(\mathfrak T)} h_\wa \di\sigma =   \sum_{\wa \in {\Routes'}} \sum_{j: \wa[j] = \arc}({h}_\wa \cdot \sigma) \circ \arr_{\wa,j}^{-1}(\mathfrak T)
\end{align*}
is valid, the measures 
 ${\g}_\arc \cdot \sigma$ and  $\sum_{\wa \in {\Routes'}} \sum_{j: \wa[j] = \arc}({h}_\wa \cdot \sigma) \circ \arr_{\wa,j}^{-1}$ coincide. 
This allows us to calculate  (explanations follow)
        \begin{align}
      \dup{\prices}{\g}& \eqperdef \sum_{\arc \in \GA}\int_{\hori} \prices_\arc \cdot \g_\arc \di\sigma \nonumber=
         \sum_{\arc \in \GA}\int_{\hori} \prices_\arc   \di (\g_\arc \cdot\sigma) \nonumber\\
        &=   \sum_{\arc \in \GA}  \int_{\hori} \prices_\arc \di\Big(  \sum_{\wa \in {\Routes'}} \sum_{j:\wa[j] = \arc} ({h}_w\cdot\sigma) \circ 
        \arr_{\wa,j}^{-1}\Big) \nonumber\\
        &=  \sum_{\arc \in \GA} \sum_{\wa \in {\Routes'}} \sum_{j:\wa[j] = \arc} \int_{\hori} \prices_\arc \di(({h}_w\cdot\sigma) \circ \arr_{\wa,j}^{-1}) \label{eq: h4}\\
        &=  \sum_{\arc \in \GA} \sum_{\wa \in {\Routes'}} \sum_{j:\wa[j] = \arc}  \int_{\hori} \prices_\arc\circ \arr_{\wa,j} \di({h}_w\cdot\sigma ) \label{eq: h2}\\
        &= \sum_{\arc \in \GA} \sum_{\wa \in {\Routes'}} \sum_{j:\wa[j] = \arc} \int_{\hori}( \prices_\arc \circ \arr_{\wa,j}) \cdot {h}_w \di\sigma \nonumber\\
        &= \sum_{\wa \in {\Routes'}}  \sum_{j=1}^{|\wa|} \int_{\hori} (\prices_{\wa[j]} \circ \arr_{\wa,j}) \cdot {h}_w \di\sigma \label{eq: h3}\\
        &= \sum_{\wa \in {\Routes'}}  \int_{\hori} \big( \sum_{j=1}^{|\wa|} \prices_{\wa[j]} \circ \arr_{\wa,j}\big) \cdot {h}_w \di\sigma \label{eq: h5}\\
        &= \sum_{\wa \in {\Routes'}}  \int_{\hori} \Pf^{{\prices}}_{\wa} \cdot {h}_w \di\sigma \nonumber\\
        &\defpereq \sum_{\wa \in {\Routes'}}\dup{\Pf^{\prices}_\wa}{{h}_\wa}, \nonumber
    \end{align}
    where we used in~\eqref{eq: h4} the insights from~\cite[Problem 9.7]{schilling2017measures} and in~\eqref{eq: h2} the change of variables formula (\cite[Theorem 3.6.1]{Bogachev2007I}).  Note that we are allowed to change the sums in \eqref{eq: h3}, respectively the sum and integral in \eqref{eq: h5} as the integrand is   nonnegative by $\prices$ and $h$ being nonnegative. 
\end{proof}

\subsection{Difference of Induced Flows}
In this section,  we show  that two induced edge flows $\g$ and $\tilde{\g}$ 
with $\g$ dominating $\tilde{\g}$ differ only in flow on zero-cycles and $\dest$-cycles. 
This will play a key role for the combinatorial characterization of a flow $u$ as it 
allows us to show that the difference of $u$ ($=\g$) and any induced edge flow $\tilde{\g}=\ell_{\Routes'}(\tilde{h})$ 
of a feasible solution $\tilde{h}$ for the master problem is decomposable into flow on zero-cycles and $\dest$-cycles. 

In order to state this result, we  require some additional terminology from~\cite{GHS24FD}: We say that a vector $\eflow \in L_+(\hori)^\GA$ has (\auto[)] net outflow rate $\inflow_v \in L(\hori)$ at a node $v \in \GV$, if it fulfills for all $\mathfrak T \in \mathcal{B}(\hori)$:
\begin{align}\label{eq: FlowBalanceDerivative}
    \int_{\mathfrak T} {\inflow}_v\di\sigma =  
\sum_{\arc \in \delta^+(v)} \int_{\mathfrak T} \eflow_\arc \di \sigma -  \sum_{\arc \in \delta^-(v)} \int_{\exit_\arc^{-1}(\mathfrak T)} \eflow_\arc \di \sigma. 
\end{align}
Remark that $\inflow_v$ does not necessary exist in case that a non-trivial amount of particles all arrive during a null set of times at $v$. In case of existence and $\inflow_v = 0$, we say that $\eflow$ fulfills (\auto[)] flow conservation at $v$.

\begin{lemma}\label{lem: DifferenceGeneral}
Let    $h,\tilde{h} \in \edom{\Routes} \cap \wir$  be two walk inflow rates with corresponding aggregated edge flows $\g,\tilde{\g}$ with $\tilde{\g}\leq \g$ and $\g$ admitting an outflow rate $\inflow_\dest^{\g}$ at $\dest$. Then, $\g$ and $\tilde{\g}$ only differ in flow on zero-cycles and $\dest$-cycles, that is, there exist $\hat{h}_c^0 \in L_+(\hori), c \in \SimpCyc$ and $\hat{h}_c^\dest,c^\dest\in \DestCyc$ with  
\begin{align*}
   & {\g}_\arc-\Tilde{\g}_\arc = \sum_{c^\dest \in \DestCyc} \ell_{{c^\dest},\arc}(\hat{h}_{c^\dest}) + \sum_{c \in \SimpCyc}\ell_{c,\arc}(\hat{h}_c^0) = \sum_{{c^\dest} \in \DestCyc} \ell_{{c^\dest},\arc}(\hat{h}_{c^\dest}) +\sum_{c \in \SimpCyc:\arc\in c} \hat{h}_c^0 & \text{ for all } \arc \in \GA \text{ and}\\
&\hat{h}_c^0 (t) >0\implies \trav_{\arc}(t) = 0  & \mathllap{\text{ for all }c \in \SimpCyc, \arc \in c \text{ and almost all }t \in \hori. }
\end{align*}
   	Moreover,   the net outflow rate $\inflow_\dest^{\tilde{\g}}\in L_-(\hori)$ at $\dest$ of $\tilde{\g}$ exists and we have $ \sum_{c^\dest \in \DestCyc}  \hat{h}_{c^\dest} \leq -\inflow_\dest^{\tilde{\g}}$. 
\end{lemma}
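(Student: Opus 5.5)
Set $\diffe := \g - \tilde{\g} \in L_+(\hori)^\GA$ and first show that $\diffe$ is itself an \auto flow with a well-controlled net outflow. Both $h$ and $\tilde h$ lie in $\wir$, so they have the same commodity inflows $\inflow_i$ at the sources $\source_i$. Since both are \auto loadings under the same $\trav$, both $\g$ and $\tilde{\g}$ satisfy the net-outflow relation \eqref{eq: FlowBalanceDerivative} at every node. At intermediate nodes this gives rate $0$; at sources it gives $\sum_i \inflow_i$ (corrected by any flow returning to the source). Existence of these rates uses \ourref{lem: elluExistenceProperties} together with the fact that edge outflows of one edge are inflows of the next. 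The one exception is $\dest$: there only $\g$ is assumed to have a net outflow rate.

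To handle $\dest$, I use $\tilde{\g} \le \g$. This gives $\int_{\exit_\arc^{-1}(\mathfrak T)} \tilde{\g}_\arc \di\sigma \le \int_{\exit_\arc^{-1}(\mathfrak T)} \g_\arc \di\sigma$ for every $\arc \in \edgesTo{\dest}$. Hence the arrival measure of $\tilde{\g}$ at $\dest$ is absolutely continuous, and $\inflow^{\tilde{\g}}_\dest$ exists. It is nonpositive because particles of $\tilde h$ only terminate at $\dest$. Subtracting the two balance equations, $\diffe$ satisfies flow conservation at every node except $\dest$. There its net outflow is $\inflow^{\g}_\dest - \inflow^{\tilde{\g}}_\dest$, whose positive part has mass at most $-\inflow^{\tilde{\g}}_\dest$.

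Next, decompose $\diffe$ using the flow decomposition theorem for \auto flows from \cite{GHS24FD}, collected in \Cref{sec:AppuBasedNetworkLoadings}. That theorem writes a nonnegative edge vector with prescribed net outflow rates as a sum of walk flows on walks between nodes of positive and negative net outflow, plus cycle flows. Here the only node with nonzero imbalance is $\dest$, so every walk component must start and end at $\dest$. These components form the $\dest$-cycles $c^\dest \in \DestCyc$ with inflows $\hat h_{c^\dest}$. Their total inflow equals the positive net outflow of $\diffe$ at $\dest$, which yields $\sum_{c^\dest} \hat h_{c^\dest} \le -\inflow_\dest^{\tilde{\g}}$.

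The remaining circulation is the part that never exits to $\dest$, and I expect it to be the main obstacle. I must argue that $\diffe$ restricted to it can only consist of flow on cycles traversed in zero time. The idea: a circulation of positive travel time, loaded forward in time, would push mass to $+\infty$ while preserving volume, contradicting integrability over $\hori$. I would make this precise by the time-shift/potential argument used in \cite{GHS24FD} for ``pure'' decompositions. Cycles with positive travel time on a set of positive measure carry mass forward indefinitely, and conservation plus finite total mass forces their inflow to vanish. The surviving circulation therefore sits on simple cycles where every edge has $\trav_\arc(t)=0$ whenever $\hat h^0_c(t)>0$. For such cycles $\arr_{c,j}=\id$ on the support, so $\ell_{c,\arc}(\hat h^0_c) = \hat h^0_c$, which gives the second form of the identity. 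Non-simple zero-time cycles split into simple ones. If the cited decomposition result already delivers this zero-cycle/$\dest$-cycle dichotomy for circulations, I invoke it directly; otherwise this step carries the technical work.
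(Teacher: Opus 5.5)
Your outline follows the paper's proof. $\Delta=g-\tilde g\ge 0$ is conserved at every $v\neq d$. This comes most directly from Lemma~\ref{lem: flowconW'}: walks in $\mathcal W$ start at sources and end at $d$, so $g$ and $\tilde g$ have identical balances at every $v\neq d$, and no ``correction'' at sources is needed. Domination $\tilde g\le g$ makes the arrival measure of $\tilde g$ at $d$ absolutely continuous, and $r_d^{\tilde g}\le 0$. The bound follows from $(r_d^{g}-r_d^{\tilde g})_+\le -r_d^{\tilde g}$, which holds pointwise because $r_d^g\le 0$.

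The gap is in the decomposition step. The result from \cite{GHS24FD} (Theorem~\ref{thm: FlowDecomp}) is not a general theorem about ``walks between nodes of positive and negative net outflow''. It decomposes $s,d$-flows, i.e.\ vectors with a \emph{nonnegative} net outflow rate at a source $s$, conservation at all $v\neq s,d$, and nonpositive balance at a sink $d$. Your $\Delta$ has its only imbalance at $d$, and that imbalance is signed, so the theorem does not apply as you invoke it. Formally setting $s=d$ would only cover the case $r_d^{\Delta}=0$.

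The missing idea is to split $d$ into a source copy and a sink copy. Write $r_d^{\Delta}=r_d^{\Delta+}-r_d^{\Delta-}$, add a super sink $\tilde d$, and add an edge $(d,\tilde d)$ with $D_{(d,\tilde d)}\equiv 0$ carrying $\Delta_{(d,\tilde d)}:=r_d^{\Delta-}$. The extended $\Delta$ then has net outflow $r_d^{\Delta+}\ge 0$ at $d$, balance $-r_d^{\Delta-}\le 0$ at $\tilde d$, and is conserved elsewhere, so it is a $d,\tilde d$-flow. Theorem~\ref{thm: FlowDecomp} now yields $d,\tilde d$-walk inflows summing to $r_d^{\Delta+}$, plus zero-cycle inflows on simple cycles.

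Since $\tilde d$ has no outgoing edges, $(d,\tilde d)$ is the last edge of every such walk. These walks are therefore exactly the $d$-cycles of $G$ followed by a zero-time edge, and their loadings on $E$ coincide with those of the $d$-cycles. The one-edge walk $((d,\tilde d))$ receives no inflow a.e., since $r_d^{\Delta+}$ and $r_d^{\Delta-}$ have disjoint supports. The simple cycles of the extended graph are those of $G$.

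The step you flag as the main obstacle needs no separate argument. The cited theorem already returns \emph{zero-cycle} inflow rates, and $\ell_{c,e}(\hat h^0_c)=\hat h^0_c$ then follows from Lemma~\ref{lem: FLowOnZeroTrav}. If you did want to prove it yourself, the clean argument is not about mass escaping to $+\infty$. For a circulation, summing the balances over all nodes on $(-\infty,t]$ shows that no mass is in transit at any time $t$. Taking rational $t$, this forces $g_e=0$ a.e.\ on $\{D_e>0\}$.
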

\begin{proof}
    By \Cref{lem: flowconW'} and $h,\tilde{h} \in \wir$, the difference $\diffe:= \g-\tilde{\g}$ fulfills flow conservation at every node except $\dest$. Moreover, since $\g$ admits a net outflow rate at $\dest$, so does $\tilde{\g}$ and $\diffe$. Denote the outflow rate of $\diffe$ by $\inflow_\dest^\Delta$ and let $\inflow_\dest^{\Delta+}\in L_+(\hori),\inflow_\dest^{\Delta-}\in L_+(\hori)$ be its positive and negative part, i.e.~$\inflow_\dest^\Delta =  \inflow_\dest^{\Delta+} -\inflow_\dest^{\Delta-} $. 
    Moreover, remark that $\inflow_\dest^\Delta = \inflow_\dest^{\g}- \inflow_\dest^{\tilde{\g}}$ holds and therefore in particular 
    $\inflow_\dest^{\Delta+}\leq\inflow_\dest^\Delta \leq - \inflow_\dest^{\tilde{\g}}$ since $\inflow_\dest^{\g} \leq 0$ (cf.~\Cref{lem: flowconW'}). 

     We introduce a super \sink $\sdest$  and extend   $G=(\GV,\GA)$ to $\tilde G=(\GAS,\GVS)$ via   $\GAS:= \GA \cup \{(\dest,\sdest)\}$ as well as $\GVS:= \GV \cup \{\sdest\}$. The new edge has constant travel time of zero, i.e.\ $\trav_{(\dest,\sdest)}\equiv 0$. We extend $\diffe$ to a vector $\diffe \in L_+(\hori)^\GAS$  via $\diffe_{(\dest,\sdest)}  =   \inflow_\dest^{\Delta-}$. 
     Then, $\diffe$ fulfills flow conservation at all nodes except $\dest,\sdest$. In particular, 
    $\diffe$ is a $\dest$,$\sdest$-flow in the extended network with net outflow rate $ \inflow_\dest^{\Delta+}$ at $\dest$ and net outflow rate $-\inflow_\dest^{\Delta-}$ at $\sdest$. 

    By \oref{thm: FlowDecomp}, $\diffe$ admits a nonnegative flow decomposition $\hat{h}_{\tilde{\wa}},{\tilde{\wa}} \in \tilde{\Routes}_{\dest,\sdest},\hat{h}_c,c \in \tilde{\mathcal{C}}^{\mathrm{simp}}$ with $\hat{h}_c,c \in \tilde{\mathcal{C}}^{\mathrm{simp}}$ being zero-cycle inflow rates and 
    $\sum_{\tilde{\wa} \in \tilde{\Routes}_{\dest,\sdest}}\hat{h}_{\tilde{\wa}} =   \inflow_\dest^{\Delta+} \leq \inflow_\dest^{\tilde{\g}}$ (cf.~above for the last inequality). 
     Here, 
     $\tilde{\Routes}_{\dest,\sdest}$ denotes the set of finite walks from $\dest$ to $\sdest$   and $\tilde{\mathcal{C}}^{\mathrm{simp}}$ the set of simple cycles in the extended network. 
     Now any $\dest$,$\sdest$-walk in the extended network corresponds to a unique $\dest$-cycle in the original network and vice versa while the set of simple cycles in the extended network are exactly the simple cycles in the original network. Thus, setting $\hat{h}_{c^\dest}:=\hat{h}_{\tilde{\wa}^\dest}$ for all $c^\dest$ where $\tilde{\wa}^\dest$ is the corresponding $\dest$,$\sdest$-walk in the extended network yields the desired zero- and $\dest$-cycle inflow rates. 
\end{proof}

\subsection{Largest Common Flow} \label{sec:uBasedNetworkLoadings:LargestCommonFlow}

In this section, we show 
a crucial structural property of \auto network loadings that will allow us in \Cref{sec:ExistenceOptSolutions} to show that the master problem \eqref{opt: Master} fulfills strong duality. 
For an arbitrary countable collection of (not necessarily $\source_i$,$\dest_i$-)walks $\Routes'$, any  $h \in \edom{\Routes'}$  
with corresponding $\g= \ell_{\Routes'}(h)$ and any vector $\eflow \in L_+(\hori)^\GA$, we prove that there 
exists a largest common flow of $h$ and~$\eflow$ 
in the sense that this flow sends as much of the flow of $h$ as possible without exceeding the edge load of~$\eflow$.

\begin{lemma} \label{claim: ZeroDualityGapTildeH}  
Consider an arbitrary countable collection of walks $\Routes'$ containing each individual walk finitely often,  $h \in \edom{\Routes'}$  
with corresponding $\g= \ell_{\Routes'}(h)$ 
and  a vector $\eflow \in L_+(\hori)^\GA$. Then, there exists a walk flow 
    $\Tilde{h} \in \edom{\Routes'}$ with corresponding aggregated edge flow $\Tilde{\g}$ such that the  following properties are fulfilled:
\begin{thmparts}
    \item $\Tilde{\g}(t) \leq \eflow(t)$ for almost all $t \in \hori$. \label[thmpart]{claim: ZeroDualityGapTildeH:leqGhat}
    \item $\tilde{h} (t) \leq h(t)$ for almost all $t \in \hori$. \label[thmpart]{claim: ZeroDualityGapTildeH:leqHn}
    \item For almost all  $t\in \hori$ and all $\wa \in \Routes'$ we have the implication \label[thmpart]{claim: ZeroDualityGapTildeH:Walk}
        \[\tilde{h}_\wa(t)<h_\wa(t) \implies \exists j\leq|\wa| \text{ with } \Tilde{\g}_{\wa[j]}(\arr_{\wa,j}(t)) = \eflow_{\wa[j]}(\arr_{\wa,j}(t)).\] 
\end{thmparts} 
Moreover,  there are suitable representatives of $h,\Tilde{h},\g$ and $\Tilde{\g}$ fulfilling \ref{claim: ZeroDualityGapTildeH:leqGhat}, \ref{claim: ZeroDualityGapTildeH:leqHn} and \ref{claim: ZeroDualityGapTildeH:Walk} for all (not almost all) $t \in \hori$ as well as 
\begin{thmparts}[resume]
    \item For all $\wa \in \Routes'$  we have 
    \begin{align*}
            \bigcup_{\arc \in \GA}\bigcup_{j :\wa[j]= \arc} \arr_{\wa,j}^{-1}(\mathfrak T_\arc) \supseteq \{t \in \hori\mid \tilde{h}_\wa(t)<h_\wa(t) \}
    \end{align*}
 where 
 \begin{align}\label{eq: DefT_e}
     \mathfrak T_\arc:= \{t \in \hori:  \tilde{\g}_\arc(t)=\eflow_\arc(t) < \g_\arc(t) \}, \arc \in \GA.
 \end{align}   \label[thmpart]{claim: ZeroDualityGapTildeH:Union}
\end{thmparts}
\end{lemma}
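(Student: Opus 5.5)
We construct $\tilde h$ by a monotone ``greedy truncation'' argument in time, exploiting FIFO and the fact that travel times are fixed. The key observation is that, since the network loading is \auto, the edge flow induced by a walk flow depends linearly and monotonically on the walk flow. In particular, any $\tilde h\le h$ automatically lies in $\edom{\Routes'}$: condition \eqref{eq: nlexists} is inherited by smaller functions, via \oref{lem: elluExistenceProperties}. Moreover, $\ell_{\Routes'}(\tilde h)\le \ell_{\Routes'}(h)=\g$. So the only task is to choose $\tilde h\le h$ so that \ref{claim: ZeroDualityGapTildeH:leqGhat} holds and $\tilde h$ is maximal in the sense of \ref{claim: ZeroDualityGapTildeH:Walk}.

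I would obtain $\tilde h$ as a maximal element via Zorn's lemma, or more constructively as a supremum. Let $\mathcal F:=\{\hat h\in L_+(\hori)^{\Routes'}: \hat h\le h,\ \ell_{\Routes'}(\hat h)\le \eflow\}$. This set is nonempty since it contains $0$. It is closed under countable increasing limits, by monotone convergence in \eqref{eq: DefUEdgeFlow}. However, it is not closed under pointwise maxima, so a plain supremum does not work.

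Instead, I would choose $\tilde h\in\mathcal F$ maximizing a strictly monotone functional, for instance $\sum_{\wa}2^{-n(\wa)}\int_\hori \arctan(\cdot)\,\hat h_\wa\di\sigma$, or simply a weighted total mass; this is finite since $h\in\seql[1][\Routes'][L(\hori)]$. The maximizer is obtained by taking a maximizing sequence. Weak compactness follows from domination by $h$ (Dunford--Pettis), and $\mathcal F$ is closed under weak limits because $\hat h\mapsto\ell_{\Routes'}(\hat h)$ is given by integrals of $\hat h$ against fixed sets $\arr_{\wa,j}^{-1}((-\infty,t])$, so the constraints pass to the limit.

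Maximality then yields \ref{claim: ZeroDualityGapTildeH:Walk} by contradiction. Suppose that on a set $S$ of positive measure, for some walk $\wa$, we have $\tilde h_\wa<h_\wa$ and $\tilde\g_{\wa[j]}(\arr_{\wa,j}(t))<\eflow_{\wa[j]}(\arr_{\wa,j}(t))$ for all $j$. Shrinking $S$, I can assume these gaps are bounded below by some $\delta>0$ on all edges, and that $S$ is contained in the set where the relevant arrival functions have positive derivative. The latter is possible because, on the set where some $\arr_{\wa,j}'=0$, the images are null sets, so positive $h_\wa$-mass there would violate \eqref{eq: nlexists}.

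I then add $\epsilon\cdot\min(\delta',h_\wa-\tilde h_\wa)\Indi_S$ to $\tilde h_\wa$. The resulting extra edge flow on $\wa[j]$ is the pushforward density, which is $\le \epsilon\cdot(\text{density of } \ell_{\wa,j}(h_\wa\Indi_S))$. The main obstacle is that this density need not be bounded by $\delta$ pointwise. I would handle it by restricting $S$ further to a subset where the pushforward densities $\ell_{\wa,j}(h_\wa\Indi_S)$ are bounded; this is possible after truncation, using a Lebesgue-point/density argument on the images $\arr_{\wa,j}(S)$. With that restriction, choosing $\epsilon$ small keeps the perturbation within the gap, so it stays in $\mathcal F$ and strictly increases the functional, which is a contradiction. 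Property \ref{claim: ZeroDualityGapTildeH:leqHn} holds by construction.

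Finally, for the representative statements, I modify all functions on null sets. I fix representatives of $h$, $\g$, $\eflow$, and redefine $\tilde\g$ to equal $\min(\tilde\g,\eflow,\g)$ everywhere, and set $\tilde h_\wa:=h_\wa$ on the exceptional null set where \ref{claim: ZeroDualityGapTildeH:Walk} fails. The latter requires that the preimages under $\arr_{\wa,j}$ of the altered null sets carry no $h_\wa$-mass, which holds by \eqref{eq: nlexists}; these preimages are then also absorbed into the modification set.

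For \ref{claim: ZeroDualityGapTildeH:Union}, observe that $\tilde\g\le\g$ pointwise. Where $\tilde\g_\arc=\eflow_\arc=\g_\arc$ on the hit edge, I would show that one can alternatively redefine $\tilde h_\wa=h_\wa$ on that preimage. Indeed, if $\tilde g_\arc=\g_\arc$ a.e.\ on a set, then all walks through $\arc$ there are untruncated almost everywhere. Hence the exceptional set is null and can be absorbed into the representative choice. This bookkeeping is routine but must be done simultaneously for countably many walks, which is fine since $\Routes'$ is countable.
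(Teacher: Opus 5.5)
Your proposal is correct and follows essentially the paper's proof. You maximize total mass over $\{\hat h\le h,\ \ell_{\mathcal W'}(\hat h)\le y\}$, supplying the Dunford--Pettis compactness argument where the paper cites its existence theorem. You obtain (c) from maximality by adding a small multiple of an indicator on a positive-measure set where the pushforward density is controlled by the gap, which plays the role of the paper's constant $C$ and set $\mathfrak T^*$. You then handle (d) by null-set bookkeeping; resetting $\tilde h_w:=h_w$ on the null set $\{\tilde h_w<h_w\}\cap A_{w,j}^{-1}(\{\tilde g_{w[j]}=g_{w[j]}\})$ is an equally valid alternative to the paper's change of the representative of $g$ on null image sets.

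One small correction: drop the $\arctan$-weighted functional, since its weight is negative for $t<0$ and so it is not monotone. The plain total mass $\sum_{w}\int\hat h_w\,\mathrm{d}\sigma$, which is finite because $\hat h\le h$, is all you need.
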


\begin{proof}   
    We consider the following optimization problem
    \begin{align} 
        \max_{\wflow}\; & \sum_{\wa \in \Routes'}\int_\hori{\wflow_\wa}\di\sigma  \notag \\
        \text{s.t.: }  &\ell_{\Routes'}(\wflow) \leq \eflow \label{eq:ZeroDualityGapTildeH:leqGhat}\\
                    &\wflow \leq h \label{eq:ZeroDualityGapTildeH:leqHn}\\
                    &\wflow \in \edom{\Routes'} \nonumber 
    \end{align}

    It is straight forward to verify that this optimization problem  satisfies the assumption of \oref{thm: ExistenceOptSol}. Hence, it has an optimal solution $\wflow^*$. We set 
     $\Tilde{h}:= \wflow^*$  and choose  representatives of $\Tilde{h},h$ and $\Tilde{\g}$ that fulfill \ref{claim: ZeroDualityGapTildeH:leqGhat} and \ref{claim: ZeroDualityGapTildeH:leqHn} for all $t\in\hori$. Note that this is possible by the feasibility of~$\tilde h$ (i.e.\ constraints \eqref{eq:ZeroDualityGapTildeH:leqGhat} and \eqref{eq:ZeroDualityGapTildeH:leqHn}).
     
     We argue next that these representatives can be further chosen in such way that they also fulfill~\ref{claim: ZeroDualityGapTildeH:Walk} for all $t\in \hori$. Afterwards, we will conclude by showing that with respect to these representatives, we can additionally choose a suitable represent of $\g$ such that \ref{claim: ZeroDualityGapTildeH:Union} is satisfied as well.
    \begin{structuredproof}

        \proofitem{\ref{claim: ZeroDualityGapTildeH:Walk}} 
        If the required property is fulfilled for almost all $t \in \hori$, 
        then setting $h_\wa(t) = 0 = \Tilde{h}_\wa(t),\wa \in \Routes'$ on the null-set 
        where \ref{claim: ZeroDualityGapTildeH:Walk} is not fulfilled will yield representatives that still fulfill \ref{claim: ZeroDualityGapTildeH:leqHn}. 
        Hence, we show that  property~\ref{claim: ZeroDualityGapTildeH:Walk} holds for almost all $t\in\hori$. Assume for contradiction that property~\ref{claim: ZeroDualityGapTildeH:Walk} does not hold for almost all $t \in \hori$. That is, by the countability of $\Routes'$, there exists a walk~$\wa$ such that the set 
            \begin{align*}
                \mathfrak T \coloneqq \Set{t \in \hori | \tilde h_\wa(t)<h_\wa(t) , \tilde \g_{\wa[j]}(\arr_{\wa,j}(t)) < \eflow_{\wa[j]}(\arr_{\wa,j}(t)) \text{ for all } j \leq \abs{\wa}}
            \end{align*}
        has positive measure. Note that this set is indeed measurable due to the measurability of all occurring functions. 
        We now define a flow $\wflow$ by setting
            \[\wflow_{\wa'} \coloneqq \begin{cases}
                \tilde h_{\wa} + 1_{\mathfrak T}, &\text{if } \wa' = \wa \\
                \tilde h_{\wa'},                  &\text{else}
            \end{cases},\]
        where $1_{\mathfrak T}: \hori \to \set{0,1}$ denotes the characteristic function of the set~$\mathfrak T$. Since $\wflow$ only sends flow into walks at times where $h$ does so as well and $\ell_{\Routes'}(h)$ exists, the same holds for $\ell_{\Routes'}(\wflow)$ (cf.\ \oref{lem: elluExistenceProperties}).  
        Let us choose   arbitrary representatives of $\ell_{\wa'}(\wflow_\wa),\wa' \in \Routes'$.  
        We now observe that there must be some constant~$C \geq 1$ and a measurable subset $\mathfrak T^* \subseteq \mathfrak T$ of positive measure such that we have
            \begin{align}\label{eq:ZeroDualityGapTildeH:ChoiceOfTast}
                \frac{\ell_{\wa,j}(\wflow_\wa)(\arr_{\wa,j}(t))-\ell_{\wa,j}(\tilde h_\wa)(\arr_{\wa,j}(t))}{\eflow_{\wa[j]}(\arr_{\wa,j}(t))-\Tilde{\g}_{\wa[j]}(\arr_{\wa,j}(t))} \leq C \text{ for all } t \in \mathfrak T^* \text{ and } j \leq \abs{\wa}
            \end{align}
        as well as $h_\wa(t) \geq \tilde h_\wa(t) + \frac{1}{C}$ for all $t \in \mathfrak T^*$. We then define a walk inflow $\bar\wflow$ by
            \[\bar\wflow_{\wa'} \coloneqq \begin{cases}
                \tilde h_{\wa} + \tfrac{1}{C\cdot\abs{\wa}} \cdot 1_{\mathfrak T^*}, &\text{if } \wa' = \wa \\
                \tilde h_{\wa'},                  &\text{else}
            \end{cases}.\]
        The choice of $C$ then immediately implies $\bar\wflow \leq h$, i.e.\ that $\bar\wflow$ satisfies~\eqref{eq:ZeroDualityGapTildeH:leqHn}. 
        In particular, $\ell_{\wa}(\Bar{\wflow}_\wa),\wa \in \Routes'$ exist by $\ell_{\wa}(\Tilde{h}_\wa),\wa\in \Routes' $ existing and \oref{lem: elluExistenceProperties}.
         Furthermore, it is clear that $\sum_{\wa \in {\Routes'}}\ell_\wa(\bar{\wflow}_\wa)$ exists since this sum differs from $\sum_{\wa \in {\Routes'}}\ell_\wa(\Tilde{h}_\wa)$ in only a single  summand. 
        Hence, $\bar{\wflow} \in \edom{{\Routes'}}$. 
        In order to show that $\Bar{\wflow}$ also satisfies~\eqref{eq:ZeroDualityGapTildeH:leqGhat} we calculate for any $j \leq \abs{\wa}$:
            \begin{align*}
                \ell_{\wa,j}(\bar\wflow_\wa) 
                    &= \ell_{\wa,j}(\tilde h_\wa + \tfrac{1}{C\abs{\wa}}1_{\mathfrak T^*}) \symoverset{1}{=} \ell_{\wa,j}(\tilde h_\wa) + \tfrac{1}{C\abs{\wa}}\cdot\ell_{\wa,j}(1_{\mathfrak T^*}) \\
                    &\symoverset{3}{=} \ell_{\wa,j}(\tilde h_\wa) + \tfrac{1}{C\abs{\wa}}\cdot\ell_{\wa,j}(1_{\mathfrak T^*}\cdot 1_{\mathfrak T}) \\
                    &\symoverset{2}{=} \ell_{\wa,j}(\tilde h_\wa) + \tfrac{1}{C\abs{\wa}}\cdot 1_{\arr_{\wa,j}(\mathfrak T^*)} \cdot \ell_{\wa,j}(1_{\mathfrak T}) \\
                    &= \ell_{\wa,j}(\tilde h_\wa) + \tfrac{1}{C\abs{\wa}}\cdot 1_{\arr_{\wa,j}(\mathfrak T^*)} \cdot \ell_{\wa,j}(\wflow_\wa - \tilde h_\wa) \\
                    &\symoverset{1}{=} \ell_{\wa,j}(\tilde h_\wa) + \tfrac{1}{C\abs{\wa}}\cdot 1_{\arr_{\wa,j}(\mathfrak T^*)} \cdot \big(\ell_{\wa,j}(\wflow_\wa) - \ell_{\wa,j}(\tilde h_\wa)\big) \\
                    &\overset{\eqref{eq:ZeroDualityGapTildeH:ChoiceOfTast}}{\leq} \ell_{\wa,j}(\tilde h_\wa) + \tfrac{1}{C\abs{\wa}}\cdot 1_{\arr_{\wa,j}(\mathfrak T^*)} \cdot C\cdot \big(\eflow_{\wa[j]} - \tilde{\g}_{\wa[j]}\big) \\
                    &\leq \ell_{\wa,j}(\tilde h_\wa) + \tfrac{1}{\abs{\wa}}\cdot \big(\eflow_{\wa[j]} - \tilde{\g}_{\wa[j]}\big),
            \end{align*}
        where the equalities indicated by~\refsym{1} are due to the linearity of~$\ell_{\wa,j}$ (\oref{lem: elluExistenceProperties:Linearity}), 
        the one with~\refsym{3} is due to $\mathfrak T \subseteq \mathfrak T^*$, 
        the equality with~\refsym{2} holds since $\ell_{\wa,j}$ commutes with indicator functions (\oref{lem: elluindi}) and the inequality in the last line holds because we have $\eflow_{\wa[j]} \geq \tilde{\g}_{\wa[j]}$ (since $\tilde h$ satisfies~\eqref{eq:ZeroDualityGapTildeH:leqGhat}). 

        Let $\arc\in \GA$ be arbitrary. The above inequality is used to get the estimation~\refsym{4} in the following:
            \begin{align*}
                \sum_{\wa' \in {\Routes'}}\ell_{\wa',\arc}(\bar\wflow_{\wa'}) 
                    &= \ell_{\wa,\arc}(\bar\wflow_\wa) + \sum_{\wa' \in {\Routes'}\setminus\{\wa\}}\ell_{\wa',\arc}(\bar\wflow_{\wa'}) = \sum_{j:\wa[j]=e}\ell_{\wa,j}(\bar\wflow_\wa) + \sum_{\wa' \in {\Routes'}\setminus\{\wa\}}\ell_{\wa',\arc}(\tilde h_{\wa'}) \\
                    &\symoverset{4}{\leq} \sum_{j:\wa[j]=\arc}\Big( \ell_{\wa,j}(\tilde h_\wa) + \tfrac{1}{\abs{\wa}}\cdot \big(\eflow_{\wa[j]} - \tilde{\g}_{\wa[j]}\big)\Big) + \sum_{\wa' \in {\Routes'}\setminus\{\wa\}}\ell_{\wa',\arc}(\tilde h_{\wa'}) \\
                    &\symoverset{1}{\leq} \ell_{\wa,\arc}(\tilde h_\wa) + \big(\eflow_\arc - \tilde{g}_\arc\big) + \sum_{\wa' \in {\Routes'}\setminus\{\wa\}}\ell_{\wa',\arc}(\tilde h_{\wa'}) \\
                    &= \eflow_\arc - \tilde{g}_\arc + \sum_{\wa' \in {\Routes'}}\ell_{\wa',\arc}(\tilde h_{\wa'}) 
                        = \eflow_\arc,
            \end{align*}
        where we use for the inequality indicated with \refsym{1} that $\eflow_{\wa[j]} - \tilde{\g}_{\wa[j]} = \eflow_\arc - \sum_{\wa \in {\Routes'}}\ell_{\wa,\arc}(\tilde h_\wa) \geq 0$ because $\tilde h$ satisfies~\eqref{eq:ZeroDualityGapTildeH:leqGhat}. Hence, we have shown that $\bar\wflow$ satisfies \eqref{eq:ZeroDualityGapTildeH:leqGhat}. 
        
        At the same time, $\bar\wflow$ clearly has a larger objective value than~$\tilde h$, which is a contradiction to~$\tilde h$ being an optimal solution. Hence, $h$ must have already satisfied property~\ref{claim: ZeroDualityGapTildeH:Walk} for almost all~$t \in \hori$. Setting $h_{\wa'}(t) = \tilde h_{\wa'}(t) = 0$ for all other $t$ and $\wa' \in \Routes'$ then gives us the desired representatives fully satisfying property~\ref{claim: ZeroDualityGapTildeH:Walk}.
        
        \proofitem{\ref{claim: ZeroDualityGapTildeH:Union}} This now follows from~\ref{claim: ZeroDualityGapTildeH:Walk} as follows: Since $h-\tilde h$ is nonnegative and positive only where $h$ is positive, we can apply \oref{lem: Relations:h>0u>0:Pointwise} to this difference to see that for all $\wa \in {\Routes'},j\leq \abs{\wa}$ and almost all $t\in \hori$ we have
            \[\tilde h_\wa(t)<h_\wa(t)  \implies \tilde\g_{\wa[j]}(\arr_{\wa,j}(t))<\g_{\wa[j]}(\arr_{\wa,j}(t)) .\]
            Hence, by ${\Routes'}$ being countable,  every walk being finite and $\arr_{\wa,j}$ fulfilling Lusin's property (by being absolutely continuous, cf.~\Cref{lem: PropAbsCon:Lus}), we can choose  a suitable representative of $\g$ such that the latter property is fulfilled for all $t \in \hori$.
        For this representative of $\g$ and the prior chosen representatives of $h,\Tilde{h}$ and $\Tilde{\g}$, we have now by the definition in~\eqref{eq: DefT_e} that
            \[\mathfrak D_{\wa,j} \coloneqq \set{t \in \hori | \tilde h_\wa(t)<h_\wa(t) , \tilde\g_{\wa[j]}(\arr_{\wa,j}(t)) = \eflow_{\wa[j]}(\arr_{\wa,j}(t))} \subseteq \arr_{\wa,j}^{-1}(\mathfrak T_{\wa[j]})\]
        for all $j \leq \abs{\wa}, \wa \in {\Routes'}$. According to~\ref{claim: ZeroDualityGapTildeH:Walk}, the union of all those $\mathfrak D_{\wa,j},j\leq \abs{\wa},\wa \in {\Routes'}$ covers $\set{t \in \hori | \tilde h_\wa(t)<h_\wa(t) }$ and hence the required inclusion in~\ref{claim: ZeroDualityGapTildeH:Union} holds. 
        \qedhere 
    \end{structuredproof}
\end{proof}

\section{Implementability} \label{sec:Implementability} 
 In this section, we come back to the question of implementability of a vector $u \in \Nl(\wir)$. 
Remark that vectors $u \in L_+(\hori)^\GA \setminus \Nl(\wir)$ are trivially not implementable since those edge flows cannot be induced by any walk-inflow. 
Under the assumption of the master problem \eqref{opt: Master} admitting strong duality, we 
 provide a duality-based characterization of implementability in the first subsection and additionally a combinatorial characterization in the second subsection for single-\sink networks.  
 Finally, in the third subsection, we provide sufficient conditions for a class of optimization problems -- containing the master problem -- to admit strong duality. 
We require the following additional \Cref{ass: u} on $u$ and the flow model to hold throughout this section. 
Here, we denote from now on the \auto network loading corresponding to the travel times $\trav(u,\cdot)$ induced by $u$ via $\ell^u$ and write for brevity also  $\ell^u(h)$ instead of $\ell^u_{\Routes}(h)$ for any $h \in \wir$. Moreover, we denote via $\edom{\Routes}[u]$ the set of walk flows $h \in L_+(\hori)^\Routes$ admitting  \aauto network loading $\ell^u(h)$ \wrt $\trav(u,\cdot)$.

\begin{assumption}\label{ass: u} 
\hfill
\begin{thmparts}
    \item \label[thmpart]{ass: u: outflow}
For all $i \in I$ and  $\arc \in \delta^-(\dest_i)$, the edge outflow rate $u^-_\arc$ exists, that is, the function $\hori\to \R, t \mapsto \int_{\exit_\arc(u,\cdot)^{-1}(\startint t])} u_\arc \di\sigma$ is absolutely continuous.

 \item \label[thmpart]{ass: u: NLQuasiUniqueness} For any $h \in \wir$ we have $\ell^u(h) = u \implies \Nl(h) = u$. 

\end{thmparts}
\end{assumption}

 \Cref{ass: u: outflow} is rather technical and  requires that the outflow under $u$ from edges entering~a destination $\dest_i$ is describable via a function in $L(\hori)$. 
 By \oref{lem: outflow}, this assumption is equivalent to requiring $u_\arc = 0$ on $\exit_\arc(u,\cdot)^{-1}(\mathfrak T)$ for all null sets $\mathfrak{T} \subseteq \hori$. 
Remark that we do not need this assumption for nodes other than~the destinations $\dest_i,i\in I$ since there, by flow conservation,  any edge outflow is the edge inflow of other edges (and, hence, already described by functions in $L(\hori)$, cf.~again \oref{lem: outflow: FlowCon}).  

\Cref{ass: u: NLQuasiUniqueness} states that whenever a walk flow induces~$u$ under~the travel times induced by $u$, then $u$ is also the actual network loading of~$h$.  Equivalently, this requires that for any $h \in \wir$ with  $u$ being a solution of \eqref{eq: DefEdgeFlow}, we also have $\Nl(h) = u$. 
A sufficient condition for this to be fulfilled is the property that there is only one way to define the network loading, i.e.\ if for any $h\in \wir$ there exist a unique $\g \in L_+(\hori)^\GA$ such that  \eqref{eq: DefEdgeFlow} holds. 
Remark that Meunier and Wagner~\cite{MeunierW10} showed that this is the case 
under several assumptions regarding the travel time function $\trav$ which do not conflict with our assumptions here. 
In particular, Meunier and Wagner show that  the Vickrey queuing and linear edge delay model satisfy them. 
Since the reverse implication of \Cref{ass: u: NLQuasiUniqueness} is true in general, we will from now on simply say that $h$ induces $u$
if $\ell^u(h) = u$ ($\Leftrightarrow \Nl(h) = u$) holds.

\subsection{Duality-Based Characterization}\label{sec:CharImplementability}

In this section, under the assumption of the master problem \eqref{opt: Master} admitting   strong duality, we characterize  implementable edge flows $u \in \Nl(\wir)$ as those that can be induced by an optimal solution to their corresponding master problem \eqref{opt: Master}.   
Here, we say that \eqref{opt: Master} 
admits strong (Lagrangian) duality \wrt $\MeasFuncUInt$ ($L^\infty_+(\hori)^\GA$), if 
there exists an optimal solution $h^*$ for~\eqref{opt: Master} and $\prices^* \in \MeasFuncUInt$ ($\prices^* \in L^\infty_+(\hori)^\GA$) such that 
        \begin{align*}
        \inf_{h\in\edom{\Routes}[u]\cap \wir} \bigl(\dup{\wttime(u,\cdot)}{h}  + \dup{\prices^*}{\ell^u(h)} -  \dup{\prices^*}{u}   \bigr) =\dup{\wttime(u,\cdot)}{h^*} .
    \end{align*}  
    Note that any representative of a $\prices \in  L^\infty_+(\hori)^\GA$ is in particular an element of $\MeasFuncUInt$, that is, if \eqref{opt: Master} 
    admits strong (Lagrangian) duality \wrt $L^\infty_+(\hori)^\GA$, then in particular \wrt $\MeasFuncUInt$ as well. 
    
Under the assumption of \eqref{opt: Master} admitting strong duality \wrt $L_+^\infty(\hori)^\GA$, we  provide sufficient conditions (\Cref{thm: SuffConMSMS}) for a $h^* \in \wir$ to implement $u$ via \emph{bounded} tolls $\prices^* \in L_+^\infty(\hori)^\GA$ in the multi-source, multi-\sink case.\footnote{Here, we mean by $u$ being implementable via $\prices^* \in L_+^\infty(\hori)^\GA$ that any representative of $\prices^*$ implements $u$.}
Under the slightly less restrictive assumption of \eqref{opt: Master} admitting strong duality \wrt $\MeasFuncUInt$, 
 we  provide a sufficient condition  (\Cref{thm: SuffConMSSS}) for a $h^* \in \wir$ to implement $u$ via  tolls $\prices^* \in \MeasFuncUInt$ in the multi-source, single-\sink case. 
 These sufficient conditions are then used to obtain a characterization of implementability for the multi-source, multi-\sink case (\Cref{thm: mainMSMS}), respectively multi-source, single-\sink case (\Cref{thm: mainMSSS}). 
The necessity part of the aforementioned characterization of implementability is rather straight forward and is shown in the following \Cref{lem: NecessImpl}.  Note that it suffices to show this for $\prices^* \in \MeasFuncUInt$ as any representative of 
a  $\prices^* \in L_+^\infty(\hori)^\GA$ is contained in $\MeasFuncUInt$. 
 \begin{lemma}\label{lem: NecessImpl}
  	In a multi-source, multi-\sink network, if $u$ is implementable via $h^* \in \wir$ and $\prices^*\in \MeasFuncUInt$, then $h^*$ is an optimal solution for \eqref{opt: Master} with tight inequality \eqref{ineq: Master} 
  	and $(h^*,\prices^*)$ admit zero duality gap.  
\end{lemma}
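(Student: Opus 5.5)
The plan is a weak-duality argument: show that for every feasible $h$ of \eqref{opt: Master} the Lagrangian value $\dup{\wttime(u,\cdot)}{h} + \dup{\prices^*}{\ell^u(h)} - \dup{\prices^*}{u}$ is at least $\dup{\wttime(u,\cdot)}{h^*}$. This single inequality yields optimality of $h^*$ and the zero duality gap at once.

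First I check feasibility and tightness of $h^*$. Since $h^*\in\wir$ implements $u$, we have $\Nl(h^*)=u$. So $u$ satisfies \eqref{eq: DefEdgeFlow} with arrival times $\arr_{\wa,j}(u,\cdot)$, which are exactly the fixed travel times of the autonomous loading $\ell^u$. Hence \eqref{eq: DefUEdgeFlow} holds for $h^*$ with $\g=u$, which gives $h^*\in\edom{\Routes}[u]$ and $\ell^u(h^*)=u$. Thus $h^*$ is feasible and \eqref{ineq: Master} is tight.

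Next I apply \Cref{lem: aggCostsVSwalkCosts} to the autonomous loading $\ell^u$ with $\trav=\trav(u,\cdot)$. For every $h\in\edom{\Routes}[u]$ it gives $\dup{\prices^*}{\ell^u(h)}=\sum_{\wa}\dup{\Pf^{\prices^*}_\wa(u,\cdot)}{h_\wa}$. Here $\Pf^{\prices^*}_\wa(u,\cdot)$ is defined through $\arr_{\wa,j}(u,\cdot)$, which coincide with the autonomous arrival times. In particular $\dup{\prices^*}{u}=\sum_\wa\dup{\Pf^{\prices^*}_\wa}{h^*_\wa}<\infty$, because $\prices^*\in\MeasFuncUInt$. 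The Lagrangian at $h$ therefore equals
\[\sum_{i\in I}\sum_{\wa\in\Routes_i}\int_\hori \bigl(\wttime_\wa(u,\cdot)+\Pf^{\prices^*}_\wa(u,\cdot)\bigr)h_\wa\di\sigma-\dup{\prices^*}{u}.\]

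Now I use the equilibrium property. Set $C_i(t):=\inf_{\wa'\in\Routes_i}\bigl(\wttime_{\wa'}(u,t)+\Pf^{\prices^*}_{\wa'}(u,t)\bigr)$. Since $h^*$ is a $\prices^*$-DUE, for almost all $t$ every walk $\wa$ with $h^*_\wa(t)>0$ attains this infimum. As $\Routes_i$ is countable, this "almost all" can be taken uniformly over all walks. Because every $h\in\wir$ satisfies $\sum_{\wa\in\Routes_i}h_\wa=\inflow_i$, we get
\[\sum_{\wa\in\Routes_i}\int(\wttime_\wa+\Pf^{\prices^*}_\wa)h_\wa\di\sigma\;\ge\;\int C_i\,\inflow_i\di\sigma\;=\;\sum_{\wa\in\Routes_i}\int(\wttime_\wa+\Pf^{\prices^*}_\wa)h^*_\wa\di\sigma.\]
Measurability of $C_i$ holds because it is a countable infimum. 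Summing over $i$, the Lagrangian at $h$ is at least the Lagrangian at $h^*$, and the latter equals $\dup{\wttime(u,\cdot)}{h^*}$ since $\ell^u(h^*)=u$.

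Optimality of $h^*$ follows directly. For any feasible $h$ we have $\ell^u(h)\le u$ and $\prices^*\ge0$, so $\dup{\prices^*}{\ell^u(h)}-\dup{\prices^*}{u}\le0$. Combined with the previous inequality this gives $\dup{\wttime(u,\cdot)}{h}\ge\dup{\wttime(u,\cdot)}{h^*}$. The infimum of the Lagrangian over $\edom{\Routes}[u]\cap\wir$ is attained at $h^*$ and equals $\dup{\wttime(u,\cdot)}{h^*}$, which is exactly the zero duality gap.

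The main obstacle is bookkeeping with infinite values. The subtraction $\dup{\prices^*}{\ell^u(h)}-\dup{\prices^*}{u}$ is legitimate because $\dup{\prices^*}{u}<\infty$. The objective value $\dup{\wttime(u,\cdot)}{h^*}$ is finite, or else the statement is trivially handled with the convention $\infty$. I would also need to confirm that the optimal value is not $-\infty$, which holds since all costs are nonnegative.
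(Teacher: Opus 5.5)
Your proposal is correct and follows essentially the same route as the paper. Both arguments get tightness from $\Nl(h^*)=u$, convert $\dup{\prices^*}{\ell^u(h)}$ into walk tolls via \Cref{lem: aggCostsVSwalkCosts}, and combine the $\prices^*$-DUE condition with $\sum_{\wa\in\Routes_i}h_\wa=\inflow_i$ to obtain the weak-duality inequality. Optimality and the zero gap then follow after subtracting the finite constant $\dup{\prices^*}{u}$.

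The only difference is that you argue directly through the pointwise minimal cost $C_i$, whereas the paper argues by contradiction by locating two walks and a positive-measure set of times. Your direct version is, if anything, the cleaner way to make that step rigorous.
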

\begin{proof}
     Let $u$ be implementable via tolls $\prices^* \in \MeasFuncUInt$ and $h^* \in \wir$, i.e.\ $h^*$ induces $u$ (under $\Nl$) and is a $\prices^*$-DUE. 
 
 It is clear that $h^*$ also induces $u$  under $\ell^u$, i.e.~$h^* \in \edom{\Routes}[u]$ and $\ell^u(h^*) = u$. In particular, \eqref{ineq: Master} is tight for $h^*$.  
 In order to show   that $h^*$ is optimal for \eqref{opt: Master}, 
 we require the following claim: 
 \begin{claim}
     For all $h \in \edom{\Routes}[u]\cap \wir$, the following inequality holds (with the right hand side possibly being equal to $\infty$): 
     \begin{align}\label{eq: ineqOpth}
         \sum_{\wa \in \Routes} \dup{\wttime_\wa(u,\cdot)}{h^*_\wa} + \dup{\prices^*}{u}\leq \sum_{\wa \in \Routes} \dup{\wttime_\wa(u,\cdot)}{{h}_\wa} + \dup{\prices^*}{{\ell^u(h)}}.
     \end{align}
 \end{claim}
 \begin{proofClaim}
 We start by observing that  \Cref{lem: aggCostsVSwalkCosts} implies the inequality is equivalent to the following:
    \begin{align*}
         \sum_{\wa \in \Routes} \dup{\wttime_\wa(u,\cdot) + \Pf^{\prices^*}_\wa(u,\cdot)}{h^*_\wa}  \leq \sum_{\wa \in \Routes} \dup{\wttime_\wa(u,\cdot)+\Pf^{\prices^*}_\wa(u,\cdot)}{{h}_\wa}.
     \end{align*}
 Assume for the sake of a contradiction that there 
 exists $h \in \edom{\Routes}[u] \cap \wir$ for which this inequality does not hold. 
Then, there has to exist an $i \in I$ with
    \begin{align*}
         \sum_{\wa \in \Routes_i} \dup{\wttime_\wa(u,\cdot) + \Pf^{\prices^*}_\wa(u,\cdot)}{h^*_\wa} > \sum_{\wa \in \Routes_i} \dup{\wttime_\wa(u,\cdot)+\Pf^{\prices^*}_\wa(u,\cdot)}{{h}_\wa}.
     \end{align*}
     Since, furthermore, $\sum_{\wa \in \Routes_i}h^*_\wa = \sum_{\wa \in \Routes_i}{h}_\wa$, there 
     have to exist two walks $\wa,\tilde{\wa} \in \Routes_i$ and a set $\mathfrak T \in \mathcal{B}(\hori)$ with $\sigma(\mathfrak T)>0$ such that $\wttime_\wa(u,t) + \Pf^{\prices^*}_\wa(u,t)>\wttime_{\tilde{\wa}}(u,t) + \Pf^{\prices^*}_{\tilde{\wa}}(u,t)$ and $h^*_\wa(t) >  {h}_\wa(t)\geq 0$ for a.e.\ $t \in \mathfrak T$. 
This, however, contradicts the fact that $h^*$ is a $\prices^*$-DUE. 
 \end{proofClaim}
By
subtracting the constant term $\dup{\prices^*}{u}$ from both sides in \eqref{eq: ineqOpth}, we get
\begin{align*} 
         \sum_{\wa \in \Routes} \dup{\wttime_\wa(u,\cdot)}{h^*_\wa} \leq \sum_{\wa \in \Routes} \dup{\wttime_\wa(u,\cdot)}{{h}_\wa} + \dup{\prices^*}{\ell^u({h})-u}  \text{ for all } h \in \edom{\Routes}[u]\cap \wir,
     \end{align*}
     and hence 
\begin{align*} 
         \sum_{\wa \in \Routes} \dup{\wttime_\wa(u,\cdot)}{h^*_\wa} \leq \inf_{h \in \edom{\Routes}[u]\cap \wir } \sum_{\wa \in \Routes} \dup{\wttime_\wa(u,\cdot)}{{h}_\wa} + \dup{\prices^*}{\ell^u({h})-u},
     \end{align*}
    showing that \eqref{opt: Master} fulfills strong duality \wrt $(h^*,\prices^*)$.  
    In particular, $h^*$ is  optimal for \eqref{opt: Master} and hence the proof is finished. 
\end{proof}

 In the following, we provide sufficient conditions for a $h^* \in \wir$ to implement $u$ via \emph{bounded} tolls in the multi-source, multi-\sink case. 
  \begin{lemma}\label{thm: SuffConMSMS}
  	Consider a multi-source, multi-\sink network with \Cref{ass: u} being fulfilled and \eqref{opt: Master} admitting strong duality \wrt $L_+^\infty(\hori)^\GA$. Then $u$ is  implementable via $h^* \in \wir$ and \emph{bounded} tolls, if $h^* \in \wir$ fulfills the following conditions: 
  	\begin{thmparts}
  		\item $h^*$ has a finite walk support, i.e.~there exists a finite set of walks ${\Routes}^{h^*}\subseteq \Routes$ such that $h^*_\wa = 0,\wa \notin {\Routes}^{h^*}$, \label[thmpart]{thm: SuffConMSMS: FiniteWalkSupp}  
  		\item The experienced private costs under $h^*$ are bounded, i.e.~there exists $C^{h^*} \in \R$ such that $h^*_\wa(t)>0 \Rightarrow \wttime_\wa(u,t)\leq C^{h^*}$ holds for almost all $t \in \hori$ and all $\wa \in \Routes$.\label[thmpart]{thm: SuffConMSMS: BoundedExpTravel}    
  		\item $h^*$ is an optimal solution for \eqref{opt: Master} with tight inequality \eqref{ineq: Master}. \label[thmpart]{thm: SuffConMSMS: OptimalInducingU} 
  	\end{thmparts}
  \end{lemma}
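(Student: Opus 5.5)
We take the dual tolls $\prices^*\in L_+^\infty(\hori)^\GA$ from strong duality and raise them on unused edges so that the $\prices$-DUE condition holds against \emph{all} walks.

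\textbf{Step 1 (optimal dual and complementary slackness).} Strong duality gives an optimal $\bar h$ and $\prices^*\in L_+^\infty(\hori)^\GA$ with Lagrangian infimum equal to the optimal value. Since $h^*$ is also optimal with $\ell^u(h^*)=u$, it is a minimizer of the Lagrangian $L(h)=\dup{\wttime(u,\cdot)}{h}+\dup{\prices^*}{\ell^u(h)}-\dup{\prices^*}{u}$: indeed $L(h^*)=\dup{\wttime(u,\cdot)}{h^*}$ equals the infimum. By \Cref{lem: aggCostsVSwalkCosts}, $L(h)+\dup{\prices^*}{u}=\sum_\wa\dup{\wttime_\wa(u,\cdot)+\Pf^{\prices^*}_\wa(u,\cdot)}{h_\wa}$.

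Minimality is then localized by an exchange argument. Suppose there were $i$, walks $\wa$ with $h^*_\wa>0$ and $\wa'\in\Routes_i$, and a set $\mathfrak T$ of positive measure on which $\wttime_{\wa'}+\Pf^{\prices^*}_{\wa'}<\wttime_\wa+\Pf^{\prices^*}_\wa$. Then shift flow $h^*_\wa 1_{\mathfrak T'}$ from $\wa$ to $\wa'$ on a suitable subset $\mathfrak T'\subseteq\mathfrak T$ of positive measure. The shifted flow stays in $\wir$ and decreases $L$, provided that $\ell^u_{\wa'}(h^*_\wa 1_{\mathfrak T'})$ exists.

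That proviso is what restricts the conclusion. The argument yields the DUE inequality against every competitor $\wa'$ for which $\ell^u_{\wa'}$ of such a piece exists. By \Cref{cor: LuExAltChara}, applied with $\g=u$ (which has outflow rates by \Cref{ass: u: outflow} and flow conservation), this holds at least for times in $\Dwg[\wa'][u]$, i.e.\ departures along which every edge of $\wa'$ is entered while $u>0$. Finiteness of $\Routes^{h^*}$ (\Cref{thm: SuffConMSMS: FiniteWalkSupp}) keeps the countably many exceptional null sets harmless.

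\textbf{Step 2 (modifying tolls on unused edge-times).} Set $K:=C^{h^*}+\max_{\wa\in\Routes^{h^*}}\norm{\Pf^{\prices^*}_\wa}_\infty<\infty$. This bound uses \Cref{thm: SuffConMSMS: BoundedExpTravel}, the boundedness of $\prices^*$, and the finitely many walks of bounded length. Define $\prices_\arc(t):=\prices^*_\arc(t)$ if $u_\arc(t)>0$ and $\prices_\arc(t):=\max(\prices^*_\arc(t),K)$ otherwise. Since $\prices$ only differs where $u_\arc=0$, we have $\Pf^\prices_\wa=\Pf^{\prices^*}_\wa$ a.e.\ on $\{h^*_\wa>0\}$ for used walks, and $\prices$ is still bounded. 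This needs that $h^*_\wa(t)>0$ implies $u_{\wa[j]}(\arr_{\wa,j}(t))>0$ for a.e.\ $t$; this is \oref{lem: Relations:h>0u>0:Pointwise}.

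\textbf{Step 3 (verification).} Fix $\wa$ with $h^*_\wa(t)>0$ and an arbitrary $\wa'\in\Routes_i$. There are two cases.
\begin{itemize}
\item If $t\in\Dwg[\wa'][u]$, then $\Pf^\prices_{\wa'}(u,t)\ge\Pf^{\prices^*}_{\wa'}(u,t)$, and Step 1 gives the DUE inequality.
\item Otherwise some edge of $\wa'$ is entered at a time with $u=0$. Then $\Pf^\prices_{\wa'}(u,t)\ge K\ge\wttime_\wa(u,t)+\Pf^\prices_\wa(u,t)$, and nonnegativity of $\wttime_{\wa'}$ finishes this case.
\end{itemize}
Together with \Cref{ass: u: NLQuasiUniqueness}, this shows $h^*$ is a $\prices$-DUE inducing $u$.

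\textbf{Main obstacle.} The main obstacle is making the exchange argument in Step 1 rigorous. It must hold simultaneously for the countably many competitors $\wa'$ and on full-measure sets, and it must select $\mathfrak T'$ so that the rerouted flow has an existing autonomous loading. I would first handle this by restricting $\mathfrak T'$ to $\Dwg[\wa'][u]$ intersected with a set where all relevant quantities are bounded. I would then invoke \Cref{thm: arrLusin} and \Cref{lem:  g>0Arr'>0} to guarantee existence of $\ell^u_{\wa'}$ on that restricted set.
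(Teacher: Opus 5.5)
Your proposal is correct and follows the paper's proof. It uses the same strong-duality tolls, the same raising of tolls on edge-times with $u_\arc(t)=0$ by a bound built from finite walk support, bounded experienced costs and $\norm{\prices^*}_\infty$ (for this bound, fix, as the paper does, a representative of $\prices^*$ bounded by $\norm{\prices^*}_\infty$ everywhere), and the same exchange argument made rigorous through \Cref{cor: LuExAltChara} on $\Dwg[\wa'][u]$. The only difference is bookkeeping: you prove the exchange inequality for the unmodified dual and then split into the cases $t\in\Dwg[\wa'][u]$ or not, whereas the paper runs one contradiction argument with the raised tolls (after noting that they preserve zero duality gap), in which the raised tolls themselves force the violating set into $\Dwg[\wa'][u]$.
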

  The above in combination with \Cref{lem: NecessImpl}  immediately  yields the following characterization: 
  \begin{theorem}\label{thm: mainMSMS}
  	In  a multi-source, multi-\sink network with \Cref{ass: u} being fulfilled, \eqref{opt: Master} admitting strong duality \wrt $L_+^\infty(\hori)^\GA$ and \Cref{thm: SuffConMSMS: FiniteWalkSupp,thm: SuffConMSMS: BoundedExpTravel} being valid for any $h^*\in \wir$ inducing $u$, the following statements are equivalent: 
  	\begin{thmparts}
  		\item $u$ is implementable via bounded tolls. 
  		\item There exists an  optimal solution for \eqref{opt: Master} with tight inequality \eqref{ineq: Master}.  
  	\end{thmparts}
  \end{theorem}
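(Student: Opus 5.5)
We obtain \Cref{thm: mainMSMS} by combining \Cref{lem: NecessImpl} with \Cref{thm: SuffConMSMS}; there is essentially nothing new to prove beyond checking that the hypotheses line up.

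For (a)$\Rightarrow$(b), suppose $u$ is implementable via bounded tolls $\prices^*\in L_+^\infty(\hori)^\GA$ and some $h^*\in\wir$. Any representative of $\prices^*$ has $\dup{\prices^*}{u}\le\norm{\prices^*}_\infty\sum_{\arc}\norm{u_\arc}<\infty$, so it lies in $\MeasFuncUInt$. Hence \Cref{lem: NecessImpl} applies and shows that $h^*$ is an optimal solution of \eqref{opt: Master} with tight inequality \eqref{ineq: Master}, which is (b).

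For (b)$\Rightarrow$(a), let $h^*$ be optimal for \eqref{opt: Master} with $\ell^u(h^*)=u$. We want to invoke \Cref{thm: SuffConMSMS}, whose standing hypotheses (\Cref{ass: u} and strong duality \wrt $L_+^\infty(\hori)^\GA$) are assumed in the theorem. Condition \ref{thm: SuffConMSMS: OptimalInducingU} holds by choice of $h^*$. The remaining conditions \ref{thm: SuffConMSMS: FiniteWalkSupp} and \ref{thm: SuffConMSMS: BoundedExpTravel} are assumed for every $h^*\in\wir$ inducing $u$, so we only need to check that $h^*$ induces $u$ in the sense of $\Nl$. Feasibility gives $h^*\in\wir\cap\edom{\Routes}[u]$, and tightness gives $\ell^u(h^*)=u$. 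By \Cref{ass: u: NLQuasiUniqueness}, this yields $\Nl(h^*)=u$. Then \Cref{thm: SuffConMSMS} gives bounded tolls implementing $u$ via $h^*$.

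The only place needing care is this last bookkeeping step: the hypothesis in the theorem is phrased for walk inflows inducing $u$ under $\Nl$, whereas optimality with tight constraint only gives $\ell^u(h^*)=u$. \Cref{ass: u: NLQuasiUniqueness} supplies exactly this bridge, so no real obstacle arises.
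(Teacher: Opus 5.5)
Your proposal is correct and matches the paper's argument. The paper likewise gets necessity from \Cref{lem: NecessImpl}, noting that bounded tolls lie in $\MeasFuncUInt$, and sufficiency from \Cref{thm: SuffConMSMS}, with \Cref{ass: u: NLQuasiUniqueness} turning $\ell^u(h^*)=u$ into $\Nl(h^*)=u$ so that the finite-walk-support and bounded-cost hypotheses apply to the optimal $h^*$.
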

  We remark that for finitely supported $u$, certain natural models satisfy the requirements stated in 
 \Cref{thm: SuffConMSMS: FiniteWalkSupp,thm: SuffConMSMS: BoundedExpTravel} for any $h^*$ inducing $u$ automatically: In case that  
  the travel times induced by $u$ are lower bounded by a constant strictly larger than $0$, \Cref{thm: SuffConMSMS: FiniteWalkSupp} holds.
  Similarly, if the private costs represent the weighted travel times (i.e., are defined by \eqref{eq: PC=WTT}), then $h^*$ also has bounded experienced travel time, i.e.~fulfills \Cref{thm: SuffConMSMS: BoundedExpTravel}. 
  Both of these statements are shown in \Cref{lem: AssUZeroDG}. 

  Before we come to the proof of \Cref{thm: SuffConMSMS}, let us give a brief proof sketch: 
  We start by exploiting strong duality of~\eqref{opt: Master} (\wrt $L_+^\infty(\hori)^\GA$) to get dual tolls which, 
  roughly speaking, guarantee that for all commodities $i\in I$ and almost all points in time, 
  all \stwalki{s}  in which flow can be send under $\trav(u,\cdot)$ without resulting in an undefined network loading do not have smaller total costs than the walks utilized under $h$. In order to ensure that this holds for \emph{all} \stwalki s, we increase the tolls on all edges not used under~$u$ in such a way that all walks using such an edge have total cost at least as high as any used walk. 
  Here, we exploit that, by the fulfillment of \Cref{thm: SuffConMSMS: FiniteWalkSupp,thm: SuffConMSMS: BoundedExpTravel}, the total costs under $\prices^*$ among all particles   are upper bounded. Hence, adding this upper bound to the tolls on all edges not used under $u$ yields suitable tolls.

  \begin{proof}[Proof of \Cref{thm: SuffConMSMS}]
  	Let $(\hat{\prices},h^*)$ admit  zero duality gap with tight inequality \eqref{ineq: Master} where  $\hat{\prices} \in L_+^\infty(\hori)^\GA$ exists by the assumption of \eqref{opt: Master} admitting  strong duality \wrt $L_+^\infty(\hori)^\GA$. That is, we have  
  	\begin{align}\label{eq: ZeroDGhat2}
  		\inf_{{h} \in \edom{\Routes}[u]\cap \wir}\sum_{\wa \in \Routes} \dup{\wttime_\wa(u,\cdot)}{{h}_\wa} + \dup{{\hat{\prices}}}{ \ell^u(h) -u} =  \sum_{\wa \in \Routes} \dup{\wttime_\wa(u,\cdot)}{h^*_\wa} .  
  	\end{align}  
  	By adding to both sides the constant term $\dup{{\hat{\prices}}}{u}\in \R$ and using  \Cref{lem: aggCostsVSwalkCosts}, this is equivalent to 
  	\begin{align}\label{eq: ZeroDGhatwalk2}
  		\sum_{\wa \in \Routes} \dup{\wttime_\wa(u,\cdot) + \Pf^{{\hat{\prices}}}_\wa(u,\cdot)}{ {h}_\wa} \geq \sum_{\wa \in \Routes} \dup{\wttime_\wa(u,\cdot) + \Pf^{{\hat{\prices}}}_\wa(u,\cdot)}{h^*_\wa} && \forall {h} \in \edom{\Routes}[u]\cap \wir.  
  	\end{align}  

  	By assumption,   the set of walks  $\Routes^{h^*}$ on which $h^*$ is supported is finite. 
  	Thus, there exists an $m \in \N$ large enough such that $\abs{\wa} \leq m$ for all $\wa \in \Routes^{h^*}$. In particular, we get the bound $\Pf^{\hat{\prices}}_\wa(u,t) \leq m \cdot \norm{\hat{\prices}}_\infty$ for all $\wa \in \Routes^{h^*}$ and $t \in \hori$. 
  	This together with the constant~$C^{h^*}$ from the edment of \Cref{thm: SuffConMSMS: BoundedExpTravel} implies that for $\tilde{C}:= m \cdot \norm{\hat{\prices}}_\infty + C^{h^*}$  the following implication holds:   
  	\begin{align}\label{eq: h^*>0ImpliesBoundedCosts}
  		h^*_\wa(t) > 0 \implies \wttime_\wa(u,t) + \Pf^{{\hat{\prices}}}_\wa(u,t) \leq \tilde{C} \text{ for almost all } t \in \hori \text{ and all } \wa \in \Routes.
  	\end{align}
  	For the remainder of the proof, choose an arbitrary representative of $u$ 
  	that fulfills 
  		\begin{align}\label{eq: RepOfU}
  		\arr_{\wa,j}(u,\cdot)'(t) \text{ exists and }  \arr_{\wa,j}(u,\cdot)'(t) >0 \text{ for all }t \in \Dwg[\wa][u] \text{ and all } \wa \in \Routes
  	\end{align}
  	where 
  	 \begin{align*}
  	\Dwg[\wa][u] = \Set{t \in \hori \mid u_{{{\wa}}[j]}(\arr_{{{\wa}},j}(u,t)) >0 , j\leq \abs{{{\wa}}}  }. 
  	\end{align*}
 	This is possible by \Cref{lem:  g>0Arr'>0}, the countability of $\Routes$ and finiteness of all walks in $\Routes$. 
 	 Note that \Cref{lem:  g>0Arr'>0} is applicable as  \Cref{ass: u: outflow} implies that $u$ admits an edge outflow rate for every edge. This is true as for every $\arc \in \GA$ that does not enter a \sink[,] $u_\arc$ admits an edge outflow rate by flow conservation, cf.~\Cref{lem: outflow: FlowCon}. 
  	Moreover, choose a representative of 
  	 $h^*$ that fulfills the  implication in \eqref{eq: h^*>0ImpliesBoundedCosts} for all $t \in \hori$ and 
  	 	additionally 
  	 \[h^*_\wa(t) > 0 \implies {u}_{\wa,j}(\arr_{\wa,j}(u,t)) > 0\]
  	 for all $ t\in \hori, \wa \in \Routes, j \leq |\wa|$ which is possible by \Cref{lem: Relations:h>0u>0:Pointwise}.  
  	 Let us also choose 
  	  an arbitrary representative of $\hat{\prices}$ that is bounded by $\norm{\hat{\prices}}_\infty$ for all $t \in \hori$. We define for all $\arc \in \GA$ 
  	\begin{align*}
  		\prices^*_\arc(t) \coloneqq \begin{cases}
  			\hat{\prices}_\arc(t) + \tilde{C}, &\text{if } t \in \mathfrak T_\arc\\ 
  			\hat{\prices}_\arc(t),  &\text{else}  
  		\end{cases}
  		\quad \text{ where }\quad \mathfrak T_\arc \coloneqq \{t \in \hori\mid u_\arc(t) = 0\} .
  	\end{align*}
  	It is clear that $\prices^*$ is a representative of an element of $L_+^\infty(\R)$ as well as still fulfills the equality~\eqref{eq: ZeroDGhat2} and subsequently~\eqref{eq: ZeroDGhatwalk2} since we only increased $\hat{\prices}_\arc(t)$ if ${u}_\arc(t)= 0$. 
  	Moreover, by the choice of the representatives $h^*$ and $u$, the implication in \eqref{eq: h^*>0ImpliesBoundedCosts} 
  	is also still valid for all $t \in \hori$ for $\prices^*$ instead of $\hat{\prices}$. 
  	
  	With these insights, we can now show that $h^*$ fulfills the Wardrop conditions of a $\prices^*$-DUE:
  	\begin{claim}
  		$h^*$ fulfills for almost all $t \in \hori$, all  $i \in I$ and $\wa \in \Routes_i$  
  		\begin{align*}
  			h^*_\wa(t)>0 \implies \wttime_\wa(u,t) + \Pf^{\prices^*}_\wa(u,t) \leq \wttime_{\wa'}(u,t) + \Pf^{\prices^*}_{\wa'}(u,t)\text{ for all } \wa'\in \Routes_i.
  		\end{align*}
  	\end{claim}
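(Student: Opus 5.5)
The plan is to fix a commodity $i\in I$, a walk $\wa\in\Routes_i$ and a competing walk $\wa'\in\Routes_i$, and to split the times $t$ with $h^*_\wa(t)>0$ according to whether $t\in\Dwg[\wa'][u]$. Since $\Routes_i$ is countable, it suffices to prove the implication for a single pair $(\wa,\wa')$ and almost all $t$.

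\textbf{Case $t\notin\Dwg[\wa'][u]$.} Here some $j\le\abs{\wa'}$ satisfies $u_{\wa'[j]}(\arr_{\wa',j}(u,t))=0$, so $\arr_{\wa',j}(u,t)\in\mathfrak T_{\wa'[j]}$. Hence $\Pf^{\prices^*}_{\wa'}(u,t)\ge \prices^*_{\wa'[j]}(\arr_{\wa',j}(u,t))\ge\tilde C$, while $\wttime_{\wa'}\ge 0$ and all other tolls are nonnegative. By the chosen representatives, the implication \eqref{eq: h^*>0ImpliesBoundedCosts} holds for $\prices^*$ and all $t$. Therefore $h^*_\wa(t)>0$ gives $\wttime_\wa(u,t)+\Pf^{\prices^*}_\wa(u,t)\le\tilde C\le \wttime_{\wa'}(u,t)+\Pf^{\prices^*}_{\wa'}(u,t)$. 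This case involves no null-set exceptions at all.

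\textbf{Case $t\in\Dwg[\wa'][u]$.} Assume for contradiction that the set $\mathfrak T$ of $t\in\Dwg[\wa'][u]$ with $h^*_\wa(t)>0$ and cost of $\wa$ strictly larger than cost of $\wa'$ (both under $\prices^*$) has positive measure. Shrinking, I pick $\mathfrak T^*\subseteq\mathfrak T$ of positive finite measure and $\eps>0$ with $h^*_\wa\ge\eps$ on $\mathfrak T^*$. On $\mathfrak T^*$ both walk costs are bounded: the cost of $\wa$ is at most $\tilde C$, and the cost of $\wa'$ is smaller still. I then set $h:=h^*-\eps 1_{\mathfrak T^*}e_\wa+\eps 1_{\mathfrak T^*}e_{\wa'}$. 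This keeps $h\ge0$ and $\sum_{\Routes_i}h=\inflow_i$, and $h$ strictly lowers the left side of \eqref{eq: ZeroDGhatwalk2} (which still holds for $\prices^*$). All differences are finite integrals of bounded functions over a set of finite measure, so the change is exactly $\eps\int_{\mathfrak T^*}(\text{cost}_{\wa'}-\text{cost}_\wa)\di\sigma<0$. This contradicts \eqref{eq: ZeroDGhatwalk2}, provided $h\in\edom{\Routes}[u]\cap\wir$.

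\textbf{Main obstacle: feasibility of $h$.} Removing flow from $\wa$ is harmless, since $\ell^u_\wa$ of a smaller nonnegative inflow still exists by \oref{lem: elluExistenceProperties}. For the added flow on $\wa'$ I will use \Cref{cor: LuExAltChara}, which applies thanks to representative \eqref{eq: RepOfU} and the existence of outflow rates of $u$ from \Cref{ass: u: outflow} and \Cref{lem: outflow: FlowCon}. Fix $j\le\abs{\wa'}+1$ and note $\Dwg[\wa'][u]\subseteq\Dwg[\wa'_{<j}][u]$, so \eqref{eq: RepOfU} holds along the prefix. I need $1_{\mathfrak T^*}=0$ on $\arr_{\wa',j}^{-1}(\mathfrak N)$ for every null set $\mathfrak N\subseteq\hori\setminus\arr_{\wa',j}(\Dwg[\wa'][u])$. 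This holds because $\mathfrak T^*\subseteq\Dwg[\wa'][u]$, and the preimage of a set disjoint from $\arr_{\wa',j}(\Dwg[\wa'][u])$ misses $\Dwg[\wa'][u]$. Adding this single existing loading to $\ell^u(h^*)$ keeps the aggregate well defined, so $h\in\edom{\Routes}[u]$.

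Membership in $\wir$ also requires solvability of \eqref{eq: DefEdgeFlow}. I would argue this the way the paper implicitly does for its feasible set, via the definition of $\wir$ together with $h$ being a commodity-preserving reallocation. I flag it as the point needing the most care: if it fails, I would first try restricting to perturbations that stay inside the admissible class, and otherwise invoke the corresponding result from \cite{GHS24FD}.
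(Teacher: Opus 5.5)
Your proof is correct and follows the paper's argument: the bound $\tilde C$ on used walks and the raised tolls on $\{u_\arc=0\}$ confine the violating times to $\Dwg[\wa'][u]$, \Cref{cor: LuExAltChara} gives feasibility of the added flow on $\wa'$, and the resulting strict cost decrease contradicts \eqref{eq: ZeroDGhatwalk2} for $\prices^*$. Your explicit case split and the truncated shift $\eps 1_{\mathfrak T^*}$ only reorganize the paper's single contradiction. Two small points: state that the right-hand side of \eqref{eq: ZeroDGhatwalk2} is finite (it is at most $\tilde C\sum_{i\in I}\norm{\inflow_i}$ by \eqref{eq: h^*>0ImpliesBoundedCosts}), which the strict-decrease step needs; and the membership $h\in\wir$ you flag is simply asserted in the paper as well, so your treatment is no weaker than the original.
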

  	
  	\begin{proofClaim}   
  		Assume for the sake of a contradiction that  the claim was wrong.
  		Then, 
  		by the countability of $\Routes_i\times \Routes_i, i \in I$, 
  		there exists a commodity~$i \in I$, walks $\wa,\wa' \in \Routes_i$ and a set $\mathfrak D_{\wa>} \in\mathcal{B}(\hori)$ with $\leb(\mathfrak D_{\wa>})> 0$ such that 
  		\begin{align}\label{eq: DefDwa>}
  			h^*_\wa(t)>0 \quad  \text{ and } \quad   \wttime_\wa(u,t) + \Pf^{\prices^*}_\wa(u,t) > \wttime_{\wa'}(u,t)+ \Pf^{\prices^*}_{\wa'}(u,t) \; \text{ for all }t \in \mathfrak D_{\wa>}.
  		\end{align}

  		Define the walk inflow rate function ${h} \in L_+(\hori)^\Routes$ by shifting all inflow into~$\wa$ during~$\mathfrak D_{\wa>}$ to~$\wa'$:
  		\[{h}_{\Tilde{\wa}} \coloneqq \begin{cases}
  			h^*_{\wa'} + h^*_{\wa} \cdot \Indi_{\mathfrak D_{\wa>}}   &\text{if } \Tilde{\wa}=\wa' ,\\
  			h^*_{\wa} - h^*_{\wa} \cdot \Indi_{\mathfrak D_{\wa>}}     &\text{if } \Tilde{\wa}=\wa ,\\
  			h^*_{\tilde{\wa}} &\text{else.}
  		\end{cases}\]
  		We argue in the following that ${h} \in \edom{\Routes}[u]\cap \wir$. 
  		It is clear that ${h}$ is measurable and ${h} \in \wir$. 
  		We require the following claim: 
  		\begin{subclaim}
  			$\ell^u_{\tilde{\wa}}(h_{\tilde{\wa}})$ exists for all $\tilde{\wa} \in \Routes$. 
  		\end{subclaim}
  		\begin{proofClaim}
  			The existence of  $\ell^u_{\tilde{\wa}}({h}_{\tilde{\wa}})$ for $\tilde{\wa}\neq \wa'$ follows immediately by the existence of $\ell^u_{\tilde{\wa}}(h^*_{\tilde\wa})$. Remark that $\ell^u_{\tilde{\wa}}(h^*_{\tilde{\wa}}),\tilde{\wa} \in \Routes$ exist  by $\ell^u(h^*)$ existing, cf.~\Cref{lem: elluExistenceProperties:ExistenceInducedFlow}. 
  			
  			For the existence of $\ell^u_{\wa'}({h}_{\wa'})$, we aim to apply the ``tighter'' characterization of existence stated in  \Cref{cor: LuExAltChara}. 
  			The latter is applicable by the choice of our representative of $u$ fulfilling \eqref{eq: RepOfU}.
  			Hence, consider  an arbitrary $j^*\leq |\wa'|$ and a Borel-measurable null set $\mathfrak T^0\subseteq \hori\setminus \arr_{\wa',j^*}(u,\cdot)(\Dwg[\wa'][u])$.
  			We have to show that ${h}_{\wa'} = 0$ almost everywhere on 
  			$ \arr_{\wa',j^*}(u,\cdot)^{-1}(\mathfrak T^0)$. 
  			By $\ell^u_{\wa'}(h^*)$ existing, it is clear that $h^*_{\wa'}  = 0$ 
  			on the latter set. Since ${h}_{\wa'} = h^*_{\wa'}$ outside of $\mathfrak D_{\wa>}$, 
  			it is, therefore, sufficient to show that $\arr_{\wa',j^*}(u,\cdot)^{-1}(\mathfrak T^0) \cap \mathfrak D_{\wa>} = \emptyset$ holds. This is an immediate consequence of  $ \mathfrak D_{\wa>} \subseteq \Dwg[\wa'][u]$ for which we argue in the following: 
  			 For any $t \in \mathfrak D_{\wa>} \setminus \Dwg[\wa'][u]$ we have  
  			\begin{align*}
  				\Pf^{\prices^*}_{\wa'}(u,t) \symoverset{1}{\geq} \tilde{C} \symoverset{2}{\geq} \Psi_\wa(u,t) + \Pf^{\prices^*}_\wa(u,t) \overset{\eqref{eq: DefDwa>}}{>} \wttime_{\wa'}(u,t)+ \Pf^{\prices^*}_{\wa'}(u,t) 
  			\end{align*}
  			where \refsym{1} holds by the definition of $\prices^*$ while \refsym{2} holds by the implication in \eqref{eq: h^*>0ImpliesBoundedCosts} being valid for all $t \in \mathfrak{D}_{\wa>}$. 
  			Since $\wttime_{\wa'}(u,t) \geq 0$, this  shows that such a $t$ can not exist and, hence, the claimed inclusion holds which finishes the proof of this claim. 
  		\end{proofClaim} 
  		In order to derive from this that $\ell^u(h)$ exists and, hence, $h \in \edom{\Routes}[u]$  holds, it remains by \Cref{lem: elluExistenceProperties:ExistenceInducedFlow}  to observe    that  $(\ell^u_{\tilde{\wa}}({h}_{\tilde{\wa}}))_{\tilde{\wa} \in \Routes} \in  \seql[1][\Routes][L_+(\hori)^\GA]$ holds: This is true as   $(\ell^u_{\tilde{\wa}}({h}_{\tilde{\wa}}))_{\tilde{\wa} \in \Routes}$ and $(\ell^u_{\tilde{\wa}}(h^*_{\tilde{\wa}}))_{\tilde{\wa} \in \Routes}$ only differ in the two entries corresponding to $\wa,\wa'$ and $(\ell^u_{\tilde{\wa}}(h^*_{\tilde{\wa}}))_{\tilde{\wa} \in \Routes} \in  \seql[1][\Routes][L_+(\hori)^\GA]$. The latter holds by the existence of $\ell^u(h^*)$ and \Cref{lem: elluExistenceProperties:ExistenceInducedFlow}. Thus, we have  $h \in \edom{\Routes}[u]$.

  		Next, we observe that 
  		\begin{align*}
  			\sum_{{\tilde{\wa}} \in \Routes} \dup{\wttime_{\tilde{\wa}}(u,\cdot) + \Pf^{\prices^*}_{\tilde{\wa}}(u,\cdot)}{h^*_{\tilde{\wa}}}  < \infty 
  		\end{align*}
  		 by the validity of \eqref{eq: h^*>0ImpliesBoundedCosts} for $\prices^*$ and $h \in \edom{\Routes}[u]\subseteq \seql$ (cf.~\Cref{lem: elluContinuity:Subset}). This, in combination with the definition of $\mathfrak D_{\wa>}$, allows us to get the strict inequality \refsym{1} in the following: 
  		 \begin{align*}
  		 	\sum_{{\tilde{\wa}} \in \Routes} &\dup{\wttime_{\tilde{\wa}}(u,\cdot) + \Pf^{\prices^*}_{\tilde{\wa}}(u,\cdot)}{h^*_{\tilde{\wa}}}  
  		 	=\sum_{\tilde{\wa} \in \Routes\setminus \{\wa',\wa\} } \dup{\wttime_{\tilde{\wa}}(u,\cdot) +
  		 	 \Pf^{\prices^*}_{\tilde{\wa}}(u,\cdot)}{h_{\tilde{\wa}}} \\ 
  		 	 + \,&\dup{\wttime_{{\wa}}(u,\cdot) + \Pf^{\prices^*}_{{\wa}}(u,\cdot)}{h_{{\wa}}+h^*_{\wa}\cdot \Indi_{\mathfrak{ D}_{\wa>}}}  
  		 	 +\dup{\wttime_{{\wa'}}(u,\cdot) +\Pf^{\prices^*}_{{\wa'}}(u,\cdot)}{h_{{\wa'}}-h^*_{\wa}\cdot \Indi_{\mathfrak{ D}_{\wa>}}}\\
  		 	 \symoverset{2}{=}&\sum_{\tilde{\wa} \in \Routes} \dup{\wttime_{\tilde{\wa}}(u,\cdot) +
  		 	 	\Pf^{\prices^*}_{\tilde{\wa}}(u,\cdot)}{h_{\tilde{\wa}}} + \dup{\wttime_{\wa}(u,\cdot) + \Pf^{\prices^*}_{{\wa}}(u,\cdot) - \wttime_{\wa'}(u,\cdot) - \Pf^{\prices^*}_{{\wa}'}(u,\cdot)  }{h^*_{\wa }\cdot \Indi_{\mathfrak D_{\wa>}}}\\
  		 	 \symoverset{1}{>}&\sum_{\tilde{\wa} \in \Routes } \dup{\wttime_{\tilde{\wa}}(u,\cdot) +
  		 	 	\Pf^{\prices^*}_{\tilde{\wa}}(u,\cdot)}{h_{\tilde{\wa}}} .
  		 \end{align*}
  		 	Here, we used for \refsym{2} that for $\tilde{\wa} \in \{\wa',\wa\}$ we have
  		 \begin{align*}
  		 	\dup{\wttime_{\tilde{\wa}}(u,\cdot) + \Pf^{\prices^*}_{\tilde{\wa}}(u,\cdot)}{h^*_{\wa}\cdot \Indi_{\mathfrak{ D}_{\wa>}}} \overset{\eqref{eq: DefDwa>}}&{\leq}\dup{\wttime_{\wa}(u,\cdot) + \Pf^{\prices^*}_{{\wa}}(u,\cdot)}{h^*_{\wa}\cdot \Indi_{\mathfrak{ D}_{\wa>}}}\\ &\leq 	\sum_{{\tilde{\wa}} \in \Routes} \dup{\wttime_{\tilde{\wa}}(u,\cdot) + \Pf^{\prices^*}_{\tilde{\wa}}(u,\cdot)}{h^*_{\tilde{\wa}}}  < \infty.
  		 \end{align*}
  		   Hence, we arrive at the desired contradiction since we already argued that the opposite inequality (namely~\eqref{eq: ZeroDGhatwalk2}) 
  		also holds for $\prices^*$. 
  	\end{proofClaim}
  	Thus, we have shown that $h^*$ fulfills the $\prices^*$-DUE Wardrop condition. Moreover, $h^*$ also induces $u$ under $\Nl$ by \Cref{ass: u} and $\ell^u(h^*) = u$ being valid. 
  	Thus, the proof is finished.\qedhere 
  \end{proof}

  As already mentioned, for finite supported $u$ and certain models, 
  the requirements in 
 \Cref{thm: SuffConMSMS: FiniteWalkSupp,thm: SuffConMSMS: BoundedExpTravel} are always satisfied:
  
  \begin{lemma}\label{lem: AssUZeroDG}
  Assume that $u$ has finite support, i.e.~there exist $t_0,\tEnd\in \R$ such that $u= 0$ on $\hori \setminus [t_0,\tEnd]$. 
  Then, the following statements are valid:
  \begin{thmparts}
      \item  If the travel times induced by $u$ are lower bounded by a positive constant, i.e.\ if there exists $\varepsilon>0$ with $\trav_\arc(u,t)\geq \varepsilon$ for all $t \in \hori$ and $\arc \in \GA$, then \Cref{thm: SuffConMSMS: FiniteWalkSupp} holds for all $h$  with  $\ell^u(h)\leq u$. \label[thmpart]{lem: AssUZeroDG: 1}
      \item If  the private costs are defined by \eqref{eq: PC=WTT}, then there exists $C \in \R$ such that all $h$ with  $\ell^u(h)\leq u$ fulfill \Cref{thm: SuffConMSMS: BoundedExpTravel} \wrt $C$.\label[thmpart]{lem: AssUZeroDG: 2}
  \end{thmparts} 
  \end{lemma}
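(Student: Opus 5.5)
The common mechanism for both parts is that flow loaded under the fixed travel times $\trav(u,\cdot)$ can only enter edge $\wa[j]$ at times where $u_{\wa[j]}$ is positive, and hence inside $[t_0,\tEnd]$. Concretely, for any $h$ with $\ell^u(h)\leq u$, any $\wa\in\Routes$ and $j\leq\abs{\wa}$, we have $\ell^u_{\wa,j}(h_\wa)\leq \ell^u_{\wa[j]}(h)\leq u_{\wa[j]}$. Combined with the defining identity \eqref{eq: DefUEdgeFlowSingle}, this gives
\begin{align*}
\int_{\arr_{\wa,j}(u,\cdot)^{-1}(\hori\setminus[t_0,\tEnd])} h_\wa \di\sigma = \int_{\hori\setminus[t_0,\tEnd]} \ell^u_{\wa,j}(h_\wa)\di\sigma = 0 .
\end{align*}
Hence for almost every $t$ with $h_\wa(t)>0$ we get $\arr_{\wa,j}(u,t)\in[t_0,\tEnd]$ for all $j\leq\abs{\wa}$. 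For the arrival time $\arr_{\wa,\abs{\wa}+1}$ at the destination, I would argue that the last edge $\wa[\abs{\wa}]$ is also entered inside $[t_0,\tEnd]$, so the destination is reached by $\exit_{\wa[\abs{\wa}]}(u,\tEnd)$ at the latest, using FIFO monotonicity of $\exit$. To make this bound uniform over walks, note that there are finitely many edges, so $M:=\max_{\arc}\exit_\arc(u,\tEnd)$ is finite.

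For part~a), fix $\wa$ with $h_\wa\neq 0$ and a time $t$ in the positive-measure set where the above holds. Then $t=\arr_{\wa,1}(u,t)\geq t_0$ and $\arr_{\wa,\abs{\wa}}(u,t)\leq \tEnd$. Since every traversal time is at least $\varepsilon$,
\begin{align*}
\arr_{\wa,\abs{\wa}}(u,t)\geq t+(\abs{\wa}-1)\varepsilon ,
\end{align*}
so $\abs{\wa}\leq 1+(\tEnd-t_0)/\varepsilon$. The walk length is therefore uniformly bounded, independently of $h$, and only finitely many walks in $\Routes$ (finitely many edges, finitely many commodities $I$) satisfy this bound. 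This yields the finite walk support $\Routes^{h}$ required in \Cref{thm: SuffConMSMS: FiniteWalkSupp}.

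For part~b), with costs given by \eqref{eq: PC=WTT}, for almost every $t$ with $h_\wa(t)>0$ we have
\begin{align*}
\wttime_\wa(u,t)=\vot_i\,(\arr_{\wa,\abs{\wa}+1}(u,t)-t)\leq \vot_i\,(M-t_0)\leq (\max_{i\in I}\vot_i)(M-t_0)=:C .
\end{align*}
This uses $t\geq t_0$ and $\arr_{\wa,\abs{\wa}+1}(u,t)=\exit_{\wa[\abs{\wa}]}(u,\arr_{\wa,\abs{\wa}}(u,t))\leq \exit_{\wa[\abs{\wa}]}(u,\tEnd)\leq M$. The constant $C$ depends only on $u$, as required.

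The main obstacle is the measure-theoretic passage from ``$\ell^u_{\wa,j}(h_\wa)$ vanishes outside $[t_0,\tEnd]$'' to the pointwise statement for almost all departure times. This has to be done carefully via the preimage identity rather than by arguing pointwise about representatives; the case $j=1$ is immediate since $\arr_{\wa,1}=\id$. Everything after that is elementary monotonicity.
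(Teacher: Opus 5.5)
Your proof is correct and follows the paper's argument. Both derive $h_\wa=0$ a.e.\ on $\arr_{\wa,j}(u,\cdot)^{-1}(\hori\setminus[t_0,\tEnd])$ from $\ell^u_{\wa,j}(h_\wa)\leq u_{\wa[j]}$ (you via the preimage identity extended to Borel sets, the paper via the order-preservation lemma) and then conclude by monotonicity of the arrival times; bounding $\abs{\wa}$ via $\arr_{\wa,\abs{\wa}}(u,t)\leq\tEnd$ and taking the maximum over the $\vot_i$ explicitly are only harmless minor variations.
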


  \begin{proof}
  For both statements, we make the following observation: Consider an arbitrary $i \in I$, $\wa \in \Routes_i$ and $h$ with $\ell^u(h) \leq u$. 
  	Then, for all  $j\leq\abs{\wa}$, we have $\ell^u_{\wa,j}(h_w)\leq \ell^u_{\wa[j]}(h) \leq u_{\wa[j]} = 0=\ell_{\wa,j}(0)$ a.e.~on  
  	$\hori\setminus[t_0,\tEnd]=\arr_{\wa,j}(u,\cdot)\big(\arr_{\wa,j}(u,\cdot)^{-1}(\hori\setminus [t_0,\tEnd]) \big)$. 
      Hence, by $\ell_{\wa,j}$ being order preserving (\Cref{lem: ellOrderPreservingSharpened}), we get that 
  	$h_\wa =0$ on $\arr_{\wa,j}(u,\cdot)^{-1}(\hori\setminus [t_0,\tEnd])$.
  For $j = 1$, this yields 
  	$h_\wa = 0$ on $\hori\setminus [t_0,\tEnd]$ while for $j = \abs{\wa}$, we get 
  	$h_\wa =0$ on $\arr_{\wa,\abs{\wa}}(u,\cdot)^{-1}(\hori\setminus [t_0,\tEnd])$. 
  	Hence, we can conclude that for almost all $t\in \hori$ we have
  	\begin{align}
  		h_\wa(t) > 0 &\implies t \notin \Big(\big(\hori\setminus [t_0,\tEnd]\big) \cup  \arr_{\wa,\abs{\wa}}(u,\cdot)^{-1}\big(\hori\setminus [t_0,\tEnd]\big) \Big) \nonumber\\
  		&\implies t \in  [t_0,\tEnd] \land   \arr_{\wa,\abs{\wa}}(u,t)\in  [t_0,\tEnd]   \nonumber\\
  		&\implies t \in  [t_0,\tEnd] \land   \arr_{\wa,\abs{\wa}+1}(u,t)\in \left[t_0,\max_{\arc \in \GA}\exit_\arc(u,\tEnd)\right].  \label{eq: lem: AssUZeroDG}
  	\end{align}
    \begin{structuredproof}
  \proofitem{\ref{lem: AssUZeroDG: 1}}
  Since $\trav_\arc(u,t)\geq \varepsilon$ for all $\arc \in \GA$ and $t \in \hori$, 
  we have for any $t \in[t_0,t_f]$ that
  \begin{align*}
      \arr_{\wa,\abs{\wa}+1}(u,t) \geq \arr_{\wa,\abs{\wa}+1}(u,t_0) \geq  t_0 + \varepsilon \cdot \abs{\wa}.
  \end{align*}
  Now using \eqref{eq: lem: AssUZeroDG} implies for any $t \in \hori$ with $h_\wa(t)>0$ that $t \in [t_0,t_f]$ 
  and hence with the above that 
  $\abs{\wa} \leq \frac{1}{\varepsilon} \cdot\big(\max_{\arc \in \GA}\exit_\arc(u,\tEnd) - t_0\big)$. 
  
      \proofitem{\ref{lem: AssUZeroDG: 2}}  
      From \eqref{eq: lem: AssUZeroDG}, we immediately get: 
      \begin{align*}
          \wttime_\wa(u,t) \overset{\eqref{eq: PC=WTT}}{=} \gamma_i\cdot \left( \arr_{\wa,\abs{\wa}+1}(u,t) - t\right) \leq \gamma_i\cdot \Bigl(\max_{\arc \in \GA}\exit_\arc(u,\tEnd) - t_0  \Bigr),
      \end{align*}
  	which finishes the proof. \qedhere
  \end{structuredproof}
  \end{proof}

  Next, we provide sufficient conditions for a $h^*\in\wir$ to implement $u$ in the case of multi-source, single-\sink networks. 
  For this, we require as in the homogeneous case, that the private costs induced by $u$ are separable over the edges: 
  \begin{rstassumption}{\ref{ass: PCSep}}
  	The private costs are separable over the edges, that is, there exists for all $i\in I$ a  measurable function  $\psi^i : \hori \to \R^\GA_+$ such that for all $(\hat{\wa},i)\in \Routes_i$:
  	\begin{align*}
  		\wttime_{(\hat{\wa},i)}(u,\cdot) :=  \sum_{j \leq|\hat{\wa}|} \psi^i_{\hat{\wa}[j]}(\arr_{\hat{\wa},j}(u,\cdot))  .
  	\end{align*}
  \end{rstassumption}
  As for the homogeneous user case, 
  we require the above assumption in order to associate private costs to subwalks of walks contained in $\Routes$ and hence extend the function $\wttime$ to all finite walks $\hat{\wa}$.  
  In the following,  we call \eqref{opt: Master} \wellposed if there exists a feasible solution with a finite objective value.

  \begin{lemma}\label{thm: SuffConMSSS}
  	Consider a multi-source, single-\sink network with \Cref{ass: u,ass: PCSep} being fulfilled, \eqref{opt: Master} being \wellposed and admitting strong duality \wrt $\MeasFuncUInt$. 
  	Then, $u$ is implementable via $h^* \in \wir$, if $h^*$ is an optimal solution for \eqref{opt: Master} with tight inequality \eqref{ineq: Master}.  
  \end{lemma}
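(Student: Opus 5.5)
The plan is to follow the proof of \Cref{thm: SuffConMSMS}: first extract dual tolls $\hat{\prices}\in\MeasFuncUInt$ from strong duality, then modify them only on edge--time pairs with $u_\arc(t)=0$. The difference is in the modification. In \Cref{thm: SuffConMSMS} a uniform constant $\tilde C$ was added, which required finite walk support and bounded experienced costs. Here I would add a time- and node-dependent \emph{cost-to-go} instead, using the single \sink $\dest$ and the separability in \Cref{ass: PCSep}.

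\textbf{Step 1 (dual tolls).} By strong duality \wrt $\MeasFuncUInt$, well-posedness and optimality of $h^*$ with $\ell^u(h^*)=u$, there is $\hat{\prices}\in\MeasFuncUInt$ with zero duality gap for $(h^*,\hat{\prices})$. Since $\dup{\hat\prices}{u}<\infty$, \Cref{lem: aggCostsVSwalkCosts} turns this into the walk-based inequality analogous to \eqref{eq: ZeroDGhatwalk2}:
\[
\sum_\wa\dup{\wttime_\wa(u,\cdot)+\Pf^{\hat\prices}_\wa(u,\cdot)}{h_\wa}\ \ge\ \sum_\wa\dup{\wttime_\wa(u,\cdot)+\Pf^{\hat\prices}_\wa(u,\cdot)}{h^*_\wa}<\infty
\]
for all $h\in\edom{\Routes}[u]\cap\wir$. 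I would fix representatives as before: $u$ satisfying \eqref{eq: RepOfU} via \Cref{lem:  g>0Arr'>0}, which applies by \Cref{ass: u: outflow}; and $h^*$ with $h^*_\wa(t)>0\Rightarrow u_{\wa[j]}(\arr_{\wa,j}(u,t))>0$ for all $j$. By exactly the shifting argument of \Cref{thm: SuffConMSMS}, any deviation of positive measure from $\wa$ to a walk $\wa'$ with $\mathfrak D_{\wa>}\subseteq\Dwg[\wa'][u]$ contradicts this inequality. Existence of $\ell^u_{\wa'}$ for the shifted flow comes from \Cref{cor: LuExAltChara}. Hence the Wardrop condition already holds against all such ``$u$-used'' alternatives.

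\textbf{Step 2 (new tolls).} For $v\in\GV$, $i\in I$ and $\theta\in\hori$, let $\pi^i_v(\theta)$ be the infimum of the remaining private-plus-$\hat\prices$ cost, as given by \Cref{ass: PCSep}, over all $v$,$\dest$-walks $\hat\wa$ that are $u$-used when entered at time $\theta$. That is, each edge is entered at a time with $u>0$, and the walk ends at the single \sink $\dest$. Set $\pi^i_v(\theta):=0$ if no such walk exists. For $\arc=(v,v')$ with $u_\arc(t)=0$, set
\[
\prices^*_\arc(t):=\hat\prices_\arc(t)+\max_{i\in I}\pi^i_v(t),
\]
and $\prices^*_\arc:=\hat\prices_\arc$ otherwise. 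Then $\dup{\prices^*}{u}=\dup{\hat\prices}{u}$, so $\prices^*\in\MeasFuncUInt$ and the duality inequality persists.

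\textbf{Step 3 (Wardrop).} Take $h^*_\wa(t)>0$ and an arbitrary $\wa'\in\Routes_i$. If $t\in\Dwg[\wa'][u]$, Step 1 applies. Otherwise let $j$ be the first index where $\wa'$ enters an unused edge, at node $v$ and time $\theta=\arr_{\wa',j}(u,t)$. The cost of $\wa'$ is at least its used prefix $\wa'_{<j}$ plus $\prices^*_{\wa'[j]}(\theta)\ge\pi^i_v(\theta)$. This dominates the cost of the concatenation of $\wa'_{<j}$ with a nearly optimal used $v$,$\dest$-walk, and that concatenation is itself a used $\source_i$,$\dest$-walk. Step 1 gives that concatenations are no cheaper than $\wa$, up to $\varepsilon$, and letting $\varepsilon\to0$ finishes. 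To make this work on sets of positive measure, I would countably partition $\mathfrak D_{\wa>}$ according to which suffix walk attains the infimum within $\varepsilon$.

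Two points need justification. The cost-to-go must be finite and positive only where needed, and a used continuation must actually exist from $(v,\theta)$. I would argue the latter through flow conservation of $u$ and the fact that all commodities end at $\dest$: flow entering $v$ at a used time must leave along used edges and eventually reach $\dest$. This is exactly where the single \sink is essential.

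\textbf{Main obstacle.} I expect the hardest part to be measurability and finiteness of $\pi^i_v$, taken as an infimum over countably many walks of measurable functions, together with the case where the infimum is $+\infty$ or where no used continuation exists. If measurability causes trouble, I would first try replacing the infimum by the cost-to-go realised along walks in the support of $h^*$ (via $h^*_{\wa}(t)$ restricted to suffixes), using \Cref{lem: DifferenceGeneral}-type decompositions. I would then check that the countable selection argument in Step 3 still goes through.
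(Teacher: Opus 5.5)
Your construction is the paper's. Its potentials $\phi^v_i$ are exactly your $\pi^i_v$, and it adds $1+\sum_{i\in I}\phi^v_i$ to $\hat\prices_\arc$ on unused edges; the slack $1$ plays the role of your $\varepsilon\to0$.

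Measurability is not the obstacle you expect. Write the infimum over the countably many measurable functions $\Indi_{\hori\setminus\Dwg[\wa][u]}\cdot\infty+\wttime_\wa(u,\cdot)+\Pf^{\hat\prices}_\wa(u,\cdot)$. This is measurable and finite wherever some used walk exists, so the value $+\infty$ never arises.

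Your Step~3 is sound in structure and slightly more direct than the paper's second shifting argument. Step~1 yields a single null set outside of which $h^*_\wa(t)>0$ implies $\wttime_\wa+\Pf^{\hat\prices}_\wa\le\wttime_{\wa''}+\Pf^{\hat\prices}_{\wa''}$ at $t$ for \emph{all} $\wa''\in\Routes_i$ with $t\in\Dwg[\wa''][u]$. So the comparison can be done pointwise, with no partition of $\mathfrak D_{\wa>}$ needed.

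The genuine gap is the step you only sketch, which is the technical core of the paper's proof (\Cref{claim: ineqphi}, case $j\ge2$). For almost every $t$ with $t\in\Dwg[\wa'_{<j}][u]$, you need a used $v$,$\dest$-walk entered at $\theta=\arr_{\wa',j}(u,t)$. Otherwise $\pi^i_v(\theta)=0$ and the raised toll gives no bound.

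Flow conservation of $u$ is not the right tool. It is only an a.e.\ identity of measures at $v$, and following used edges by conservation gives no reason to reach $\dest$ after finitely many steps. The correct mechanism is $u=\ell^u(h^*)$:
\begin{itemize}
\item Since $u_{\wa'[j-1]}>0$ at $\arr_{\wa',j-1}(u,t)$, \Cref{lem: Relations:h>0u>0:u>0ExistsCountableM} (or \Cref{lem: Relations:h>0u>0:u>0ExistsHw>0Pointwise}) gives walks $\wa^m$ with $\wa^m[j_m]=\wa'[j-1]$ and $h^*_{\wa^m}>0$ that traverse this edge at that time.
\item By your choice of representative of $h^*$, the suffix $\wa^m_{>j_m}$ is used when entered at exactly $\theta$.
\item This suffix ends at $\dest$ because the sink is common. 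This, not conservation, is where the single sink matters: with several sinks it would end at $\dest_{i'}$ and could not be spliced onto $\wa'_{<j}$.
\end{itemize}

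These lemmas only give a carrying walk for almost every \emph{arrival} time at $\wa'[j-1]$. To get one for almost every \emph{departure} time $t$, you need $\arr_{\wa',j-1}(u,\cdot)^{-1}(\mathfrak T)\cap\Dwg[\wa'_{<j}][u]$ to be null for every null set $\mathfrak T$. That is the Lusin $N^{-1}$ property of \Cref{lem:  g>0Arr'>0}; plain Lusin goes the wrong way.

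With these two ingredients your Step~3 goes through. Finally, invoke \Cref{ass: u: NLQuasiUniqueness} to get $\Nl(h^*)=u$, which completes implementability.
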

    Note that \eqref{opt: Master} being \wellposed is implied in \Cref{thm: SuffConMSMS} by the requirement of bounded experienced private costs for any $u$-inducing $h\in \wir$. 
  As an immediate consequence of the above and \Cref{lem: NecessImpl}, we get the following characterization: 
  \begin{theorem}\label{thm: mainMSSS}
  	In a  multi-source, single-\sink network with \Cref{ass: u,ass: PCSep} being fulfilled, \eqref{opt: Master}  being \wellposed and 
  	admitting strong duality \wrt $\MeasFuncUInt$, the following statements are equivalent:
  	\begin{thmparts}
  		\item $u$ is implementable. 
  		\item There exists an optimal solution for \eqref{opt: Master} with tight inequality \eqref{ineq: Master}. \label[thmpart]{thm: mainMSSS: Con}
  	\end{thmparts}
  \end{theorem}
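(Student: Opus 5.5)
The statement is \Cref{thm: mainMSSS}, which combines the necessity lemma with the sufficiency lemma for the single-\sink case. The plan is to prove the two implications separately, each by invoking one of these two earlier results.

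\emph{(a) $\Rightarrow$ (b).} Suppose $u$ is implementable. By definition there exist tolls $\prices^*\in\MeasFuncUInt$ and walk inflow rates $h^*\in\wir$ such that $h^*$ is a $\prices^*$-DUE and $\Nl(h^*)=u$. \Cref{lem: NecessImpl} is stated for multi-source, multi-\sink networks, so it covers the single-\sink case. It yields directly that $h^*$ is an optimal solution for \eqref{opt: Master} with tight inequality \eqref{ineq: Master}, which is statement \ref{thm: mainMSSS: Con}. This direction needs neither \Cref{ass: PCSep}, nor well-posedness, nor strong duality.

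\emph{(b) $\Rightarrow$ (a).} Let $h^*$ be an optimal solution for \eqref{opt: Master} with $\ell^u(h^*)=u$. The hypotheses of the present theorem match those of \Cref{thm: SuffConMSSS} exactly: a single \sink, \Cref{ass: u,ass: PCSep}, well-posedness of \eqref{opt: Master}, and strong duality \wrt $\MeasFuncUInt$. That lemma therefore gives tolls in $\MeasFuncUInt$ via which $h^*$ implements $u$. Hence $u$ is implementable.

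Taken at face value, the theorem has no real obstacle: it is a two-line combination of the two lemmas. All the substance sits in proving \Cref{thm: SuffConMSSS}. Only the statement of that lemma appears before this point, so its proof remains open here, and it is the step I expect to be hardest.

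My plan for \Cref{thm: SuffConMSSS} is as follows.

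First, take dual tolls $\hat\prices$ achieving zero duality gap with $h^*$. Translate this into the walk-wise inequality \eqref{eq: ZeroDGhatwalk2} using \Cref{lem: aggCostsVSwalkCosts}. Well-posedness ensures that $\dup{\hat\prices}{u}$ and the optimal value are finite, which is what makes this translation legitimate.

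Second, raise the tolls on the sets $\{u_\arc=0\}$, as in the proof of \Cref{thm: SuffConMSMS}. A uniform constant $\tilde C$ is not available here, because neither the tolls nor the costs are bounded. Instead I would raise them by a time-dependent amount. Using the single \sink and \Cref{ass: PCSep}, one can define remaining costs to $\dest$ from edge $\arc$ at time $t$. The natural candidate is the essential supremum, over used continuations, of $\psi$ plus $\hat\prices$. Adding this amount guarantees that any walk leaving $\Dwg[\wa'][u]$ is at least as expensive as the walk it would replace.

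Third, I would rerun the shifting argument from the proof of \Cref{thm: SuffConMSMS}, using \Cref{cor: LuExAltChara} to keep the shifted flow in $\edom{\Routes}[u]$.

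The delicate point in this plan is showing that the raised tolls still lie in $\MeasFuncUInt$ and still preserve the duality gap. They only change $\prices$ where $u_\arc=0$, so $\dup{\prices^*}{u}=\dup{\hat\prices}{u}$, and measurability of the defined supremum would need care.
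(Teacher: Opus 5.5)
Your proof of the theorem is correct and is exactly the paper's argument. The paper states the theorem as an immediate consequence of \Cref{lem: NecessImpl} for (a)$\Rightarrow$(b), which needs none of the extra hypotheses, and of \Cref{thm: SuffConMSSS} for (b)$\Rightarrow$(a), whose hypotheses coincide with those of the theorem.

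Your plan for \Cref{thm: SuffConMSSS}, however, would not work as written.

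\emph{The supremum can be infinite.} You propose raising the toll on an unused edge $\arc=(v,v')$ by a supremum of $\wttime+\Pf^{\hat\prices}$ over $v$,$\dest$-continuations whose edges all carry $u$-flow. The hypotheses allow $u$ to carry flow around a cycle of zero traversal time with positive private cost (cf.\ \Cref{exa: CounterZeroCyclesNecessary}). Suppose some edge leaving a node of such a cycle is unused while the cycle is used. Then the continuations from that node include walks that loop the cycle arbitrarily often at the same instant. Your raised toll there is $+\infty$, which is not an element of $\MeasFuncUInt$.

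\emph{What the paper uses instead.} It takes the \emph{infimum}: $\phi_i^v(t)$ is the cheapest cost under $\hat\prices$ for commodity $i$ of such a fully used continuation. On $\{u_\arc=0\}$ the toll is raised by $1+\sum_{i\in I}\phi_i^v(t)$. The sum over $i$ handles the commodity-dependent $\psi^i$, and the $1$ handles non-attainment of the infimum.

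\emph{The missing step.} The infimum suffices only because of a step your plan skips. Your assertion that a deviating walk becomes ``at least as expensive as the walk it would replace'' silently relies on it, even with a supremum. One must first show that flow-carrying walks attain $\phi_i^{\source_i}$ (\Cref{claim: PhiEqualsHatWalks}). This is done by shifting flow of $h^*$ onto a cheaper fully used walk, which is feasible by \Cref{cor: LuExAltChara}, and contradicting \eqref{eq: ZeroDGhatwalk}.

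With that in hand, take a walk whose first unused edge is $\wa'[j]$. It costs at least as much as $(\wa'_{<j},\bar\wa)$ for a suitable fully used continuation $\bar\wa$, and hence at least $\phi_i^{\source_i}$. For $j>1$, such a $\bar\wa$ exists at almost all relevant times by applying \Cref{lem: Relations:h>0u>0:u>0ExistsCountableM} to the used edge $\wa'[j-1]$. Only after this does the final shifting argument go through as in \Cref{thm: SuffConMSMS}.
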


  The proof of \Cref{thm: SuffConMSSS} works similar to the proof of \Cref{thm: SuffConMSMS}. The main difference is the way we need to adjust the tolls for unused edges. Without $h^*$ fulfilling \Cref{thm: SuffConMSMS: FiniteWalkSupp,thm: SuffConMSMS: BoundedExpTravel}, and, without the optimal dual solution $\prices^* \in \MeasFuncUInt$ being necessarily uniformly bounded (i.e.~$\prices^*\notin L_+^\infty(\hori)^\GA$), the  total costs under $\prices^*$ among all particles in $h^*$ are not necessary  upper bounded anymore. 
  In order to adjust the tolls suitably, we instead define 
  a node potential for each node $v \in \GV$ and commodity~$i$ which describe the minimal costs 
  among all \stwalk[v][\dest] walks only containing used edges under $u$. Adding these node potentials to the tolls on unused edges that start at the corresponding node  yields suitable tolls.

  \begin{proof}[\textbf{Proof of \Cref{thm: SuffConMSSS}}]   
  	Let us consider an optimal $h^* \in \edom{\Routes}[u]\cap \wir$ for~\eqref{opt: Master}
  	with $\ell^u(h^*) = u$.   
  	For the remainder of the proof,  choose a representative of $u$ such that 
  	\begin{align*}
  		\arr_{\wa,j}(u,\cdot)'(t) \text{ exists and }  \arr_{\wa,j}(u,\cdot)'(t) >0 \text{ for all }t \in \Dwg[\wa][u] \text{ and all } \wa \in \Routes
  	\end{align*}
  	where 
  	\begin{align*}
  		\Dwg[\wa][u] = \Set{t \in \hori \mid u_{{{\wa}}[j]}(\arr_{{{\wa}},j}(u,t)) >0 , j\leq \abs{{{\wa}}}  }. 
  	\end{align*}
  	This is possible by \Cref{lem:  g>0Arr'>0}, the countability of $\Routes$ and finiteness of all walks in $\Routes$. 
  	Note that \Cref{lem:  g>0Arr'>0} is applicable as  \Cref{ass: u: outflow} implies that $u$ admits an edge outflow rate for every edge. This is true as for every $\arc \in \GA$ that does not enter a \sink[,] $u_\arc$ admits an edge outflow rate by flow conservation, cf.~\Cref{lem: outflow: FlowCon}. 
  	Additionally, choose a representatives of $h^*$  
 that fulfills 
 \[h^*_\wa(t) > 0 \implies {u}_{\wa,j}(\arr_{\wa,j}(u,t)) > 0\]
 for all $ t\in \hori, \wa \in \Routes, j \leq |\wa|$ (which exists by \Cref{lem: Relations:h>0u>0:Pointwise}).

  	By assumption, we have strong duality for~\eqref{opt: Master} \wrt $\MeasFuncUInt$. Hence, there exists $\hat{\prices}\in \MeasFuncUInt$ fulfilling 
  	\begin{align}\label{eq: ZeroDGhat}
  		\inf_{{h} \in \edom{\Routes}[u]\cap \wir}\sum_{\wa \in \Routes} \dup{\wttime_\wa(u,\cdot)}{{h}_\wa} + \dup{\hat{\prices}}{ \ell^u(h) -u} =  \sum_{\wa \in \Routes} \dup{\wttime_\wa(u,\cdot)}{h^*_\wa} .  
  	\end{align}  
  	By adding to both sides the constant term $\dup{\hat{\prices}}{u}\in \R$ and using  \Cref{lem: aggCostsVSwalkCosts}, this is equivalent to 
  	\begin{align}\label{eq: ZeroDGhatwalk}
  		\sum_{\wa \in \Routes} \dup{\wttime_\wa(u,\cdot) + \Pf^{\hat{\prices}}_\wa(u,\cdot)}{ {h}_\wa} \geq \sum_{\wa \in \Routes} \dup{\wttime_\wa(u,\cdot) + \Pf^{\hat{\prices}}_\wa(u,\cdot)}{h^*_\wa} && \forall {h} \in \edom{\Routes}[u]\cap \wir . 
  	\end{align}  
  	Note that the right hand side of the above equation is smaller than infinity by the assumption that \eqref{opt: Master} is \wellposed[.]  
  	
  	Using \eqref{eq: ZeroDGhatwalk}, we will show  several properties of the functions $\phi_i^v: \hori \to \Rnn, i \in I,v \in \GV$ where, for any point in time $t$, $\phi_i^v(t)$ denotes the minimal total cost for a particle of commodity $i$  along any \stwalk[v][\dest] consisting only of used edges under~$u$, i.e.
  	\begin{align*}
  		\phi_i^v(t) \coloneqq \begin{cases} 
  			\inf\Big\{\Indi_{\hori \setminus\Dwg[\wa][u]}(t) \cdot \infty + \big( \wttime_{{{\wa}}}(u,t) + \Pf^{\hat{\prices}}_{{{\wa}}}(u,t)\big) \,\Big|\,  {{\wa}} \in {\Routes}_{v,\dest}^i\Big\}, &\text{if } t \in \bigcup_{{{\wa}} \in {\Routes}_{v,\dest}}\Dwg[\wa][u] \\
  			0, &\text{else.}
  		\end{cases}
  	\end{align*}
  	Here,  ${\Routes}_{v,\dest}^i:= \hat{\Routes}_{v,\dest}\times \{i\}$ denotes the set of all   commodity $i$-typed (finite) \stwalk[v][\dest]s. 
  	Remark that we allow for the trivial walk ${{\wa}}=((),i)$ to be contained in ${\Routes}_{v,\dest}^i$ in case of $v = \dest$ 
  	with $\Dwg[((),i)][u]:= \hori$. Also note that $\Dwg[\wa][u]$ for any walk ${{\wa}}$ is measurable by $u$ being measurable and, subsequently, the countable union $\bigcup\{\Dwg[\wa][u] \mid {{\wa}} \in {\Routes}_{v,\dest}^i\}$ is measurable as well.  
This implies that $\phi_i^v$ is the infimum of countably many measurable functions  which are all bounded by $0$ from below and take values in the  extended reals $\bar{\R}$. Here, we mean measurable with $\Bar{\R}$ being equipped with  the standard Borel $\sigma$-algebra $\mathcal{B}(\bar{\R})$ (cf.~\cite[Remark 2.1.6]{Bogachev2007I}).
Since $\phi_i^v$ is  real-valued (due to the definition), it follows by~\cite[Theorem 2.1.5 and Remark 2.1.6]{Bogachev2007I}
that $\phi_i^v$ is a measurable function from $\hori$ to $\R$.

  	We will now show that the total cost for commodity $i$ along some \stwalk[v][\dest] is equal to $\phi_i^v$ whenever this walk is a suffix of an \stwalk[\source_i][\dest] that is used under~$(h^*_\wa)_{\wa \in \Routes_i}$:
  	\begin{claim}\label{claim: PhiEqualsHatWalks}
  		For all $v \in \GV$, $\wa \in \Routes_i$, $i\in I$ and $j\leq |\wa|$ with $\wa[j]\in\delta^+(v)$ we have
  		\begin{align*}
  			\phi_i^v= \wttime_{{\wa}_{\geq j}}(u,\cdot) + \Pf^{\hat{\prices}}_{{\wa}_{\geq j}}(u,\cdot) \quad \quad \quad  \text{ a.e.\ on }   \hat{\mathfrak T}_{\wa,j}:= \arr_{\wa,j}(u,\cdot)(\{t \in \hori\mid h^*_\wa(t)>0\}).
  			\end{align*}   
  		In particular, we have for almost all $t \in \hori$, all $i \in I$ and all $\wa \in \Routes_i$ the implication  
  		\begin{align*}
  			h_\wa^*(t)>0 \implies\wttime_{{\wa}}(u,t) + \Pf^{\hat{\prices}}_{{\wa}}(u,t) = \phi_i^{\source_i}(t).
  		\end{align*}
  	\end{claim}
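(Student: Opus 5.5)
Since $\phi_i^v$ is an infimum over all admissible $v,\dest$-walks, the inequality ``$\leq$'' on $\hat{\mathfrak T}_{\wa,j}$ should be immediate once we know that the suffix $\wa_{\geq j}$ is itself admissible there. By the choice of representatives, $h^*_\wa(t)>0$ implies $u_{\wa[k]}(\arr_{\wa,k}(u,t))>0$ for all $k$. Hence for $t'=\arr_{\wa,j}(u,t)$ with $h^*_\wa(t)>0$ we get $t'\in\Dwg[\wa_{\geq j}][u]$, using $\arr_{\wa,k}=\arr_{\wa_{\geq j},k-j+1}\circ\arr_{\wa,j}$. Thus $\phi_i^v(t')\leq \wttime_{\wa_{\geq j}}(u,t')+\Pf^{\hat\prices}_{\wa_{\geq j}}(u,t')$ holds pointwise on $\hat{\mathfrak T}_{\wa,j}$. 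Here \Cref{ass: PCSep} makes the costs of $\wa_{\geq j}$ well defined.

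For ``$\geq$'' I would argue by contradiction via a rerouting argument against \eqref{eq: ZeroDGhatwalk}. Suppose the inequality is strict on a subset of $\hat{\mathfrak T}_{\wa,j}$ of positive measure. By countability of walks there is then a fixed $v,\dest$-walk $\wa'$ and a set $\mathfrak S\subseteq\hat{\mathfrak T}_{\wa,j}\cap\Dwg[\wa'][u]$ of positive measure on which the cost of $\wa'$ is strictly smaller than that of $\wa_{\geq j}$. Pull back: $\mathfrak D:=\arr_{\wa,j}(u,\cdot)^{-1}(\mathfrak S)\cap\{h^*_\wa>0\}$. This set has positive $h^*_\wa$-mass because $\ell^u_{\wa,j}(h^*_\wa)$ is positive a.e.\ on $\hat{\mathfrak T}_{\wa,j}$. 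Define $h$ by moving $h^*_\wa\Indi_{\mathfrak D}$ from $\wa$ to the concatenated walk $(\wa_{<j},\wa')\in\Routes_i$, which is an $\source_i,\dest$-walk because there is a single \sink. Then $h\in\wir$. Separability (\Cref{ass: PCSep}) gives that the total cost of the new walk at $t\in\mathfrak D$ equals the prefix cost plus the cost of $\wa'$ at $\arr_{\wa,j}(u,t)$, which is strictly smaller. Finiteness of the right side of \eqref{eq: ZeroDGhatwalk} (by well-posedness) then yields a strictly smaller objective, contradicting \eqref{eq: ZeroDGhatwalk}.

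The main obstacle is to show $h\in\edom{\Routes}[u]$, i.e.\ that $\ell^u$ of the shifted flow on $(\wa_{<j},\wa')$ exists. I would use \Cref{cor: LuExAltChara} with the chosen representative of $u$ satisfying \eqref{eq: lem:  g>0Arr'>0} everywhere on $\Dwg$. The shifted flow is supported in $\mathfrak D$. On $\mathfrak D$, departures on the prefix lie in $\Dwg[\wa_{<j}][u]$ and the continuation lands in $\Dwg[\wa'][u]$, so $\mathfrak D\subseteq\Dwg[(\wa_{<j},\wa')][u]$. Consequently, for a null set $\mathfrak T^0$ disjoint from $\arr(\Dwg)$, its preimage misses $\mathfrak D$, exactly as in the subclaim of \Cref{thm: SuffConMSMS}. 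Summability of the loadings then follows since only two entries change, as there. If $j=1$, then $v=\source_i$ and the shift is simply from $\wa$ to $\wa'$.

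The ``in particular'' part is the case $j=1$. Here $\arr_{\wa,1}=\id$, so $\hat{\mathfrak T}_{\wa,1}=\{h^*_\wa>0\}$ and $\wa_{\geq1}=\wa$, which gives the stated implication a.e.
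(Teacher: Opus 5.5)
Your proposal is correct and follows the paper's proof essentially step for step: the pointwise ``$\leq$'' comes from the chosen representatives, and the reverse inequality comes from rerouting $h^*_\wa\Indi_{\mathfrak D}$ onto $(\wa_{<j},\wa')$, establishing existence of the loading via \Cref{cor: LuExAltChara}, and contradicting \eqref{eq: ZeroDGhatwalk}. The only difference is how $\leb(\mathfrak D)>0$ is obtained: the paper uses directly that $\arr_{\wa,j}(u,\cdot)(\mathfrak D)=\mathfrak S$ together with Lusin's property of $\arr_{\wa,j}(u,\cdot)$, and this same Lusin argument is what you need to justify your unproved (but true) assertion that $\ell^u_{\wa,j}(h^*_\wa)>0$ a.e.\ on $\hat{\mathfrak T}_{\wa,j}$.
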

  	
  	\begin{proofClaim} 
  		Let $v \in \GV$ and $i \in I$ be arbitrary. 
  		The choice of our representatives of $h^*$ and $u$ ensures that for any $\wa=(\hat{\wa},i) \in \Routes_i$ and $j\leq|\wa|$ with $\wa[j] \in \delta^+(v)$ we get $\hat{\mathfrak T}_{\wa,j} \subseteq \Dwg[\wa_{\geq j}][u]$. 
  		In particular, since $\wa_{\geq j} \in {\Routes}_{v,\dest}^i$, we know by definition that for all $t\in \hat{\mathfrak T}_{\wa,j}$:
  		\begin{align*}
  			\phi_i^v(t) =  \inf\Big\{\Indi_{\hori \setminus\mathfrak D_{{{\wa}'}}}(t) \cdot \infty + \big( \wttime_{{{\wa}'}}(u,t) + \Pf^{\hat{\prices}}_{{{\wa}'}}(u,t)\big) \,\Big|\,  {{\wa}'} \in {\Routes}_{v,\dest}^i\Big\} \leq  \wttime_{{{\wa}_{\geq j}}}(u,t) + \Pf^{\hat{\prices}}_{{{\wa_{\geq j}}}}(u,t).
  		\end{align*}

  		Hence, in order to prove the \namecref{claim: PhiEqualsHatWalks}, it is sufficient to derive a contradiction from the following assumption: 
  		Assume  that   there  exists 
  		a $v,\dest$-walk $\wa'=(\hat{\wa}',i)$ for which there exists
  		$\mathfrak T \subseteq  \hat{\mathfrak T}_{\wa,j} \cap \Dwg[\wa'][u] $ with positive measure $\leb(\mathfrak T)> 0$ such that
  		\begin{align}\label{eq: DefViolatingSet}
  			\wttime_{{{\wa}'}}(u,t) + \Pf^{\hat{\prices}}_{{{\wa}'}}(u,t) < \wttime_{{\wa}_{\geq j}}(u,t) + \Pf^{\hat{\prices}}_{{\wa}_{\geq j}}(u,t) \quad \quad  \text{ for all } t \in \mathfrak T. 
  		\end{align}

  		Let us define $\wa^*:=(\hat\wa_{< j},\hat\wa',i) \in \Routes_i$ which equals the walk $\wa$ until it reaches the node $v$ and afterwards coincides with the walk $\wa'$. By the prior inequality \eqref{eq: DefViolatingSet}, we   have  
  		\begin{align}\label{eq: PropD}
  			\wttime_{{\wa^*}}(u,t) + \Pf^{\hat{\prices}}_{{\wa}^*}(u,t)<\wttime_{{\wa}}(u,t) + \Pf^{\hat{\prices}}_{\wa}(u,t) \text{ for all } t\in\arr_{\wa,j}(u,\cdot)^{-1}(\mathfrak T) . 
  		\end{align}
  		In particular, the above holds for all $t \in  \mathfrak D :=\arr_{\wa,j}(u,\cdot)^{-1}(\mathfrak T) \cap \{t \in \hori\mid h^*_\wa(t)>0\}$.  This set is not a null set ($\leb(\mathfrak D)>0$)  
  		as $\arr_{\wa,j}(u,\cdot)$ has Lusin's property (cf.~\Cref{lem: PropAbsCon:Lus}), $\leb(\mathfrak T)>0$ and 
  		$\arr_{\wa,j}(u,\cdot)(\mathfrak D)=\mathfrak T$. For the last equality, 
  		note that by definition, $\mathfrak T \subseteq  \hat{\mathfrak T}_{\wa,j} \cap \Dwg[\wa'][u]$ and $ \hat{\mathfrak T}_{\wa,j}  = \arr_{\wa,j}(u,\cdot)(\{t \in \hori\mid h^*_\wa(t)>0\}$ holds.

  		 Let us define new walk inflow rates ${h}$ by shifting all inflow into~$\wa$ during~$\mathfrak D$ to~$\wa^*$:
  		\[{h}_{\tilde{\wa}} \coloneqq \begin{cases}
  			h^*_{\wa^*} + h^*_{\wa}\cdot \Indi_{\mathfrak D}, &\text{if } \tilde{\wa} = \wa^*, \\
  			h^*_{\wa} - h^*_{\wa}\cdot \Indi_{{\mathfrak D}}, &\text{if } \tilde{\wa} = \wa, \\
  			h^*_{\tilde{\wa}},                                 &\text{else.}
  		\end{cases}\]
  		We argue in the following that ${h}$ is contained in $\edom{\Routes}[u]\cap \wir$:
  		By $\mathfrak D$ being measurable, it is clear that $\Tilde{h}$ is likewise. 
  		It is, then, also clear that ${h} \in  \wir$.

  		Next, we show the following subclaim:
  		\begin{subclaim}
  			$\ell^u_{\tilde{\wa}}({h}_{\tilde{\wa}})$ exists for all ${\tilde{\wa}}\in \Routes$.   
  		\end{subclaim}
  		\begin{proofClaim} 
  			The existence of  $\ell^u_{\tilde{\wa}}({h}_{\tilde{\wa}})$ for $\tilde{\wa}\neq \wa^*$ follows immediately by the existence of $\ell^u_{\tilde{\wa}}(h^*_{\tilde\wa})$. Remark that $\ell^u_{\tilde{\wa}}(h^*_{\tilde{\wa}}),\tilde{\wa} \in \Routes$ exist  by $\ell^u(h^*)$ existing, cf.~\Cref{lem: elluExistenceProperties:ExistenceInducedFlow}. 
  			For the existence for $\wa^*$, we verify in the following the sufficient condition of the ``tighter'' characterization of existence in \Cref{cor: LuExAltChara}: 
  			Consider an arbitrary $\tilde{j} \leq \abs{\wa^*}$ and null set $\mathfrak T' \subseteq \hori \setminus \arr_{\wa^*,\tilde{j}}(u,\cdot)(\Dwg[\wa^*][u])$.   
  			
  			Since we have $h^*_{\wa^*} = 0$ on $\arr_{\wa^*,\tilde{j}}(u,\cdot)^{-1}(\mathfrak T')$, by the existence of $\ell^u_{\wa^*}(h^*_{\wa^*})$, and $h_{\wa^*} = h^*_{\wa^*}$ outside of~$\mathfrak D$, it suffices to show that $\arr_{\wa^*,\tilde{j}}(u,\cdot)^{-1}(\mathfrak T')\cap \mathfrak D = \emptyset$. To show this, it is in turn sufficient to show 
            that   $\mathfrak D \subseteq \Dwg[\wa^*][u]$ holds as  $\mathfrak T' \subseteq \hori \setminus \arr_{\wa^*,\tilde{j}}(u,\cdot)(\Dwg[\wa^*][u])$: 
            By the choice of the representatives of $h^*$ and $u$
            together with $\mathfrak D \subseteq \{t \in \hori\mid h^*_\wa(t)>0\}$, we get that $\mathfrak D \subseteq \Dwg[\wa][u]$ and, in particular, $\mathfrak D \subseteq \Dwg[\wa^*_{<j}][u]$. 
            For the second  part of the walk~$\wa^*$ (i.e.~$\wa'$), 
            we observe that the inclusions $\mathfrak D \subseteq \arr_{\wa,j}(u,\cdot)^{-1}(\mathfrak T)$ and $\mathfrak T \subseteq \Dwg[\wa'][u]$ ensure that we have $ \arr_{\wa,j}(u,\cdot)(\mathfrak D) \subseteq \Dwg[\wa'][u]$. Together, this shows that  $\mathfrak D \subseteq \Dwg[\wa^*][u]$. 
  		\end{proofClaim}
  		
  		In order to derive from this that ${h} \in \edom{\Routes}[u]$, it remains by \Cref{lem: elluExistenceProperties:ExistenceInducedFlow} to observe   that the sum $\sum_{{\tilde{\wa}} \in \Routes}\ell^u_{\tilde{\wa}}({h}_{\tilde{\wa}})$ exists as well: This is due to the latter sum for $h^*$ existing as well as $(\ell^u_{\tilde{\wa}}({h}_{\tilde{\wa}}))_{\tilde{\wa} \in \Routes}$ and $(\ell^u_{\tilde{\wa}}(h^*_{\tilde{\wa}}))_{\tilde{\wa} \in \Routes}$ only differing in the two entries corresponding to $\wa,\wa^*$. 
  		
  		With this at hand, we get 
   		 \begin{align*}
 	\sum_{{\tilde{\wa}} \in \Routes} &\dup{\wttime_{\tilde{\wa}}(u,\cdot) + \Pf^{\hat{\prices}}_{\tilde{\wa}}(u,\cdot)}{h^*_{\tilde{\wa}}}  
 	=\sum_{\tilde{\wa} \in \Routes\setminus \{\wa^*,\wa\} } \dup{\wttime_{\tilde{\wa}}(u,\cdot) +
 		\Pf^{\hat{\prices}}_{\tilde{\wa}}(u,\cdot)}{h_{\tilde{\wa}}} \\ 
 	+&\,\dup{\wttime_{{\wa}}(u,\cdot) + \Pf^{\hat{\prices}}_{{\wa}}(u,\cdot)}{h_{{\wa}}+h^*_{\wa}\cdot \Indi_{\mathfrak{ D}}}  
 	+\dup{\wttime_{{\wa^*}}(u,\cdot) +\Pf^{\hat{\prices}}_{{\wa^*}}(u,\cdot)}{h_{{\wa^*}}-h^*_{\wa}\cdot \Indi_{\mathfrak{ D}}}\\
 	\symoverset{1}{=}&\sum_{\tilde{\wa} \in \Routes} \dup{\wttime_{\tilde{\wa}}(u,\cdot) +
 		\Pf^{\hat{\prices}}_{\tilde{\wa}}(u,\cdot)}{h_{\tilde{\wa}}} + \dup{\wttime_{\wa}(u,\cdot) + \Pf^{\hat{\prices}}_{{\wa}}(u,\cdot) - \wttime_{\wa^*}(u,\cdot) - \Pf^{\hat{\prices}}_{\wa^*}(u,\cdot)  }{h^*_{\wa }\cdot \Indi_{\mathfrak D_{\wa>}}}\\
 	\symoverset{2}{>}&\sum_{\tilde{\wa} \in \Routes } \dup{\wttime_{\tilde{\wa}}(u,\cdot) +
 		\Pf^{\hat{\prices}}_{\tilde{\wa}}(u,\cdot)}{h_{\tilde{\wa}}} .
 \end{align*} 		
  		Here, we used for \refsym{1} that for $\tilde{\wa} \in \{\wa^*,\wa\}$ we have
  		\begin{align*}
  			 \dup{\wttime_{\tilde{\wa}}(u,\cdot) + \Pf^{\hat{\prices}}_{\tilde{\wa}}(u,\cdot)}{h^*_{\wa}\cdot \Indi_{\mathfrak{ D}}} \overset{\eqref{eq: DefViolatingSet}}&{\leq}\dup{\wttime_{\wa}(u,\cdot) + \Pf^{\hat{\prices}}_{{\wa}}(u,\cdot)}{h^*_{\wa}\cdot \Indi_{\mathfrak{ D}}}\\ &\leq 	\sum_{{\tilde{\wa}} \in \Routes} \dup{\wttime_{\tilde{\wa}}(u,\cdot) + \Pf^{\hat{\prices}}_{\tilde{\wa}}(u,\cdot)}{h^*_{\tilde{\wa}}}  < \infty.
  		\end{align*}
  		For \refsym{2}, we used \eqref{eq: PropD} with $\mathfrak D \subseteq \arr_{\wa,j}(u,\cdot)^{-1}(\mathfrak T)$,  $h_\wa^*(t)>0$ for a.e.~$t \in \mathfrak D$ by  the definition of $\mathfrak D$,  $\leb(\mathfrak D)>0$ and the finiteness of $\sum_{{\tilde{\wa}} \in \Routes} \dup{\wttime_{\tilde{\wa}}(u,\cdot) + \Pf^{\hat{\prices}}_{\tilde{\wa}}(u,\cdot)}{h^*_{\tilde{\wa}}}$ (cf.~above).
  		 This now 
  		contradicts~\eqref{eq: ZeroDGhatwalk} and, hence, finishes the proof  the claim. 
  	\end{proofClaim}

  	With this insight, we define in the following  tolls ${\prices^*}$ by changing $\hat{\prices}$: 
  	For any $\arc  \in \GA$ we denote for the remainder of the proof by
  	\begin{align}\label{eq: DefTimesU=0}
  		\mathfrak T_\arc \coloneqq \{t \in \hori\mid u_\arc(t) = 0\}
  	\end{align}
  	the set of times at which $u$ does not send flow into $\arc$. Note that these sets are measurable by $u$ being measurable. With respect to these sets, we then define for any $\arc =(v,v') \in \GA$
  	\begin{align}\label{eq: Defp^*}
  		\prices^*_\arc(t) \coloneqq \begin{cases}
  			1+ \sum_{i\in I} \phi_i^v(t) + \hat{\prices}_\arc(t), &\text{if } t \in \mathfrak T_\arc\\
  			\hat{\prices}_\arc(t),  &\text{else.}
  		\end{cases}
  	\end{align}
  	It is clear that ${\prices^*}\in \MeasFuncUInt$ and $\prices^*$ still fulfills the equality~\eqref{eq: ZeroDGhat} and, subsequently,~\eqref{eq: ZeroDGhatwalk} since we only increased $\hat{\prices}_\arc(t)$ if ${u}_\arc(t)= 0$. 
  	Note that $\prices^*$ is measurable by $\phi_i^v,i \in I,v\in \GV$ and $\mathfrak T_\arc$ being measurable. 
  	
  	We will now show that under the adjusted tolls, the total cost for commodity $i$ for any walk $\wa'$ starting at $\source_i$ 
  	and containing at least one unused edge under $u$ is at least~$\phi^{\source_i}_i$, that is, the costs along $\wa'$ are at least as high as the total cost for commodity $i$ of any used walk under~$(h^*_\wa)_{\wa \in \Routes_i}$ (cf.\ \Cref{claim: PhiEqualsHatWalks}). 
  	Hence, when it comes to verifying that $h^*$ is a $\prices^*$-DUE, we won't have to consider these walks $\wa'$. 
  	
  	\begin{claim} \label{claim: ineqphi}
  		For almost all $t \in \hori$, all $i \in I$, all walks $\wa'\in\Routes_i$ and all $j\leq |\wa'|$, we have the implication 
  		$t \in \Dwg[\wa'][u,j] \implies   \wttime_{\wa'}(u,t) + \Pf^{{\prices^*}}_{\wa'}(u,t) \geq \phi^{s_i}_i(t)$ where
  		\begin{align*}
  			\Dwg[\wa'][u,j] := \{t \in \hori \mid u_{\wa'[j']}(\arr_{\wa,j'}(u,t)) >0 , j'<j,  u_{\wa'[j]}(\arr_{\wa',j'}(u,t)) =0  \}.
  		\end{align*}
  	\end{claim}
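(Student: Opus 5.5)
The claim says: if a particle following $\wa'$ at time $t$ hits its first edge unused under $u$ at position $j$, then its total cost is at least $\phi^{\source_i}_i(t)$. The idea is to drop everything after the first unused edge. Let $v$ be the tail of $\wa'[j]$. We split
\[\wttime_{\wa'}(u,t)+\Pf^{\prices^*}_{\wa'}(u,t)\geq \wttime_{\wa'_{<j}}(u,t)+\Pf^{\hat\prices}_{\wa'_{<j}}(u,t)+\prices^*_{\wa'[j]}(\arr_{\wa',j}(u,t)).\]
This uses \Cref{ass: PCSep}, which makes costs separable along the walk, together with nonnegativity of $\psi^i$ and of the tolls, and $\prices^*\geq\hat\prices$. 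Since $t\in\Dwg[\wa'][u,j]$, we have $u_{\wa'[j]}(\arr_{\wa',j}(u,t))=0$, i.e.\ $\arr_{\wa',j}(u,t)\in\mathfrak T_{\wa'[j]}$. By \eqref{eq: Defp^*} the toll on $\wa'[j]$ is therefore at least $1+\phi_i^v(\arr_{\wa',j}(u,t))$.

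It then suffices to show, for almost all such $t$, the prefix inequality
\[\wttime_{\wa'_{<j}}(u,t)+\Pf^{\hat\prices}_{\wa'_{<j}}(u,t)+\phi_i^v(\arr_{\wa',j}(u,t))\geq \phi_i^{\source_i}(t).\]
\begin{itemize}
\item If $\arr_{\wa',j}(u,t)$ lies outside $\bigcup_{\wa''\in\Routes^i_{v,\dest}}\Dwg[\wa''][u]$, then $\phi_i^v=0$ there. Here I would fall back on the additive constant $1$ and the structure of $\phi$. More likely this case has to be handled with the convention $\phi=0$, noting that $\phi^{\source_i}_i$ itself is defined through $\Dwg$-walks.
\item Otherwise, for every $\eps>0$ pick a $v,\dest$-walk $\wa''$ with $\arr_{\wa',j}(u,t)\in\Dwg[\wa''][u]$ whose cost is within $\eps$ of $\phi_i^v(\arr_{\wa',j}(u,t))$. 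Concatenate to $\wa^\circ:=(\wa'_{<j},\wa'')\in\Routes_i$. Since $t\in\Dwg[\wa'][u,j]$ gives positivity of $u$ on the prefix edges, we get $t\in\Dwg[\wa^\circ][u]$. The definition of $\phi^{\source_i}_i$ then bounds $\phi^{\source_i}_i(t)$ by the cost of $\wa^\circ$, which by separability is the left-hand side plus $\eps$.
\end{itemize}

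The expected obstacle is the degenerate case $v=\source_i$ and $j=1$, and more generally the consistency of the ``else $0$'' branch of $\phi$. If $\phi^{\source_i}_i(t)=0$ because $t$ lies in no $\Dwg$ of an $\source_i,\dest$-walk, the inequality is trivial. So the bad case is $t$ lying in some $\Dwg$ for $\source_i$ while $\arr_{\wa',j}(u,t)$ lies in none for $v$. I would try to rule this out up to a null set, or accept it as harmless where it only affects times at which no particle of $h^*$ departs. The latter is what matters for the later $\prices^*$-DUE argument: by \Cref{claim: PhiEqualsHatWalks}, used walks cost exactly $\phi^{\source_i}_i$. If this gap persists, I would restrict the claim to $t$ with $h^*$-used walks available and use measurability of $\phi$ and the countability of $\Routes$ to discard null sets. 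The ``almost all'' qualifier is only needed to collect the countably many exceptional sets from the previous claim.
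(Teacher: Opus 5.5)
Your reduction is sound. The first displayed inequality is correct, as is the bound $\prices^*_{\wa'[j]}(\arr_{\wa',j}(u,t))\geq 1+\phi_i^v(\arr_{\wa',j}(u,t))$ from \eqref{eq: Defp^*}. The concatenation argument also works whenever $\arr_{\wa',j}(u,t)\in\bigcup_{\wa''\in\Routes^i_{v,\dest}}\Dwg[\wa''][u]$; taking the infimum over $\wa''$ pointwise, you do not even need $\eps$. The case $j=1$ is no obstacle: the prefix is empty and $\arr_{\wa',1}(u,\cdot)=\id$, so your prefix inequality is just $\phi_i^{\source_i}(t)\geq\phi_i^{\source_i}(t)$.

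The genuine gap is the case $j\geq 2$ in which $\arr_{\wa',j}(u,t)$ lies in no $\Dwg[\wa''][u]$ of a $v,\dest$-walk. This forces $v\neq\dest$, since the trivial walk has $\hori$ as its $\mathfrak D$-set. There $\phi_i^v=0$, and the toll on $\wa'[j]$ only yields $1+\hat\prices_{\wa'[j]}$. Nothing then controls $\phi_i^{\source_i}(t)$, which can be arbitrarily large, so the additive constant $1$ cannot compensate. Your other fallbacks fail as well. In \Cref{claim: hateqcon} the present claim is applied at times where $h^*$ sends flow into some \emph{other} walk $\wa$, which says nothing about the arrival time at $v$ along $\wa'$. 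The ``almost all'' in the statement is not bookkeeping for \Cref{claim: PhiEqualsHatWalks}; it is what has to absorb this bad set, so you must prove that this set is null.

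The missing idea is to use that $\wa'[j-1]$ carries flow. Since $u=\ell^u(h^*)$, \Cref{lem: Relations:h>0u>0:u>0ExistsHw>0Pointwise} gives the following for almost every $\tau$ with $u_{\wa'[j-1]}(\tau)>0$: there are $\wa\in\Routes$, $k\leq\abs{\wa}$ with $\wa[k]=\wa'[j-1]$, and $\tilde t$ with $\arr_{\wa,k}(u,\tilde t)=\tau$ and $h^*_\wa(\tilde t)>0$. With the representative of $h^*$ fixed at the start of the proof of \Cref{thm: SuffConMSSS}, all edges of $\wa$ are used at their arrival times. Hence $\exit_{\wa'[j-1]}(u,\tau)=\arr_{\wa,k+1}(u,\tilde t)\in\Dwg[\wa_{>k}][u]$, and $(\hat\wa_{>k},i)\in\Routes^i_{v,\dest}$. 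So the set $\mathfrak B$ of all $\tau$ with $u_{\wa'[j-1]}(\tau)>0$ whose exit time lies in no $\Dwg[\wa''][u]$ is a null set.

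Since $\Dwg[\wa'][u,j]\subseteq\Dwg[\wa'_{<j}][u]$, apply the Lusin $N^{-1}$ part of \Cref{lem:  g>0Arr'>0} to the walk $\wa'_{<j}$ and index $j-1$. This shows that $\Dwg[\wa'][u,j]\cap\arr_{\wa',j-1}(u,\cdot)^{-1}(\mathfrak B)$ is null. Outside this set your concatenation argument applies, and a countable union over $i,\wa',j$ finishes the proof.

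This is the same mechanism the paper uses, via \Cref{lem: Relations:h>0u>0:u>0ExistsCountableM}, the arrival-time set $\mathfrak T^*$ along a used walk $\wa^m$, and \Cref{lem:  g>0Arr'>0}. The paper packages it as a contradiction on a positive-measure set, with a near-optimal walk $\bar\wa$ chosen uniformly thanks to the slack $1$. Your pointwise infimum is a slightly cleaner finish, but only once this null-set step is supplied.
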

  	\begin{proofClaim}
  		
  		Since the set $I$ is finite, the set $\Routes_i$ is countable, and the set of indices $j\leq \abs{\wa'}$ is finite for any $\wa' \in \Routes_i$,  it is enough to derive a contradiction from the following assumption: 
  		Assume that there exists $i \in I$, a walk $\wa' \in \Routes_i$ and $j\leq |\wa'|$ with the measurable set 
  		\[
  		\Dwg[\wa'<][u,j]:= \Dwg[\wa'][u,j] \cap \{t \in \hori \mid\wttime_{\wa'}(u,t) + \Pf^{{\prices^*}}_{\wa'}(u,t) < \phi^{\source_i}_i(t)\}
  		\]
  		having positive measure $\leb(\Dwg[\wa'<][u,j])>0$.
  		
  		We differ between the cases $j = 1$ and $j>1$:
  		
  		\begin{structuredproof}
  			\proofitem{Case $j =1$} Since $\wa'[j]=\wa'[1]\in\delta^+(\source_i)$ and $\Dwg[\wa'][u,1] = \{t \in \hori \mid u_{\wa[1]}(t)=0\} = \mathfrak T_{\wa'[1]}$ (cf.~\eqref{eq: DefTimesU=0}),
  			we have for all $t \in \Dwg[\wa'][u,1]$ that
  			$\wttime_{\wa'}(u,t) + \Pf^{{\prices^*}}_{\wa'}(u,t) \geq \prices^*_{\wa'[1]}(t) \geq \phi_i^{\source_i}(t)$ where the last inequality holds by the definition of $\prices^*$ in \eqref{eq: Defp^*} and $t \in \Dwg[\wa'][u,1] = \mathfrak T_{\wa'[1]}$. 
  			This shows  that $\mathfrak D_{\wa'<}^1= \emptyset$ which is a contradiction to $\leb(\mathfrak D_{\wa'<}^1)>0$. 
  		\end{structuredproof}
  		
  		For $j \in\{2,\ldots, |\wa'|\}$,  we first give a brief sketch of proof: 
  		The overall idea is to construct an \stwalk[\source_i][\dest] $\wa^* \in \Routes_i$ that lower bounds the costs of $\wa'$ and only contains used edges under $u$. 
  		Since, by definition, $\phi^{\source_i}_i$ is upper bounded by the costs of such a walk, we get the desired contradiction. 
  		Now in order to construct this walk $\wa^*$, 
  		we start by considering the edge $\wa'[j-1]$. 
  		By definition of $\Dwg[\wa'<][u,j]$, this edge $\wa'[j-1]$ carries flow $u_{\wa'[j-1]}>0$ during the arrival times corresponding to the entry times $\Dwg[\wa'<][u,j]$ when traveling along the walk $\wa'$. 
  		Hence, by $h^*$ inducing $u$, we can find a walk $\wa \in \Routes$ (called $\wa^m$ below) for which $h^*$ sends flow into this walk such that it induces (a nontrivial part of) the aforementioned flow on $\wa'[j-1]$. 
  		This, in turn, implies that $\phi^{v_j}_i$  is upper bounded  by the costs of the suffix of the walk $\wa$ that starts at the node~$v_j$ between the edges~$\wa'[j-1]$ and~$\wa'[j]$. From this, we can find a walk $\bar{\wa} \in \Routes_{v_j,\dest}^i$ whose costs (almost) match the costs of $\phi_i^{v_j}$ and, hence, has lower costs than the costs of entering the edge $\wa'[j]$ when traveling along $\wa'$ and starting during $\Dwg[\wa'][u,j]$. Hence, setting $\wa^*:=(\wa'_{< j},\bar{\wa},i)$ yields the desired walk. 
  		
  		\begin{structuredproof}
  			\proofitem{Case $j > 1$} 
  			By \Cref{lem:  g>0Arr'>0}, 
  			$\mathfrak T_{\wa'<}^j:= \arr_{\wa',j-1}(u,\cdot)(\Dwg[\wa'<][u,j])$ is not a null set as $\Dwg[\wa'<][u,j] \subseteq \Dwg[\wa'_{\leq j-1}][u]$ and by assumption $\Dwg[\wa'<][u,j]$ is not a null set. 
  			Since  $u_{\wa'[j-1]}(t) >0$ for all  $t \in \mathfrak T_{\wa'<}^j$,  we can apply \Cref{lem: Relations:h>0u>0:u>0ExistsCountableM} to $h^*$ and $u$ and get a countable set $M$ with  $\wa^m,j_m,\mathfrak D_m, m \in M$ with the stated properties. 
  			By the countability of $M$, there has to exist an $m \in M$ with $\wa^m \in \Routes$  such that $\arr_{\wa^m,j_m}(u,\cdot)(\mathfrak D_m)$ and 
  			$\mathfrak T_{\wa'<}^j$ have an intersection with positive measure. 
  			Denote the set of arrival times of the time points in the latter set at the end of the $j_m$-th edge of $\wa^m$
  			by
  			\begin{align}\label{eq: Desc: T^*}
  				\mathfrak T^* := \exit_{\wa^m[j_m]}(u,\cdot)(\arr_{\wa^m,j_m}(u,\cdot)(\mathfrak D_m) \cap \mathfrak T_{\wa'<}^j) = \arr_{\wa^m_{\geq j_m},2}(u,\cdot)(\arr_{\wa^m,j_m}(u,\cdot)(\mathfrak D_m) \cap \mathfrak T_{\wa'<}^j)
  			\end{align}
  			We can assume \wlg that $h^*_{\wa^m}(t)>0$ for all $t \in \mathfrak D_m$. Then, by 
  			the choice of our representatives, we have 
  			$\mathfrak T^* \subseteq \Dwg[\wa^m_{>j_m}][u]$. This implies two things: Firstly, since $\arr_{\wa^m,j_m}(u,\cdot)(\mathfrak D_m)$ and 
  			$\mathfrak T_{\wa'<}^j$ have an intersection with positive measure,  
  			\Cref{lem:  g>0Arr'>0} in combination with the equality in \eqref{eq: Desc: T^*} 
  			 implies that $\mathfrak T^*$  is not a null set.  
  			Secondly, from the definition of~$\phi^{v_j}_i(t)$, we get $\phi^{v_j}_i(t) \leq \wttime_{\wa^m_{>j_m}}(u,t) + \Pf^{\hat{\prices}}_{\wa^m_{>j_m}}(u,t)$ 
  			for all $t \in \mathfrak T^*$ where $v_j$ is the end note of edge $\wa^m[j_m] = \wa'[j-1]$. 
  			
  			By the countability of the set of finite $v_j,\dest$-walks, 
  			there exists a \stwalk[v_j][\dest] $\bar{\wa}\in \Routes_{v_j,\dest}^i$ and subset $\hat{\mathfrak T}^*\subseteq \mathfrak T^*$ of positive measure  with   
  			$1 + \phi^{v_j}_i \geq  \Indi_{\hori\setminus\Dwg[\bar{\wa}][u]}\cdot \infty + (\wttime_{\bar{\wa}}(u,\cdot) + \Pf^{\hat{\prices}}_{\bar{\wa}}(u,\cdot))$ on 
  			$  \hat{\mathfrak T}^*$. Due to $\phi^{v_j}_i$ being finitely valued,  it must be that
  			$\hat{\mathfrak T}^* \subseteq \Dwg[\bar{\wa}][u]$ and hence 
  			$1 + \phi^{v_j}_i \geq \wttime_{\bar{\wa}}(u,\cdot) + \Pf^{\hat{\prices}}_{\bar{\wa}}(u,\cdot)$ on 
  			$  \hat{\mathfrak T}^*$.

  			Define $\wa^*:=(\wa'_{< j},\bar{\wa},i) \in \Routes_i$ and  $\mathfrak D^* := \arr_{\wa',j}(u,\cdot)^{-1}(\hat{\mathfrak T}^*) \cap \Dwg[\wa'<][u,j]$. 
  			Remark that $\wa^*$ is indeed an \stwalk[\source_i][\dest] as $\wa'$ is an  \stwalk[\source_i][\dest] and $\bar{\wa}$ is a \stwalk[v_j][\dest]. 
  			\begin{subclaim}\label{subclaim: CorrectTimes}
  				We have $\mathfrak D^*\subseteq \Dwg[\wa^*][u]$ as well as $\leb(\mathfrak D^*)>0$. In particular, 
  				by definition of~$\phi_i^{\source_i}$, the inequality  $\wttime_{\wa^*}(u,t)+ \Pf^{\hat{\prices}}_{\wa^*}(u,t) \geq \phi_i^{\source_i}(t)$ holds for  all $t \in \mathfrak D^*$.  
  			\end{subclaim}
  			\begin{proofClaim}
  				We first argue for the claimed inclusion. For this, observe that $\mathfrak D^* \subseteq \Dwg[\wa'<][u,j]\subseteq \Dwg[\wa'_{<j}][u]$ and, hence, 
  				$\mathfrak D^* \subseteq \arr_{\wa',j}(u,\cdot)^{-1}(\hat{\mathfrak T}^*)$ implies $\mathfrak D^*\subseteq \Dwg[\wa^*][u]$ by $\hat{\mathfrak T}^* \subseteq   \Dwg[\bar{\wa}][u]$.  
  				
  				For $\leb(\mathfrak D^*)>0$, we show in the following that $\arr_{\wa',j}(u,\cdot)(\mathfrak D^*) = \hat{\mathfrak T}^*$ holds. 
  				This shows the desired inequality as $\arr_{\wa',j}(u,\cdot)$ fulfills Lusin's property (\Cref{lem: PropAbsCon:Lus}) and $\leb(\hat{\mathfrak T}^*)>0$. 
  				
  				In order to show that  $\arr_{\wa',j}(u,\cdot)(\mathfrak D^*) = \hat{\mathfrak T}^*$ holds, it is sufficient to prove that the inclusion
  				$\hat{\mathfrak T}^*  \subseteq \arr_{\wa',j}(u,\cdot)(\Dwg[\wa'<][u,j])$
  				is valid: 
  				\begin{align*}
  					\hat{\mathfrak T}^* \subseteq \mathfrak T^* &\eqperdef \exit_{\wa^m[j_m]}(u,\cdot)(\arr_{\wa^m,j_m}(u,\cdot)(\mathfrak D_m) \cap \mathfrak T_{\wa'<}^j) =  \exit_{\wa'[j-1]}(u,\cdot)(\arr_{\wa^m,j_m}(u,\cdot)(\mathfrak D_m) \cap \mathfrak T_{\wa'<}^j) \\
  					&\subseteq \exit_{\wa'[j-1]}(u,\cdot) (\mathfrak T_{\wa'<}^j) \eqperdef \exit_{\wa'[j-1]}(u,\cdot) (\arr_{\wa',j-1}(u,\cdot)(\Dwg[\wa'<][u,j]) ) \eqperdef \arr_{\wa',j}(u,\cdot)(\Dwg[\wa'<][u,j]).\qedhere
  				\end{align*}    
  			\end{proofClaim}
  			The desired contradiction then follows almost immediately by the next subclaim: 
  			\begin{subclaim}\label{subclaim: inequality}
  				The inequality  $\wttime_{\wa'}(u,t) + \Pf^{{\prices^*}}_{\wa'}(u,t) \geq  \wttime_{\wa^*}(u,t) + \Pf^{\hat{\prices}}_{\wa^*}(u,t)$ holds for all $t \in \mathfrak D^*$.
  			\end{subclaim}
  			\begin{proofClaim} 
  				We make two observations: Firstly,
  				\begin{align*}
  					\wttime_{\wa'_{< j}}(u,t) + \Pf^{{\prices^*}}_{\wa'_{< j}}(u,t) =  \wttime_{\wa^*_{< j}}(u,t) + \Pf^{{\prices^*}}_{\wa^*_{<j}}(u,t)= \wttime_{\wa^*_{< j}}(u,t) + \Pf^{\hat{\prices}}_{\wa^*_{<j}}(u,t) 
  				\end{align*}
  				  for all $t \in \Dwg[\wa'][u,j]$ where the first equality holds since $\wa'_{<j} = \wa^*_{<j}$ and the second one due to 
  				$\Dwg[\wa'][u,j]\subseteq \Dwg[\wa'_{<j}][u]$ and the fact that $\hat{\prices}_\arc(t) = \prices^*_\arc(t)$ in case of $u_\arc(t) >0$. 
  				
  				Secondly, we have $\hat{\mathfrak T}^*=\arr_{\wa',j}(u,\cdot)(\mathfrak D^*) \subseteq \arr_{\wa',j}(u,\cdot)(\Dwg[\wa'<][u,j])  \subseteq \mathfrak T_{\wa'[j]}$ (where the first equality was shown in the proof of \Cref{subclaim: CorrectTimes}), implying the inequality \refsym{1} in the following for all $t \in \hat{\mathfrak T}^*$:
  				\begin{align*}
  					\wttime_{\wa'_{\geq j}}(u,t) + \Pf^{{\prices^*}}_{\wa'_{\geq j}}(u,t) \geq \prices^*_{\wa'[j]}(t) \symoverset{1}{\geq} 1+\phi_i^{v_j}(t) &\geq  \wttime_{\bar{\wa}}(u,t) + \Pf^{\hat{\prices}}_{\bar{\wa}}(u,t) \\\overset{\wa^*_{\geq j}=\bar{\wa}}&{=} \wttime_{\wa^*_{\geq j}}(u,t) + \Pf^{\hat{\prices}}_{\wa^*_{\geq j}}(u,t).
  				\end{align*}
  				Since  $\arr_{\wa',j}(u,\cdot)(\mathfrak D^*) = \hat{\mathfrak T}^*$ (cf.~above), 
  				putting both observations together implies 
  				the claim.
  			\end{proofClaim}
  			
  			This now yields the desired contradiction as we get by \Cref{subclaim: CorrectTimes} and \Cref{subclaim: inequality} that 
  			\begin{align*}
  				\wttime_{\wa'}(u,t) + \Pf^{{\prices}}_{\wa'}(u,t) \geq   \wttime_{\wa^*}(u,t)+ \Pf^{\hat{\prices}}_{\wa^*}(u,t) \geq \phi_i^{\source_i}(t) 
  			\end{align*}
  			holds true for all $t \in \mathfrak D^*$.
  			However, $\mathfrak D^*$ is also a subset of~$\Dwg[\wa'<][u,j]$ and, hence, can only contain times where the opposite (and strict) inequality to the above one holds. Thus, $\mathfrak D^*$ must be empty which is a contradiction to $\leb(\mathfrak D^*)>0$. \qedhere
  		\end{structuredproof}
  		\end{proofClaim}

  	With these insights, we can now show that $h^*$ fulfills the Wardrop conditions of a $\prices^*$-DUE:
  	\begin{claim}\label{claim: hateqcon}
  		$h^*$ fulfills for almost all $t \in \hori$, all  $i \in I$ and $\wa \in \Routes_i$  
  		\begin{equation}\label{eq : DUEhat}
  			\begin{aligned}
  				h^*_\wa(t)>0 \implies \wttime_\wa(u,t) + \Pf^{\prices^*}_\wa(u,t) \leq \wttime_{\wa'}(u,t) + \Pf^{\prices^*}_{\wa'}(u,t)\text{ for all } \wa'\in \Routes_i.
  			\end{aligned}
  		\end{equation}
  	\end{claim}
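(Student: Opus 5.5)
We fix $i\in I$ and $\wa\in\Routes_i$ and consider almost every $t$ with $h^*_\wa(t)>0$. First we show that $\wttime_\wa(u,t)+\Pf^{\prices^*}_\wa(u,t)=\phi_i^{\source_i}(t)$. By the choice of representatives, $h^*_\wa(t)>0$ gives $u_{\wa[j]}(\arr_{\wa,j}(u,t))>0$ for all $j\leq\abs{\wa}$. Hence $\prices^*_{\wa[j]}=\hat{\prices}_{\wa[j]}$ at all arrival times along $\wa$, and so $\Pf^{\prices^*}_\wa(u,t)=\Pf^{\hat{\prices}}_\wa(u,t)$. The second part of \Cref{claim: PhiEqualsHatWalks} then yields the equality for almost all such $t$. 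It therefore suffices to show, for every $\wa'\in\Routes_i$ and almost every $t$ in the set $\{h^*_\wa>0\}$, that
\[
\wttime_{\wa'}(u,t)+\Pf^{\prices^*}_{\wa'}(u,t)\geq \phi_i^{\source_i}(t).
\]
Since $\Routes_i$ is countable, the exceptional null sets can be united over all $\wa'$.

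Next we split the time axis according to $\wa'$. For $t\in\hori$, either $t\in\Dwg[\wa'][u]$, meaning every edge of $\wa'$ is entered at a time of positive $u$-inflow, or there is a first index $j$ with $u_{\wa'[j]}(\arr_{\wa',j}(u,t))=0$, in which case $t\in\Dwg[\wa'][u,j]$. These sets cover $\hori$.
\begin{itemize}
\item On $\Dwg[\wa'][u,j]$, \Cref{claim: ineqphi} gives the desired inequality for almost all $t$.
\item On $\Dwg[\wa'][u]$, all edges of $\wa'$ are used at the relevant times, so again $\Pf^{\prices^*}_{\wa'}(u,t)=\Pf^{\hat{\prices}}_{\wa'}(u,t)$. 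Since $\wa'\in\Routes^i_{\source_i,\dest}$ and $t\in\Dwg[\wa'][u]$, the definition of $\phi_i^{\source_i}$ as an infimum gives $\phi_i^{\source_i}(t)\leq \wttime_{\wa'}(u,t)+\Pf^{\hat{\prices}}_{\wa'}(u,t)$ directly. This holds even pointwise, and in this case we are in the first branch of the definition of $\phi$.
\end{itemize}
Combining the two cases with the equality from the first step gives \eqref{eq : DUEhat} for almost every $t$.

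The main point needing care is bookkeeping of representatives and null sets. The sets $\Dwg[\wa'][u,j]$ in \Cref{claim: ineqphi} are defined with the fixed representative of $u$, so the case split above is exhaustive pointwise. The single place where only an almost-everywhere statement is available is \Cref{claim: PhiEqualsHatWalks}, and its exceptional sets are countably many. We also have to check that when $t\in\Dwg[\wa'][u]$, $\phi_i^{\source_i}(t)$ is given by the infimum branch rather than the value $0$. This is immediate because $t\in\bigcup_{\wa''}\Dwg[\wa''][u]$. No further obstacle is expected, since the substantive work is contained in \Cref{claim: PhiEqualsHatWalks,claim: ineqphi}.
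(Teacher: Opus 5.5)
Your proposal is correct. Its core is exactly the pointwise chain \eqref{eq:claim: hateqcon} that the paper establishes inside the subclaim of its proof:
\begin{itemize}
\item equality with $\phi_i^{\source_i}$ via \Cref{claim: PhiEqualsHatWalks};
\item unchanged tolls along $\wa$, by the choice of representatives;
\item the lower bound for $\wa'$ via the case split into $\Dwg[\wa'][u,j]$ (handled by \Cref{claim: ineqphi}) and $\Dwg[\wa'][u]$ (handled by the definition of $\phi_i^{\source_i}$).
\end{itemize}

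The difference is in the packaging. The paper argues by contradiction: it shifts the inflow of $\wa$ on $\mathfrak D_{\wa>}$ to $\wa'$. It uses the chain only to show that the shifted flow still lies in $\edom{\Routes}[u]\cap\wir$. It then derives the contradiction from the zero-duality-gap inequality \eqref{eq: ZeroDGhatwalk}. But nothing in that chain uses more than $h^*_\wa(t)>0$, so it already holds a.e.\ on all of $\mathfrak D_{\wa>}$. Hence $\mathfrak D_{\wa>}$ is a null set immediately, and the shifting and cost comparison are superfluous. Your direct version makes this explicit and is shorter.

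The paper's framing mirrors the proof of \Cref{thm: SuffConMSMS}. There no potentials $\phi_i^v$ are available, so the duality comparison is genuinely needed. In the single-destination case, that comparison has already been used in the proof of \Cref{claim: PhiEqualsHatWalks}.

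One cosmetic point: the exceptional null sets must also be united over the countably many pairs $(i,\wa)$, not only over $\wa'$.
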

  	
  	\begin{proofClaim}   
  		Assume for the sake of a contradiction that  the claim was wrong.
  		Then, 
  		by the countability of $\Routes_i\times \Routes_i, i \in I$, 
  		there exists a commodity~$i \in I$, walks $\wa,\wa' \in \Routes_i$ and a set $\mathfrak D_{\wa>} \in\mathcal{B}(\hori)$ with $\leb(\mathfrak D_{\wa>})> 0$ such that 
  		\begin{align}\label{eq: DefDwa>2}
  		 h^*_\wa(t)>0 \text{ and } \wttime_\wa(u,t) + \Pf^{\prices^*}_\wa(u,t) > \wttime_{\wa'}(u,t)+ \Pf^{\prices^*}_{\wa'}(u,t) \text{ for almost all } t \in \mathfrak D_{\wa>}. 
  		\end{align}
  		We can assume \wlg that $h^*_\wa(t)>0$ for all $t \in \mathfrak D_{\wa>}$.
  		
  		Define the walk inflow rate function ${h} \in L_+(\hori)^\Routes$ by shifting all inflow into~$\wa$ during~$\mathfrak D_{\wa>}$ to~$\wa'$:
  		\[{h}_{\Tilde{\wa}} \coloneqq \begin{cases}
  			h^*_{\wa'} + h^*_{\wa} \cdot \Indi_{\mathfrak D_{\wa>}}   &\text{if } \Tilde{\wa}=\wa' ,\\
  			h^*_{\wa} - h^*_{\wa} \cdot \Indi_{\mathfrak D_{\wa>}}     &\text{if } \Tilde{\wa}=\wa ,\\
  			h^*_{\tilde{\wa}} &\text{else.}
  		\end{cases}\]
  		We argue in the following that ${h} \in \edom{\Routes}[u]\cap \wir$. 
  		It is clear that ${h}$ is measurable and ${h} \in \wir$. 
  		We require the following claim: 
  		\begin{subclaim}
  			$\ell^u_{\tilde{\wa}}(h_{\tilde{\wa}})$ exists for all $\tilde{\wa} \in \Routes$.
  		\end{subclaim}
  		\begin{proofClaim}
  			The existence of  $\ell^u_{\tilde{\wa}}({h}_{\tilde{\wa}})$ for $\tilde{\wa}\neq \wa'$ follows immediately by the existence of $\ell^u_{\tilde{\wa}}(h^*_{\tilde\wa})$. Remark that $\ell^u_{\tilde{\wa}}(h^*_{\tilde{\wa}}),\tilde{\wa} \in \Routes$ exist  by $\ell^u(h^*)$ existing and  $h^* \in L_+(\hori)^\Routes$, cf.~\Cref{lem: elluExistenceProperties:ExistenceInducedFlow}. 
  			
  			Hence, consider the case of $\wa'$, an arbitrary $j^*\leq |\wa'|$ and a null set $\mathfrak T^0$. By \Cref{lem: elluExistenceProperties}, it is sufficient 
  			to show that ${h}_{\wa'} = 0$ almost everywhere on $\mathfrak D_{j^*} := \arr_{\wa',j^*}(u,\cdot)^{-1}(\mathfrak T^0)$. 
  			By $\ell^u_{\wa'}(h^*)$ existing, it is clear that $h^*_{\wa'}  = 0$ on $\mathfrak D_{j^*}$ and since ${h}_{\wa'} = h^*_{\wa'}$ outside of $\mathfrak D_{\wa>}$, 
  			it is sufficient to show that $\mathfrak D_{j^*}^\cap:= \mathfrak D_{j^*} \cap \mathfrak D_{\wa>}$ is a null set. In order to show this, we argue first that   the following  holds for almost all $t \in \mathfrak D_{j^*}^\cap$:
  			\begin{align}\label{eq:claim: hateqcon}
  				\wttime_{\wa'}(u,t)+ \Pf^{\prices^*}_{\wa'}(u,t) \symoverset{2}{\geq} \phi_i^{\source_i}(t)  \symoverset{1}{=} \wttime_\wa(u,t) + \Pf^{\hat{\prices}}_{\wa}(u,t) \symoverset{3}{=} \wttime_\wa(u,t) + \Pf^{\prices^*}_{\wa}(u,t) . 
  			\end{align}
  			For the above, we start by observing that $h^*_\wa(t) >0$ for every  $t \in \mathfrak D_{j^*}^\cap \subseteq \mathfrak D_{\wa>}$. The equality \refsym{1} then follows  from \Cref{claim: PhiEqualsHatWalks}.
  			The equality \refsym{3} holds since we only changed $\hat{\prices}_\arc(\tilde{t})$ when $u_\arc(\tilde{t}) = 0$ and our choice of the representatives of $h^*$ and $u$ ensures that $u_{\wa[j]}(\arr_{\wa,j}(u,t))>0$ for all $t \in \mathfrak D_{j^*}^\cap $ by
  			$h^*_\wa(t) >0$.  For the inequality \refsym{2}, we distinguish between two cases: 
  			First, if $t \in \Dwg[\wa'][u,j]$ for any $j \leq \abs{\wa'}$, the inequality holds by \Cref{claim: ineqphi}.
  			For the remaining case of $t \in \mathfrak D_{j^*}^\cap\setminus\bigcup_{j\leq|\wa'|}\Dwg[\wa'][u,j]=  \mathfrak D_{j^*}^\cap \cap \Dwg[\wa'][u]$, we get $\wttime_{\wa'}(u,t)+ \Pf^{\prices^*}_{\wa'}(u,t)=\wttime_{\wa'}(u,t)+ \Pf^{\hat{\prices}}_{\wa'}(u,t)$ since, again, we did not changes the tolls along $\wa'$ when starting at a $t \in \Dwg[\wa'][u]$. The inequality \refsym{2} then follows by the definition of~$\phi^{\source_i}_i$.
  			
  			As the costs of $\wa'$ are strictly lower than those of~$\wa$ for almost all $t \in \mathfrak D_{\wa>}$ (by the definition of that set), \eqref{eq:claim: hateqcon} implies that $\mathfrak D_{j^*}^\cap \subseteq \mathfrak D_{\wa>}$ is indeed a null set. Thus, the claim is proven.
  		\end{proofClaim}
  		In order to derive from this that $\ell^u(h)$ exists and hence $h \in \edom{\Routes}[u]$ holds, it remains by \Cref{lem: elluExistenceProperties:ExistenceInducedFlow} to observe   that also the sum $\sum_{{\tilde{\wa}} \in \Routes}\ell^u_{\tilde{\wa}}({h}_{\tilde{\wa}})$ exists: This is due to the latter sum for $h^*$ existing as well as $(\ell^u_{\tilde{\wa}}({h}_{\tilde{\wa}}))_{\tilde{\wa} \in \Routes}$ and $(\ell^u_{\tilde{\wa}}(h^*_{\tilde{\wa}}))_{\tilde{\wa} \in \Routes}$ only differing in the two entries corresponding to $\wa,\wa'$.

  				Next, we observe that 
  		\begin{align*}
  			\sum_{{\tilde{\wa}} \in \Routes} &\dup{\wttime_{\tilde{\wa}}(u,\cdot) + \Pf^{\prices^*}_{\tilde{\wa}}(u,\cdot)}{h^*_{\tilde{\wa}}}  
  			=\sum_{\tilde{\wa} \in \Routes\setminus \{\wa',\wa\} } \dup{\wttime_{\tilde{\wa}}(u,\cdot) +
  				\Pf^{\prices^*}_{\tilde{\wa}}(u,\cdot)}{h_{\tilde{\wa}}} \\ 
  			+ &\dup{\wttime_{{\wa}}(u,\cdot) + \Pf^{\prices^*}_{{\wa}}(u,\cdot)}{h_{{\wa}}+h^*_{\wa}\cdot \Indi_{\mathfrak{ D}_{\wa>}}}  
  			+\dup{\wttime_{{\wa'}}(u,\cdot) +\Pf^{\prices^*}_{{\wa'}}(u,\cdot)}{h_{{\wa'}}-h^*_{\wa}\cdot \Indi_{\mathfrak{ D}_{\wa>}}}\\
  			\symoverset{2}{=}&\sum_{\tilde{\wa} \in \Routes} \dup{\wttime_{\tilde{\wa}}(u,\cdot) +
  				\Pf^{\prices^*}_{\tilde{\wa}}(u,\cdot)}{h_{\tilde{\wa}}} + \dup{\wttime_{\wa}(u,\cdot) + \Pf^{\prices^*}_{{\wa}}(u,\cdot) - \wttime_{\wa'}(u,\cdot) - \Pf^{\prices^*}_{{\wa}'}(u,\cdot)  }{h^*_{\wa }\cdot \Indi_{\mathfrak D_{\wa>}}}\\
  			\symoverset{1}{>}&\sum_{\tilde{\wa} \in \Routes } \dup{\wttime_{\tilde{\wa}}(u,\cdot) +
  				\Pf^{\prices^*}_{\tilde{\wa}}(u,\cdot)}{h_{\tilde{\wa}}} .
  		\end{align*}
  		Here, we used for \refsym{2} that for $\tilde{\wa} \in \{\wa',\wa\}$ we have
  		\begin{align*}
  			\dup{\wttime_{\tilde{\wa}}(u,\cdot) + \Pf^{\prices^*}_{\tilde{\wa}}(u,\cdot)}{h^*_{\wa}\cdot \Indi_{\mathfrak{ D}_{\wa>}}} \overset{\eqref{eq: DefDwa>2}}&{\leq}\dup{\wttime_{\wa}(u,\cdot) + \Pf^{\prices^*}_{{\wa}}(u,\cdot)}{h^*_{\wa}\cdot \Indi_{\mathfrak{ D}_{\wa>}}}\\ &\leq 	\sum_{{\tilde{\wa}} \in \Routes} \dup{\wttime_{\tilde{\wa}}(u,\cdot) + \Pf^{\prices^*}_{\tilde{\wa}}(u,\cdot)}{h^*_{\tilde{\wa}}}  < \infty.
  		\end{align*}
  		For \refsym{1}, we used the definition of $\mathfrak D_{\wa>}$ (i.e.~\eqref{eq: DefDwa>2} together with $\leb(\mathfrak D_{\wa>})>0$) and the finiteness of  $	\sum_{{\tilde{\wa}} \in \Routes} \dup{\wttime_{\tilde{\wa}}(u,\cdot) + \Pf^{\prices^*}_{\tilde{\wa}}(u,\cdot)}{h^*_{\tilde{\wa}}} $. 
  		Hence, we arrive at the desired contradiction since we already argued that~\eqref{eq: ZeroDGhatwalk} 
  		also holds for $\prices^*$. 
  	\end{proofClaim} 
  	Thus, we have shown that $h^*$ fulfills the $\prices^*$-DUE Wardrop condition. Since $h^*$ also induces $u$ under $\Nl$ by \Cref{ass: u} and $\ell^u(h^*) = u$ being valid, this concludes the proof.\qedhere 
  \end{proof}

\subsection{Combinatorial Characterization}\label{sec: SingleSinkImpl}
In this section, we consider the \emph{multi-source, single-sink} case $\dest_i = \dest,\, i \in I$.  
For this setting, we provide, in addition to the duality-based characterization in \Cref{thm: mainMSSS}, a \emph{combinatorial characterization} of implementable vectors~$u$.  
For this result, we require, besides the separability of the private cost functions (\Cref{ass: PCSep}), two additional assumptions:  
first, any $\dest$-cycle that admits zero costs for one commodity must do so for all commodities; and second, any cycle not containing the destination and having zero traversal time must also have zero private costs for all commodities.
\begin{assumption}\label{ass: PCSep+ZeroCycles}\hfill 
\begin{thmparts}
    \item \Cref{ass: PCSep} holds, i.e.: The private costs are separable over the edges, that is, for all $i \in I$ there exists a  measurable function  $\psi^i : \hori \to \R^\GA_+$ such that for all $\wa =(\hat{\wa},i) \in \Routes_i$
    \begin{align*}
    \wttime_{(\hat{\wa},i)}(u,t) =  \sum_{j \leq|\hat{\wa}|} \psi^i_{\hat{\wa}[j]}(\arr_{\hat{\wa},j}(u,t))  .
    \end{align*}\label[thmpart]{ass: PCSep+ZeroCycles: Sep}
    \item For all $\dest$-cycles $c\in \DestCyc$, $i\in I$ and  all $t \in \hori$ we have the implication 
    \begin{align*}
        \wttime_{(c,i)}(u,t)= 0 \implies  \wttime_{(c,i')}(u,t)= 0 \text{ for all } i' \in I. 
    \end{align*} 
    \label[thmpart]{ass: PCSep+ZeroCycles: DestCycles}

    \item For all cycles $c \in \SimpCyc\setminus\DestCyc$, $i \in I$ and almost all $t \in \hori$,  the following holds:    
    \begin{align*}
        \Big( u_{c[j']}(t)>0  \,\land\, \forall j'\leq \abs{c} : \trav_{c[j']}(u,t) = 0\Big)  \implies \wttime_{(c,i)}(u,t) = 0.  
    \end{align*}
    \label[thmpart]{ass: PCSep+ZeroCycles: ZeroCycles}

\end{thmparts}
  
\end{assumption}
Under the above assumption, we will simply write $\psi_\arc(t)=0$ as a shorthand for $\psi^i_\arc(u,t)=0$ for all  $i \in I$.
Similarly, for any $\hat{\wa}\in \hat{\Routes}$, we will write $\wttime_{\hat{\wa}}(u,t) = 0$ meaning $\wttime_{(\hat{\wa},i)}(u,t) = 0$ for all $i\in I$. 
We remark that the second  part of the above assumption is in particular fulfilled if 
$\psi^i$ is a commodity specific weighted variant of a common private cost function $\psi:\R \to \R^\GA_+$, i.e.
\begin{align}\label{eq: WeightedVarian}
    \psi^i=\phi_i \cdot \psi \text{ for some measurable weight functions }\phi_i:\R \to \R_+\setminus\{0\} \text{ for all } i \in I.
\end{align} 
In particular,  the entire \Cref{ass: PCSep+ZeroCycles} is fulfilled if the private costs represent weighted travel times, i.e.~if \eqref{eq: PC=WTT} holds.

 \begin{theorem}\label{thm: mainSingleSink}
 	For a multi-source, single-\sink network with \Cref{ass: u,ass: PCSep+ZeroCycles} being fulfilled, \eqref{opt: Master} being \wellposed and admitting strong duality \wrt $\MeasFuncUInt$,  the following statements are equivalent:  
 	\begin{thmparts}
 		\item $u$ is implementable. \label[thmpart]{thm: mainSingleSink: impl}

 		\item Every used edge that is reachable from $\dest$ via used edges  has zero private costs, i.e.\ for an arbitrary representative of $u$ we have the following implication for almost all $t \in \hori$ and all walks $\wa \in \hat{\Routes}^\dest$  and $j \leq |\wa|$: 
 		\begin{align*}
 			u_{\wa[j']}(\arr_{\wa,j'}(u,t)) > 0 \text{ for all } j'\leq j  \implies \psi_{\wa[j]}(\arr_{\wa,j}(u,t)) = 0.
 		\end{align*}\label[thmpart]{thm: mainSingleSink: ReachableEdgesHaveZeroD}
 		\item $u$ only sends flow along a $\dest$-cycle if the latter has zero private costs, i.e.\ for an arbitrary representative of $u$ we have the following implication for almost all $t \in \hori$ and  all $\dest$-cycles $c\in \DestCyc$: 
 		\begin{align*}
 			u_{c[j]}(\arr_{c,j}(u,t)) >0  \text{ for all } j \leq \abs{c}   \implies \wttime_{c}(u,t) = 0 .
 		\end{align*}
 		\label[thmpart]{thm: mainSingleSink: UsedDCyclesHaveZeroD}
 		\item There exists $h^* \in \wir$ optimal for \eqref{opt: Master} with tight inequality \eqref{ineq: Master}. \label[thmpart]{thm: mainSingleSink: Optimalh}
 	\end{thmparts}
 \end{theorem}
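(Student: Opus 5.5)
I would prove the cycle of implications a)$\Rightarrow$b)$\Rightarrow$c)$\Rightarrow$d)$\Rightarrow$a). The last step d)$\Rightarrow$a) is immediate from \Cref{thm: SuffConMSSS}, whose hypotheses (\Cref{ass: u,ass: PCSep}, well-posedness, strong duality \wrt $\MeasFuncUInt$) are all included here. The step b)$\Rightarrow$c) is also direct. Take a $\dest$-cycle $c$ and a time $t$ at which every edge of $c$ carries positive flow at its arrival time. Apply b) to the $\dest$-walk $c$ with $j=1,\ldots,\abs{c}$ to get $\psi_{c[j]}(\arr_{c,j}(u,t))=0$ for all $j$. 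By separability (\Cref{ass: PCSep+ZeroCycles: Sep}) this gives $\wttime_c(u,t)=0$. Countability of $\DestCyc$ handles the ``almost all'' quantifier.

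For a)$\Rightarrow$b), let $(\prices^*,h^*)$ implement $u$. By \Cref{lem: NecessImpl}, $h^*$ is optimal with $\ell^u(h^*)=u$. Fix a $\dest$-walk $\wa$ and induct on $j$, following the sketch in the introduction. The inductive claim is: for almost all $t$ with $u_{\wa[j']}(\arr_{\wa,j'}(u,t))>0$ for $j'\le j$, every walk $\wa''\in\Routes$ whose flow under $h^*$ induces the flow on $\wa[j]$ at time $\arr_{\wa,j}(u,t)$ has zero total cost $\wttime+\Pf^{\prices^*}$ from that edge onward.
\begin{itemize}
\item To find such walks, use \Cref{lem: Relations:h>0u>0:u>0ExistsCountableM} together with \Cref{lem:  g>0Arr'>0} to keep positive measure.
\item If such a particle instead skipped the detour (the prefix of $\wa$ up to $j$ preceded by the suffix from $\dest$), it would arrive at $\dest$ earlier and stop. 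The equilibrium condition therefore forces the detour to have nonpositive cost, hence zero cost.
\item The new walk obtained by truncating at $\dest$ is feasible by \Cref{cor: LuExAltChara}, since it only uses times in $\Dwg[\cdot][u]$.
\end{itemize}
Nonnegativity of $\psi$ and $\prices^*$ then gives $\psi_{\wa[j]}=0$ there.

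The main work is c)$\Rightarrow$d). Well-posedness and strong duality give an optimal $\tilde h$, and $\tilde{\g}:=\ell^u(\tilde h)\le u$. By \Cref{lem: DifferenceGeneral}, $u-\tilde{\g}$ decomposes into zero-cycle inflows $\hat h^0_c$ and $\dest$-cycle inflows $\hat h_{c^\dest}$ with $\sum\hat h_{c^\dest}\le-\inflow^{\tilde\g}_\dest$. The $\dest$-cycle flows are carried by $u$, so by c) and \Cref{ass: PCSep+ZeroCycles: DestCycles} they have zero private cost for every commodity. The zero-cycles have zero cost by \Cref{ass: PCSep+ZeroCycles: ZeroCycles}.

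Build $h^*$ in two steps:
\begin{itemize}
\item Append the $\dest$-cycles to the ends of walks of $\tilde h$, using the arrival rate $-\inflow^{\tilde\g}_\dest$ at $\dest$ to split $\hat h_{c^\dest}$ proportionally among the walks arriving at the right times.
\item Insert the zero-cycles at nodes visited by walks at the corresponding times, using the pure flow decompositions of \cite{GHS24FD}.
\end{itemize}
The resulting $h^*$ has the same objective value, so it is optimal, and it satisfies $\ell^u(h^*)=u$.

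I expect the consistent splicing to be the main obstacle: the appended cycle mass must match exactly in time, and $h^*$ must remain in $\edom{\Routes}[u]$. I would first try to carry out the splicing measure-theoretically via the image measures of $\arr_{\wa,\abs{\wa}+1}$ and \Cref{thm: arrLusin}. For zero-cycles, which lie off $\dest$ and have zero traversal time, I would use a disintegration of the cycle flow over the walks passing through the cycle's nodes.
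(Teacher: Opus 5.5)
Your steps b)$\Rightarrow$c) and d)$\Rightarrow$a) are fine, and your plan for c)$\Rightarrow$d) follows the paper's route. However, the induction step of a)$\Rightarrow$b) does not work as written.

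The exchange ``skip the detour and stop at $\dest$ earlier'' needs the deviating particle to have already passed $\dest$ before entering $\wa[j]$. This holds for $j=1$: the tail of $\wa[1]$ is $\dest$, so truncating there gives a cheaper $\source_i$,$\dest$-walk. It fails for $j\geq 2$. Take a used path $\source\to a\to b\to\dest$ and a used edge $(\dest,a)$. The edge $(a,b)$ is reachable from $\dest$ via used edges, yet particles on that path have not visited $\dest$ before entering $(a,b)$. So no particle traverses your ``detour'', and the Wardrop condition says nothing about it.

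The missing idea is to use the induction hypothesis at $\wa[j-1]$ to supply a zero-cost continuation. By \Cref{lem: Relations:h>0u>0:u>0ExistsCountableM}, plus Lusin's property to keep positive measure, there is a used walk $\wa'''$ traversing $\wa[j-1]$ at position $j'''$. It exits that edge exactly when a positive-measure set of $\wa''$-particles enter $\wa[j]$ at position $j''$, and its suffix from $\wa[j-1]$ onward has zero cost $\psi+\prices$. The spliced walk $(\hat\wa''_{<j''},\hat\wa'''_{>j'''},i)$ is then strictly cheaper than $\wa''$ unless $\wa''_{\geq j''}$ has zero cost, which contradicts the $\prices$-DUE property. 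Since the Wardrop condition compares against all of $\Routes_i$, no feasibility argument via \Cref{cor: LuExAltChara} is needed in this direction.

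Second, you never address commodities. The Wardrop condition only yields $\psi^i+\prices=0$ for the commodity $i$ of the deviating walk, whereas b) asserts $\psi^{i'}=0$ for every $i'$. More importantly for the splicing above, $\wa'''$ may belong to another commodity, so the induction hypothesis must hold for all commodities. This is where \Cref{ass: PCSep+ZeroCycles: DestCycles} is indispensable. The walk $(\wa_{<j},\hat\wa''_{\geq j''})$ is a $\dest$-cycle whose first part has zero cost by the induction hypothesis. So its commodity-$i$ cost vanishes, and by the assumption it vanishes for all commodities. Without this assumption a)$\Rightarrow$b) is false: one commodity may use a $\dest$-cycle that is free for it but costly for another, with zero tolls. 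So an argument that never invokes it cannot be complete.

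Finally, in c)$\Rightarrow$d), your ``disintegration over walks passing through the cycle's nodes'' leaves open why such walks exist at the right times. Zero-cycles may also be linked to walks only through other zero-cycles. The paper gets this from the fact that $u$ itself has a pure $\source$,$\dest$-decomposition, being induced by some $h'\in\wir$. Then \Cref{thm: PureFlowDecomp}, combined with the commodity-wise maximally pure decompositions of \Cref{cor: PureFlowDecomp}, forces the residual zero-cycle flow to vanish.
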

 We remark that if one representative of $u$ fulfills \ref{thm: mainSingleSink: ReachableEdgesHaveZeroD} (resp.~\ref{thm: mainSingleSink: UsedDCyclesHaveZeroD}), then all representatives of $u$ do so. Moreover, 
 it is sufficient to check the condition in \ref{thm: mainSingleSink: UsedDCyclesHaveZeroD} only for $\dest$-cycles visiting $\dest$ once. 
 Both of these claims are proven in \Cref{lem: SimplificationsFlowOnlyOnZeroDestCycles} after the proof of \Cref{thm: mainSingleSink}.  
 
 Before we come to the actual proof of the latter \namecref{thm: mainSingleSink}, let us briefly sketch the main proof ideas. 
 We will prove  \Cref{thm: mainSingleSink}   by showing the chain of implications 
 \begin{align}\label{eq: ChainOfImplications}
 	\text{\ref{thm: mainSingleSink: Optimalh} }
 	\implies \text{ \ref{thm: mainSingleSink: impl} }
 	\implies \text{ \ref{thm: mainSingleSink: ReachableEdgesHaveZeroD} }
 	\implies \text{ \ref{thm: mainSingleSink: UsedDCyclesHaveZeroD} }
 	\implies \text{ \ref{thm: mainSingleSink: Optimalh}}.
 \end{align}

 The implication \ref{thm: mainSingleSink: Optimalh}$\Rightarrow$\ref{thm: mainSingleSink: impl} holds by \Cref{thm: SuffConMSSS}.

 Regarding the implication \ref{thm: mainSingleSink: impl}$\Rightarrow$\ref{thm: mainSingleSink: ReachableEdgesHaveZeroD}, we start with a pair $(h^*,\prices)$ implementing $u$. 
 We then take any walk starting at the destination~$\dest$ and consisting only of used edges under~$u$ and show for each of those edges, by induction over their position, that they all have a total cost of zero (under~$\prices$). In particular, they need to have zero private costs.  The proof-idea for the induction is illustrated in \Cref{fig:thm:main:part1:induction}.

 The implication \ref{thm: mainSingleSink: ReachableEdgesHaveZeroD}$\Rightarrow$\ref{thm: mainSingleSink: UsedDCyclesHaveZeroD} 
 is straight forward.
 
 Finally, we argue for \ref{thm: mainSingleSink: UsedDCyclesHaveZeroD}$\Rightarrow$\ref{thm: mainSingleSink: Optimalh} as follows: By  
 \Cref{lem: DifferenceGeneral}, 
 the aggregated edge flow $\tilde{\g}$ of any optimal solution $\Tilde{h}$ to \eqref{opt: Master} differs from $u$ only in flow on zero-cycles or $\dest$-cycles. 
 The task then becomes to show that we can add those cycles to the walk flow $\tilde{h}$ where 
 we will rely heavily on  
 the insights gathered in \Cref{thm: PureFlowDecompIntuitive} and \Cref{cor: PureFlowDecomp} about pure  flow decompositions.   
 The so constructed walk inflow rate~$h$ then induces $u$. Since, compared to the optimal solution $\tilde{h}$,  we only added cycles of travel time zero or $\dest$-cycles which by assumption have zero private costs, the optimality of~$h$ follows.  
  \begin{proof}[Proof of \Cref{thm: mainSingleSink}]   
  	We show the chain of implications \eqref{eq: ChainOfImplications} in this order. As already remarked above, 
  	the implication \ref{thm: mainSingleSink: Optimalh}$\Rightarrow$\ref{thm: mainSingleSink: impl} was shown in \Cref{thm: SuffConMSSS}.
  	
  	\begin{structuredproof}
  		
  		\proofitem{\ref{thm: mainSingleSink: impl}$\Rightarrow$\ref{thm: mainSingleSink: ReachableEdgesHaveZeroD}}
  		Let $u$ be implementable via nonnegative tolls $\prices:\hori\to\R^\GA_+$ and $h^* \in \wir$, i.e.\ $h^*$ induces $u$ and is a $\prices$-DUE. Choose an arbitrary representative of $u$. 
  		We aim to show the following claim: 
  		\begin{claim}\label{claim:thm:main:part1}
  			For all walks $\wa^\dest =(\arc_1,\ldots,\arc_m) \in \hat{\Routes
  			}^\dest$ and all sets of times $\mathfrak T \in \mathcal{B}(\hori)$
  			such that $u_{\arc_j}(\arr_{\wa^\dest,j}(u,t)) >0$ for all  $j \leq m$ and almost all $t \in \mathfrak T$, we have for all $j\leq m$:
  			\begin{align}\label{eq:thm:main:part1}
  				\psi_{\arc_{j}}(\arr_{\wa^\dest,j}(u,t)) + \prices_{\arc_{j}}(\arr_{\wa^\dest,j}(u,t)) = 0 \text{ for almost all } t\in \mathfrak T.
  			\end{align}
  		\end{claim}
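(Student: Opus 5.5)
We prove \Cref{claim:thm:main:part1} by induction over $j$, the position on the walk $\wa^\dest=(\arc_1,\ldots,\arc_m)$. Fix $\mathfrak T$ as in the claim. The inductive statement for position $j$ is: for almost every $t\in\mathfrak T$ every prefix $\arc_1,\ldots,\arc_j$ has $\psi+\prices=0$ at the respective arrival times, i.e.\ the cost of the prefix $\wa^\dest_{\le j}$ vanishes. Since all private costs and tolls are nonnegative, it suffices to show that the total cost $\wttime_{\wa^\dest_{\le j}}(u,t)+\Pf^{\prices}_{\wa^\dest_{\le j}}(u,t)$ is zero for a.e.\ $t\in\mathfrak T$. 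The base $j=0$ (empty prefix) is trivial.

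For the induction step, assume the prefix up to $j-1$ has zero cost a.e.\ on $\mathfrak T$, and let $\tau:=\arr_{\wa^\dest,j}(u,t)$. Suppose, for contradiction, that $\psi_{\arc_j}+\prices_{\arc_j}>0$ at $\tau$ on a positive-measure subset $\mathfrak T'\subseteq\mathfrak T$. By the representative choices and \Cref{lem:  g>0Arr'>0}, the image $\arr_{\wa^\dest,j}(u,\cdot)(\mathfrak T')$ has positive measure. On this set $u_{\arc_j}>0$, so \Cref{lem: Relations:h>0u>0:u>0ExistsCountableM}, applied to $h^*$ and $u$ (as in the proof of \Cref{claim: ineqphi}), yields a walk $\wa=(\hat\wa,i)\in\Routes_i$, an index $k$ with $\wa[k]=\arc_j$, and a positive-measure set $\mathfrak D$ with $h^*_\wa>0$ on $\mathfrak D$, such that $\arr_{\wa,k}(u,\cdot)(\mathfrak D)$ meets $\arr_{\wa^\dest,j}(u,\cdot)(\mathfrak T')$ in positive measure. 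Shrinking, we obtain a positive-measure set $\mathfrak D'\subseteq\mathfrak D$ of departure times on $\wa$ and a positive-measure set $\mathfrak T''\subseteq\mathfrak T'$ of departure times on $\wa^\dest$ that both reach the tail of $\arc_j$ at the same moments. Lusin's property together with \Cref{lem:  g>0Arr'>0} guarantees that these sets correspond to each other with positive measure in both directions.

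Now build a deviation for particles of $\wa$ departing at $t'\in\mathfrak D'$. The tail $v$ of $\arc_j$ is the end of $\arc_{j-1}$, which on $\wa^\dest$ was reached from $\dest$. Define $\wa^*:=(\hat\wa_{<k},\ \hat\wa_{\ge k},\ \ldots)$ as follows. The particle follows $\wa$ to $v$ and continues along $\wa$ to $\dest$; from $\dest$ it should take $\wa^\dest_{<j}$ to $v$ and then finish with $\wa_{\ge k}$ again. This detour costs exactly the zero-cost prefix, but only if it arrives at $\dest$ at a time matching $\mathfrak T$, which is not controlled. The cleaner alternative is to stop the particle at $\dest$: the walk $\wa_{<k}$ reaches $v$, and $\wa_{\ge k}$ is a $v$,$\dest$-walk. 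Compare instead the particle using $\wa$ with the particle using the walk $(\wa,\wa^\dest_{<j},\wa_{\ge k})$, which makes one extra loop $\dest\to v\to\dest$. This yields an inequality in the wrong direction.

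The right comparison therefore runs the other way. Consider the flow of $h^*$ that enters $\arc_j$ at times $\tau$ (along $\wa$) and the same flow rerouted to skip the segment from $v$ to $\dest$. On $\wa$, the particle is at $v$ at time $\tau$ and then pays, via $\arc_j$ onward, the cost $C_{\ge k}(\tau)>0$-part, plus further costs, before reaching $\dest$. If $\dest$ already lies on $\wa_{<k}$, the truncated walk ending at the last visit of $\dest$ before position $k$ is a valid $\source_i$,$\dest$-walk, and it is strictly cheaper because it drops edge $\arc_j$, whose cost is positive. This contradicts the $\prices$-DUE property on $\mathfrak D'$. Otherwise, reach $\dest$ via the Wardrop condition: since $\wa_{\ge k}$ ends at $\dest$ at some time $\tau'$, the induction must be run on cost-to-go functions rather than on $\wa^\dest$ itself. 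I would therefore strengthen the claim to: the equilibrium cost-to-go from $\dest$ at any time is $0$, and any used edge reached from $\dest$ via used edges has cost-to-go equal to its arrival cost-to-go. Equality of cost-to-go along used suffixes follows from the Wardrop condition, in the manner of \Cref{claim: PhiEqualsHatWalks}.

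The main obstacle is this last step: making the ``cost-to-go'' monotonicity rigorous. Concretely, one must show that a particle at $v$ at time $\tau$ that is using $\wa$ with suffix cost $\kappa(\tau)$ must have $\kappa(\tau)\le$ cost of (zero-cost prefix reversed is unavailable) any alternative. Combined with the fact that walks through $\dest$ may continue, this yields $\kappa\ge\psi+\prices$ at $\arc_j$ plus the remaining cost, and at $\dest$ the remaining cost can be zero (stop). I would first try to handle this by showing inductively that the tail flow of $\wa$ passing through $\arc_{j-1}$ came out of $\dest$ and could have stopped there at zero extra cost. That forces all subsequent used-edge costs, including that of $\arc_j$, to vanish, again via the measure-transfer arguments above.
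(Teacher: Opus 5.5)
\paragraph{Main gap: the induction step for $j\ge 2$.} You never complete this step, and the inductive statement you chose cannot carry it. That statement is zero cost on the prefix $\wa^\dest_{\le j-1}$ at the times $\arr_{\wa^\dest,j'}(u,t)$. It says nothing about how a \emph{used} walk entering $\arc_j$ at those times compares with alternative continuations, and that comparison is the only thing the $\prices$-DUE property lets you exploit.

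Your truncation argument works only when the used walk has just left $\dest$, i.e.\ essentially for $j=1$. The fallback you propose, that the flow on $\arc_{j-1}$ ``came out of $\dest$'' and could have stopped there, is false in general. At the relevant times, $u_{\arc_{j-1}}>0$ may be carried entirely by walks coming directly from some $\source_{i'}$ without ever visiting $\dest$.

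The missing idea is to strengthen the induction to suffixes of used walks. For every $j$ and every $\wa\in\Routes_i$ with $\wa[k]=\arc_j$, require $\psi_{\wa[j']}+\prices_{\wa[j']}=0$ at $\arr_{\wa,j'}(u,t)$ for all $j'\ge k$, for a.e.\ $t$ with $h^*_\wa(t)>0$ and $\arr_{\wa,k}(u,t)\in\arr_{\wa^\dest,j}(u,\cdot)(\mathfrak T)$. The base case is your truncation. The step goes as follows.
\begin{enumerate}
\item Take a used walk $\wa^j$ through $\arc_j$ that violates the statement on a non-null set $\mathfrak D_j$.
\item Consider the entry times into $\arc_{j-1}$ lying in $\arr_{\wa^\dest,j-1}(u,\cdot)(\mathfrak T)$ whose exit times fall into $\arr_{\wa^j,k}(u,\cdot)(\mathfrak D_j)$. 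They form a non-null set, by existence of $\ell^u(h^*)$ plus Lusin's property of $\exit_{\arc_{j-1}}(u,\cdot)$, and $u_{\arc_{j-1}}>0$ on it.
\item \Cref{lem: Relations:h>0u>0:u>0ExistsCountableM} then gives a used walk $\wa^{j-1}$ through $\arc_{j-1}$ at matching times. By the hypothesis, its suffix after $\arc_{j-1}$ has zero cost.
\item That suffix starts at the tail $v$ of $\arc_j$. So the spliced walk $(\hat\wa^j_{<k},\hat\wa^{j-1}_{>k'},i)$ is strictly cheaper than $\wa^j$ on a non-null set of departure times with $h^*_{\wa^j}>0$, contradicting the equilibrium.
\end{enumerate}
In terms of your cost-to-go intuition: the zero-cost continuation from $v$ comes from a used walk through $\arc_{j-1}$, not from a detour through $\dest$. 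Then \eqref{eq:thm:main:part1} at position $j$ follows by picking a used walk through $\arc_j$ at the relevant times.

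\paragraph{Second gap: commodities.} You also ignore the commodity dimension. $\psi_{\arc_j}=0$ abbreviates $\psi^i_{\arc_j}=0$ for \emph{all} $i\in I$. The Wardrop condition, however, only controls the private costs of the commodity of the used walk you find, and this may differ from the commodity $i$ with $\psi^i_{\arc_j}+\prices_{\arc_j}>0$. Likewise, the splice above is priced with commodity $i$'s costs along a suffix used by a possibly different commodity.

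The paper closes this with \Cref{ass: PCSep+ZeroCycles: DestCycles}.
\begin{itemize}
\item In the base case, the truncated suffix $\wa_{\ge k}$ is itself a $\dest$-cycle.
\item In the step, the assumption is applied to the $\dest$-cycle $(\wa^\dest_{<j},\hat\wa_{\ge k})$ at matched times $\tilde t$ with $\arr_{\wa^\dest,j}(u,\tilde t)=\arr_{\wa,k}(u,t)$. Its first part has zero cost by \eqref{eq:thm:main:part1} for smaller indices, which is why the induction must deliver both statements together.
\end{itemize}
Your proposal never uses this assumption, and without it the all-commodities conclusion can fail. For example, a $\dest$-cycle could carry commodity~1 at zero private cost while commodity~2 has positive private cost on it.
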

  		\begin{proofClaim}
  			
  			Consider an arbitrary $\wa^\dest$ and $\mathfrak T$ as described.
  			We prove the claim via an induction over $j$ with the following (stronger) induction claim:
  			\begin{proofbyinduction}
  				\inductionclaim For every $j\leq m$, $i \in I$ and every walk $\wa \in \Routes_i$ with $\wa[j_\wa] = \arc_j$ for some $j_\wa \leq |\wa|$, the implication 
  				\begin{align}\label{eq: ImplicationSubclaim:thm:main:part1}
  					h^*_\wa(t) > 0 \implies \psi_{\wa[j']}(\arr_{\wa,j'}(u,t)) + \prices_{\wa[j']}(\arr_{\wa,j'}(u,t)) = 0 \text{ for all } j'\in\{j_\wa,\ldots,|\wa|\} 
  				\end{align}
  				holds
  				for almost every  $t \in \arr_{\wa,j_\wa}(u,\cdot)^{-1}\big(\arr_{\wa^\dest,j}(u,\cdot)(\mathfrak T)\big)$. 
  			\end{proofbyinduction} 
  				
  			We first show that if the induction claim holds for some fixed $j \leq m$, then so does \eqref{eq:thm:main:part1} (for the same~$j$). Hence, the induction claim (for all $j \leq m$) implies the statement of \Cref{claim:thm:main:part1}. Moreover, in the proof of the induction step, we will be able to use not only the induction claim itself but also~\eqref{eq:thm:main:part1} for smaller~$j$.
  			
  			Assume for the sake of a contradiction that, for some fixed $j \leq m$, the induction claim holds while \eqref{eq:thm:main:part1} does not. Then, there exists $i \in I$,   $\mathfrak D^>_j \subseteq \mathfrak T, \leb(\mathfrak D^>_j )> 0$ with $\psi^i_{\arc_j}(\arr_{\wa^\dest,j}(u,t))+ \prices_{\arc_j}(\arr_{\wa^\dest,j}(u,t)) >  0$ for all $t \in \mathfrak D^>_j$. 
  			Define $\mathfrak T_j \coloneq  \arr_{\wa^\dest,j}(u,\cdot)(\mathfrak D^>_j)$. 
  			Since $u_{\arc_j}(\arr_{\wa^\dest,j}(u,t))  >0$ for almost all $ t\in \mathfrak T\supseteq \mathfrak D_j^>$ by assumption, 
  			we also have  $u_{\arc_j}(t)  >0$ for almost all $ t\in \mathfrak T_j$ by $\arr_{\wa^\dest,j}(u,\cdot)$ having Lusin's property. 
  			Then, \Cref{lem: Relations:h>0u>0:u>0ExistsCountableM} guarantees the existence of a walk $\wa \in \Routes, {j_\wa} \leq |\wa|$ with $\wa[{j_\wa}] = \arc_j$ and   ${\mathfrak D^\wa}\subseteq \arr_{\wa,{j_\wa}}(u,\cdot)^{-1}(\mathfrak T_j) $ with $\leb( {\mathfrak D^\wa}) > 0$ and 
  			$h^*_\wa(t)> 0$ for almost every $t \in \mathfrak D^\wa$. 
  			But this implies, by the induction claim, that for almost all $t \in {\mathfrak D^\wa}$, we have  
  			\begin{align*}
  				0 = \psi_{\wa[{{j_\wa}}]}(\arr_{\wa,{j_\wa}}(u,t)) + \prices_{\wa[j_\wa]}(\arr_{\wa,j_\wa}(u,t))= \psi_{\arc_j}(\arr_{\wa,{j_\wa}}(u,t)) +   \prices_{\arc_j}(\arr_{\wa,j_\wa}(u,t)) .
  			\end{align*}
  			Yet, $\arr_{\wa,j_\wa}(u,t) \in \mathfrak T_j$ and by definition of $\mathfrak T_j$ and $\mathfrak D_j^>$, we have 
  			$\psi^i_{\arc_j}(t')+ \prices_{\arc_j}(t') >  0$ for all $t' \in \mathfrak T_j$. 
  			This yields a contradiction as   $\mathfrak D^\wa \neq \emptyset$ by $\leb(\mathfrak D^\wa)> 0$.
  			
  			\begin{proofbyinduction}  				
  				\basecase{$j=1$}
  				We start by showing that the desired statement holds with $\psi^i$ instead of $\psi$ on the right-hand side of \eqref{eq: ImplicationSubclaim:thm:main:part1}. 
  				
  				For this, assume for the sake of a contradiction that there exists $i \in I$ and $\wa=(\hat{\wa},i) \in \Routes_i$, $j^\star \in\{{j_\wa},\ldots,|\wa|\}$ and $\mathfrak D \subseteq\arr_{\wa,j_\wa}(u,\cdot)^{-1}\big(\arr_{\wa^\dest,1}(u,\cdot)(\mathfrak T)\big) = \arr_{\wa,{j_\wa}}(u,\cdot)^{-1}(\mathfrak T), \leb(\mathfrak D)>0$ with $h^*_\wa(t) > 0$ and $\psi^i_{\wa[j^\star]}(\arr_{\wa,j^\star}(u,t)) +  \prices_{\wa[j^\star]}(\arr_{\wa,j^\star}(u,t))> 0$ for almost all $t \in \mathfrak D$.  
  				Define $\tilde{\wa}\coloneq  \wa_{<j_\wa} \in \Routes$ where the latter is indeed an $s,\dest$-walk due to $\wa[j_w] =\wa^\dest[1] \in \delta^+(\dest)$ by $\wa^\dest \in \hat{\Routes}^\dest$. By the nonnegativity of $\psi^i(\cdot)$ and $\prices$, we get 
  				\begin{align*}
  					\wttime_\wa(u,t) +\Pf^{\prices}_\wa(u,t) &\geq  \wttime_{\tilde{\wa}}(u,t)+\Pf^{\prices}_{\tilde{\wa}}(u,t) + \psi^i_{\wa[j^\star]}(\arr_{\wa,j^\star}(u,t)) +  \prices_{\wa[j^\star]}(\arr_{\wa,j^\star}(u,t)) \\
  					&> \wttime_{\tilde{\wa}}(u,t)+\Pf^{\prices}_{\tilde{\wa}}(u,t) 
  				\end{align*}
  				for almost all $t \in \mathfrak D$  which contradicts that $h^*$ is a $\prices$-DUE. 
  				
  				Hence, we can conclude that the desired statement holds with $\psi^i$ instead of $\psi$ on the right-hand side of \eqref{eq: ImplicationSubclaim:thm:main:part1}. 
  				
  				From this, we get for   an arbitrary representative $h^*$,  $i \in I$ and  walk $\wa=(\hat{\wa},i) \in \Routes$ with $\wa[j_\wa] = \arc_j=\arc_1$ for some $j_\wa \leq |\wa|$ that $\wttime_{\wa_{\geq j_\wa}}(u,\arr_{\wa,j_\wa}(u,t)) + \Pf^\prices_{\wa_{\geq j_\wa}}(u,\arr_{\wa,j_\wa}(u,t))= 0$ holds for almost all  $t \in \arr_{\wa,j_\wa}(u,\cdot)^{-1}(\mathfrak T)\cap \{t \in \hori \mid h^*_\wa(t)>0\}$. 
  				Note that $\wa_{\geq j_\wa}=(\hat{\wa}_{\geq j_\wa},i)$ is a commodity-typed walk corresponding to commodity $i$. Subsequently, 
  				$\wttime_{\wa_{\geq j_\wa}}(u,\arr_{\wa,j_\wa}(u,t)) $ refers to the private costs of commodity $i$. However, 
  				since $\wa_{\geq j_\wa} $ is a $\dest$-cycle by $\wa[j_\wa] =\wa^\dest[1] \in \edgesFrom{\dest}$ and $\wa \in \Routes_i$ being an \stwalk[\source_i][\dest], we get by  \Cref{ass: PCSep+ZeroCycles: DestCycles} that the private costs for all commodities on the walk is zero and subsequently: 
  				$\wttime_{\hat{\wa}_{\geq j_\wa}}(u,\arr_{\wa,j_\wa}(u,t)) +\Pf^\prices_{\wa_{\geq j_\wa}}(u,\arr_{\wa,j_\wa}(u,t))= 0$ holds for almost all  $t \in \arr_{\wa,j_\wa}(u,\cdot)^{-1}(\mathfrak T) \cap \{t \in \hori \mid h^*_\wa(t)>0\}$.  From this,  the desired implication follows immediately by the nonnegativity of $\psi$ and $\prices$.
  				
  				
  				\begin{figure}
  					\centering
  					\BigPicture[1]{
  						\begin{tikzpicture}
  							\coordinate(s1)at(0,2.5);
  							\coordinate(s2)at(0,0);
  							\coordinate(v0)at(10,2.5);
  							\coordinate(v1)at(5,2.5);
  							\coordinate(v2)at(5,0);
  							\coordinate(v3)at(5,-2.5);
  							\coordinate(v4)at(9,-3.5);
  							\coordinate(v5)at(11,-3.5);
  							\coordinate(d)at(10,0);
  							
  							\node[namedVertexF](d)at(d){$\dest$};
  							
  							\draw[colA,->,line width=2,rounded corners=6](s1)to node[above]{$\wa^{j-1}_{<\rho(j-1)}$}($(v1)+(.1,0)$)to node[right]{$\wa^{j-1}[\rho(j-1)]$}($(v2)+(.1,0)$)to node[above]{$\wa^{j-1}_{>\rho(j-1)}$}(d);
  							\draw[colB,->,line width=2,rounded corners=6](s2)to node[above]{$\wa^j_{<\rho(j)}$}($(v2)+(.1,0)$)to node[right]{$\wa^j[\rho(j)]$}($(v3)+(.1,0)$)to[out=0,in=-120]node[above, yshift=0.1cm]{$\wa^j_{>\rho(j)}$}(d);
  							
  							\node[namedVertexF](s1)at(s1){$\source_1$};
  							\node[namedVertexF](s2)at(s2){$\source_2$};
  							\node[vertexF](v0)at(v0){};
  							\node[vertexF](v1)at(v1){};
  							\node[vertexF](v2)at(v2){};
  							\node[vertexF](v3)at(v3){};
  							\node[vertexF](v4)at(v4){};
  							\node[vertexF](v5)at(v5){};
  							
  							\draw[edge](d)--node[right]{$\arc_1$}(v0);
  							\draw[edge,dashed](v0)to[out=110,in=70] node[above]{$\wa^d_{>1,<j-1}$}(v1);
  							\draw[edge](v1)--node[left]{$\arc_{j-1}$}(v2);
  							\draw[edge](v2)--node[left]{$\arc_{j}$}(v3);
  							\draw[edge](v4)--node[below]{$\arc_m$}(v5);
  							\draw[edge,dashed](v3)to[out=-70,in=180] node[below]{$\wa^d_{>j,<m}$}(v4);
  							
  							\node[right=.1 of d]{$\mathfrak T$};
  							\node[left=.1 of s2]{$\mathfrak D \subseteq \mathfrak D_j$};
  							\node[left=.1 of s1]{$\mathfrak D_{j-1}$};
  							\node[right=.1 of v1]{$\mathfrak T_j \supseteq \mathfrak T_{j-1}$};
  						\end{tikzpicture}
  					}
  					\caption{A depiction for the situation considered in the induction step in the proof of \Cref{claim:thm:main:part1}: The walk $\wa^d=(\arc_1,\dots,\arc_m)$ starts at the destination~$\dest$
                    and for some set of times $\mathfrak T$ any of its edges $j$ is used by $u$ during $\arr_{\wa^d,j}(\mathfrak T)$. 
                    Our induction hypothesis states that for the $(j-1)$-th edge (and all previous ones) any (used) \stwalk[\source_i][\dest] passing through this edge at these time (like the red walk~$\wa^{j-1}$ here) has zero total cost for its suffix starting with this edge. The induction step is then to show that the same also holds for any such walk passing through edge~$\arc_j$, e.g.\ the blue walk~$\wa^j$ here. For this, we use the fact that the given flow is a $\prices$-DUE and, hence, switching from using $\wa^j$ to using $\wa^{j-1}$ after the node between edges $\arc_{j-1}$ and $\arc_j$ cannot decrease the overall cost. Hence, the respective suffix of walk~$\wa^j$ also has at most total cost zero. \newline 
                    In the figure, the sets $\mathfrak D,\mathfrak D_j,\mathfrak D_{j-1},\mathfrak T_{j-1},\mathfrak T_{j}$ denote sets of departure or arrival times used during the proof of the induction step and are placed next to their respective nodes.}
  					\label{fig:thm:main:part1:induction}
  				\end{figure}
  				
  				\inductionstep{$j-1 \to j$ for $j\geq 2$} 
  				Assume we have shown the induction claim up to $j-1$. 
  				
  				We start again by showing that the desired statement holds with $\psi^i$ instead of $\psi$ on the right-hand side of \eqref{eq: ImplicationSubclaim:thm:main:part1}.   
  				For this, assume for the sake of a contradiction that there exists $i\in I$, $\wa^j=(\hat{\wa}^j,i) \in \Routes_i$, $\ain(j) \leq|\wa^j|$ with $\wa^j[{\ain(j)}] = \arc_j$, $j^\star \in\{{{\ain(j)}},\ldots,|{\wa^j}|\}$ and $\mathfrak D_j \subseteq \arr_{\wa,\ain(j)}(u,\cdot)^{-1}\big(\arr_{\wa^\dest,j}(u,\cdot)(\mathfrak T)\big) , \leb(\mathfrak D_j )>0$ with\footnote{For the sake if readability, we write here $\ain(j)$ instead of $j_{\wa^j}$.} 
  				\begin{align}\label{eq:thm:main:part1induction1}
  					h^*_{\wa^j}(t) > 0 \text{ and } \psi^i_{{\wa^j}[{j^\star}]}(\arr_{\wa^j,j^\star}(u,t))  + \prices_{\wa^j[j^\star]}(\arr_{\wa^j,j^\star}(u,t))  > 0 \text{ for almost all } t \in \mathfrak D_j.
  				\end{align} 
  				We choose a representative of $h^*$ such that the inequalities in \eqref{eq:thm:main:part1induction1} hold for all $t \in \mathfrak D_j$. 
  				
  				By definition of $\mathfrak T$, we have 
  				$u_{\arc_{j-1}}(\arr_{\wa^\dest,j-1}(u,t))  >0$ for almost all 
  				$ t\in \mathfrak T$. In particular, 
  				by $\arr_{\wa^\dest,j-1}(u,\cdot)$ having Lusin's property (cf.~\Cref{lem: PropAbsCon:Lus}), 
  				we have 
  				$u_{\arc_{j-1}}(t)  >0$ for almost all 
  				$t \in \arr_{\wa^\dest,j-1}(u,\cdot)(\mathfrak T)$. Now consider the subset  
  				\begin{align*}
  					\mathfrak T_j\coloneq \exit_{\wa^\dest[j-1]}(u,\cdot)^{-1}\big(\arr_{{\wa^j,{\ain(j)}}}(u,\cdot)(\mathfrak D_j)\big) \cap \arr_{\wa^\dest,j-1}(u,\cdot)(\mathfrak T).
  				\end{align*}
  				\begin{subclaim}\label{subclaim: SetEqual}
  					The equality $\exit_{\wa^\dest[j-1]}(u,\cdot)(\mathfrak T_j) = \arr_{{\wa^j,{\ain(j)}}}(u,\cdot)(\mathfrak D_j)$ and, subsequently,
  					$\leb(\mathfrak T_j)>0$ holds. 
  				\end{subclaim}
  				\begin{proofClaim}
  					For the claimed equality, note that  $\subseteq$ is an immediate consequence of the definition of $\mathfrak T_j \subseteq \exit_{\wa^\dest[j-1]}(u,\cdot)^{-1}\big(\arr_{{\wa^j,{\ain(j)}}}(u,\cdot)(\mathfrak D_j)\big)$. 
  					For the reverse inclusion $\supseteq$, observe that by definition of $\mathfrak D_j$, we have $\arr_{{\wa^j,{\ain(j)}}}(u,\cdot)(\mathfrak D_j) \subseteq \arr_{\wa^\dest,j}(u,\cdot)(\mathfrak T)=\exit_{\wa^\dest[j-1]}(u,\cdot)\big(\arr_{\wa^\dest,j-1}(u,\cdot)(\mathfrak T)\big)$ from which the claimed inclusion follows by definition of $\mathfrak T_j$. 
  					
  					Now observe that $\arr_{{\wa^j,{\ain(j)}}}(u,\cdot)(\mathfrak D_j)$ is not a null set 
  					due to $h^*_{\wa^j}(t) > 0,t \in \mathfrak D_j $ and $\leb(\mathfrak D_j)>0$, \Cref{lem: elluExistenceProperties:ExistenceInducedFlowOnJthEdge} and the existence of $\ell^u(h^*) =u$. Hence, by $\exit_{\wa^\dest[j-1]}(u,\cdot)$ fulfilling Lusin's property (cf.~\Cref{lem: PropAbsCon:Lus}), it follows that   $\leb(\mathfrak T_j)>0$  has to hold.
  				\end{proofClaim}
  				
  				As $u_{\arc_{j-1}}(t)  >0$ for almost all 
  				$t \in\mathfrak T_j \subseteq \arr_{\wa^\dest,j-1}(u,\cdot)(\mathfrak T)$ by definition of $\mathfrak T$, \Cref{lem: Relations:h>0u>0:u>0ExistsCountableM} implies the existence of 
  				an $i' \in I$, $\wa^{j-1}=(\hat{\wa}^{j-1},i') \in \Routes$, ${\ain(j-1)}\leq|\wa^{j-1}|$ with $\wa^{j-1} [{\ain(j-1)}] = \arc_{j-1}$ and a 
  				non-null set $\mathfrak D_{j-1} \subseteq \arr_{\wa^{j-1},\ain(j-1)}(u,\cdot)^{-1}(\mathfrak T_j)$ such that 
  				$h^*_{\wa^{j-1}}(t)>0 $ for a.e.~$t \in \mathfrak D_{j-1}$. 
  				By induction hypothesis, we have for almost all $t \in \mathfrak D_{j-1}$:
  				\begin{align}\label{eq: DefDj-1}
			\psi_{\wa^{j-1}[j']}(\arr_{\wa,j'}(u,t)) + \prices_{\wa^{j-1}[j']}(\arr_{\wa,j'}(u,t)) = 0 \quad \quad \text{ for all } j'\in\Set{\ain(j-1),\ldots,\abs{\wa^{j-1}}}.
  				\end{align}
   
  				By choosing $\mathfrak D_{j-1}$ suitably, we may assume \wlg that this holds for all $t \in \mathfrak D_{j-1}$. 
  				Define $ \mathfrak T_{j-1} \coloneq  \arr_{\wa^{j-1},\ain(j-1)}(u,\cdot) (\mathfrak D_{j-1})\subseteq \mathfrak T_j$ which is  not a null set by \Cref{lem: elluExistenceProperties}, the existence of $\ell^u(h^*)$,  $h^*_{\wa^{j-1}}(t)>0$ for a.e~$t\in \mathfrak{D}_{j-1}$ and $\leb(\mathfrak{D}_{j-1})$. 
  				Then, by \eqref{eq: DefDj-1}, we have 
  				\begin{align}\label{eq:thm:main:part1induction2}
  					\wttime_{\hat{\wa}^{j-1}_{\geq \ain(j-1)}}(u,t) + \Pf^{\prices}_{\hat{\wa}^{j-1}_{\geq \ain(j-1)}}(u,t) = 0 \text{ for all } t \in \mathfrak T_{j-1}.
  				\end{align}
  				Furthermore, by  $\leb(\mathfrak T_{j-1}) > 0$ we have that the
  				set $\mathfrak D \coloneq  \mathfrak D_j \cap  \arr_{\wa^j,\ain(j)}(u,\cdot)^{-1} (\exit_{\wa^\dest[j-1]}(u,\cdot)(\mathfrak T_{j-1}))$ is not a null set since 
  				$\arr_{\wa^j,\ain(j)}(u,\cdot)$ and $\exit_{\wa^\dest[j-1]}(u,\cdot)$ both have Lusin's property and we have 
  				$\arr_{\wa^j,\ain(j)}(u,\cdot)(\mathfrak D) = \exit_{\wa^\dest[j-1]}(u,\cdot)(\mathfrak T_{j-1})$. 
  				Note for the latter equality that $\arr_{\wa^j,\ain(j)}(u,\cdot) (\mathfrak D_j) =\exit_{\wa^\dest[j-1]}(u,\cdot)(\mathfrak T_j) \supseteq \exit_{\wa^\dest[j-1]}(u,\cdot)(\mathfrak T_{j-1})$ by \Cref{subclaim: SetEqual}. 
                
  				Define $\tilde{\wa}\coloneq (\hat{\wa}^{j}_{<{\ain(j)}}, \hat{\wa}^{j-1}_{>\ain(j-1)},i) \in \Routes$ 
  				where the latter is indeed an \stwalk[\source_i][\dest] as $\wa^j$ is an \stwalk[\source_i][\dest], 
  				$\hat{\wa}^{j-1}$ is an 
  				\stwalk[\source_{i'}][\dest], 
  				 and the end node of 
  				$\hat{\wa}^j_{<\rho(j)}$ 
  				is the tail of 
  				$\hat{\wa}^j[\rho(j)] = \arc_j$ which, in turn, is the head of 
  				$\arc_{j-1} = 
  				 \hat{\wa}^{j-1}[\rho(j-1)]$, that is, the starting node of 
  				  $\hat{\wa}^{j-1}_{>\rho(j-1)}$. 
  				Now observe for any $t \in \mathfrak D$  
  				\begin{align*}
  					&\wttime_{\tilde{\wa}}(u,t) + \Pf^{\prices}_{\tilde{\wa}}(u,t) \\
  					&\quad\quad= \sum_{j'<\ain(j)}  \psi^i_{\tilde{\wa}[j']}(\arr_{\tilde{\wa},j'}(u,t)) + \prices_{\tilde{\wa}[j']}(\arr_{\tilde{\wa},j'}(u,t))\\
  					&\quad\quad\quad\quad+\sum_{j'\geq \ain(j)}\psi^i_{\tilde{\wa}[j']}(\arr_{\tilde{\wa},j'}(u,t)) + \prices_{\tilde{\wa}[j']}(\arr_{\tilde{\wa},j'}(u,t))  \\
  					&\quad\quad=\wttime_{{\wa}^j_{<\ain(j)}}(u,t) + \Pf^{\prices}_{{\wa}^j_{<\ain(j)}}(u,t) \\
  					&\quad\quad\quad\quad+\wttime_{(\hat{\wa}^{j-1}_{> \ain(j-1)},i)}(u,\arr_{\tilde{\wa},\ain(j)}(u,t)) + \Pf^{\prices}_{\hat{\wa}^{j-1}_{> \ain(j-1)}}(u,\arr_{\tilde{\wa},\ain(j)}(u,t))  \\
  					&\quad\quad\symoverset{1}{=}\wttime_{{\wa}^j_{<\ain(j)}}(u,t) + \Pf^{\prices}_{{\wa}^j_{<\ain(j)}}(u,t) + 0 \\
  					&\quad\quad\symoverset{2}{<} \wttime_{{\wa}^j_{<\ain(j)}}(u,t) + \Pf^{\prices}_{{\wa}^j_{<\ain(j)}}(u,t) + \psi^i_{{\wa^j}[{j^\star}]}(\arr_{\wa^j,j^\star}(u,t))  + \prices_{\wa^j[j^\star]}(\arr_{\wa^j,j^\star}(u,t))\\
  					&\quad\quad\leq \wttime_{{\wa}^j}(u,t) + \Pf^{\prices}_{{\wa}^j}(u,t) 
  				\end{align*}
  				where we used $\arr_{\tilde{\wa},\ain(j)}(u,t) = \arr_{\wa^j,\ain(j)}(u,t) \in \exit_{\wa^\dest[j-1]}(u,\cdot)(\mathfrak T_{j-1})$ and \eqref{eq:thm:main:part1induction2} at~\refsym{1} and \eqref{eq:thm:main:part1induction1} together with $\mathfrak D \subseteq \mathfrak D_{j}$ for the strict inequality \refsym{2}. 
  				
  				Since, by definition,  $\mathfrak D$ has positive measure and we also have $h^*_{\wa^j}(t)>0$ for almost all $t \in \mathfrak D \subseteq \mathfrak D_j$, this gives us the desired  contradiction as the above demonstrates that $h^*$ is not a $\prices$-DUE. 
  				
  				Hence, we can conclude that the desired statement holds with $\psi^i$ instead of $\psi$ on the right-hand side of \eqref{eq: ImplicationSubclaim:thm:main:part1}. 
  				
  				From this, we get for   an arbitrary representative of $h^*$,  $i \in I$ and  walk $\wa=(\hat{\wa},i) \in \Routes$ with $\wa[j_\wa] = \arc_j$ for some $j_\wa \leq |\wa|$ that 
  				\begin{equation}\label{eq: DefD_j}
  						\begin{aligned} 
  						\wttime_{\wa_{\geq j_\wa}}(u,&\arr_{\wa,j_\wa}(u,t)) +\Pf^\prices_{\wa_{\geq j_\wa}}(u,\arr_{\wa,j_\wa}(u,t)) = 0 \text{ for a.e.\ }t \in \mathfrak D_j \text{ where }\\ \mathfrak D_j\coloneq &\arr_{\wa,j_\wa}(u,\cdot)^{-1}\big(\arr_{\wa^\dest,j}(u,\cdot)(\mathfrak T)\big)\cap \Set{t \in \hori \mid h^*_\wa(t)>0}.
  					\end{aligned}
  				\end{equation}
  				   We choose a suitable representative of $h^*$ such that the latter property holds for all $t \in \mathfrak D_j$. 
  				Note that $\wa_{\geq j_\wa}=(\hat{\wa}_{\geq j_\wa},i)$ is a commodity-typed walk corresponding to commodity $i$. Subsequently, 
  				$\wttime_{\wa_{\geq j_\wa}}(u,\arr_{\wa,j_\wa}(u,t)) $ refers to the private costs of commodity $i$. We will argue in the following that the private costs of all commodities need to be zero by applying \Cref{ass: PCSep+ZeroCycles: DestCycles} for  $c^\dest\coloneq (\wa^\dest_{< j},\hat{\wa}_{\geq j_\wa})$. Remark that the latter is a $\dest$-cycle since $\wa^\dest[1] \in \edgesFrom{\dest}$,     $\wa \in \Routes_i$ is an \stwalk[\source_i][\dest] and $\wa^\dest[j] = \arc_j = \wa[j_\wa] $.  
  				By induction hypothesis together with the argument at the start of the induction proof, we know that \eqref{eq:thm:main:part1} holds for all $j'<j$. 
  				In particular, we know that $\wttime_{\wa^\dest_{< j}}(u,t) = 0$ for almost all $t \in \mathfrak T$ and subsequently also for almost all $t \in \mathfrak D\coloneq  \arr_{\wa^\dest,j}(u,\cdot)^{-1}\big(\arr_{\wa,j_\wa}(u,\cdot)(\mathfrak D_j)\big) \cap\mathfrak T$. 
  				Since the statement we want to prove is about a property required to only hold almost everywhere on $\mathfrak T$, it is enough to show the statement for a set that differs from $\mathfrak T$ in a null set. Hence, 
  				we can assume \wlg that $\wttime_{\wa^\dest_{< j}}(u,t) = 0$ for all $t \in \mathfrak D$. 
  				Now observe that $\arr_{\wa^\dest,j}(u,\cdot)(\mathfrak D)= \arr_{\wa,j_\wa}(u,\cdot)(\mathfrak D_j)$ by 
  				$\arr_{\wa,j_\wa}(u,\cdot)(\mathfrak D_j) \subseteq \arr_{\wa^\dest,j}(u,\cdot)(\mathfrak T)$ due to the definition of $\mathfrak D_j$. 
  				Hence, we get that  for  all $t \in \mathfrak D_j$, 
  				there exists $\tilde{t}\in \mathfrak D$ with $\arr_{\wa^\dest,j}(u,\tilde{t}) = \arr_{\wa,j_\wa}(u,t)$ and 
  				\begin{align*}
  					\wttime_{(c^\dest,i)}(u,\tilde{t}) = \wttime_{(\wa^\dest_{< j},i)}(u,\tilde{t}) +  \wttime_{\wa_{\geq j_\wa}}(u, \arr_{\wa^\dest,j}(u,\tilde{t}))= \wttime_{(\wa^\dest_{< j},i)}(u,\tilde{t}) +  \wttime_{\wa_{\geq j_\wa}}(u, \arr_{\wa,j_\wa}(u,t)) =0,
  				\end{align*}
  				 where we used  $\tilde{t}\in \mathfrak D$ for $\wttime_{(\wa^\dest_{< j},i)}(u,\tilde{t})= 0$ and \eqref{eq: DefD_j} with $t \in \mathfrak D_j$ for $\wttime_{\wa_{\geq j_\wa}}(u, \arr_{\wa,j_\wa}(u,t)) = 0$.
  				Now \Cref{ass: PCSep+ZeroCycles: DestCycles} implies that 
  				$\wttime_{c ^\dest}(u,\tilde{t}) = 0$ holds, implying in particular that we have 
  				$\wttime_{\hat{\wa}_{\geq j_\wa}}(u, \arr_{\wa,j_\wa}(u,t)) = 0$.   
  				This, in combination with $\Pf^\prices_{\wa_{\geq j_\wa}}(u,\arr_{\wa,j_\wa}(u,t)) = 0$ for every $t \in \mathfrak D_j$ (by \eqref{eq: DefD_j}), 
  				shows the desired implication in \eqref{eq: ImplicationSubclaim:thm:main:part1}, which finishes the proof of the induction.
  			\end{proofbyinduction} 
  			As argued at the start of the induction, this now directly implies that \Cref{claim:thm:main:part1} holds as well.
  		\end{proofClaim}

  		In order to finally deduce \ref{thm: mainSingleSink: ReachableEdgesHaveZeroD}
  		from this, we argue as follows: 
  		Let $\mathfrak T^*$ be the set where the stated property is not fulfilled. 
  		We argue in the following that $\mathfrak T^*$ is a null set. 
  		Define for any $\wa \in \hat{\Routes}^\dest$ the set $\mathfrak T^*_\wa$ as the set where the stated property in \Cref{thm: mainSingleSink: ReachableEdgesHaveZeroD} is not fulfilled \wrt $\wa$ and its last edge, i.e.
  		\begin{align*}
  			\mathfrak T^*_\wa\coloneq  \Set{t \in \hori |  u_{\wa[j']}(\arr_{\wa,j'}(u,t))>0,j'\leq |\wa| \text{ but } \psi_{\wa[|\wa|]}(u,\arr_{\wa,\abs{\wa}}(u,t)) \neq 0  }. 
  		\end{align*}
  		Note that the latter is measurable due to the measurability of $u$ and $\arr$.  By $\hat{\Routes}^\dest$ being countable and  
  		$\mathfrak T^* = \bigcup_{\wa \in \hat{\Routes}^\dest} \mathfrak T^*_\wa $, the latter is also measurable and  it is sufficient to show that  $\mathfrak T^*_\wa$ is a null set for every $\wa \in \hat{\Routes}^\dest$. 
  		This is however a direct consequence of~\Cref{claim:thm:main:part1} for $  \wa^\dest= \wa$ and $\mathfrak T= \mathfrak T^*_\wa$.  
  		Thus, 
  		the proof of 
  		$\ref{thm: mainSingleSink: impl}\Rightarrow~\ref{thm: mainSingleSink: ReachableEdgesHaveZeroD}$ is complete.

  		\proofitem{\ref{thm: mainSingleSink: ReachableEdgesHaveZeroD}$\Rightarrow$\ref{thm: mainSingleSink: UsedDCyclesHaveZeroD}}
  		Let $\mathfrak T^*,\mathfrak T^*_\wa, \wa \in \hat{\Routes}^\dest$ be as above. By \ref{thm: mainSingleSink: ReachableEdgesHaveZeroD}, $\mathfrak T^*$ is a null set. 
  		We define similarly  
  		$\mathfrak T'$ to be the set, where the stated property in~\ref{thm: mainSingleSink: UsedDCyclesHaveZeroD} is not fulfilled. 
  		We argue in the following that $\mathfrak T'$ is a null set.  
  		Define for any $\dest$-cycle $c$, the set $\mathfrak T'_c$ as the set where the stated property in~\ref{thm: mainSingleSink: UsedDCyclesHaveZeroD} is not fulfilled \wrt $c$, i.e.
  		\begin{align*}
  			\mathfrak T'_c \coloneq  \Big\{ t \in \hori \mid u_{c[j']}(\arr_{c,j'}(u,t)) >0, j' \leq |c|  \text{ but } \wttime_{c}(u,t) \neq 0   \Big\}
  		\end{align*}
  		Then,   
  		$\mathfrak T'$ is the union of $\mathfrak T'_c$ over all $\dest$-cycles $c$. Now 
  		for every $t \in \mathfrak T'_c$, denote by $j(t)$ the first index~$j$ such that $\psi_{c[j]}(\arr_{c,j}(u,t))\neq0$. 
  		Then  
  		$t \in \mathfrak T^*_{c_{\leq j(t)}}$ where we remark that $c_{\leq j(t)} \in \hat{\Routes}^\dest$ is a $\dest$-walk. This shows that $\mathfrak T'\subseteq \mathfrak T^*$ and since the latter is a null set by assumption, the claim follows.   
  		
  		\proofitem{\ref{thm: mainSingleSink: UsedDCyclesHaveZeroD}$\Rightarrow$\ref{thm: mainSingleSink: Optimalh}}
  		Fix an arbitrary representative of $u$. 
  		Let $\check{h} \in \wir$ be an optimal solution  for \eqref{opt: Master} with corresponding aggregated edge flow $\check{\g}= \ell^u(\check{h})$. Note that $\check{h}$ exists due to the assumption that the master problem admits strong duality. 
  		Since $u$ is also induced by some $h' \in \wir$ and admits an outflow rate at $\dest$ due to \Cref{ass: u}, 
  		\Cref{lem: DifferenceGeneral} is applicable, showing that 
  		$u$ and $\check{\g}$ only differ in flow on cycles of travel time zero and on $\dest$-cycles, i.e.~there exist $\hat{h}_{c}^0 \in L_+(\hori), c \in \SimpCyc$ and $\hat{h}_{c^\dest} \in L_+(\hori),c^\dest\in\cup \DestCyc$ with 
  		\begin{align*}
  			u -\ell^u(\check{h}) = \sum_{c^\dest \in \DestCyc} \ell_{{c^\dest}}(\hat{h}_{c^\dest}) + \sum_{c \in \SimpCyc}\ell_{c}(\hat{h}^0_c)
  		\end{align*}
  		and for all $c\in \SimpCyc$ and almost all $t\in \hori$ we have that  $\hat{h}_c^0 (t) >0\implies \trav_{\arc}(t) = 0$ for all $\arc \in c$.
  		
  		The zero-cycles have zero private costs by \Cref{ass: PCSep+ZeroCycles: ZeroCycles}, while the $\dest$-cycles do so by the assumption of \Cref{thm: mainSingleSink: UsedDCyclesHaveZeroD}. Thus, the task becomes now to add these cycles to the optimal solution~$\check{h}$. We start by adding the $\dest$-cycles. 
  		\begin{claim}
  			There exists $\tilde{h}\in \wir$ with corresponding  aggregated edge flow $\tilde{\g}$ that is optimal   for \eqref{opt: Master} with $u$ and $\tilde{\g}$ only differing in flow on zero-cycles.
  		\end{claim}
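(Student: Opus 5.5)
The plan is to add the $\dest$-cycle inflow rates $\hat h_{c^\dest}$ from the decomposition $u-\ell^u(\check h)=\sum_{c^\dest}\ell_{c^\dest}(\hat h_{c^\dest})+\sum_c\ell_c(\hat h^0_c)$ to walks already used by $\check h$. Each particle of $\check h$ reaches $\dest$ at some time, and we let it continue around a $\dest$-cycle. By \Cref{lem: DifferenceGeneral}, $\sum_{c^\dest}\hat h_{c^\dest}\le -\inflow_\dest^{\check\g}$. Hence, at almost every time, the total rate at which $\dest$-cycle flow must start is at most the rate at which $\check h$-flow arrives at $\dest$, i.e.\ the network outflow rate $\sum_{\wa}\ell^u_{\wa,\abs{\wa}+1}(\check h_\wa)$. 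This makes the splicing possible.

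Concretely, I would first split the $\dest$-cycle inflow proportionally to the arriving walk flow. Set $\rho:=\sum_{\wa\in\Routes}\ell^u_{\wa,\abs{\wa}+1}(\check h_\wa)$ and $\lambda_{c^\dest}:=\hat h_{c^\dest}/\rho$ on $\{\rho>0\}$ and $0$ elsewhere, so that $\sum_{c^\dest}\lambda_{c^\dest}\le 1$. For each $\wa\in\Routes_i$ and $c^\dest$, I would move the portion $\check h_\wa\cdot(\lambda_{c^\dest}\circ\arr_{\wa,\abs{\wa}+1}(u,\cdot))$ of $\check h_\wa$ onto the concatenated walk $(\wa,c^\dest)\in\Routes_i$ and keep the remainder on $\wa$. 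By \oref{lem: elluindi}, and more generally by the multiplicative commutation of $\ell_{\wa,j}$ with functions composed with arrival times, this portion induces exactly $\lambda_{c^\dest}\cdot\ell^u_{\wa,\abs{\wa}+1}(\check h_\wa)$ at $\dest$. Summing over $\wa$ gives $\hat h_{c^\dest}$ entering $c^\dest$, and its further loading along $c^\dest$ is $\ell_{c^\dest}(\hat h_{c^\dest})$. The new $\tilde h$ then satisfies $\sum_{\wa\in\Routes_i}\tilde h_\wa=\inflow_i$, and its edge flow equals $\check\g+\sum_{c^\dest}\ell_{c^\dest}(\hat h_{c^\dest})=u-\sum_c\ell_c(\hat h^0_c)\le u$. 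Existence of $\ell^u(\tilde h)$, i.e.\ $\tilde h\in\edom{\Routes}[u]$, should follow from \oref{lem: elluExistenceProperties}: each new walk only carries flow that already appears in the existing loadings, and the resulting sum is dominated by $u$, which lies in $L$.

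For optimality, \Cref{ass: PCSep} gives $\wttime_{(\wa,c^\dest)}(u,t)=\wttime_\wa(u,t)+\wttime_{(c^\dest,i)}(u,\arr_{\wa,\abs{\wa}+1}(u,t))$. The added term vanishes wherever $\lambda_{c^\dest}>0$. The reason is that the flow $\hat h_{c^\dest}$ satisfies $\hat h_{c^\dest}(t)>0\Rightarrow u_{c^\dest[j]}(\arr_{c^\dest,j}(u,t))>0$ for all $j$, by \oref{lem: Relations:h>0u>0:Pointwise} and $\ell_{c^\dest}(\hat h_{c^\dest})\le u$. Condition \ref{thm: mainSingleSink: UsedDCyclesHaveZeroD} then gives $\wttime_{c^\dest}=0$ almost everywhere, and \Cref{ass: PCSep+ZeroCycles: DestCycles} transfers this to every commodity. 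Finally, Lusin's $N^{-1}$-type control (via \Cref{lem:  g>0Arr'>0} or \oref{lem: elluExistenceProperties}) ensures that null sets at $\dest$ pull back to $\check h$-null sets. So the objective value of $\tilde h$ equals that of $\check h$, and $\tilde h$ is optimal. The remaining difference $u-\tilde\g=\sum_c\ell_c(\hat h_c^0)$ consists only of zero-cycle flow.

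I expect the main obstacle to be the rigorous measure-theoretic justification of the splicing. One must show that loading the rescaled portion along $(\wa,c^\dest)$ reproduces $\lambda_{c^\dest}\ell^u_{\wa,\abs{\wa}+1}(\check h_\wa)$ at $\dest$ and therefore $\ell_{c^\dest}(\hat h_{c^\dest})$ on the cycle. This needs a "commutation with multiplication by $f\circ\arr$" property generalising \oref{lem: elluindi}, which I would first prove for simple functions and then extend by monotone convergence. One must also handle countably many cycles and walks, which I would do using summability in $\seql[1][\Routes][L(\hori)^\GA]$. If this direct construction becomes unwieldy, the fallback is to invoke the pure flow decomposition results (\Cref{thm: PureFlowDecompIntuitive}, \Cref{cor: PureFlowDecomp}) on the $\dest$-cycle flow combined with the arriving flow.
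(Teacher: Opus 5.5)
Your proposal is correct and follows essentially the paper's route. Both use $\sum_{c^\dest}\hat h_{c^\dest}\le -\check\inflow_\dest=\sum_{\wa}\ell^u_{\wa,\abs{\wa}+1}(\check h_\wa)$ to distribute the $\dest$-cycle inflows among the arrivals of $\check h$ at $\dest$, pull these pieces back to departure-time inflows, reroute them onto the concatenated walks $(\wa,c^\dest)$, and invoke condition~\ref{thm: mainSingleSink: UsedDCyclesHaveZeroD} to show the objective is unchanged. The differences are only technical: the paper obtains the pull-backs $f_{\wa,c^\dest}$ with $\sum_{c^\dest}f_{\wa,c^\dest}\le\check h_\wa$ abstractly via \Cref{lem: 1to1:h-f} and \Cref{lem: ellOrderPreserving} rather than through your explicit proportional split $\check h_\wa\cdot(\lambda_{c^\dest}\circ\arr_{\wa,\abs{\wa}+1}(u,\cdot))$, and it verifies the cost through the edge-based identity of \Cref{lem: aggCostsVSwalkCosts} rather than walk-wise separability.
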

  		\begin{proofClaim}
  			\Cref{lem: DifferenceGeneral} states that $\sum_{c^\dest\in \DestCyc}\hat{h}_{c^\dest} \leq -\check{\inflow}_\dest$ where $\check{\inflow}_\dest$ is the net outflow rate at $\dest$ of $\check{\g}$. 
  			By \Cref{lem: flowconW'}, we have $\sum_{\wa \in \Routes}\ell^u_{\wa,\abs{\wa}+1}(\check{h}_\wa) = -\check{\inflow}_\dest$ where we remark that the existence of $\ell^u_{\wa,\abs{\wa}+1}(\check{h}_\wa)$ follows by the existence of $\check{\inflow}_\dest$. 
  			Thus, 
  			 $\sum_{c^\dest\in \DestCyc}\hat{h}_{c^\dest} \leq -\check{\inflow}_\dest= \sum_{\wa \in \Routes}\ell^u_{\wa,\abs{\wa}+1}(\check{h}_\wa)$ holds and 
  			we can distribute the inflows $\hat{h}_c,c \in\DestCyc$ among the inflows $\ell^u_{\wa,\abs{\wa}+1}(\check{h}_\wa),\wa \in \Routes$, leading 
  			to inflows  $\eflow_{\wa,c^\dest} \in L_+(\hori), \wa \in \Routes,c^\dest\in \DestCyc$ with $\sum_{c^\dest\in\DestCyc}\eflow_{\wa,c^\dest} \leq \ell^u_{\wa,\abs{\wa}+1}(\check{h}_\wa)$ for all $\wa \in \Routes$ and 
  			$\hat{h}_{c^\dest} = \sum_{\wa\in \Routes}\eflow_{\wa,c^\dest}$. 
  			By \Cref{lem: ellOrderPreserving,lem: 1to1:h-f} and $\sum_{c^\dest\in\DestCyc}\eflow_{\wa,c^\dest} \leq \ell^u_{\wa,\abs{\wa}+1}(\check{h}_\wa)$, it follows that there exist $\wflow_{\wa,c^\dest} \in \edom{\wa}[u], \wa \in \Routes,c^\dest\in \DestCyc$ with 
  			$\ell^u_{\wa,\abs{\wa}+1}(\wflow_{\wa,c^\dest}) = \eflow_{\wa,c^\dest}$ and
  			$\sum_{c^\dest\in\DestCyc}\wflow_{\wa,c^\dest} \leq  \check{h}_\wa$. 
  			The latter inequality immediately implies that the following definition yields nonnegative walk inflow rates:
  			\begin{align*}
  				\tilde{h}_\wa \coloneq  \check{h}_\wa - \sum_{c^\dest\in\DestCyc}\wflow_{\wa,c^\dest}+ \sum_{\substack{(\wa',c^\dest)\in \Routes\times \DestCyc:
  						\\(\wa',c^\dest) = \wa}} \wflow_{\wa',c^\dest} \text{ for all }\wa \in \Routes,
  			\end{align*}
  			where $(\wa',c^\dest)\coloneq  (\hat{\wa}',c^\dest,i)$ for any $\wa'=(\hat{\wa}',i) \in \Routes$. 
  			Moreover, we have $\tilde{h}\in \wir$ as  for any $i \in I$  we have
  			\begin{align*}
  				\sum_{\wa \in \Routes_i}\tilde{h}_\wa &=  \sum_{\wa \in \Routes_i}\Big(    \check{h}_\wa - \sum_{c^\dest\in\DestCyc}\wflow_{\wa,c^\dest}+ \sum_{\substack{(\wa',c^\dest)\in \Routes\times \DestCyc:
  						\\(\wa',c^\dest) = \wa}} \wflow_{\wa',c^\dest}\Big)  \\
  				&= \inflow_i - \sum_{(\wa,c^\dest)\in\Routes_i\times\DestCyc}\wflow_{\wa,c^\dest} +  \sum_{\wa \in \Routes_i} \sum_{\substack{(\wa',c^\dest)\in \Routes\times \DestCyc:
  						\\(\wa',c^\dest) = \wa}} \wflow_{\wa',c^\dest} = \inflow_i
  			\end{align*}
  			
  			where we used in the last equality that  $(\wa',c^\dest) \in \Routes_i$ if and only if $\wa' \in \Routes_i$.  
  			
  			Next, we consider the induced edge flow of $\tilde{h}$: \stepcounter{equation}
  			\begin{align*}
  				\ell^u(\tilde{h}) \overset{\text{\Crefshort{lem: elluExistenceProperties}}}&{=} \ell^u(\check{h}) - \sum_{\wa \in \Routes}\sum_{c^\dest\in\DestCyc}\ell^u_\wa(\wflow_{\wa,c^\dest}) +
  				\sum_{\wa \in \Routes} \sum_{\substack{(\wa',c^\dest)\in \Routes\times \DestCyc:
  						\\(\wa',c^\dest) = \wa}} \ell^u_\wa(\wflow_{\wa',c^\dest} ) \\
  				&= \ell^u(\check{h}) - \sum_{\wa \in \Routes}\sum_{c^\dest\in\DestCyc}\ell^u_\wa(\wflow_{\wa,c^\dest}) +
  				\sum_{\wa \in \Routes}  \sum_{c^\dest\in \DestCyc} \ell^u_{(\wa,c^\dest)}(\wflow_{\wa,c^\dest} ) \\
  				\overset{\text{\Crefshort{lem: elluPropagation}}}&{=}  \ell^u(\check{h}) +   \sum_{\wa \in \Routes}  \sum_{c^\dest\in \DestCyc} \ell^u_{c^\dest}(\ell^u_{\wa,\abs{\wa}+1}(\wflow_{\wa,c^\dest})) \\
  				&= \ell^u(\check{h}) +  \sum_{c^\dest\in \DestCyc}  \sum_{\wa \in \Routes}   \ell^u_{c^\dest}( \eflow_{\wa,c^\dest})\tag{\theequation}\label{eq: Claim9Equality}\\
  				\overset{\text{\Crefshort{lem: elluContinuity:SingleWalk}}}&{=} \ell^u(\check{h}) +  \sum_{c^\dest\in \DestCyc}   \ell^u_{c^\dest}\big( \sum_{\wa \in \Routes} \eflow_{\wa,c^\dest}\big) \\
  				&= \ell^u(\check{h}) +  \sum_{c^\dest\in \DestCyc}  \ell^u_{c^\dest}(\hat{h}_{c^\dest })
  			\end{align*}
  			Hence, $\ell^u(\tilde{h})$ and $u$ only differ in flow on zero-cycles and $\tilde{h}$ is in particular feasible for \eqref{opt: Master}. 
  			
  			It thus remains to observe that $\tilde{h}$ has  the same objective value for the master problem as $\check{h}$, implying particularly that $\tilde{h}$ is optimal for \eqref{opt: Master}. 
  			For this, note that the above chain of equalities remains valid up to the equality \eqref{eq: Claim9Equality}, if we consider the commodity-specific flows, that is, exchanging $\ell^u(\tilde{h}), \ell^u(\check{h}) $ with $\ell^u_{\Routes_i}(\tilde{h}^i), \ell^u_{\Routes_i}(\check{h}^i) $ as well as $\Routes$ via $\Routes_i$ for arbitrary $i \in I$. 
  			With this insight and \Cref{lem: aggCostsVSwalkCosts}, we calculate: 
  			\begin{equation}\label{eq: CostsAddDCycles}
  				\begin{aligned}
  					\dup{\wttime(u,\cdot)}{\tilde{h}}  &= \sum_{i \in I} \sum_{\wa \in \Routes_i} \dup{\wttime_\wa(u,\cdot)}{\tilde{h}_\wa} \Croverset{lem: aggCostsVSwalkCosts}{=} \sum_{i \in I} \dup{\psi^i(\cdot)}{\ell^u_{\Routes_i}(\tilde{h}^i)} \\
  					&= \sum_{i \in I} \dup{\psi^i(\cdot)}{\ell^u_{\Routes_i}(\check{h}^i)} + \sum_{c^\dest\in\DestCyc}\sum_{\wa \in \Routes_i} \dup{\psi^i(\cdot)}{ \ell^u_{c^\dest}( \eflow_{\wa,c^\dest})}\\
  					\overset{\text{\Crefshort{lem: aggCostsVSwalkCosts}}}&{=}   \dup{\wttime(u,\cdot)}{\check{h}}  +\sum_{i \in I} \sum_{c^\dest\in\DestCyc} \sum_{\wa \in \Routes_i}  \dup{ \wttime_{(c^\dest,i)}(u,\cdot)}{\eflow_{\wa,c^\dest}}\\
  					&= \dup{\wttime(u,\cdot)}{\check{h}}
  				\end{aligned}
  			\end{equation}
  			For the last equality, note that $ \ell^u_{c^\dest}( \eflow_{\wa,c^\dest}) \leq \ell^u_{c^\dest}(\hat{h}_{c^\dest }) \leq u$ and thus, by assumption of used $\dest$-cycles having zero private costs (\ref{thm: mainSingleSink: UsedDCyclesHaveZeroD}),  we have the implication $\eflow_{\wa,c^\dest}(t)>0 \Rightarrow  \wttime_{(c^\dest,i)}(u,t) = 0$ for almost all $t \in \hori$ by \Cref{lem: Relations:h>0u>0:Pointwise}. 
  		\end{proofClaim}

  		Hence, it remains to add the zero-cycles to the above constructed walk inflow rates $\tilde{h}$ resulting in the desired flow $h$. To this end,  
  		 we will define the desired walk inflow rate $h$ recursively over the commodities. Before we do so, let us sketch the recursion idea briefly:  
  		For the first commodity, we set the entire edge flow on zero-cycles induced by  $\hat{h}_c,c\in \SimpCyc$ as the edge flow that has yet to be added to $\tilde{g}$.  
  		For each commodity $i$, we then consider the edge flow on zero cycles that has yet to be added to $\tilde{g}$ and consider the flow that results by adding these zero-cycle edge flows to commodity~$i$'s flow $\tilde{g}^i$. 
  		To this flow, we then apply \Cref{cor: PureFlowDecomp} to get 
  		a maximally pure $\source_i$,$\dest$-walk decomposition 
  		and set 
  		the inflow rates $h_\wa,\wa \in \Routes_i$ of commodity~$i$ 
  		to the $\source_i$,$\dest$-walk inflow rates of this decomposition. 
  		The remaining edge flow -- which can only be induced via zero cycle inflow rates -- is then defined 
  		as the edge flow that has yet to be added for the next commodity. 
  		For the so constructed walk inflow rates $h$, we then show that they induce $u$ by using the 
  		maximality of the pure $\source_i$,$\dest$-flow decompositions.
  		Formally, this is done as follows: 
  		
  		Fix  arbitrary representatives of $\tilde{h}$ and $\hat{h}_c,c \in \SimpCyc$ and assume 
  		for simplicity  that $I$ is of the form $I = \{1,\ldots,\abs{I}\}$.  
  		Set $\hat{h}^0_c\coloneq \hat{h}_c,c \in \SimpCyc$.  
  		Consider $i \in I$ and assume that 
  		$\hat{h}^{i'}$ has been constructed for all $i'\in\{0,\ldots,i-1\}$. 
  		Set $\eflow^i\coloneq  \tilde{\g}^i + \sum_{c \in \SimpCyc}\ell^u_c(\hat{h}_c^{i-1})$ and denote by 
  		$\check{h}_{\wa}^i,\wa \in \hat{\Routes},\check{h}^i_c,c\in \SimpCyc$ the maximally pure $\source_i$,$\dest$-flow decomposition constructed 
  		in \Cref{cor: PureFlowDecomp} \wrt $\eflow^i$ and $\tilde{h}_{\wa},\wa \in \Routes_i,\hat{h}_c^{i-1},c \in \SimpCyc$ and source, \sink pair $\source_i,\dest$.
  		We then set $h_{(\hat\wa,i)} = \check{h}^i_{\hat\wa},\hat\wa \in \hat{\Routes}_{\source_i,\dest}$ and $\hat{h}_c^i \coloneq  \check{h}^i_c,c \in \SimpCyc$. 
  		Denote by $C_{\n1^i},\mathfrak T_{C_{\n1^i}}^i,\n1^i\in \capn1^i$ for all $i \in \{0,\ldots, \abs{I}\}$ 
  		the sets defined in \Cref{def: ConnectedComp} for $\hat{h}_c^i,c \in \SimpCyc$. For the sake of readability, assume \wlg that the index sets $N^i,i\in I$ are disjoint.
  		We claim the following: 
  		\begin{claim} \label{claim: AddCyclesCommodities}
  			Set $\eflow^{\abs{I}+1} \coloneq  \sum_{c \in \SimpCyc}\ell^u_c(\hat{h}_c^{\abs{I}})$ and $\tilde{\g}^{\abs{I}+1}\coloneq 0$. Then, 
  			\begin{thmparts}
  				\item for all $ i \in [\abs{I}+1]$, we have the equality $u = \sum_{i'< i}\g^{i'} + \eflow^{i } + \sum_{i'>i }\tilde{\g}^{i'}$.
  				\label[thmpart]{claim: AddCyclesCommodities: enum1}
  				\item for all $ i \in \{1,\ldots\abs{I}\}$ and  all $\n1^{\abs{I}} \in \capn1^{\abs{I}}$, there exists $\n1^{i-1}\in \capn1^{i-1}$ such that  $C_{\n1^{\abs{I}}} 
  				= C_{\n1^{i-1}}$ and $\mathfrak T_{C_{\n1^{\abs{I}}}}^{\abs{I}} \subseteq \mathfrak T_{C_{\n1^{i-1}}}^{i-1}$. 
  				\label[thmpart]{claim: AddCyclesCommodities: enum2}
  			\end{thmparts}
  		\end{claim}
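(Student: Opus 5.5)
Both parts of \Cref{claim: AddCyclesCommodities} will be proved by induction over the commodity index $i$, using only the recursive construction and the properties of the maximally pure decomposition from \Cref{cor: PureFlowDecomp}. Here $\g^{i'}:=\ell^u_{\Routes_{i'}}(h^{i'})$ is commodity $i'$'s edge flow under the new walk inflows, and $\tilde\g^{i'}$ is its edge flow under $\tilde h$. The starting point is the identity from the previous claim,
\[u=\tilde\g+\sum_{c\in\SimpCyc}\ell^u_c(\hat h_c)=\sum_{i'\in I}\tilde\g^{i'}+\sum_{c\in\SimpCyc}\ell^u_c(\hat h^0_c),\]
which is exactly statement \ref{claim: AddCyclesCommodities: enum1} for $i=1$, since $\eflow^1=\tilde\g^1+\sum_c\ell^u_c(\hat h^0_c)$ and the left sum over $i'<1$ is empty.

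For the induction step of \ref{claim: AddCyclesCommodities: enum1}, take $1\le i\le\abs I$. The decomposition of \Cref{cor: PureFlowDecomp} applied to $\eflow^i$ gives
\[\eflow^i=\sum_{\hat\wa\in\hat\Routes_{\source_i,\dest}}\ell^u_{\hat\wa}(\check h^i_{\hat\wa})+\sum_{c\in\SimpCyc}\ell^u_c(\check h^i_c)=\g^i+\sum_c\ell^u_c(\hat h^i_c).\]
Here I use linearity of $\ell^u$ (\Cref{lem: elluExistenceProperties:Linearity}) to pass from walk inflows to the aggregated flow $\g^i$. By definition, $\eflow^{i+1}=\tilde\g^{i+1}+\sum_c\ell^u_c(\hat h^i_c)$ for $i<\abs I$, and $\eflow^{\abs I+1}=\sum_c\ell^u_c(\hat h^{\abs I}_c)$ with $\tilde\g^{\abs I+1}=0$. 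Substituting into the identity for $i$ then gives the identity for $i+1$; this step is pure bookkeeping. It is also worth recording here that $h^i$ is a valid commodity-$i$ walk inflow with $\sum_{\hat\wa}h_{(\hat\wa,i)}=\inflow_i$. This holds because the $\source_i$-walk part of a decomposition of $\eflow^i$ has total inflow equal to the net outflow of $\eflow^i$ at $\source_i$, which equals that of $\tilde\g^i$ (zero-cycles do not change net outflow), which in turn equals $\inflow_i$.

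For \ref{claim: AddCyclesCommodities: enum2}, I would show the one-step statement and then chain it. The one-step statement is that for every $i$ and every component $\n1^{i}\in\capn1^{i}$ there is $\n1^{i-1}\in\capn1^{i-1}$ with $C_{\n1^i}=C_{\n1^{i-1}}$ and $\mathfrak T^i_{C_{\n1^i}}\subseteq\mathfrak T^{i-1}_{C_{\n1^{i-1}}}$. Composing the inclusions from $\abs I$ down to $i$ yields the claim. The one-step statement should follow from the property of the maximally pure decomposition in \Cref{cor: PureFlowDecomp}: zero-cycle inflows remaining after the decomposition can only be supported on the connected zero-cycle components, and on the time sets, that were already present in the zero-cycle input $\hat h^{i-1}_c$. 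The reason is that the walk part $\tilde h$ is unchanged up to absorbing cycles, and new cycle mass cannot be created where none was fed in.

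I expect the main obstacle to be this step: extracting from \Cref{cor: PureFlowDecomp} and \Cref{def: ConnectedComp} the precise monotonicity statement, namely that the residual cycle decomposition's components are exactly (not merely subsets of) input components with smaller time sets. If the corollary only gives $\check h^i_c\le$ something edgewise, I would first show that $\check h^i_c(t)>0$ implies some $\hat h^{i-1}_{c'}(t)>0$ with $c'$ in the same component. This uses flow conservation of the zero-cycle residual together with the fact that all edges involved have zero traversal time at $t$. The component identity would then be argued via connectivity of the support.
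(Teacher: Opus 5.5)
Your proposal is correct and follows the paper's proof. Part (a) is the same induction over $i$, using that $(h_\wa)_{\wa\in\Routes_i}$ together with $(\hat h^i_c)_{c\in\SimpCyc}$ decomposes $\eflow^i$ and then applying the definition of $\eflow^{i+1}$; part (b) likewise chains the one-step statement from $\abs{I}$ down to $i-1$, and that one-step statement, which you flag as the main obstacle, is exactly the ``Moreover'' clause of \Cref{cor: PureFlowDecomp} (applied to $\eflow^i$ with input cycle inflows $\hat h^{i-1}_c$), so your fallback argument is not needed.
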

  		\begin{proofClaim}
  			We prove both parts separately: 
  			\begin{structuredproof}
  				\proofitem{\ref{claim: AddCyclesCommodities: enum1}}
  				We prove the claim via induction over $i\in [\abs{I}+1]$. 
  				\begin{proofbyinduction}
  					\basecase{$i = 1$} This is true as we have 
  					\begin{align*}
  						\eflow^1 + \sum_{i'> 1}  \tilde{\g}^{i'} = \tilde{\g}^1 + \sum_{c \in \SimpCyc}\ell^u_c(\hat{h}_c^{0}) + \sum_{i'> 1}  \tilde{\g}^{i'}  = \tilde\g + \sum_{c \in \SimpCyc}\ell^u_c(\hat{h}_c) = u. 
  					\end{align*}
  					
  					\inductionstep{$i-1 \to i$} Let $i \geq 2$ be arbitrary and assume we have shown the claim for all $i'<i$. 
  					We calculate: 
  					\begin{align*}
  						\sum_{i'< i}\g^{i'} + \eflow^i + \sum_{i'> i}  \tilde{\g}^{i'} &
  						= \sum_{i'< i-1}\g^{i'} +  \g^{i-1}  + \eflow^i  + \sum_{i'> i-1}  \tilde{\g}^{i'}  -   \tilde{\g}^{i}  \\
  						&\symoverset{1}{=} u - \eflow^{i-1} + \eflow^i  +  \g^{i-1} -   \tilde{\g}^{i}\\
  						&\symoverset{2}{=} u - \big(\g^{i-1} +\sum_{c \in \SimpCyc}\ell^u_c(\hat{h}_c^{i-1}) \big) + \eflow^i  + \g^{i-1} -   \tilde{\g}^{i}  \\
  						&\eqperdef u + \eflow^i  -  \big( \tilde{\g}^{i} +\sum_{c \in \SimpCyc}\ell^u_c(\hat{h}_c^{i-1}) \big)  = u.
  					\end{align*}
  					Here, we used for \refsym{1} the induction hypothesis and for \refsym{2} that $h_\wa,\wa\in \Routes_{i-1},\hat{h}_c^{i-1}, c \in \SimpCyc$ is a flow decomposition of $\eflow^{i-1}$. 
  				\end{proofbyinduction}
  				\proofitem{\ref{claim: AddCyclesCommodities: enum2}} 
  				By construction and \Cref{cor: PureFlowDecomp} we have for all $i \in [\abs{I}]$ and any $\n1^{i}\in \capn1^i$, that there exists $\n1^{i-1} \in \capn1^{i-1}$  such that  $C_{\n1^{i}}
  				= C_{\n1^{i-1}}$ and $\mathfrak T_{C_{\n1^{i}}}^i \subseteq \mathfrak T_{C_{\n1^{i-1}}}^{i-1}$. This implies part~\ref{claim: AddCyclesCommodities: enum2} of the \namecref{claim: AddCyclesCommodities}. \qedhere
  			\end{structuredproof}
  		\end{proofClaim}
  		 
  		It is clear that $\g^i$ only differs from $\tilde{\g}^i$ by flow on zero cycles for any $i \in I$. Hence, an analogue argument as in \eqref{eq: CostsAddDCycles} together with \Cref{ass: PCSep+ZeroCycles: ZeroCycles} show that $h$ is again an optimal 
  		solution to \eqref{opt: Master}. Thus, it remains to show that $\g = u$ holds. By \Cref{claim: AddCyclesCommodities: enum1}, we have 
  		$u= \g + \sum_{c \in \SimpCyc}\check{h}^{\abs{I}}_c$ and hence we have to show that 
  		 $\check{h}^{\abs{I}}_c = 0,c \in \SimpCyc$. 
  		The latter is in turn equivalent to  
  		$\capn1^{\abs{I}} = \emptyset$ (cf.~\Cref{def: ConnectedComp}). 
  		
  		Hence, assume for the sake of a contradiction that there exists $\n1^{|I|}\in\capn1^{\abs{I}}$ with $\leb(\mathfrak T_{C_{\n1^{\abs{I}}}}^{\abs{I}})>0$. 
  		Denote by $\n1^i,i \in I$ the indices corresponding to $\n1^{\abs{I}}$ in the sense of \Cref{claim: AddCyclesCommodities: enum2}. 
  		We argue now that neither \Cref{thm: PureFlowDecomp: Dest} nor \Cref{thm: PureFlowDecomp: NotDest} are  satisfied for $u$, $h_\wa,\wa \in \Routes$,  $\check{h}^{\abs{I}}_c,c \in \SimpCyc$ and almost all  $t \in \mathfrak T_{C_{\n1^{\abs{I}}}}^{\abs{I}}$. 
  		Note that by \Cref{claim: AddCyclesCommodities: enum1}, 
  		$h_\wa,\wa \in \Routes$,  $\check{h}^{\abs{I}}_c,c \in \SimpCyc$ 
  		is a flow decomposition of~$u$. 
  		This then yields a contradiction as $\leb(\mathfrak T_{C_{\n1^{\abs{I}}}}^{\abs{I}})>0$ 
  		but $u$ has a flow decomposition purely into \stwalk s since $u$ is the edge flow corresponding to some $h' \in \wir$ by assumption.  
  		
  		Assume for the sake of a contradiction that \Cref{thm: PureFlowDecomp: Dest} was  satisfied for a subset  $\mathfrak T \subseteq \mathfrak T_{C_{\n1^{\abs{I}}}}^{\abs{I}}$ with positive measure, i.e.~$\inflow^u_\dest(t)<0$ for a.e.~$t\in \mathfrak T$ and $\dest \in \GV_{C_{\n1^{\abs{I}}}}$. Here, 
  		we  denote by $\inflow^u_\dest$ the outflow rate of~$u$ at~$\dest$ (which exists by \Cref{ass: u: outflow}) and similarly by $\inflow_\dest^i$ the outflow rate of~$\g^i$ at~$\dest$ (which equals the outflow rate of  $\tilde{g}_i$ since these flows only differ in zero-cycle flows).
  		Note that these outflow rates exist by $\g^i\leq u, i \in I$ and we have $\inflow^u_\dest = \sum_{i \in I}\inflow_\dest^i$. 
  		The latter implies in particular that there has to exist $i \in I$ and a further subset $\mathfrak T' \subseteq \mathfrak T$ with positive measure 
  		with $\inflow^i_\dest (t)<0$ for a.e.~$t \in \mathfrak T'$.
  		Yet, $\mathfrak T' \subseteq \mathfrak{T}_{C_{\n1^{\abs{I}}}}^{\abs{I}} \subseteq   \mathfrak T_{C_{\n1^i}}^{i}$ 
  		and $C_{\n1^{\abs{I}}} = C_{\n1^i}$ (by choice of $\n1^i$), contradicting the property of the maximally pure $\source_i$,$\dest$-flow decomposition $\check{h}_{\wa}^i,\wa \in \hat{\Routes},\check{h}^i_c,c\in \SimpCyc$ of $\eflow^i$ constructed in \Cref{cor: PureFlowDecomp}.
  		
  		Analogously, assume for the sake of a contradiction that \Cref{thm: PureFlowDecomp: NotDest} was  satisfied for a subset  $\mathfrak T \subseteq \mathfrak T_{C_{\n1^{\abs{I}}}}^{\abs{I}}$ with positive measure 
  		\wrt an $\arc =(v,v') \in \edgesFrom{C_{\n1^{\abs{I}}}}$ with  $u_\arc (t)>0$ for a.e.~$t\in \mathfrak T$. 
  		Now we have $\sum_{i \in I}\g^i_\arc +  \sum_{c \in \SimpCyc:\arc \in c} \hat{h}_c^{\abs{I}} = u_\arc$ 
  		by \Cref{claim: AddCyclesCommodities: enum1} and \Cref{lem: FLowOnZeroTrav}. 
  		Moreover,  for all $c \in \SimpCyc$ 
  		with $\arc \in c$ we  have $ \hat{h}_c^{\abs{I}} = 0$ by $\arc \notin \GA_{C_{\n1^{\abs{I}}}}$. 
  		Combining these two observations 
  		lets us deduce that 
  		$\sum_{i \in I}\g^i_\arc(t)>0$ for a.e.~$t \in \mathfrak T$ has to hold. 
  		Hence, 
  		there has to exist $i \in I$ and a further subset $\mathfrak T' \subseteq \mathfrak T $ with positive measure 
  		such that $\g^{i}_\arc(t)>0$ for a.e.~$t\in \mathfrak T'$. As in the other case, this contradicts  the property of the maximally pure $\source_i$,$\dest$-flow decomposition $\check{h}_{\wa}^i,\wa \in \hat{\Routes},\check{h}^i_c,c\in \SimpCyc$ of $\eflow^i$ constructed in \Cref{cor: PureFlowDecomp}. 
  		
  		Hence, we can conclude that $\g = u$ and since $h$ is optimal for \eqref{opt: Master}, the proof of $\ref{thm: mainSingleSink: UsedDCyclesHaveZeroD}\Rightarrow~\ref{thm: mainSingleSink: Optimalh}$ is finished which in turn finishes the entire proof. \qedhere
  	\end{structuredproof}
  \end{proof}
  In the following \namecref{lem: SimplificationsFlowOnlyOnZeroDestCycles}, we prove that 
  it suffices to check the conditions \ref{thm: mainSingleSink: ReachableEdgesHaveZeroD} and \ref{thm: mainSingleSink: UsedDCyclesHaveZeroD} of \Cref{thm: mainSingleSink} only for a single representative of $u$ and only those $\dest$-cycles that visit $\dest$ exactly once (for \ref{thm: mainSingleSink: UsedDCyclesHaveZeroD}):
  
  \begin{lemma}\label{lem: SimplificationsFlowOnlyOnZeroDestCycles}
  	The following statements are valid: 
  	\begin{thmparts}
  		\item If one representative of $u$ fulfills \Cref{thm: mainSingleSink: UsedDCyclesHaveZeroD}, then already all representatives fulfill this condition.  \label[thmpart]{lem: SimplificationsFlowOnlyOnZeroDestCycles: Repre}
  		
  		\item If one representative of $u$ fulfills \Cref{thm: mainSingleSink: ReachableEdgesHaveZeroD}, then already all representatives fulfill this condition.  \label[thmpart]{lem: SimplificationsFlowOnlyOnZeroDestCycles: Walks}
  		
  		\item If $u$ fulfills \Cref{thm: mainSingleSink: UsedDCyclesHaveZeroD} for $\dest$-cycles $c \in \DestCyc$ that only visit $\dest$ once, i.e.~$c[j] \notin \delta^-(\dest), j < \abs{c}$, then it  already holds for all $\dest$-cycles. \label[thmpart]{lem: SimplificationsFlowOnlyOnZeroDestCycles: Cycles}
  	\end{thmparts}
  \end{lemma}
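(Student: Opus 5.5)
The plan is to prove the three parts separately, with a) and b) reducing to the fact that two representatives of $u$ agree outside a null set $N$, and that preimages of null sets under the arrival time functions are harmless on the relevant times. The tool is \Cref{lem:  g>0Arr'>0}: on $\Dwg[c][u]$ (resp.\ on the analogous set for the prefix $\wa_{\leq j}$ of a $\dest$-walk), the arrival time functions $\arr_{c,j}(u,\cdot)$ have the Lusin $N^{-1}$ property up to null sets. We may apply this lemma because $u$ admits edge outflow rates for all edges, by \Cref{ass: u: outflow} together with flow conservation (\Cref{lem: outflow: FlowCon}).

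For a), fix representatives $u^1,u^2$ with $u^1=u^2$ outside a null set $N$, and suppose $u^1$ satisfies \ref{thm: mainSingleSink: UsedDCyclesHaveZeroD}. For a fixed $\dest$-cycle $c$, let $\mathfrak T^2_c$ be the set of violating times for $u^2$. On $\mathfrak T^2_c\setminus\bigcup_{j}\arr_{c,j}(u,\cdot)^{-1}(N)$, the hypothesis $u^2_{c[j]}(\arr_{c,j}(u,t))>0$ for all $j$ transfers to $u^1$. Hence this part lies in the violating set of $u^1$, which is null. It remains to see that $\mathfrak T^2_c\cap \arr_{c,j}(u,\cdot)^{-1}(N)$ is null for every $j$. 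Here I use that $\mathfrak T^2_c\subseteq \Dwg[c_{<j}][u^2]$, the version of $\mathfrak D$ for $u^2$ and the prefix walk. Applying \Cref{lem:  g>0Arr'>0} with $\g=u^2$ and the walk $c_{<j}$, the set $\arr_{c,j}^{-1}(N\cap\arr_{c,j}(\Dwg[c_{<j}][u^2]))\cap\Dwg[c_{<j}][u^2]$ is null. Note that $\arr_{c,j}=\arr_{c_{<j},j}$, which covers the case $j\le|c|$. Since the travel times depend only on the class of $u$ and not on the representative, the argument is symmetric, and countability of $\DestCyc$ finishes a). Part b) is verbatim the same with $\dest$-walks $\wa\in\hat{\Routes}^\dest$ and the edge condition $\psi_{\wa[j]}(\arr_{\wa,j}(u,t))=0$ in place of $\wttime_c(u,t)=0$. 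The conclusion there does not depend on the representative at all.

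For c), take a $\dest$-cycle $c$ that visits $\dest$ several times. Split it as $c=(c^1,\dots,c^k)$, where each $c^l$ is a $\dest$-cycle visiting $\dest$ once. By \Cref{ass: PCSep}, $\wttime_c(u,t)=\sum_l \wttime_{c^l}(u,\arr_{c,j_l}(u,t))$, where $j_l$ is the starting index of $c^l$, and every summand is nonnegative. The positivity condition for $c$ at $t$ implies the positivity condition for $c^l$ at the time $\tau=\arr_{c,j_l}(u,t)$. Let $\mathfrak N_l$ be the null set of times where the single-visit condition fails for $c^l$. Then, outside $\bigcup_l\arr_{c,j_l}(u,\cdot)^{-1}(\mathfrak N_l)$, every summand vanishes. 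The main obstacle is again showing that these preimages are null within the relevant set. Since $\mathfrak N_l\subseteq\Dwg[c^l][u]$-type sets and the times $t$ lie in $\Dwg[c_{<j_l}][u]$, this follows from \Cref{lem:  g>0Arr'>0} applied to the prefix $c_{<j_l}$, exactly as in a). I expect this $N^{-1}$-type step to be the only non-routine point. Everything else is bookkeeping over countable unions.
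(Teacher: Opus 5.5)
Your proposal is correct and follows essentially the paper's own argument. For a) and b), the violating set of $u^2$ is covered by the null violating set of $u^1$ together with sets of the form $\arr_{c,j}(u,\cdot)^{-1}(N)\cap\Dwg[c_{<j}][u^2]$, which are null by \Cref{lem:  g>0Arr'>0}. For c), you split $c$ into single-visit $\dest$-cycles and pull their null violating sets back along $\arr_{c,j_l}(u,\cdot)$, using the same lemma on $\Dwg[c_{<j_l}][u]\supseteq\Dwg[c][u]$ after intersecting with $\arr_{c,j_l}(u,\cdot)(\Dwg[c_{<j_l}][u])$; this is exactly the paper's route.
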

  \begin{proof}
  	\begin{structuredproof}
  		\proofitem{\ref{lem: SimplificationsFlowOnlyOnZeroDestCycles: Repre}} 
  		Let $u^1,u^2$ be two representatives of $u$ with $u^1$ fulfilling \Cref{thm: mainSingleSink: UsedDCyclesHaveZeroD}. 
  		Let $\mathfrak T^*_i,i = 1,2$ be the set of times where the latter is not fulfilled for $u^i,i = 1,2$. 
  		By assumption $\mathfrak T^*_1$ is a null set. We have to show that $\mathfrak T^*_2$ is likewise. We first observe that we can represent the latter as follows: 
  		\begin{align*}
  			&\mathfrak T^*_2 = \bigcup_{c \in \DestCyc} \bigcup_{j \leq |c|} \mathfrak T_{c,j} \text{ where } \mathfrak T_{c,j}:= \left\{t \in \hori \,\middle|\, \begin{array}{c}
  				u^2_{c[j']}(\arr_{c,j}(u,t))>0, j'\leq \abs{c}    \\
  				\psi_{c[j']}(\arr_{c,j}(u,t)) =0, j'<j,  \psi_{c[j]}(\arr_{c,j}(u,t)) \neq0
  			\end{array} \right\}. 
  		\end{align*}
  		Now for any $c \in \DestCyc$ and $j\leq c$, we have 
  		\begin{align*}
  			\mathfrak T_{c,j} &\subseteq \mathfrak T^*_1 \cup  \bigcup_{\tilde{j}\leq \abs{c}}
  			\left\{t \in \hori \,\middle|\, \begin{array}{c}
  				u^2_{c[j']}(\arr_{c,j'}(u,t)) = u^1_{c[j']}(\arr_{c,j'}(u,t))>0, j'< \tilde{j}    \\
  				u^2_{c[\tilde{j}]}(\arr_{c,\tilde{j}}(u,t)) > 0 \geq  u^1_{c[\tilde{j}]}(\arr_{c,\tilde{j}}(u,t))
  			\end{array} \right\} \\
  			&\subseteq \mathfrak T^*_1 \cup  \bigcup_{\tilde{j}\leq \abs{c}}
  			\left\{t \in \Dwg[c_{<\tilde{j}}][u^2] \,\middle|\, 
  			u^2_{c[\tilde{j}]}(\arr_{c,\tilde{j}}(u,t)) > 0 \geq  u^1_{c[\tilde{j}]}(\arr_{c,\tilde{j}}(u,t))\right\} \\
  			&\subseteq \mathfrak T^*_1 \cup  \bigcup_{\tilde{j}\leq \abs{c}} \arr_{c,\tilde{j}}(u,\cdot)^{-1}\Big(\left\lbrace t \in\arr_{c,\tilde{j}}(u,\cdot)(\Dwg[c_{<\tilde{j}}][u^2] )   \mid    u^2_{c[\tilde{j}]}(t)>0 \geq u^1_{c[\tilde{j}]}(t)  \right\rbrace \Big) \cap  \Dwg[c_{<\tilde{j}}][u^2] 
  		\end{align*}
  		where $\Dwg[c_{<\tilde{j}}][u^2] $ is the set defined in \eqref{eq: defD_wa} \wrt $u^2$. 
  		Now observe that $\mathfrak T^*_1$ is a null set by assumption and \Cref{lem:  g>0Arr'>0} implies that the other sets appearing in the last line  
  		are null sets likewise. 
  		For the latter, note that $\{t \in \hori \mid   u^2_{c[\tilde{j}]}(t)>0 \geq u^1_{c[\tilde{j}]}(t)\}$ is a null set because $u^1,u^2$ are both representatives of $u$. 
  		Hence,  $ \mathfrak T_{c,j}$ is a null set. Subsequently, $\mathfrak T^*_2$ is likewise a null set as a countable union of null sets. 
  		
  		\proofitem{\ref{lem: SimplificationsFlowOnlyOnZeroDestCycles: Walks}} The proof works completely analogous to the proof of \ref{lem: SimplificationsFlowOnlyOnZeroDestCycles: Repre}. 
  		
  		\proofitem{\ref{lem: SimplificationsFlowOnlyOnZeroDestCycles: Cycles}}
  		
  		Let us fix an arbitrary representative of $u$. 
  		Define for any cycle $c$ the set 
  		\begin{align*}
  			\mathfrak D_{c>}:=
  			\Dwg[c][u]
  			 \cap \{t \in \hori \mid \exists j \leq |c|:  \psi_{c[j]}(\arr_{c,j}(t)) \neq 0 \}.
  		\end{align*} 
  		where $\Dwg[c][u] $ is the set defined in \eqref{eq: defD_wa} \wrt $u$. 
  		Let us denote by $\DestCyc_1$ the set of all  $\dest$-cycles visiting $\dest$ only once and by $\DestCyc$ the set of all $\dest$-cycles. We want to show that all $ \mathfrak D_{c>},{c \in \DestCyc_1}$ being null sets implies that 
  		$\bigcup_{c \in \DestCyc}\mathfrak D_{c>}$ is also a null set. For this, it is sufficient to show that $\mathfrak D_{c>}$ is a null set for any $c \in \DestCyc$. 
  		Let $c \in \DestCyc$ be an arbitrary $\dest$-cycle in the following. Then we can write $c$ as   $c=(c_1,\ldots,c_k)$ with $c_l \in \DestCyc_1$ being a $\dest$-cycle 
  		which only visits $\dest$ once 
  		for every $l\leq k$. 
  		Let $j_l$ be the index of the edge of $c$ that corresponds to the first edge of $c_l$. 
  		We observe that 
  		\begin{align*}
  			\mathfrak D_{c>} = \bigcup_{l \leq k} \big \{ t \in \mathfrak D_{c>}  \mid \arr_{c,j_l}(u,t) \in  \mathfrak D_{c_l>} \big\} = 
  			\bigcup_{l \leq k} \arr_{c,j_l}(u,\cdot)^{-1}(\mathfrak D_{c_l>}) \cap  \mathfrak D_{c>}
  		\end{align*}
  		as well as that $\mathfrak D_{c_l>} \subseteq \arr_{c,j_l}(u,\cdot)(\mathfrak D_{c>})$. Since, by assumption, $\mathfrak D_{c_l>}$ is a null set, \Cref{lem:  g>0Arr'>0} implies that all of the sets on the right of the above equation are null sets and hence so is $\mathfrak D_{c>}$. \qedhere
  	\end{structuredproof}
  \end{proof}

\subsubsection{The Role of Assumption~\ref*{ass: PCSep+ZeroCycles}\ref*{ass: PCSep+ZeroCycles: ZeroCycles}}

In this section, we investigate the question whether the assumption that flow-carrying zero-cycles incur zero private costs (\Cref{ass: PCSep+ZeroCycles: ZeroCycles}) can be incorporated into the \namecref{thm: mainSingleSink} as an additional condition. While this will clearly result in a sufficient condition for implementability, it is in fact not necessary (cf.~\Cref{exa: CounterZeroCyclesNecessary}). 
 Only for the special case of private costs being commodity-independent, this results in a characterization of implementability. 
 In particular, the latter shows that \Cref{ass: PCSep+ZeroCycles: ZeroCycles} can also not be simply   dropped in \Cref{thm: mainSingleSink}.

 \begin{lemma}\label{lem: mainSingleSinkZeroCycles}
 	For a multi-source, single-\sink network in which \Cref{ass: u}, \Cref{ass: PCSep+ZeroCycles: Sep,ass: PCSep+ZeroCycles: DestCycles} are satisfied and \eqref{opt: Master} is \wellposed and admits strong duality \wrt $\MeasFuncUInt$,  consider the following statements:  
 	\begin{thmparts}
 		\item $u$ is implementable. \label[thmpart]{lem: mainSingleSinkZeroCycles: impl} 
 		
 		\item Every used edge that lies on a used zero-cycle or $\dest$-walk has zero private costs, i.e.\   an arbitrary representative of $u$ satisfies:  
 		\begin{itemize}

 			\item for all cycles $c \in \SimpCyc$, all $j \leq \abs{c}$ and almost all $t \in \hori$, the following holds:
 			\begin{align*}
 				\Big( u_{c[j']}(t)>0 \,\land\,\forall   j'\leq \abs{c}: \trav_{c[j']}(u,t) = 0\Big)  \implies \psi_{c[j]}(t) = 0.
 			\end{align*}
 			\item for all walks $\wa\in \hat{\Routes}^\dest$, all $j \leq |\wa|$ and almost all $t \in \hori$, the following holds: 
 			\begin{align*}
 				\Big(\forall j'\leq \abs{\wa} :  u_{\wa[j']}(\arr_{\wa,j'}(u,t))>0  \Big)\implies \psi_{\wa[j]}(t) = 0.
 			\end{align*}
 		\end{itemize} \label[thmpart]{lem: mainSingleSinkZeroCycles: ReachableEdgesHaveZeroD}
 		
 		\item $u$ only sends flow along a zero- or $\dest$-cycle if it has zero private costs, i.e.\ an arbitrary representative of $u$ satisfies  
 		\begin{itemize}
 			\item for all $c \in \SimpCyc$ and almost all $t \in \hori$ the following implication holds:
 			\begin{align*}
 				\Big( u_{c[j']}(t)>0  \,\land\, \forall  j'\leq \abs{c}:\trav_{c[j']}(u,t) = 0\Big)  \implies \wttime_{c}(u,t) = 0 .
 			\end{align*}   
 			\item for all $c \in \DestCyc$ and almost all $t \in \hori$  the following implication holds: 
 			\begin{align*}
 				\Big(\forall j'\leq \abs{c}:  u_{c[j']}(\arr_{c,j'}(u,t))>0 \Big)  \implies \wttime_{c}(u,t) = 0 .
 			\end{align*}   
 		\end{itemize}  \label[thmpart]{lem: mainSingleSinkZeroCycles: UsedDCyclesHaveZeroD} 
 		
 		\item There exists $h^* \in \wir$ optimal for \eqref{opt: Master} with tight inequality \eqref{ineq: Master}. \label[thmpart]{lem: mainSingleSinkZeroCycles: Optimalh}
 	\end{thmparts}
 	Then, the following implications hold: \begin{align} \label{eq: lem: mainSingleSinkZeroCycles: Implications}
 		\text{ \ref{lem: mainSingleSinkZeroCycles: ReachableEdgesHaveZeroD} }
 		\implies \text{ \ref{lem: mainSingleSinkZeroCycles: UsedDCyclesHaveZeroD} }
 		\implies \text{ \ref{lem: mainSingleSinkZeroCycles: Optimalh} } 
 		\iff \text{ \ref{lem: mainSingleSinkZeroCycles: impl} }.
 	\end{align}
 	In case that $\psi^i=\psi$ for all $i \in I$, also $\text{  \ref{lem: mainSingleSinkZeroCycles: Optimalh} } \Rightarrow \text{ \ref{lem: mainSingleSinkZeroCycles: ReachableEdgesHaveZeroD} }$ holds and subsequently all of the above statements are equivalent. 
 \end{lemma}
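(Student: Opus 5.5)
We prove the chain \eqref{eq: lem: mainSingleSinkZeroCycles: Implications} by reusing the proof of \Cref{thm: mainSingleSink} wherever possible. The equivalence \ref{lem: mainSingleSinkZeroCycles: Optimalh}$\Leftrightarrow$\ref{lem: mainSingleSinkZeroCycles: impl} needs no new argument. The direction ``$\Leftarrow$'' is \Cref{lem: NecessImpl}. The direction ``$\Rightarrow$'' is \Cref{thm: SuffConMSSS}, whose hypotheses (\Cref{ass: u,ass: PCSep}, well-posedness, strong duality \wrt $\MeasFuncUInt$) are exactly those assumed here. Neither lemma uses \Cref{ass: PCSep+ZeroCycles: ZeroCycles}.

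For \ref{lem: mainSingleSinkZeroCycles: ReachableEdgesHaveZeroD}$\Rightarrow$\ref{lem: mainSingleSinkZeroCycles: UsedDCyclesHaveZeroD} we treat the two bullets separately.
\begin{itemize}
\item Zero-cycles: if all $\trav_{c[j']}(u,t)=0$, then $\arr_{c,j}(u,t)=t$ for all $j$. Hence $\wttime_c(u,t)=\sum_j\psi_{c[j]}(t)$, and each summand vanishes by the first bullet of \ref{lem: mainSingleSinkZeroCycles: ReachableEdgesHaveZeroD}.
\item $\dest$-cycles: every $\dest$-cycle is a $\dest$-walk. We copy the step \ref{thm: mainSingleSink: ReachableEdgesHaveZeroD}$\Rightarrow$\ref{thm: mainSingleSink: UsedDCyclesHaveZeroD} of \Cref{thm: mainSingleSink}, applying the second bullet to the prefixes $c_{\le j}$, and use countability of $\DestCyc$ to collect the null sets.
\end{itemize}

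For \ref{lem: mainSingleSinkZeroCycles: UsedDCyclesHaveZeroD}$\Rightarrow$\ref{lem: mainSingleSinkZeroCycles: Optimalh} we repeat the proof of \ref{thm: mainSingleSink: UsedDCyclesHaveZeroD}$\Rightarrow$\ref{thm: mainSingleSink: Optimalh} verbatim. First take an optimal $\check h$ and decompose $u-\ell^u(\check h)$ via \Cref{lem: DifferenceGeneral}. Next add the $\dest$-cycles; this step uses only the $\dest$-cycle bullet together with \Cref{ass: PCSep+ZeroCycles: DestCycles}. Then add the zero-cycles commodity by commodity through the maximally pure decompositions of \Cref{cor: PureFlowDecomp}. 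The only place where that proof invoked \Cref{ass: PCSep+ZeroCycles: ZeroCycles} is the cost computation analogous to \eqref{eq: CostsAddDCycles}, which shows that the added zero-cycle flow has zero private cost. Here we replace it by the zero-cycle bullet of \ref{lem: mainSingleSinkZeroCycles: UsedDCyclesHaveZeroD}. This is valid because every zero-cycle inflow $\hat h_c^0$ satisfies $\hat h^0_c\le u_{c[j]}$ on $c$ and carries zero travel times almost everywhere where it is positive (\Cref{lem: DifferenceGeneral}). Hence \Cref{lem: Relations:h>0u>0:Pointwise} gives $\wttime_{(c,i)}(u,t)=0$ wherever the cycle flow is positive.

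The main work is \ref{lem: mainSingleSinkZeroCycles: Optimalh}$\Rightarrow$\ref{lem: mainSingleSinkZeroCycles: ReachableEdgesHaveZeroD} in the commodity-independent case $\psi^i=\psi$.
\begin{itemize}
\item $\dest$-walk bullet: via \ref{lem: mainSingleSinkZeroCycles: Optimalh}$\Rightarrow$\ref{lem: mainSingleSinkZeroCycles: impl} we get an implementing pair $(h^*,\prices)$, and the induction of \Cref{claim:thm:main:part1} applies unchanged. That claim relies only on \Cref{ass: PCSep+ZeroCycles: DestCycles}, never on the zero-cycle assumption.
\item Zero-cycle bullet: this is the main obstacle. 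The plan is an exchange argument on the master problem. Suppose there are a simple cycle $c$, an edge $c[j]$, and a positive-measure set $\mathfrak T$ on which $c$ carries $u$-flow, has zero traversal times, and has $\psi_{c[j]}>0$. Take an optimal $h^*$ with $\ell^u(h^*)=u$. By \Cref{lem: Relations:h>0u>0:u>0ExistsCountableM}, some walk $\wa$ with positive $h^*_\wa$ enters $c[j]$ at times in $\mathfrak T$. Because all travel times on $c$ vanish there, that particle can instead skip the rest of the cycle: we reroute it along $c$ up to the node where $\wa$ leaves the cycle, or shortcut around $c[j]$ by using $c$ in the other direction around the loop from the entry node of $c[j]$. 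More simply, we remove one traversal of $c$ if $\wa$ contains it. The rerouted flow arrives at the same times, is still dominated by $u$, and strictly decreases $\dup{\psi}{\ell^u(\cdot)}$ by \Cref{lem: aggCostsVSwalkCosts}, because $\psi$ is commodity independent and $\psi_{c[j]}>0$. This contradicts optimality.
\end{itemize}
The delicate point is guaranteeing that the used flow on $c[j]$ can indeed be attributed to walks that traverse all of $c$. If it cannot, we instead decompose $u$ restricted to $c$ at those times as cycle flow, and argue with the $\dest$-walk structure via \Cref{thm: PureFlowDecomp}. Commodity independence is needed precisely so that the cost saving does not depend on which commodity's flow is exchanged.
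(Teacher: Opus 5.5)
Your treatment of the chain (b)$\Rightarrow$(c)$\Rightarrow$(d)$\Leftrightarrow$(a) is correct and agrees with the paper, which simply cites \Cref{thm: mainMSSS} and \Cref{thm: mainSingleSink}. Your remark that the induction of \Cref{claim:thm:main:part1} uses only \Cref{ass: PCSep+ZeroCycles: DestCycles} also correctly settles the $\dest$-walk bullet of (b) when $\psi^i=\psi$. The gap is in the zero-cycle bullet of (d)$\Rightarrow$(b). Your exchange argument reroutes a single walk $\wa$ that enters $c[j]$ at times in $\mathfrak T$, but such a walk need not contain $c$: it may join $c$ at some node, traverse $c[j]$, and leave. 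In a directed graph there is no ``other direction'' around $c$. The only route along $c$ from the tail of $c[j]$ to its head is $c[j]$ itself, so no modification of one walk avoids $c[j]$. This is the typical situation, not an edge case. In \Cref{exa: CounterZeroCyclesNecessary} the zero-cycle flow of $u$ is assembled from segments of two different walks, neither of which contains the cycle. Removing such flow would require splicing suffixes of different walks, possibly of different commodities, in matched amounts. Your fallback (``decompose $u$ restricted to $c$ \ldots\ and argue via \Cref{thm: PureFlowDecomp}'') does not say how to do this, and pure decompositions are not the relevant tool.

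The missing idea is to never attribute the cycle flow to particular walks. When $\psi^i=\psi$, \Cref{lem: aggCostsVSwalkCosts} turns the objective into $\dup{\psi}{\ell^u(h)}$, which depends only on the edge flow. Set $\hat h_c:=\min_{j'\le\abs{c}}u_{c[j']}\cdot 1_{\mathfrak T}$. By \Cref{lem: FLowOnZeroTrav} this is a zero-cycle inflow with $\ell^u_c(\hat h_c)\le u$, and $u-\ell^u_c(\hat h_c)$ is nonnegative with the same net outflow rates as $u$ at every node. A super-source argument with \Cref{thm: FlowDecomp}, as in the proof of \Cref{thm: AlmostMasterZeroDual}, then gives $\tilde h\in\edom{\Routes}[u]\cap\wir$ with $\ell^u(\tilde h)\le u-\ell^u_c(\hat h_c)$. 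Walks from a shared source may be split arbitrarily among commodities, which is harmless since $\psi^i=\psi$. The objective of $\tilde h$ is then below that of any $u$-inducing $h^*$ by at least $\int_{\mathfrak T}\psi_{c[j]}\hat h_c\di\sigma>0$; that objective is finite by well-posedness. This contradicts (d), with no rerouting of individual particles needed.
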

 \begin{proof}
 	It is clear by the same arguments as in \Cref{thm: mainSingleSink} that the condition regarding zero-cycles in  \ref{lem: mainSingleSinkZeroCycles: ReachableEdgesHaveZeroD}   implies the corresponding one in \ref{lem: mainSingleSinkZeroCycles: UsedDCyclesHaveZeroD}. 
    Moreover, the implication $\text{ \ref{lem: mainSingleSinkZeroCycles: UsedDCyclesHaveZeroD} }
 		\Rightarrow \text{ \ref{lem: mainSingleSinkZeroCycles: Optimalh} }$ holds by  \Cref{thm: mainSingleSink}. 
        Finally, we know by \Cref{thm: mainMSSS} that $\text{ \ref{lem: mainSingleSinkZeroCycles: impl} } \Leftrightarrow \text{ \ref{lem: mainSingleSinkZeroCycles: Optimalh} }$ is valid and, hence, \eqref{eq: lem: mainSingleSinkZeroCycles: Implications} is shown. 
 	
 	Now assume that $\psi^i=\psi$ for all $i \in I$ holds. We argue in the following that then also $\text{ \ref{lem: mainSingleSinkZeroCycles: impl} } \Rightarrow \text{ \ref{lem: mainSingleSinkZeroCycles: ReachableEdgesHaveZeroD} }$ holds. 
 	Assume for the sake of a contradiction that there exists a cycle $c \in \SimpCyc$ and $j \leq \abs{c}$ such that the implication in \ref{lem: mainSingleSinkZeroCycles: ReachableEdgesHaveZeroD} is not fulfilled. Note that it is sufficient to derive a contradiction by this assumption as the second condition then holds again by \Cref{thm: mainSingleSink}. We argue in the following that \ref{lem: mainSingleSinkZeroCycles: Optimalh} can not hold. 
 	
 	By \Cref{lem: aggCostsVSwalkCosts} and $\psi^i=\psi$ for all $i \in I$, we can rewrite the objective of the master problem via 
 	$\dup{\psi}{\ell^u(h)}$ for any feasible $h$. Now let $\mathfrak T \in \mathcal{B}(\hori),\sigma(\mathfrak T)>0$ be a set where for all $t \in \mathfrak T$ the implication in \ref{lem: mainSingleSinkZeroCycles: ReachableEdgesHaveZeroD} is not fulfilled \wrt $c \in \SimpCyc$ and $j \leq \abs{c}$. 
 	Fix an arbitrary representative of $u$. 
 	Then, $\hat{h}_c(t) := \min_{j'\leq \abs{c}}u_{c[j']}(t)>0$ for $t \in\mathfrak T$ and $ \hat{h}_c(t) :=0 $ else defines a zero-cycle inflow rate with $\ell^u_c(\hat{h}_c)\leq u$ (cf.~\Cref{lem: FLowOnZeroTrav}). 
 	Now the difference $u- \ell^u_c(\hat{h}_c)$ is nonnegative and has the same net outflow rates as $u$ at all nodes since $ \hat{h}_c$ is a zero-cycle inflow rate. In particular, by a super-source argument together with  the flow decomposition theorem (\Cref{thm: FlowDecomp}) (cf.~also the proof of \Cref{thm: AlmostMasterZeroDual}), we can find a walk inflow rate $\tilde{h} \in \edom{\Routes}[u] \cap \wir$ inducing $u- \ell^u_c(\hat{h}_c)$. 
 	Then, the difference of the objective values of any $h$ inducing $u$ and $\tilde{h}$ is given by $\dup{\psi}{\ell^u_c(\hat{h}_c)}$ which is strictly larger than zero as
 	\begin{align*}
 		\dup{\psi}{\ell^u_c(\hat{h}_c)} \geq \dup{\psi_{c[j]}}{\ell^u_{c,j}(\hat{h}_c)} \geq \int_{\mathfrak T}\psi_{c[j]} \cdot \ell^u_{c,j}(\hat{h}_c) \di \sigma \symoverset{1}{=}  \int_{\mathfrak T}\psi_{c[j]} \cdot  \hat{h}_c  \di \sigma >0
 	\end{align*}
 	where we used \Cref{lem: FLowOnZeroTrav} in \refsym{1}  and for the strict inequality that $\psi_{c[j]}(t),\hat{h}_c(t) >0$ on $\mathfrak T$ and $\sigma(\mathfrak T)>0$. Hence, \ref{lem: mainSingleSinkZeroCycles: Optimalh} can not hold which yields the desired contradiction as outlined before. 
 \end{proof}

  Now let us demonstrate with the following example that the implication $\text{ \ref{lem: mainSingleSinkZeroCycles: impl} } \Rightarrow \text{ \ref{lem: mainSingleSinkZeroCycles: ReachableEdgesHaveZeroD} }$ does not need to hold in general, even in the single-source case and even if the private costs are a commodity specific weighted variant of a common private cost function, i.e.~if \eqref{eq: WeightedVarian} holds. 
  \begin{example}\label{exa: CounterZeroCyclesNecessary}
  	Consider the network depicted in \Cref{fig: CounterZeroCyclesNecessary}. 
  	There are two commodities with an inflow rate of $\inflow_i = 1_{[0,1]},i =1,2$ and 
  	common source $\source$ and \sink $\dest$. The private costs for each edge are given by a commodity-specific weighted variant $\gamma_i\cdot \psi_\arc$ of a common (constant) cost function $\psi_\arc$ with $\gamma_1 = 1$ and $\gamma_2 = 100$. 
  	All edges have constant travel time of zero and 
  	we consider the edge flow $u$ given by $u_\arc=1_{[0,1]},\arc \in \GA$. 
  	The tuple on the edges correspond to the common cost function $\psi$ and tolls $\prices$ that implement $u$ via $h_{\wa_1}=h_{\wa_2} = 1_{[0,1]}$ and $h_\wa = 0,\wa \notin \{\wa_1,\wa_2\}$ for $\wa_1 =((\arc_1,\arc_4,\arc_6),1)$ and $\wa_2 =((\arc_2,\arc_3,\arc_5),2)$. 
  	The fact that these tolls make $h$ into a $\prices$-DUE is immediately verified by considering the total costs along the four possible simple walks (as every proper walk containing the cycle has strictly larger total costs than one of the simple walks). 
  	Clearly, $u$ sends flow along a cycle of travel time zero, yet is implementable. 
  \end{example}
  
  \begin{figure}[h]
  	\centering
  	\BigPicture[0]{
  		\def\animationdatapath{tikz/}
  		\renewcommand{\flowcolorlist}{fcolAr,fcolBr,fcolCr,fcolDr,fcolEr,fcolFr}
  		\begin{adjustbox}{max width=\textwidth}
  			\input{tikz/CE_ImplementableImpliesNoZeroTravCycles}
  		\end{adjustbox}
  	}
  	\caption{The network and flow from \Cref{exa: CounterZeroCyclesNecessary}. 
  		All edge travel times are zero. The tuple on the edges correspond to the common (constant) cost function   and tolls $(\psi_\arc,\prices_\arc)$ that implement $u=(1_{[0,1]})_{\arc \in \GA}$.  
  		Clearly, $u$ sends flow along a cycle of travel time zero, yet is implementable. }
  	\label{fig: CounterZeroCyclesNecessary}
  \end{figure}

 \subsubsection{Necessary and/or Sufficient Combinatorial Conditions for Multi-Destination Networks}
In this section, we investigate whether 
 or not a similar characterization of implementability as in \Cref{thm: mainSingleSink} holds for the general multi-source, multi-\sink case. 
 \Cref{thm: mainSingleSink: ReachableEdgesHaveZeroD} and \Cref{thm: mainSingleSink: UsedDCyclesHaveZeroD} require that 
 flow is send  into $\dest$-walks/cycles only if they have zero private costs (for all commodities) and  
 we have two natural ways of adapting these statements to the multi-\sink case: Either we require the properties to hold for the commodity-aggregated flow~$u$ at all \sink[s] or require it to hold for each commodity $i$ separately for the commodity-specific flow~$u^i$ at the respective sink $\dest_i$. In the following lemmata and examples we show that some of the four resulting statements yield necessary conditions, while others yield sufficient ones.

  \begin{lemma}\label{lem: NoFlowOnOutgoingEdges} 
  	Consider the situation of \Cref{thm: mainMSMS} together with a flow $u$ that does not send flow along  zero-cycles nor sends flow into edges leaving any destination $\dest_i$, i.e.~$u_\arc = 0,\arc\in \edgesFrom{\dest_i}, i\in I$. Then, $u$ is implementable via bounded tolls.
  \end{lemma}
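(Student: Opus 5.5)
By \Cref{thm: mainMSMS} it suffices to show that \eqref{opt: Master} has an optimal solution with tight inequality \eqref{ineq: Master}. Strong duality \wrt $L_+^\infty(\hori)^\GA$ is assumed, so there is an optimal solution $\check h$ with edge flow $\check\g=\ell^u(\check h)\le u$. The plan is to show that every feasible $\tilde h$, and in particular $\check h$, already induces $u$. Equivalently, we show $\diffe:=u-\check\g=0$.

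\textbf{Step 1: the difference has no source.} Take any $h'\in\wir$ inducing $u$. Since $\check h$ and $h'$ carry the same network inflows $\inflow_i$, the argument of \Cref{lem: flowconW'} (as used in \Cref{lem: DifferenceGeneral}) shows that $\diffe\ge 0$ satisfies flow conservation at every node except the destinations $\dest_i$. At a destination $\dest$, no flow leaves under $u$, because $u_\arc=0$ for $\arc\in\edgesFrom{\dest}$. As $0\le\check\g\le u$, also $\check\g_\arc=0$ and $\diffe_\arc=0$ on these edges. The net outflow of $\diffe$ at $\dest$ therefore equals minus the inflow of $\diffe$ into $\dest$. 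This rate exists by \Cref{ass: u: outflow} and is $\le 0$. So $\diffe$ is a flow without any source: every node has net outflow $\le 0$.

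\textbf{Step 2: decompose and exclude each part.} Add a super-sink $\sdest$ with zero-travel-time edges $(\dest_i,\sdest)$ that carry the absorbed inflow. Then apply the flow decomposition theorem (\Cref{thm: FlowDecomp}), as in the proof of \Cref{lem: DifferenceGeneral}. Since there is no positive net outflow anywhere, the walk part of the decomposition is empty. Hence $\diffe$ consists only of zero-cycle inflows $\hat h^0_c$, $c\in\SimpCyc$. Any such cycle is used by $\diffe\le u$ during the times when it carries flow, so $u$ would send flow along a zero-cycle. This contradicts the hypothesis unless all $\hat h^0_c=0$. Consequently $\diffe=0$, so $\check\g=u$ and the inequality is tight.

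\textbf{Main obstacle.} The delicate part is making ``no flow on zero-cycles'' precise. The decomposition yields $\hat h^0_c(t)>0\Rightarrow\trav_\arc(u,t)=0$ for all $\arc\in c$, and by \Cref{lem: FLowOnZeroTrav} we have $\ell^u_c(\hat h^0_c)=\hat h^0_c$ on each edge of $c$. Thus $u_\arc(t)\ge\hat h^0_c(t)>0$ on a set of positive measure, and this is exactly the forbidden situation.

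A second, routine point is that the destination edges $\edgesTo{\dest_i}$ are handled by \Cref{ass: u: outflow}, so that the super-sink construction is well defined for several destinations simultaneously.

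Finally, \Cref{thm: SuffConMSMS: FiniteWalkSupp,thm: SuffConMSMS: BoundedExpTravel} are part of the hypotheses of \Cref{thm: mainMSMS}. Applying that theorem therefore gives implementability via bounded tolls.
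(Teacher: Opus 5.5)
Your proof is correct, but it handles the destinations differently from the paper.

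\textbf{The paper's route.} After obtaining conservation of $u-\tilde\g$ away from the destinations via \Cref{lem: flowconW'}, the paper proves explicitly that $\tilde\g_\arc=u_\arc$ on every $\arc\in\edgesTo{\dest_i}$. For each destination $v$ it splits $\tilde\g$ by commodity. It argues that no commodity with $\dest_i\neq v$ can pass through $v$, since it would have to leave along an edge with $u_\arc=0$. It then compares the total mass $\sum_{i:\dest_i=v}\int\inflow_i$ arriving at $v$ under $\tilde\g$ and under $u$. Together with $\tilde\g\le u$, this gives equality on the incoming edges and hence conservation of the difference at every node. \Cref{lem: ZeroCycleDecomposition} then applies directly.

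\textbf{Your route.} You only observe that $\diffe$ is a pure absorber at each destination: its balance there is nonpositive because $\diffe$ vanishes on $\edgesFrom{\dest_i}$. You then let the super-sink decomposition do the rest. What makes this work is a global mass balance. The node balances of an integrable edge vector sum to total mass zero. So nonpositive balances at the destinations, together with conservation elsewhere, force those balances to vanish.

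In your write-up this is hidden in the sentence ``no positive net outflow anywhere, so the walk part is empty''. To make it precise, do one of the following:
\begin{itemize}
\item Designate an arbitrary original node $\source$ as source with rate $0$. The walk inflows given by \Cref{thm: FlowDecomp} sum to the source's net outflow rate, which is $0$.
\item Cite \Cref{lem: FlowConEveryNode}. It says exactly that an $\source,\sdest$-flow which is conservative at $\source$ is a circulation. Hence $\diffe$ vanishes on the super-sink edges, and \Cref{lem: ZeroCycleDecomposition} finishes the argument.
\end{itemize}

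\textbf{Comparison.} Your route is shorter and avoids both the commodity split and the ``other commodities cannot visit $v$'' argument. The paper's route is more elementary: it needs only the circulation decomposition, and it yields the explicit equality on destination-entering edges.
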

  \begin{proof}
  	Consider an optimal solution $\tilde{h }$ to \eqref{opt: Master} (which exists by strong duality) with its aggregated edge flow ${\tilde\g}:= \ell^u(\tilde{h}) \leq u$. Remark that the last inequality holds by feasibility of $\tilde{h}$ for \eqref{opt: Master}. We argue in the following that $u=\tilde{\g}$ holds, implying by \Cref{thm: mainMSMS} that $u$ is implementable via bounded tolls. 
  	
  	By \Cref{lem: flowconW'}, $u- \tilde{\g}$ fulfills flow conservation at all nodes $v \notin \{\dest_i\}_{i\in I}$. 
  	We will prove in the \namecref{claim: NoFlowOnOutgoingEdges} below (\Cref{claim: NoFlowOnOutgoingEdges: 3}) that 
  	$\tilde{\g}_\arc = u_\arc$ holds for all $\arc \in \edgesTo{\dest_i},i\in I$. This, together with $\tilde{\g}_\arc \leq u_\arc = 0$ for all $\arc \in \edgesFrom{\dest_i},i\in I$ by assumption, implies that $\tilde{\g}$ and $u$ also have the same node balance at any destination. In particular, we get that the difference $u-\tilde{\g}$ fulfills flow conservation at all nodes. 
  	Thus,  \Cref{lem: ZeroCycleDecomposition} implies that the difference is composed of zero-cycle flows and since $u$ does not contain such zero-cycle flows by assumption, $u = \tilde{\g}$ has to hold. 
  	It thus remains to show that $\tilde{\g}_\arc = u_\arc$ holds for all $\arc \in \edgesTo{\dest_i},i\in I$. We show this in the following \namecref{claim: NoFlowOnOutgoingEdges} in three steps: 
  	\begin{claim}\label{claim: NoFlowOnOutgoingEdges}
  		For any $v \in \{\dest_i\}_{i \in I}$, the following statements are valid:
  		\begin{thmparts}
  			\item \label[thmpart]{claim: NoFlowOnOutgoingEdges: 1}  The following equality holds:
  			\begin{align*}
  				\sum_{\arc \in \edgesTo{v}}  \sum_{i \in I:\dest_i = v} \int_\hori \tilde{\g}^i_\arc \di\sigma =   \sum_{\arc \in \edgesTo{v}} \int_\hori u_\arc \di\sigma . 
  			\end{align*}
  			\item\label[thmpart]{claim: NoFlowOnOutgoingEdges: 2}  $\sum_{i \in I:\dest_i = v}\tilde{\g}^i_\arc = \tilde{\g}_\arc$ holds for all $\arc \in \edgesTo{v}$.
  			\item \label[thmpart]{claim: NoFlowOnOutgoingEdges: 3}  $\tilde{\g}_\arc  = u_\arc$ holds for all $\arc \in \edgesTo{v}$. 
  		\end{thmparts}
  	\end{claim}
  	\begin{proofClaim}
  		Consider an arbitrary $v \in \{\dest_i\}_{i \in I}$. We first note that $\tilde\g \leq u$ implies that we have  
  		\begin{align}\label{eq: claim: NoFlowOnOutgoingEdges}
  			\tilde{\g}^i_\arc \leq u_\arc = 0 \text{ for all } i\in I \text{ with } \dest_i = v \text{ and } \arc \in \edgesFrom{\dest_i}. 
  		\end{align}
  		\begin{structuredproof}
  			\proofitem{\ref{claim: NoFlowOnOutgoingEdges: 1}}
  			\Cref{lem: flowconW'} and  \eqref{eq: claim: NoFlowOnOutgoingEdges} let us deduce: 
  			\begin{align*}
  				-  \sum_{\arc \in \edgesTo{v}}  \sum_{i \in I:\dest_i = v} \int_\hori \tilde{\g}^i_\arc \di\sigma \overset{\eqref{eq: claim: NoFlowOnOutgoingEdges}}&{=}   \sum_{i \in I:\dest_i = v}  \sum_{\arc \in \edgesFrom{\dest_i}} \int_\hori \tilde{\g}^i_\arc\di\sigma -         \sum_{\arc \in \edgesTo{\dest_i}} \int_\hori \tilde{\g}^i_\arc \di\sigma    \\\overset{\text{\Crefshort{lem: flowconW'}}}&{=} -\sum_{i \in I:\dest_i = v} \int_\hori \inflow_i \di\sigma \overset{\text{\Crefshort{lem: flowconW'}}}{=} -   \sum_{\arc \in \edgesTo{v}} \int_\hori u_\arc \di\sigma. 
  			\end{align*} 
  			\proofitem{\ref{claim: NoFlowOnOutgoingEdges: 2}} 
  			This is a direct consequence of \eqref{eq: claim: NoFlowOnOutgoingEdges}  since flow from a   commodity with $\dest_i \neq v$ can not visit $v$ as it  would need to leave $v$ again.  
  			
  			\proofitem{\ref{claim: NoFlowOnOutgoingEdges: 3}}  This is a direct consequence of \ref{claim: NoFlowOnOutgoingEdges: 1} and \ref{claim: NoFlowOnOutgoingEdges: 2} together with  $\tilde{g}\leq u$. \qedhere
  		\end{structuredproof}
  	\end{proofClaim}
  \end{proof}
  
  A simple example  shows that the sufficient condition in \Cref{lem: NoFlowOnOutgoingEdges} is not also necessary for implementability:

  \begin{figure}
  	\centering
  	\BigPicture[3]{
  		\begin{adjustbox}{max width=\textwidth}
  			\input{tikz/CE_ImplementableImpliesNoCycles}
  		\end{adjustbox}
  	}
  	\caption{The network and flow from \Cref{exa: NoFlowOnOutgoingEdges}. 
  		}
  	\label{fig: NoFlowOnOutgoingEdges}
  \end{figure}

  \begin{example}\label{exa: NoFlowOnOutgoingEdges}
  	We consider the network depicted in \Cref{fig: NoFlowOnOutgoingEdges} with flow-independent, constant travel times of~$1$ on all edges. There are $2$ commodities with identical VoT parameters $\gamma_i = 1,i \in I$, the   source, \sink pairs $\source_i,\dest_i, i \in I$ and the commodities' network inflow rates being  equal to $r_1:= 1_{[0,1]}$ and $r_2 := 1_{[1,2]}$. 
  	
  	Consider the walk flow $h_{\wa_i} = \inflow_i, i \in I$ for  $ \wa_1:=((\arc_1),1)$ and $\wa_2:=((\arc_2),2)$. 
  	The corresponding induced edge flow is given by $u_{\arc_1}= 1_{[0,1]}$ and $u_{\arc_2}= 1_{[1,2]}$. 
  	The latter is clearly implementable (with arbitrary tolls), yet, does not satisfy  the sufficient condition in \Cref{lem: NoFlowOnOutgoingEdges}, demonstrating that it is not necessary. 
  \end{example}
  
  Note that \Cref{exa: NoFlowOnOutgoingEdges} also shows that the following adapted version of \Cref{thm: mainSingleSink: UsedDCyclesHaveZeroD}, which is a weaker condition than the one stated in \Cref{lem: NoFlowOnOutgoingEdges}, is also not a necessary condition: 
  \begin{itemize}
  	\item For all destinations $\dest_i,i\in I$, there is no $\dest_i$-cycle carrying flow under $u$.
  \end{itemize}

  Next, we investigate an adaptation of \Cref{thm: mainSingleSink} in which we require the properties for each commodity separately. 
  
  \begin{lemma}\label{lem: NecesCommoditySepara}
  	Let $h\in \wir$ be a walk inflow rate  that induces $u$ for the multi-source, multi-\sink case with $u^i=\ell^u_{\Routes_i}(h^i),i \in I$ being the commodity-specific edge flows.
  	Under \Cref{ass: u: outflow} and \Cref{ass: PCSep+ZeroCycles: Sep}, the chain of implications $\ref{lem: NecesCommoditySepara:Optimal} \Rightarrow\ref{lem: NecesCommoditySepara:ReachableEdgesHaveZeroD} \Rightarrow \ref{lem: NecesCommoditySepara:UsedDCyclesHaveZeroD}$ holds for the following statements and all $i\in I$:  
  	\begin{thmparts} 
  		\item $u$ is implementable via $h$. \label[thmpart]{lem: NecesCommoditySepara:Optimal}
  		\item Every used edge under $u^i$ that is reachable from $\dest_i$ via used edges under $u^i$, has zero private costs, i.e.~we have the following implication for almost all $t \in \hori$ and all walks $\wa \in \hat{\Routes}^{\dest_i}$  and $j \leq |\wa|$: 
  		\begin{align*}
  			u^i_{\wa[j']}(\arr_{\wa,j'}(u,t)) > 0 \text{ for all } j'\leq j  \implies \psi^i_{\wa[j]}(u,\arr_{\wa,j}(u,t)) = 0.
  		\end{align*}\label[thmpart]{lem: NecesCommoditySepara:ReachableEdgesHaveZeroD}
  		\item  $u^i$ only sends flow along a $\dest_i$-cycle if the latter has zero private costs, i.e.~we have the following implication 
  		for an arbitrary representative of $u^i$, for almost all $ t\in \hori$ and  all $\dest_i$-cycles $c$: 
  		\begin{align*}
  			u^i_{c[j]}(\arr_{c,j}(u,t)) >0  \text{ for all } j \leq \abs{c}   \implies \psi^i_{c[j]}(u,\arr_{c,j}(u,t)) = 0 \text{ for all } j \leq \abs{c}.
  		\end{align*}  \label[thmpart]{lem: NecesCommoditySepara:UsedDCyclesHaveZeroD}
  	\end{thmparts}
  \end{lemma}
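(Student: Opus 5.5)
The plan is to mirror the proof of \ref{thm: mainSingleSink: impl}$\Rightarrow$\ref{thm: mainSingleSink: ReachableEdgesHaveZeroD}$\Rightarrow$\ref{thm: mainSingleSink: UsedDCyclesHaveZeroD} in \Cref{thm: mainSingleSink}, now for a fixed commodity $i$. We work only with walks of $\Routes_i$ and the commodity-specific flow $u^i=\ell^u_{\Routes_i}(h^i)$ instead of the aggregate $u$. The key point is that in the old argument \Cref{ass: PCSep+ZeroCycles: DestCycles} was only needed to pass from one commodity's costs to all commodities. Here the statement involves only $\psi^i$, so that step disappears.

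For \ref{lem: NecesCommoditySepara:Optimal}$\Rightarrow$\ref{lem: NecesCommoditySepara:ReachableEdgesHaveZeroD}, let $h$ be a $\prices$-DUE inducing $u$. Fix $\wa^\dest=(\arc_1,\dots,\arc_m)\in\hat\Routes^{\dest_i}$ and $\mathfrak T$ such that $u^i_{\arc_{j}}(\arr_{\wa^\dest,j}(u,t))>0$ for all $j\le m$ and a.e.\ $t\in\mathfrak T$. By induction over $j$ we prove the analogue of \eqref{eq: ImplicationSubclaim:thm:main:part1}: for every $\wa\in\Routes_i$ with $\wa[j_\wa]=\arc_j$, for a.e.\ $t\in\arr_{\wa,j_\wa}(u,\cdot)^{-1}(\arr_{\wa^\dest,j}(u,\cdot)(\mathfrak T))$ with $h_\wa(t)>0$, all later edges satisfy $\psi^i+\prices=0$. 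This induction claim again yields $\psi^i_{\arc_j}+\prices_{\arc_j}=0$ along $\wa^\dest$ a.e.\ on $\mathfrak T$. To see this, apply \Cref{lem: Relations:h>0u>0:u>0ExistsCountableM} to $h^i$ and $u^i$ (rather than $h$ and $u$), which provides a walk of commodity $i$ carrying positive flow into $\arc_j$ at the relevant times.

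In the base case $j=1$ the edge $\arc_1$ leaves $\dest_i$, so $\wa_{<j_\wa}$ is already an $\source_i,\dest_i$-walk in $\Routes_i$. By nonnegativity, truncating strictly reduces cost wherever a positive term remains, which contradicts the DUE property. In the induction step we splice the prefix $\wa^j_{<\rho(j)}$ with the suffix $\wa^{j-1}_{>\rho(j-1)}$. Since both walks now belong to commodity $i$, the spliced walk lies in $\Routes_i$ and its suffix has zero $\psi^i$-cost directly by the induction hypothesis, so no cycle assumption is needed. The measure-theoretic bookkeeping is the same as before: Lusin's property of the arrival maps, \Cref{subclaim: SetEqual}, and the fact that $u^i$ itself has an outflow rate. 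The last point follows from $u^i\le u$ and \Cref{ass: u: outflow}, which lets us invoke \Cref{lem: elluExistenceProperties:ExistenceInducedFlowOnJthEdge} for positivity of image sets. The countable-union argument at the end of that proof then gives \ref{lem: NecesCommoditySepara:ReachableEdgesHaveZeroD}.

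For \ref{lem: NecesCommoditySepara:ReachableEdgesHaveZeroD}$\Rightarrow$\ref{lem: NecesCommoditySepara:UsedDCyclesHaveZeroD}, take a used $\dest_i$-cycle $c$ and a time $t$ at which some edge $c[j]$ has $\psi^i\neq0$. Let $j(t)$ be the first such index. Then $t$ lies in the exceptional set of \ref{lem: NecesCommoditySepara:ReachableEdgesHaveZeroD} for the $\dest_i$-walk $c_{\le j(t)}$, and a countable union over cycles and indices gives a null set. The main obstacle is the base case and the splicing step: we must ensure that switching to the truncated or spliced walk is admissible for commodity $i$ and that the relevant time sets have positive measure. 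This requires carefully restricting every application of the decomposition lemma to $h^i$, $u^i$, which is where I expect the most care to be needed.
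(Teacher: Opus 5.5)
Your proposal is correct and follows the paper's proof: the paper reruns the argument for \ref{thm: mainSingleSink: impl}$\Rightarrow$\ref{thm: mainSingleSink: ReachableEdgesHaveZeroD}$\Rightarrow$\ref{thm: mainSingleSink: UsedDCyclesHaveZeroD} of \Cref{thm: mainSingleSink} for a single commodity, and notes, as you do, that \Cref{ass: PCSep+ZeroCycles: DestCycles} was only needed there because the walks feeding the used $\dest$-walk could belong to different commodities. One minor correction: the positivity of image sets such as $\arr_{\wa,j}(u,\cdot)(\mathfrak D)$ for $j\le\abs{\wa}$ follows from the existence of $\ell^u_{\wa,j}(h_\wa)$ (which holds because $h$ induces $u$) via \Cref{lem: elluExistenceProperties:ExistenceInducedFlowOnJthEdge}, not from $u^i$ admitting outflow rates, although that latter claim is also true.
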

  \begin{proof} 
  	The implications $\ref{lem: NecesCommoditySepara:Optimal} \Rightarrow \ref{lem: NecesCommoditySepara:ReachableEdgesHaveZeroD} \Rightarrow \ref{lem: NecesCommoditySepara:UsedDCyclesHaveZeroD}$  follow completely analogous to the corresponding implications in \Cref{thm: mainSingleSink}. 
  	Remark that for these implications, only \Cref{ass: u: outflow} and \Cref{ass: PCSep+ZeroCycles: Sep,ass: PCSep+ZeroCycles: DestCycles} were used in the proof of \Cref{thm: mainSingleSink}. \Cref{ass: PCSep+ZeroCycles: DestCycles} however, was only necessary in the proof the induction hypothesis for \Cref{claim:thm:main:part1} as the used edges along the $\dest$-walk were not required to be induced by the same commodity. 
  \end{proof}

  With the following example we show that, in general, 
  \ref{lem: NecesCommoditySepara:ReachableEdgesHaveZeroD}   in \Cref{lem: NecesCommoditySepara} holding for all $i\in I$ does not imply~\ref{lem: NecesCommoditySepara:Optimal}. 
  \begin{figure}
  	\centering
  	\BigPicture[3]{
  		\begin{adjustbox}{max width=\textwidth}
  			\definecolor{colA}{rgb}{.8,0,0}%
\definecolor{colB}{rgb}{0,0,.8}%
\definecolor{colC}{rgb}{0,.8,0}%
\definecolor{colD}{rgb}{.8,.8,0}%
\definecolor{colE}{rgb}{.8,0,.8}%
\definecolor{colF}{rgb}{0,.8,.8}%
\colorlet{fcolAr}{colA!70!white}%
\colorlet{fcolBr}{colB!70!white}%
\colorlet{fcolCr}{colC!70!white}%
\colorlet{fcolDr}{colD!70!white}%
\colorlet{fcolEr}{colE!70!white}%
\colorlet{fcolFr}{colF!70!white}%
\tikzstyle{fcolA}=[color=colA,opacity=.7]%
\tikzstyle{fcolB}=[color=colB,opacity=.7]%
\tikzstyle{fcolC}=[color=colC,opacity=.7]%
\tikzstyle{fcolD}=[color=colD,opacity=.7]%
\tikzstyle{fcolE}=[color=colE,opacity=.7]%
\tikzstyle{fcolF}=[color=colF,opacity=.7]%
 \makeatletter%
\ifx\c@timestep\undefined%
 \newcounter{timestep}%
\fi%
 \makeatother%
\newcommand{\drawNetworkSkeleton}[1][]{%
    \coordinate(t2)at(0,0);
    \coordinate(s1s2)at(0,-3);
    \coordinate(t1)at(0,-6);
    
    \node[namedVertexW](temp-t2)at(t2){$\dest_{{\color{colB}2}}$};
    \node[namedVertexW](temp-s1s2)at(s1s2){$\source_{\color{colA}1/{\color{colB}2}}$};
    \node[namedVertexW](temp-t1)at(t1){$\dest_{\color{colA}1}$};

    \draw[edge](temp-s1s2)--node[left]{\ifthenelse{\equal{#1}{desc}}{$e_1$}{}}(temp-t2);
    \draw[edge](temp-s1s2)--node[left]{\ifthenelse{\equal{#1}{desc}}{$e_2$}{}}(temp-t1);
    \draw[edge](temp-t1)to[bend left=50]node[right]{\ifthenelse{\equal{#1}{desc}}{$e_4$}{}}(temp-t2);
    \draw[edge](temp-t2)to[bend left=50]node[left]{\ifthenelse{\equal{#1}{desc}}{$e_3$}{}}(temp-t1);
}%
\newcommand{\drawNetwork}[1][]{
    \ifthenelse{\equal{#1}{}}{
        \stepcounter{timestep}
    }{
        \ifthenelse{\equal{#1}{desc}}{
            \setcounter{timestep}{0}
        }{}
    }
    \node at($(t2)+(0,1.2)$){$t=\arabic{timestep}$:};

    \node[namedVertexF](t2)at(t2){$\dest_{{\color{colB}2}}$};
    \node[namedVertexF](s1s2)at(s1s2){$\source_{{\color{colA}1}/{\color{colB}2}}$};
    \node[namedVertexF](t1)at(t1){$\dest_{\color{colA}1}$};

    \ifthenelse{\equal{#1}{desc} \OR \value{timestep}<1}{
        \draw[line width=5,<-,fcolA](s1s2) -- +(-1.3,-.8)node[anchor=east,color=colA]{$r_1=\CharF[[0,1]]$};
        \draw[line width=5,<-,fcolB](s1s2) -- +(1.3,-.8)node[anchor=west,color=colB]{$r_2=\CharF[[0,1]]$};
    }{}
}%
\begin{tikzpicture}
    \newcommand{\horshiftamount}{5cm}
    

    \begin{scope}[yshift=0cm]
    \setcounter{horshift}{0}

    \begin{scope}[xshift=\arabic{horshift}*\horshiftamount]
        \drawNetworkSkeleton[desc]
    
        \drawNetwork[desc]
    \end{scope}

    \stepcounter{horshift}
    \begin{scope}[xshift=\arabic{horshift}*\horshiftamount]
        \drawNetworkSkeleton

        \draw[line width=10pt,fcolA](s1s2) -- (t2);
        \draw[line width=10pt,fcolB](s1s2) -- (t1);
    
        \drawNetwork
    \end{scope}

    \stepcounter{horshift}
    \begin{scope}[xshift=\arabic{horshift}*\horshiftamount]
        \drawNetworkSkeleton
    
        \drawNetwork

        \draw[line width=10pt,fcolA,shorten <=-1,shorten >=-1](t2) to[bend left=50] (t1);
        \draw[line width=10pt,fcolB,shorten <=-1,shorten >=-1](t1) to[bend left=50] (t2);

        \drawNetwork[notime]
    \end{scope}

    \end{scope}


    \begin{scope}[yshift=-9cm]
    \setcounter{horshift}{0}

    \begin{scope}[xshift=\arabic{horshift}*\horshiftamount]
        \drawNetworkSkeleton[desc]
    
        \drawNetwork[desc]
    \end{scope}

    \stepcounter{horshift}
    \begin{scope}[xshift=\arabic{horshift}*\horshiftamount]
        \drawNetworkSkeleton

        \draw[line width=10pt,fcolA](s1s2) -- (t1);
        \draw[line width=10pt,fcolB](s1s2) -- (t2);
    
        \drawNetwork
    \end{scope}

    \stepcounter{horshift}
    \begin{scope}[xshift=\arabic{horshift}*\horshiftamount]
        \drawNetworkSkeleton
    
        \drawNetwork
    \end{scope}

    \end{scope}
\end{tikzpicture}
  		\end{adjustbox}
  	}
  	\caption{The network and flows from \Cref{exa: CounterCommoditySepara}. All edges have a flow-independent constant travel time of~$1$. The two-commodity flow in the top row has no flow of a commodity~$i$ travelling along a $\dest_i$-cycle, which, by \Cref{lem: NecesCommoditySepara}, is a necessary condition for being implementable. Nevertheless, this flow is not optimal for \eqref{opt: Master} (as proven by the flow in the bottom row) and, hence, not implementable.}
  	\label{fig: CounterCommoditySepara}
  \end{figure}
  
  \begin{example}\label{exa: CounterCommoditySepara}
  	We consider the network depicted in \Cref{fig: CounterCommoditySepara} with flow-independent, constant travel times of~$1$ on all edges. There are $2$ commodities with identical VoT parameters $\gamma_i = 1,i \in I$, the   source, sink pairs $\source_i,\dest_i, i \in I$ and both commodity network inflow rates being  equal to $1_{[0,1]}$.  
  	
  	The edge flow $u$ given by  $u_{\arc_i} = 1_{[0,1]}, i =1,2$ and $u_{\arc_i} = 1_{[1,2]},i=3,4$ is induceable only by the
  	walk flow $h_{\wa} = 1_{\wa_1}(\wa) \cdot 1_{[0,1]} + 1_{\wa_2}(\wa) \cdot 1_{[0,1]},\wa \in \Routes$ for $\wa_1:=((\arc_2,\arc_4),1),\wa_2:=((\arc_1,\arc_3),2)$. 
  	Moreover, since no flow of commodity~$i$ leaves its respective destination~$\dest_i$, the resulting commodity split of $u$ does fulfill the conditions in \Cref{lem: NecesCommoditySepara:ReachableEdgesHaveZeroD} and \Cref{lem: NecesCommoditySepara:UsedDCyclesHaveZeroD}. 
  	Yet, $h$ is not optimal for the master problem \eqref{opt: Master} and hence does not implement $u$ by \Cref{lem: NecessImpl}. This is because $\tilde{h}_\wa=  1_{\wa_3}(\wa) \cdot 1_{[0,1]} + 1_{\wa_4}(\wa) \cdot 1_{[0,1]},\wa \in \Routes$ for $\wa_3=((\arc_1),1),\wa_4=((\arc_2),2)$ has a strictly better objective value and is also feasible for \eqref{opt: Master} as the corresponding edge flow $\tilde{\g}$  is given by  $\tilde{\g}_{\arc_i} = 1_{[0,1]}, i =1,2$ and $\tilde{\g}_{\arc_i} = 0,i=3,4$ and, hence, fulfills $\tilde{\g}\leq u$.  
  \end{example}

\subsection{Sufficient Conditions for Strong Duality}\label{sec:ExistenceOptSolutions} 
 In this \namecref{sec:ExistenceOptSolutions}, we provide sufficient conditions for the master problem \eqref{opt: Master}  admitting strong duality \wrt $L_+^\infty(\hori)$ (hence, in particular \wrt $\MeasFuncUInt$) in the multi-source, single-\sink case (i.e.~$\dest_i = \dest,i\in I$). In fact we will prove this statement  for general optimization problems of the form

{\allowdisplaybreaks[0]
    \begin{align} 
         \inf_{\wflow}\;    &\objfunc(\wflow)  \tag{P} \label{opt: General} \\
        \text{s.t.: } &\ell_{\Routes'}(\wflow) \leq \eflow  \label{eq: ExOptSolLeq}\\
                    &\wflow \in \ofeas. \nonumber
    \end{align}}
    
Here, $\Routes'$ is an arbitrary countable collection of walks which contains each walk only finitely often. 
The objective function $\objfunc$ maps the domain $\ofeas$ to the extended real numbers $\R\cup\{\infty\}$. 
The function $\ell_{\Routes'}$ denotes as in \Cref{sec:uBasedNetworkLoadings} the \auto network loading \wrt an arbitrary fixed travel time function $\trav:\hori \to \R_+^\GA$ (i.e.~$\ell_{\Routes'} := \Nl[\trav(\cdot)]_{\Routes'}$).   
The constraint vector  $\eflow$ is an arbitrary  element in $L_+(\hori)^\GA$ and $\ofeas $ is a subset of $\edom{\Routes'}$. 
We assume that $\ofeas$ contains at least one element $\wflow$ fulfilling~\eqref{eq: ExOptSolLeq}, i.e.~the set of feasible solutions is non-empty.

Note that we obtain \eqref{opt: Master} as a special case of~\eqref{opt: General}  by choosing $\eflow := u$, $\Routes' := \Routes$, $\ofeas := \edom{\Routes}[u] \cap \wir$ and $\objfunc(\wflow):= \sum_{\wa \in \Routes}\dup{\wttime_\wa(u,\cdot)}{\wflow_\wa}$. 
Analogously to the case of \eqref{opt: Master},  we say that the optimization problem~\eqref{opt: General} fulfills strong (Lagrangian) duality \wrt $\MeasFuncUInt[\eflow]$ ($L^\infty_+(\hori)^\GA$), if 
there exists an optimal solution $\wflow^*$ for~\eqref{opt: General} and $\prices \in \MeasFuncUInt[\eflow]$ ($\prices \in L^\infty_+(\hori)^\GA$) such that 
        \begin{align*}
        \inf_{\wflow\in \ofeas} \bigl(\objfunc(\wflow)  + \dup{\prices}{\confunc(\wflow)}\bigr) = \objfunc(\wflow^*) ,
    \end{align*}  
    where $\confunc(\wflow) :=  \ell_{\Routes'}(\wflow)-\eflow $ denotes the constraint mapping of~\eqref{opt: General}. 
    Here, we set $\dup{\prices}{\ell_{\Routes'}(\wflow)-\eflow} := \dup{\prices}{\ell_{\Routes'}(\wflow)} - \dup{\prices}{\eflow} \in \R \cup \{\infty\}$ with the convention that $\infty - a := \infty$ for all $a \in \R$. Remark that  $\dup{\prices}{\ell_{\Routes'}(\wflow)} \in \R \cup \{\infty\}$ is well-defined for all $h \in \wir$ as $\prices$ and $\ell_{\Routes'}(\wflow)$ are non-negative.

    For our result, we need the objective function~$\objfunc$ to satisfy the following assumption, requiring that  the potential decrease in the objective value of an infeasible $\wflow \in \ofeas$ compared to the optimal value is at most proportionally to the  violation of the constraint.

\begin{assumption}\label{ass: StrongDualObj}
    There exists a constant $\const \in\R_+$ such that 
\begin{align*}
    \mu - \objfunc(\wflow) \leq \const \cdot \norm{\abs{\confunc(\wflow)}_+} \text{ for all } \wflow \in \ofeas,
\end{align*}
where $\abs{\confunc(\wflow)}_+ :=\max\{0,\confunc(\wflow)\}$ and $\mu$ denotes the infimum of \eqref{opt: General}. 
\end{assumption}
For the following \namecref{thm: ZeroDualityGapGeneral}, we remark that $\edom{\Routes'}\subseteq \seql[1][\Routes'][L_+(\hori)]$ holds according to \Cref{lem: elluContinuity}. In particular, we can equip $\edom{\Routes'}$ with the subspace topology inherited from~$\seql[1][\Routes'][L_+(\hori)]$. 

\begin{theorem}\label{thm: ZeroDualityGapGeneral}
     Assume that $\ofeas$ is convex and sequentially weakly closed in  $\edom{\Routes'}$  while the objective function $\objfunc$  is sequentially weakly lower semi-continuous on the feasible set $\{\wflow \in \ofeas\mid \ell_{\Routes'}(\wflow)\leq \eflow\}$,\footnote{i.e.\ $\limsup_{n \to\infty} \objfunc(h^n) \leq \objfunc(h)$ for any weakly converging sequence $h^n\wto h$ in $\seql[1][\Routes'][L_+(\hori)]$ with all $h^n$ and $h$ contained in the feasible set.}
     real-valued and linear on $\ofeas$ and 
     fulfills \oref{ass: StrongDualObj}. 
     Then \eqref{opt: General} fulfills strong duality \wrt $L_+^\infty(\hori)^\GA$. 
\end{theorem}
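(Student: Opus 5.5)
We prove \Cref{thm: ZeroDualityGapGeneral} by a penalty argument: we show that an optimal solution exists and that the constant toll $\prices \equiv \const\cdot\mathbf 1$ (with $\const$ from \Cref{ass: StrongDualObj}) is an optimal dual solution. This toll lies in $L_+^\infty(\hori)^\GA$.

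\textbf{Step 1: existence of a primal optimum.} The feasible set $\{\wflow\in\ofeas \mid \ell_{\Routes'}(\wflow)\leq\eflow\}$ is convex, because $\ofeas$ is convex and $\ell_{\Routes'}$ is linear (\oref{lem: elluExistenceProperties:Linearity}). It is also sequentially weakly closed: $\ofeas$ is, and $\ell_{\Routes'}$ is weakly sequentially continuous (\oref{lem: elluContinuity}), which preserves the order constraint. Feasibility together with $\ell_{\Routes'}(\wflow)\le\eflow$ bounds the flow uniformly, so feasible sequences are uniformly integrable and bounded in $\seql[1][\Routes'][L(\hori)]$. Hence every minimizing sequence has a weakly convergent subsequence by a Dunford--Pettis type argument, exactly as in \oref{thm: ExistenceOptSol}. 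Weak lower semicontinuity of $\objfunc$ on the feasible set then yields an optimal $\wflow^*$ with $\objfunc(\wflow^*)=\mu$. I would simply invoke \oref{thm: ExistenceOptSol} here.

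\textbf{Step 2: weak duality.} For every feasible $\wflow$ we have $\dup{\prices}{\confunc(\wflow)}\le 0$, so
\[
\inf_{\wflow\in\ofeas}\bigl(\objfunc(\wflow)+\dup{\prices}{\confunc(\wflow)}\bigr)\le\objfunc(\wflow^*)=\mu .
\]

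\textbf{Step 3: reverse inequality.} This is the key step. Fix $\wflow\in\ofeas$. If $\dup{\prices}{\ell_{\Routes'}(\wflow)}=\infty$, there is nothing to show. Otherwise,
\[
\dup{\prices}{\confunc(\wflow)}=\const\sum_{\arc}\int\bigl(\ell_{\Routes'}(\wflow)_\arc-\eflow_\arc\bigr)\di\sigma .
\]
This is not $\const\norm{\abs{\confunc(\wflow)}_+}$, because negative parts (slack) reduce it. Applying \Cref{ass: StrongDualObj} directly therefore only gives $\objfunc(\wflow)+\const\norm{\abs{\confunc}_+}\ge\mu$, and the slack terms still have to be handled.

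To fix this, I would pass to a truncated flow. Apply \Cref{claim: ZeroDualityGapTildeH} to $\wflow$ and the constraint vector $\eflow$. This gives $\tilde\wflow\le\wflow$ with $\ell_{\Routes'}(\tilde\wflow)\le\eflow$ and the saturation property \ref{claim: ZeroDualityGapTildeH:Union}. Write $\wflow=\tilde\wflow+r$. By linearity of $\objfunc$ and $\ell_{\Routes'}$,
\[
\objfunc(\wflow)+\dup{\prices}{\confunc(\wflow)}=\objfunc(\tilde\wflow)+\dup{\prices}{\confunc(\tilde\wflow)}+\objfunc(r)+\const\norm{\ell_{\Routes'}(r)} .
\]
The constraint slack of $\tilde\wflow$ is still unfavourable. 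So instead, I would aim to use \Cref{ass: StrongDualObj} on the convex combination $\tilde\wflow+\lambda r$ and exploit linearity in $\lambda$. Since $\wflow^*$ is optimal, the linear function $\lambda\mapsto\objfunc(\tilde\wflow+\lambda r)$ satisfies
\[
\mu-\objfunc(\tilde\wflow)-\lambda\objfunc(r)\le\const\lambda\norm{\ell_{\Routes'}(r)}\quad\text{for }\lambda\in[0,1] .
\]

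This is where single-destination structure and the finite-walk assumption would enter in the application to \eqref{opt: Master}. In the general statement, I expect the intended route is to compare $\wflow$ with the optimum $\wflow^*$ through convex combinations $\wflow^*+\lambda(\wflow-\wflow^*)\in\ofeas$. Linearity gives
\[
\lambda\bigl(\mu-\objfunc(\wflow)\bigr)=\mu-\objfunc(\wflow_\lambda)\le\const\norm{\abs{\confunc(\wflow_\lambda)}_+} .
\]
Here $\abs{\confunc(\wflow_\lambda)}_+\le\lambda\,\abs{\ell_{\Routes'}(\wflow)-\ell_{\Routes'}(\wflow^*)}_+$, since $\ell_{\Routes'}(\wflow^*)\le\eflow$. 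Dividing by $\lambda$ yields
\[
\objfunc(\wflow)+\const\norm{\abs{\ell_{\Routes'}(\wflow)-\ell_{\Routes'}(\wflow^*)}_+}\ge\mu .
\]
Choosing the toll $\const$ only where $\ell_{\Routes'}(\wflow^*)=\eflow$ is tight does not work, since the toll must be independent of $\wflow$. The main obstacle is therefore to replace $\ell_{\Routes'}(\wflow^*)$ by $\eflow$. The first thing I would try is to choose, via \Cref{claim: ZeroDualityGapTildeH}, an optimal solution $\wflow^*$ that is maximal, saturating $\eflow$ wherever possible. I would then argue that the remaining slack $\eflow-\ell_{\Routes'}(\wflow^*)$ can be absorbed by taking $\prices=\const\cdot\Indi_{\{\ell_{\Routes'}(\wflow^*)_\arc=\eflow_\arc\}}$. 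Using \ref{claim: ZeroDualityGapTildeH:Union}, I would check that the inequality above is unchanged when the norm is restricted to the saturated set, and that $\dup{\prices}{\confunc(\wflow)}$ dominates the restricted positive part.
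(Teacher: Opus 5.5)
\textbf{There is a genuine gap: the proposal never produces a valid multiplier.} Steps 1 and 2 are fine, but both explicit tolls you put forward are false in general.

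As you notice yourself, $\prices\equiv\const$ fails whenever a constraint is slack at the optimum. Zero gap forces $\dup{\prices}{\confunc(\wflow^*)}=0$, i.e.\ complementary slackness. Your final candidate $\prices=\const\cdot\Indi_S$, with $S$ the set where $\ell_{\Routes'}(\wflow^*)=\eflow$, also fails. Here is a counterexample.
\begin{itemize}
\item Take three internally disjoint $\source$,$\dest$-paths $\wa_1,\wa_2,\wa_3$ with zero travel times, and one commodity with inflow $\Indi_{[0,1]}$.
\item Let $\ofeas=\edom{\Routes}\cap\wir$ and $\objfunc(\wflow)=\sum_{i}c_i\int_\hori\wflow_{\wa_i}\di\leb$ with $c_1<c_2<c_3$.
\item Set $\eflow=\tfrac13\Indi_{[0,1]}$ on the first edges $\arc_1,\arc_2$ of $\wa_1,\wa_2$, and $\eflow=\Indi_{[0,1]}$ on all other edges.
\end{itemize}
All hypotheses of the theorem hold. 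In particular the Assumption holds with $\const=c_3-c_1$, because $\mu-\objfunc(\wflow)=\int_{[0,1]}\bigl((c_3-c_1)(\wflow_{\wa_1}-\tfrac13)+(c_3-c_2)(\wflow_{\wa_2}-\tfrac13)\bigr)\di\leb$.

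The unique optimum sends $\tfrac13\Indi_{[0,1]}$ into every path. Complementary slackness kills all tolls on the slack edges. Pointwise, $\min\{c_1+\prices_{\arc_1},c_2+\prices_{\arc_2},c_3\}-\tfrac13(\prices_{\arc_1}+\prices_{\arc_2})\le\tfrac13(c_1+c_2+c_3)$, with equality only for $\prices_{\arc_1}=c_3-c_1$ and $\prices_{\arc_2}=c_3-c_2$. So the zero-gap toll is unique on $[0,1]$ and is not of the form $\const\cdot\Indi_S$ for any admissible $\const$. The multiplier depends on the objective, not just on $\const$ and the tight set.

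\textbf{The intermediate steps do not repair this.} \Cref{claim: ZeroDualityGapTildeH} gives $\tilde\wflow\in\edom{\Routes'}$, but typically $\tilde\wflow\notin\ofeas$ because it breaks $\sum_{\wa\in\Routes_i}\wflow_\wa=\inflow_i$. Hence neither $\objfunc(\tilde\wflow)$ nor the Assumption is available for $\tilde\wflow+\lambda r$. In the paper that lemma is used only later, to \emph{verify} the Assumption for \eqref{opt: AlmostMaster}.

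Your convex-combination bound is correct, but it controls $\mu-\objfunc$ by a \emph{sublinear} penalty. Passing from a sublinear majorant to a \emph{linear} functional is precisely a Hahn--Banach step, and that is the missing idea.

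The paper takes this step from \cite[Theorem~3.2]{flores2013strong}. Strong duality \wrt $L_+^\infty(\hori)^\GA$ holds iff $\cl{\mathrm{cone}(\mathcal E)}$ misses $(-\infty,0)\times\{0\}$, where $\mathcal E=(\objfunc,\confunc)(\ofeas)-(\mu,0)+(\R_+\times L_+(\hori)^\GA)$ is convex. Suppose $\lambda_n(\objfunc(\wflow_n)-\mu+\beta_n,\confunc(\wflow_n)+z_n)\to(l,0)$ with $\beta_n,z_n\ge0$. Then $\lambda_n\norm{\abs{\confunc(\wflow_n)}_+}\le\lambda_n\norm{\confunc(\wflow_n)+z_n}\to0$, and the Assumption gives $l\ge0$.

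Equivalently, you could apply the sandwich theorem to the convex value function $v(z)=\inf\{\objfunc(\wflow)\mid\wflow\in\ofeas,\ \confunc(\wflow)\le z\}$. By the Assumption it satisfies $v(z)\ge\mu-\const\norm{\abs{z}_+}$, with equality at $z=0$. This gives some $\prices$ with $0\le\prices\le\const$, but not an explicit one.
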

\begin{proof}
With the constraint mapping $\confunc$, we can rewrite \eqref{opt: General} as 
    \begin{align*} 
        \inf_{\wflow}\; &  \objfunc(\wflow)   \\
        \text{s.t.: }&\confunc(\wflow) \in -L_+(\hori)^\GA\\
                    &\wflow \in \ofeas \subseteq \seql[1][\Routes'][L_+(\hori)]
    \end{align*}
    For problems of this form, it was shown 
    in \cite[Theorem~3.2]{flores2013strong} via an application of the Hahn-Banach separation theorem  that strong duality (\wrt $L_+^\infty(\hori)^\GA$) is equivalent to the property that the closure of the conic hull of $\mathcal{E}:=(\objfunc,\confunc)(\ofeas) - (\mu,0) + (\R_+ \times L_+(\hori)^\GA)$ has no points contained in $(-\infty,0) \times \{0\}$. 
    Here, $\mu$ denotes again the value of the optimal solution of~\eqref{opt: General} which exists by \oref{thm: ExistenceOptSol}. 
    For the applicability of \cite[Theorem~3.2]{flores2013strong}, remark that $\seql[1][\Routes'][L_+(\hori)]$ is a Banach space and hence in particular a Hausdorff topological vector space. Furthermore, 
 note that $\mathcal{E}$ is convex as $\objfunc,\confunc$ are linear functions on $\ofeas$  and $ \ofeas$ is convex by assumption. 

Since $\mathcal{E} \subseteq \R \times L(\hori)^\GA$ is the subset of a normed space, the closure and the sequential closure of the conic hull of $\mathcal{E}$ coincide (cf.~\cite[Lemma 3.3]{guide2006infinite}). Hence, it is enough to 
consider sequences $(\lambda_n)\subseteq \R_+$, $(\wflow_n)\subseteq \ofeas$, $(\beta_n) \subseteq \R_+$ and $(y_n) \subseteq L_+(\hori)^\GA$ with $\lambda_n \cdot (\objfunc(\wflow_n) - \mu+\beta_n,\confunc(\wflow_n) + y_n) \to (l,0) \in \cl{\mathrm{cone}(\mathcal{E})} $ and  show that $l\geq 0$. 

This now follows directly from \oref{ass: StrongDualObj}: 
Indeed, we have $\lim_{n \to \infty } \lambda_n \norm{\abs{\confunc(\wflow_n)}_+} = 0$ because of 
$\lambda_n\norm{\confunc(\wflow_n) +y_n} \to 0$ and $y_n \geq 0$. 
Hence, by \oref{ass: StrongDualObj}, we get 
\begin{align*}
   l &= \lim_{n \to \infty}\lambda_n \cdot (\objfunc(\wflow^n) - \mu+\beta_n) \geq \liminf_{n \to \infty} \lambda_n \cdot (\objfunc(\wflow^n) - \mu) +   \liminf_{n \to \infty} \lambda_n \cdot \beta_n\\
   &\geq \liminf_{n \to \infty}\lambda_n\cdot( -\const \cdot \norm{\abs{\confunc(\wflow^n)}_+}) + 0 =  0,
\end{align*}
 which finishes the proof. 
\end{proof}

From now on, we consider the case of single-destination networks. 
As the next step, we use the above \Cref{thm: ZeroDualityGapGeneral} to  show that problems of the form 
 \begin{align}
    \inf_{h} \; & \dup{\decwttime }{h }\tag{${\tilde{\mathrm{P}}}$}\label{opt: AlmostMaster}  \\
    \text{s.t.: } &  \ell_\Routes(h) \leq \eflow \nonumber\\
    &h \in \edom{\Routes} \cap \wir,\nonumber
\end{align} 
with $\decwttime \in \seql[\infty][\Routes][L_+^\infty(\hori)]$ and $\eflow \in \ell_{\Routes}(\edom{\Routes}\cap \wir) $ admit strong duality \wrt $L^\infty_+(\hori)^\GA$. 
Remark that the master problem \eqref{opt: Master} does not always belong to this class of problems as $\wttime(u,\cdot) \in\seql[\infty][\Routes][L_+^\infty(\hori)]$ is not required. However, we will show in \Cref{thm: MasterZeroDual} that the strong duality of \eqref{opt: Master} follows from the strong duality of a related problem contained in the class of problems described via \eqref{opt: AlmostMaster}. 

Remark that the requirement $\wttime(u,\cdot) \in\seql[\infty][\Routes][L_+^\infty(\hori)]$ would, in particular, rule out the possibility to consider (weighted) travel times as private costs, i.e.~\eqref{eq: PC=WTT} holding. 
In this case, we have  $\wttime(u,\cdot) \notin\seql[\infty][\Routes][L_+^\infty(\hori)]$ whenever there exists a sequence of walks whose travel time is unbounded for a non-null set of times, e.g.~when there exists a cycle whose travel time is lower bounded by some $\varepsilon >0$.

\begin{theorem}\label{thm: AlmostMasterZeroDual}
Consider the problem  \eqref{opt: AlmostMaster} 
  in the multi-source, single-destination case   for some $\decwttime \in \seql[\infty][\Routes][L^\infty_+(\hori)]$ and  $\eflow \in \ell_{\Routes}(\edom{\Routes}\cap \wir) $. Then,  \eqref{opt: AlmostMaster}  admits strong duality \wrt $L_+^\infty(\hori)^\GA$. 
\end{theorem}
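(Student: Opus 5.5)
The plan is to apply \Cref{thm: ZeroDualityGapGeneral} with $\Routes' := \Routes$, $\ofeas := \edom{\Routes} \cap \wir$ and $\objfunc(h) := \dup{\decwttime}{h}$. Since $\decwttime \in \seql[\infty][\Routes][L^\infty_+(\hori)]$ and $\edom{\Routes} \subseteq \seql[1][\Routes][L_+(\hori)]$, the objective is real-valued and linear on $\ofeas$. It is even weakly continuous, because $\decwttime$ lies in the topological dual. Convexity of $\ofeas$ follows from linearity of $\ell_\Routes$ (\oref{lem: elluExistenceProperties:Linearity}), together with the fact that both the characterization \eqref{eq: nlexists} and the constraints $\sum_{\wa \in \Routes_i} h_\wa = \inflow_i$ are preserved under convex combinations. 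For sequential weak closedness in $\edom{\Routes}$, the conditions $\sum_{\wa \in \Routes_i} h_\wa = \inflow_i$ pass to weak limits by testing against indicator-type functions in $\seql[\infty][\Routes][L^\infty(\hori)]$. Membership in $\wir$ then reduces to solvability of \eqref{eq: DefEdgeFlow}, and this is automatic here because the loading is \auto and the limit lies in $\edom{\Routes}$. The same weak-closedness arguments from \oref{lem: elluContinuity} that underlie \oref{thm: ExistenceOptSol} should suffice for this.

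The main obstacle is \Cref{ass: StrongDualObj}. We need a constant $\const$ with $\mu - \dup{\decwttime}{h} \le \const \norm{\abs{\ell_\Routes(h) - \eflow}_+}$ for all $h \in \ofeas$. My approach is to repair an infeasible $h$ into a feasible $\bar h$ whose cost exceeds that of $h$ by at most $\const$ times the violation; then $\mu \le \dup{\decwttime}{\bar h}$ gives the claim.

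The repair has three steps. First, apply \Cref{claim: ZeroDualityGapTildeH} to $h$ and $\eflow$ to obtain the largest common flow $\tilde h \le h$ with $\ell_\Routes(\tilde h) \le \eflow$. The removed inflow $h - \tilde h$ must be controlled by the violation $\norm{\abs{\ell_\Routes(h) - \eflow}_+}$. Property \ref{claim: ZeroDualityGapTildeH:Union} says that removed particles hit, at some edge, a time in $\mathfrak T_\arc$ where the edge is saturated ($\tilde{\g}_\arc = \eflow_\arc < \g_\arc$). So the mass of $\ell_\Routes(h - \tilde h)$ on each $\mathfrak T_\arc$ equals $(\g_\arc - \eflow_\arc)$ there. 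Each removed particle passes some saturated edge at least once, so its mass is bounded by the total excess summed over edges. This gives $\norm{h - \tilde h} \le \norm{\abs{\confunc(h)}_+}$.

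Second, $\tilde h$ violates the demand constraints by exactly the removed amount per commodity. It must be completed to an element of $\wir$ with $\ell_\Routes \le \eflow$ at bounded cost. This is where the single destination is used. Since $\eflow = \ell_\Routes(h^\eflow)$ for some $h^\eflow \in \edom{\Routes}\cap\wir$, consider the residual $\eflow - \ell_\Routes(\tilde h) \ge 0$. It satisfies flow conservation except at the sources (with surplus exactly the missing demand) and at $\dest$. A super-source construction combined with \oref{thm: FlowDecomp} and \Cref{lem: DifferenceGeneral}-type arguments decomposes it into $\source_i$,$\dest$-walks, zero-cycles and $\dest$-cycles. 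Because all commodities share $\dest$, the walk part can be reassigned to whichever commodity lacks inflow at the corresponding source. This yields $\hat h$ with total mass equal to the missing demand and $\ell_\Routes(\tilde h + \hat h) \le \eflow$. Commodity labels do not affect loading, which is the key point that fails for multiple destinations.

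Third, set $\bar h := \tilde h + \hat h$. Then
\[
\dup{\decwttime}{\bar h} - \dup{\decwttime}{h} \le \norm{\decwttime}\,\norm{\hat h} \le \norm{\decwttime}\,\norm{h - \tilde h},
\]
so $\const := \norm{\decwttime}$ works.

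I expect the hardest technical point to be making the per-source surplus decomposition rigorous. The walk pieces starting at $\source_i$ must have inflow exactly matching the deficit $\inflow_i - \sum_{\wa \in \Routes_i}\tilde h_\wa$ as functions of time, and must lie in $\edom{\Routes}$. I would handle this via the super-source trick, with one auxiliary edge from a super source to each $\source_i$ carrying the deficit rate.
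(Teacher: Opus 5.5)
Your proposal is correct and follows the paper's proof essentially step for step. You verify the hypotheses of \Cref{thm: ZeroDualityGapGeneral} and obtain \Cref{ass: StrongDualObj} with $\const = \norm{\decwttime}$ as the paper does: take the largest common flow $\tilde h$ from \Cref{claim: ZeroDualityGapTildeH}, bound the removed mass by the violation using property~\ref{claim: ZeroDualityGapTildeH:Union}, and refill the per-source deficit by a super-source flow decomposition of $\eflow - \tilde{\g}$, which needs the shared destination. Your explicit allowance for $\dest$-cycles in that residual decomposition is slightly more careful than the paper, which applies \Cref{thm: FlowDecomp} directly without checking that the node balance of $\eflow-\tilde{\g}$ at $\dest$ is nonpositive.
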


\begin{proof}
    We verify the conditions stated in \oref{thm: ZeroDualityGapGeneral}. 
    It is clear that $\edom{\Routes} \cap \wir$ is convex. Regarding its sequential weak closedness in $\edom{\Routes}$, consider a weakly converging sequence $h^n \wto h$ in $\edom{\Routes}$ with $h^n,n \in \N$ being contained in $\wir$. 
    Then, we have for arbitrary $\mathfrak T \in \mathcal{B}(\hori)$ and $i \in I$ by the weak convergence 
    \begin{align*}
         \int_{\mathfrak T}\inflow_i \di\sigma =   \sum_{\wa \in \Routes_i}\int_{\mathfrak T} h^n_\wa \di\sigma = \dup{\mathbf{1}_{\Routes_i,\mathfrak T}}{h^n} \to \dup{\mathbf{1}_{\Routes_i,\mathfrak T}}{h} =\sum_{\wa \in \Routes_i}\int_{\mathfrak T} h_\wa \di\sigma,
    \end{align*}
    where $\mathbf{1}_{\Routes_i,\mathfrak T } \in \seql[\infty][\Routes][L_+^\infty(\hori)]$ with $\mathbf{1}_{\Routes_i,\mathfrak T,\wa}(t):= 1$ if $\wa \in \Routes_i$ and $t \in \mathfrak T$ and $\mathbf{1}_{\Routes_i,\mathfrak T,\wa}(t):= 0$ else. 
    As $\mathfrak T$ was arbitrary, it follows that $\sum_{\wa \in \Routes_i} h_\wa = \inflow_i, i \in I$ and hence $h \in \wir$. 

    Regarding the properties required for $\objfunc$, let us start by observing that 
     $\objfunc$ is weakly continuous and real-valued as it is an element of the topological dual of $\seql\supseteq \edom{\Routes}$ (cf.~\Cref{lem: elluContinuity}). 

    Thus, it only remains to show that $\objfunc$ satisfies \Cref{ass: StrongDualObj}. We will do so for the constant $\const := \norm{\decwttime}$:  
    Take any arbitrary $h \in \ofeas = \edom{\Routes}\cap \wir$  with  corresponding aggregated edge flow $\g$ and apply \Cref{claim: ZeroDualityGapTildeH} to it and (an arbitrary representative of) the bounding flow $\eflow$ to get a largest common flow $\tilde{h} \in\edom{\Routes}$ and corresponding edge flow~$\tilde{\g}$ fulfilling the properties stated in \Cref{claim: ZeroDualityGapTildeH}. With the help of this flow we can now show the following chain of inequalities:
    \begin{align}\label{eq: ZeroDualityGap0}
        \sum_{\arc \in \GA}\norm{|\g_\arc-\eflow_\arc|_+} \geq  \norm{\sum_{\wa \in \Routes}h_\wa^* - \sum_{\wa \in \Routes}\tilde{h}_\wa} \geq 
        \frac{1}{\norm{\decwttime}} \cdot (\objfunc(h^*)-\objfunc(h)),
    \end{align}
    where $h^*$ is an optimal solution of~\eqref{opt: Master}, which exists by \oref{thm: ExistenceOptSol}.  This then directly implies that \Cref{ass: StrongDualObj} holds.

\begin{structuredproof}
    \proofitem{First inequality in~\eqref{eq: ZeroDualityGap0}} 
    We calculate (explanations follow): 
    \begin{align}
        \sum_{\arc \in \GA}\norm{|\g_\arc-\eflow_\arc|_+} &= \sum_{\arc \in \GA} \int_{\hori} |\g_\arc- \eflow_\arc|_+ \di\sigma \geq   \sum_{\arc \in \GA} \int_{\mathfrak T_\arc} |\g_\arc- \eflow_\arc|_+ \di\sigma = \sum_{\arc \in \GA} \int_{\mathfrak T_\arc} \g_\arc- \tilde{g}_\arc\di\sigma \label{eq: ZeroDualityGap1} \\
          & =\sum_{\arc \in \GA} \int_{\mathfrak T_\arc} \sum_{\wa \in \Routes} \sum_{j:\wa[j] =\arc}\ell_{\wa,j}(h_\wa) -  \sum_{\wa \in \Routes} \sum_{j:\wa[j] =\arc}\ell_{\wa,j}(\tilde{h}_\wa)  \di\sigma  \nonumber    \\
        & =\sum_{\arc \in \GA} \int_{\mathfrak T_\arc} \sum_{\wa \in \Routes} \sum_{j:\wa[j] =\arc}\ell_{\wa,j}(h_\wa) - \ell_{\wa,j}(\tilde{h}_\wa)  \di\sigma  \label{eq: ZeroDualityGap7}\\
        &= \sum_{\arc \in \GA} \sum_{\wa \in \Routes} \sum_{j:\wa[j] =\arc} \int_{\mathfrak T_\arc}  \ell_{\wa,j}(h_\wa) - \ell_{\wa,j}(\tilde{h}_\wa) \di\sigma  \nonumber \\
        &= \sum_{\arc \in \GA}\sum_{\wa \in \Routes} \sum_{j:\wa[j] =\arc} \int_{\mathfrak T_\arc}  \ell_{\wa,j}(h_\wa-\tilde{h}_\wa)  \di\sigma  \label{eq: ZeroDualityGap2} \\
        &= \sum_{\arc \in \GA} \sum_{\wa \in \Routes} \sum_{j:\wa[j] =\arc} \int_{\arr_{\wa,j}^{-1}(\mathfrak T_\arc)}  h_\wa - \tilde{h}_\wa \di\sigma \nonumber  \\
        &= \sum_{\wa \in \Routes} \sum_{\arc \in \GA} \sum_{j:\wa[j] =\arc} \int_{\arr_{\wa,j}^{-1}(\mathfrak T_\arc)}  h_\wa - \tilde{h}_\wa \di\sigma    \label{eq: ZeroDualityGap8} \\
        &\geq \sum_{\wa \in \Routes}   \int_{ \bigcup_{\arc \in \GA } \bigcup_{j:\wa[j] = \arc} \arr_{\wa,j}^{-1}(\mathfrak T_\arc)}  h_\wa - \tilde{h}_\wa \di\sigma\label{eq: ZeroDualityGap5}  \\ 
        &= \sum_{\wa \in \Routes}   \int_{\hori}  h_\wa - \tilde{h}_\wa \di\sigma  \label{eq: ZeroDualityGap3} \\
        &= \norm{\sum_{\wa \in \Routes}h_\wa - \sum_{\wa \in \Routes}\tilde{h}_\wa}  \label{eq: ZeroDualityGap6} \\
        &= \norm{\sum_{\wa \in \Routes}h_\wa^* - \sum_{\wa \in \Routes}\tilde{h}_\wa}\label{eq: ZeroDualityGap4}  .
    \end{align}
    In the first line~\eqref{eq: ZeroDualityGap1}, we used $|\g_\arc-\eflow_\arc|_+ \geq 0$ as well as the definition of $\mathfrak T_\arc$ (see~\eqref{eq: DefT_e}).  
    In~\eqref{eq: ZeroDualityGap7}, we used the absolute convergence of both series. 
    In~\eqref{eq: ZeroDualityGap2}, we used the linearity of $\ell_{\wa,j}$, shown in \oref{lem: elluExistenceProperties}. 
    For \eqref{eq: ZeroDualityGap8}, we again used absolute convergence of the series. 
    The inequality~\eqref{eq: ZeroDualityGap5} and equality~\eqref{eq: ZeroDualityGap6} is valid as $h_\wa - \tilde{h}_\wa \geq 0$ (\oref{claim: ZeroDualityGapTildeH:leqHn}). 
    For the equality in~\eqref{eq: ZeroDualityGap3}, we utilized \oref{claim: ZeroDualityGapTildeH:Union} and $h \geq \Tilde{h}$. Finally, in~\eqref{eq: ZeroDualityGap4}, we used that $\sum_{\wa \in \Routes}h_\wa = \sum_{i \in I}\inflow_i=\sum_{\wa \in \Routes}h^*_\wa$ due to $h,h^*\in \wir$. 

    \proofitem{Second inequality in~\eqref{eq: ZeroDualityGap0}} 
    By \oref{lem: flowconW'}, $\tilde{\g}$ and $\eflow$ fulfill flow conservation at all $v\neq \dest,\source_i,i\in I$ 
    and their net outflow rate at any $v \in \{\source_i:{i \in I}\}$ is given by  $\inflow_v^{\tilde{\g}}:=\sum_{i \in I: s_i = v}\sum_{\wa \in \Routes_i}\tilde{h}_\wa$  and $\inflow_v^{\eflow}:=\sum_{i \in I: s_i = v} \inflow_i = \sum_{i \in I: s_i = v}\sum_{\wa \in \Routes_i}{h}_\wa$, respectively. 
    
    We introduce a super source~$\ssource$  and extend   $G=(\GV,\GA)$ to $\tilde G=(\GAS,\GVS)$ via   $\GAS:= \GA \cup \{(\ssource,\source_i)\}_{i \in I}$ as well as $\GVS:= \GV \cup \{\ssource\}$. The new edges have constant travel time of zero, i.e.\ $\trav_\arc\equiv 0$ for $\arc \in \GAS\setminus\GA$. We extend $\tilde{\g}$ to a vector $\tilde{\g} \in L_+(\hori)^\GAS$  via $\tilde{\g}_{(\ssource,v)}  =   \inflow_v^{\tilde{\g}}$ for all $v \in \{\source_i:{i \in I}\}$ and $\eflow$ analogously. 

    In this extended network, both $\tilde{\g}$ and $\eflow$ fulfill flow conservation at all $v \neq \ssource,\dest$ and, thus, their difference $\eflow -\tilde{\g}$ does likewise. 
    Furthermore,  this difference has a net outflow from $\ssource$ given by $\sum_{i \in I} \inflow_i - \sum_{\wa \in \Routes} \Tilde{h}_\wa$ which is nonnegative by \oref{claim: ZeroDualityGapTildeH:leqHn}. Moreover, 
      \Cref{claim: ZeroDualityGapTildeH:leqGhat,claim: ZeroDualityGapTildeH:leqHn} show that $\eflow-\tilde{g}\geq 0$ (in the extended network).   
    Thus, by \oref{thm: FlowDecomp}, this difference admits a nonnegative flow decomposition $\tilde{\wflow}_{\tilde{\wa}},{\tilde{\wa}} \in \tilde{\Routes}_{\ssource,\dest},\tilde{\wflow}_c,c \in \tilde{\mathcal{C}}^{\mathrm{simp}}$ with 
     $\tilde{\Routes}_{\ssource,\dest}$ denoting the set of finite walks from $\ssource$ to $\dest$   and $\tilde{\mathcal{C}}^{\mathrm{simp}}$ the set of simple cycles in the extended network. Moreover, we know that 
    $\sum_{{\tilde{\wa}} \in \tilde{\Routes}_{\ssource,\dest}} \tilde{\wflow}_{\tilde{\wa}}$ is equal to the net outflow of $\ssource$, i.e.
    \begin{align}\label{eq: InflowRateEqualDifference}
        \sum_{{\tilde{\wa}} \in \tilde{\Routes}_{\ssource,\dest}} \tilde{\wflow}_{\tilde{\wa}} = \sum_{i \in I} \inflow_i - \sum_{\wa \in \Routes} \Tilde{h}_\wa.
    \end{align}

    There is a bijection between  the set of finite walks from $\ssource$ to $\dest$ in the extended network   $\tilde{\Routes}_{\tilde\source,\dest}$ and the set of all $\source_i,\dest$-walks $\hat{\Routes}_{\source_i,\dest}$ combined over all $i \in I$ in the original network via 
    $\iota:    \bigcup_{i \in I}\hat{\Routes}_{\source_i,\dest}\xrightarrow{\sim} \tilde{\Routes}_{\ssource,\dest}$ with $\iota(\hat{\wa}) = ((\ssource,\source_i),\hat{\wa})$ for all $\hat{\wa} \in \hat{\Routes}_{\source_i,\dest}$ and $i \in I$. 
    Hence, we can transform $ \tilde{\wflow}_{\tilde{\wa}},{\tilde{\wa}} \in \tilde{\Routes}_{\ssource,\dest}$ into walk inflow rates 
    $\wflow_{\wa},\wa \in  \bigcup_{i \in I}\hat{\Routes}_{\source_i,\dest}$. 
    Since the travel times on the artificial edges $(\ssource,\source_i)$ in the extended network are always equal to $0$, 
    the induced flow of $\wflow$ in the original network is equal to the induced flow of $\tilde{\wflow}$ in the extended network on the original edges, that is, 
    \begin{align}\label{eq: NLEqualExtended}
        \ell_{\bigcup_{i \in I}\hat{\Routes}_{\source_i,\dest}}(\wflow) = (\tilde\ell_{\tilde{\Routes}_{\ssource,\dest},\arc}(\tilde{\wflow}))_{\arc \in \GA} = \eflow - \tilde{\g} - \ell_{\tilde{\mathcal{C}}^{\mathrm{simp}}}((\tilde{\wflow}_c)_{c \in \tilde{\mathcal{C}}^{\mathrm{simp}}}) \leq \eflow - \tilde{\g}  \in L_+(\hori)^\GA 
    \end{align} 
    where $\tilde{\ell}$ denotes the (\auto[)] network loading operator in the extended network.

    Since $\Tilde{h} \leq h$ and $\sum_{\wa \in \Routes}h_\wa = \sum_{i \in I} \inflow_i$ by $h \in \wir$, 
    \eqref{eq: InflowRateEqualDifference} implies that 
    we can  
    add  $\wflow_\wa,\wa \in  \bigcup_{i \in I}\hat{\Routes}_{\source_i,\dest}$ suitably to $\tilde{h}_\wa,\wa \in \Routes$ such that  we arrive at a walk inflow rate vector  
     $\Tilde{h}^\uparrow$ that is contained in $\edom{\Routes}\cap \wir$. Moreover, the inequality in \eqref{eq: NLEqualExtended} implies that 
     $\ell_{\Routes}(\Tilde{h}^\uparrow) \leq \eflow$ 
     and subsequently  $\Tilde{h}^\uparrow$ is feasible for~\eqref{opt: Master}. 
      By the optimality of $h^*$, we get $\objfunc(h^*)\leq \objfunc(\tilde{h}^\uparrow)$ and, hence, the following estimation: 
    \begin{align*}
       \objfunc(h^*)\leq \objfunc(\tilde{h}^\uparrow) &=  \dup{\decwttime}{\tilde{h}^\uparrow} = \dup{\decwttime}{\tilde{h} -\tilde{h} + \tilde{h}^\uparrow} 
       = \dup{\decwttime}{\tilde{h}} + \dup{\decwttime}{\tilde{h}^\uparrow -\tilde{h}} \\
       &\symoverset{1}{\leq} \dup{\decwttime}{ {h}} + \dup{\decwttime}{\tilde{h}^\uparrow -\tilde{h}} \\
       &\symoverset{2}{\leq} \objfunc({h}) +\norm{\decwttime}\cdot\norm{\sum_{\wa \in \Routes}\tilde{h}^\uparrow_\wa - \sum_{\wa \in \Routes}\Tilde{h}_\wa} \\
        &\symoverset{3}{=}\objfunc({h})  +\norm{\decwttime}\cdot\norm{\sum_{\wa \in \Routes}h^*_\wa - \sum_{\wa \in \Routes}\Tilde{h}_\wa}.
    \end{align*}
    Here, \refsym{1} holds by non-negativity of $\decwttime$ and $h \geq \tilde{h}$. 
    For \refsym{2}, we used the Cauchy-Schwarz inequality.  
    Equality \refsym{3} is valid due to $\tilde{h}^\uparrow,h^* \in \wir$. 
    
    From this, the claimed inequality follows immediately by subtracting $\objfunc(h)$ and dividing by $\norm{\decwttime}$. \qedhere
\end{structuredproof}
\end{proof}

With \Cref{thm: AlmostMasterZeroDual} at hand, we are now in the position to prove the main result of this section: We show that the master problem \eqref{opt: Master} admits strong duality \wrt $L^\infty_+(\hori)^\GA$ if all feasible solutions   experience uniformly bounded private costs (i.e.~\Cref{thm: SuffConMSMS: BoundedExpTravel} being valid \wrt a common constant $C$).  Remark that, by \Cref{lem: AssUZeroDG} and \Cref{thm: ExistenceOptSol}, this is in particular fulfilled if the private costs represent weighted travel times (i.e.~\eqref{eq: PC=WTT} holding) and $u$ has bounded support.  

\begin{theorem}\label{thm: MasterZeroDual} 
   In the multi-source, single-destination case,  the master problem~\eqref{opt: Master} fulfills strong duality \wrt $L^\infty_+(\hori)^\GA$ if 
   all feasible solutions $h$ admit uniformly bounded experienced private costs, i.e.~if all feasible $h$ fulfill \Cref{thm: SuffConMSMS: BoundedExpTravel} \wrt a common constant $C$.
\end{theorem}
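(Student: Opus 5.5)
The plan is to reduce \eqref{opt: Master} to a problem of the form \eqref{opt: AlmostMaster} by truncating the private costs, apply \Cref{thm: AlmostMasterZeroDual}, and then transfer strong duality back. Concretely, I set $\decwttime_\wa := \min\{\wttime_\wa(u,\cdot), C\}$ for all $\wa \in \Routes$. Then $\decwttime \in \seql[\infty][\Routes][L^\infty_+(\hori)]$ with $\norm{\decwttime}\leq C$. Taking $\eflow := u$, which lies in $\ell^u_\Routes(\edom{\Routes}[u]\cap\wir)$ because $u$ is induced by some walk flow under \Cref{ass: u}, the truncated problem $(\tilde{\mathrm{P}})$ with autonomous loading $\ell^u$ falls within the scope of \Cref{thm: AlmostMasterZeroDual}. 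Hence there exist an optimal $h^*$ of $(\tilde{\mathrm{P}})$ and a toll vector $\prices^*\in L^\infty_+(\hori)^\GA$ such that
$\inf_{h\in\edom{\Routes}[u]\cap\wir}\bigl(\dup{\decwttime}{h}+\dup{\prices^*}{\ell^u(h)-u}\bigr)=\dup{\decwttime}{h^*}$.

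Next I compare the two problems on their common feasible set. For every feasible $h$, the uniform-bound hypothesis gives $h_\wa(t)>0\Rightarrow\wttime_\wa(u,t)\leq C$ almost everywhere. Therefore $\dup{\decwttime}{h}=\dup{\wttime(u,\cdot)}{h}$ for all feasible $h$, so the two problems share their feasible sets and objective values. It follows that $h^*$ is also optimal for \eqref{opt: Master}, with the same optimal value $\mu$.

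It remains to transfer the Lagrangian identity back to the original objective. For arbitrary $h\in\edom{\Routes}[u]\cap\wir$, possibly infeasible, I use $\wttime\geq\decwttime$ pointwise to get
$\dup{\wttime(u,\cdot)}{h}+\dup{\prices^*}{\ell^u(h)-u}\geq\dup{\decwttime}{h}+\dup{\prices^*}{\ell^u(h)-u}\geq\mu=\dup{\wttime(u,\cdot)}{h^*}$.
This gives the inequality ``$\geq$'' in the strong duality equation. For ``$\leq$'', I evaluate at $h=h^*$: feasibility gives $\ell^u(h^*)\leq u$ and $\prices^*\geq0$, so the toll term is nonpositive, and the infimum is at most $\mu$. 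Hence equality holds, which is exactly strong duality of \eqref{opt: Master} with respect to $L^\infty_+(\hori)^\GA$.

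I expect the main obstacle to be the precise applicability of \Cref{thm: AlmostMasterZeroDual}. Two points need checking: that $u$ really lies in $\ell_\Routes(\edom{\Routes}\cap\wir)$ under the autonomous loading induced by $u$, and that $\decwttime$ is measurable and nonnegative with a finite sup-norm over the countable walk set. The monotonicity step $\wttime\geq\decwttime$ and the agreement of the two objectives on feasible points are routine.
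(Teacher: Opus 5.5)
Your proof is correct and follows the paper's argument essentially step for step. You truncate to $\decwttime=\min\{C,\wttime(u,\cdot)\}$, apply \Cref{thm: AlmostMasterZeroDual} with bound $u$ under the loading $\ell^u$, get the lower bound on the Lagrangian from $\wttime(u,\cdot)\geq\decwttime$ together with the agreement of both objectives on feasible walk flows, and get the upper bound from feasibility of $h^*$ and $\prices^*\geq 0$. One small imprecision: $u\in\ell^u(\edom{\Routes}[u]\cap\wir)$ follows from $u\in\Nl(\wir)$ plus the always-valid implication $\Nl(h)=u\Rightarrow\ell^u(h)=u$, not from \Cref{ass: u} itself.
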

As an immediate consequence of the above, \Cref{thm: mainSingleSink} and \Cref{lem: AssUZeroDG}, we get the following statement: 
\begin{corollary}\label{cor: CharImplFiniteSupp}
    For a multi-source, single-\sink network with \Cref{ass: u} being fulfilled, private costs representing weighted travel times and  $u$ having finite support,  the following statements are equivalent:  
    \begin{thmparts}
        \item $u$ is implementable.
        \item Every used edge that is reachable from $\dest$ via used edges  has zero traversal time.
        \item  $u$ only sends flow along a $\dest$-cycle if the latter has zero traversal time.
        \item There exists $h^* \in \wir$ optimal for \eqref{opt: Master} with tight inequality \eqref{ineq: Master}.  
    \end{thmparts}
\end{corollary}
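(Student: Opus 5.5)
The plan is to obtain the corollary directly from \Cref{thm: mainSingleSink}. I will verify that all of its hypotheses hold in the present setting and then translate its conditions \ref{thm: mainSingleSink: ReachableEdgesHaveZeroD} and \ref{thm: mainSingleSink: UsedDCyclesHaveZeroD} into statements about traversal times. \Cref{ass: u} is assumed, and $u\in\Nl(\wir)$ throughout this section. So there is some $h'\in\wir$ with $\Nl(h')=u$, hence $\ell^u(h')=u$ and $h'\in\edom{\Routes}[u]\cap\wir$. In particular $h'$ is feasible for \eqref{opt: Master} with tight inequality \eqref{ineq: Master}.

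First I verify \Cref{ass: PCSep+ZeroCycles} for private costs given by \eqref{eq: PC=WTT}.
\begin{itemize}
\item \Cref{ass: PCSep+ZeroCycles: Sep}: since $\wttime_{(\hat\wa,i)}(u,t)=\vot_i\sum_{j}\trav_{\hat\wa[j]}(u,\arr_{\hat\wa,j}(u,t))$, the choice $\psi^i_\arc(t):=\vot_i\trav_\arc(u,t)$ works, and this function is measurable (indeed continuous).
\item \Cref{ass: PCSep+ZeroCycles: DestCycles}: because $\vot_i>0$, we have $\wttime_{(c,i)}(u,t)=0$ if and only if $\arr_{c,\abs{c}+1}(u,t)=t$. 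This condition does not depend on $i$.
\item \Cref{ass: PCSep+ZeroCycles: ZeroCycles}: if all $\trav_{c[j']}(u,t)=0$, then every arrival time along $c$ equals $t$. Hence $\wttime_{(c,i)}(u,t)=\vot_i\cdot 0=0$.
\end{itemize}

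Next, strong duality and well-posedness. By \Cref{lem: AssUZeroDG: 2}, finite support of $u$ gives a constant $C$ such that every $h$ with $\ell^u(h)\le u$ satisfies \Cref{thm: SuffConMSMS: BoundedExpTravel} with respect to $C$. \Cref{thm: MasterZeroDual} then yields strong duality with respect to $L^\infty_+(\hori)^\GA$. As remarked after its definition, this implies strong duality with respect to $\MeasFuncUInt$. For well-posedness I use the feasible $h'$ from above. Its objective is at most $C\sum_{\wa}\norm{h'_\wa}=C\sum_i\norm{\inflow_i}<\infty$, using that $h'\in\edom{\Routes}[u]\subseteq\seql$. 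With all hypotheses in place, \Cref{thm: mainSingleSink} gives the equivalence of its four conditions. Conditions (a) and (d) of the corollary coincide verbatim with \ref{thm: mainSingleSink: impl} and \ref{thm: mainSingleSink: Optimalh}.

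It remains to identify the other two conditions. Since $\vot_i>0$, we have $\psi_\arc(s)=0$ (for all $i$) if and only if $\trav_\arc(u,s)=0$. So condition \ref{thm: mainSingleSink: ReachableEdgesHaveZeroD} reads: for almost all $t$, every $\dest$-walk used along its edges up to position $j$ has $\trav_{\wa[j]}(u,\arr_{\wa,j}(u,t))=0$. This is exactly (b). Similarly, $\wttime_c(u,t)=0$ for all commodities if and only if the traversal time of $c$ entered at $t$ is zero, so \ref{thm: mainSingleSink: UsedDCyclesHaveZeroD} is (c).

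I expect the main obstacle to be making the informal phrasing of (b) and (c) precise. It must be read with the same almost-everywhere quantifiers and arrival-time evaluations as in \Cref{thm: mainSingleSink}, including independence of the representative, which is provided by \Cref{lem: SimplificationsFlowOnlyOnZeroDestCycles}. The smaller point to check is that well-posedness needs a finite-cost feasible point, which is supplied by $h'$ as above.
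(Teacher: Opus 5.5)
Your proof is correct and takes the same route as the paper. The paper derives the corollary directly from \Cref{thm: mainSingleSink}, obtaining strong duality via \Cref{lem: AssUZeroDG} and \Cref{thm: MasterZeroDual} and noting that weighted travel times satisfy \Cref{ass: PCSep+ZeroCycles}; your explicit well-posedness check via a $u$-inducing walk flow fills in a detail the paper leaves implicit.
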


\begin{proof}[Proof of \Cref{thm: MasterZeroDual}]
We start by constructing a related problem to the master problem that is of the form \eqref{opt: AlmostMaster}. For this, 
      set the travel times $\trav$ to the travel times induced by $u$, i.e.~$\trav(\cdot):= \trav(u,\cdot)$. Then, $u \in \ell^u(\edom{\Routes}[u] \cap \wir)$ as $u \in \Nl(\wir)$ and thus we can set $\eflow = u$. 
    Let $C$ be the constant of \Cref{thm: SuffConMSMS: BoundedExpTravel} and set $\decwttime := \min\{C,\Psi(u,\cdot)\} \in \seql[\infty][\Routes][L_+^\infty(\hori)]$. 
    It is straight forward to verify that the optimal solutions of  the master problem are also optimal for the newly constructed problem. 
    Let $h^*$ be an optimal solution of the master problem. 
    By \Cref{thm: AlmostMasterZeroDual}, the corresponding problem \eqref{opt: AlmostMaster} fulfills strong duality \wrt $L^\infty_+(\hori)^\GA$. That is, there exists $\prices^* \in L^\infty_+(\hori)^\GA$   such that  
    \begin{align}\label{eq: AlmostMasterHasZeroDG}
        \inf_{h \in \edom{\Routes}[u]\cap \wir} \dup{\decwttime}{h} + \dup{\prices^*}{\ell^u(h) - u} = \dup{\decwttime}{h^*}. 
    \end{align}
By \Cref{thm: SuffConMSMS: BoundedExpTravel} holding, we know that the implication  
\begin{align}\label{eq: hImpliesCostsEqual}
    h^*_\wa(t) >0 \implies \decwttime_\wa(t) = \wttime_\wa(u,t) \text{ holds for almost all }t \in \hori \text{ and all } \wa \in \Routes. 
\end{align}
     Using $\wttime(u,\cdot) \geq \decwttime$, and the above equations \eqref{eq: AlmostMasterHasZeroDG}  and \eqref{eq: hImpliesCostsEqual}  now leads to: 
     \begin{align*}
        \inf_{h \in \edom{\Routes}[u]\cap \wir} \dup{\wttime(u,\cdot)}{h} + \dup{\prices^*}{\ell^u(h) - u}\overset{\wttime(u,\cdot) \geq \decwttime}&{\geq}   \inf_{h \in \edom{\Routes}[u]\cap \wir} \dup{\decwttime}{h} + \dup{\prices^*}{\ell^u(h) - u} \\
        \overset{\eqref{eq: AlmostMasterHasZeroDG}}&{=}\dup{\decwttime}{h^*} \overset{\eqref{eq: hImpliesCostsEqual}}{=} \dup{\wttime(u,\cdot)}{h^*} .
     \end{align*}
     By $\ell^u(h^*) \leq  u$  and $\prices^* \in L_+^\infty(\hori)^\GA$, we have  $\dup{\prices^*}{\ell^u(h^*) - u} \leq 0$, showing that weak duality also holds:  
    \begin{align*}
        \inf_{h \in \edom{\Routes}[u]\cap \wir} \dup{\wttime(u,\cdot)}{h} + \dup{\prices^*}{\ell^u(h) - u} \leq \dup{\wttime(u,\cdot)}{h^*}
    \end{align*}
      Hence, $(h^*,\prices^*)$ demonstrates that \eqref{opt: Master} fulfills strong duality \wrt $L_+^\infty(\hori)^\GA$. 
 \end{proof}
 
 Let us conclude this section by remarking that the requirement of bounded experienced costs in \Cref{thm: MasterZeroDual} 
 can not be dropped. In the following example, we consider a 
  single-commodity network with private costs representing the travel times and a flow $u$ without finite support. 
  The corresponding master problem  does not admit strong duality \wrt $\MeasFuncUInt$ as the flow $u$ is
  not implementable. 
 
\begin{example}[No Strong Duality without Finite Support]\label{ex: DualGap}  
    Consider the network and flow depicted in \Cref{fig: CounterNonFinSupp}. 
    There is a single commodity with   network inflow rate $\inflow_1= 1_{[0,1]}$ and private costs given by the travel times, i.e.~\eqref{eq: PC=WTT} holds with VoT $\gamma_1 = 1$.  
    The edges have a constant travel time of $1$ for all $\g \in L_+(\hori)^\GA$ and $t \in \hori$. 
    Consider the edge flow $u$  induced by the walk inflow rates $h_{((s,v),(v,\dest))} = 0$ and $h_{\wa^l} = \frac{1}{2^l} \cdot 1_{[0,1]}, l \in \N$ where $\wa^l=((\source,v),(v,v),\ldots,(v,v),(v,\dest))$ is the walk containing $l$ times the edge $(v,v)$. For these walk inflow rates $h$, 
    the resulting flow $u$ is given by $u_{(\source,v)} = 1_{[0,1]}$, $u_{(v,v)} = \sum_{l = 0}^\infty\frac{1}{2^l} \cdot 1_{[l+1,l+2]}$ 
    and $u_{(v,\dest)} = \sum_{l = 1}^\infty\frac{1}{2^l}\cdot 1_{[l+1,l+2]} $. In particular, 
    since $\sum_{l = 0}^\infty\frac{1}{2^l} \cdot 1_{[l+1,l+2]}  =2$, 
    $u \in L_+(\hori)^\GA$ is integrable but does not have a finite support. Moreover, $u$ is not implementable as for almost all points in time $t \in [0,1]$, we have $\sup\{\wttime_\wa(u,t) \mid \wa \in \Routes: h_\wa(t)>0\} = \infty$ while $\wttime_\wa(u,t) < \infty,\wa \in \Routes$. 

    \begin{figure}
        \centering
        \BigPicture[1]{%
        \begin{tikzpicture}
            \clip (-3.5,3.5) rectangle (13,-1);
            
            \node[namedVertex] (s) at (0,0) {$\source$};
            \node[namedVertex] (v) at (6,0) {$v$};
            \node[namedVertex] (d) at (12,0) {$\dest$};

            \draw[edge] (s) to 
            node[above]{$u_{(\source,v)} = \CharF[[0,1]]$} (v);
            \draw[edge] (v) to 
            node[above]{$u_{(\source,v)} = \sum_{l = 1}^\infty\frac{1}{2^l}\CharF[[l+1,l+2]]$} (d);
            \draw[edge] (v) to[out=35,in=145,looseness=25] 
            node[above]{$u_{(\source,v)} = \sum_{l = 0}^\infty\frac{1}{2^l}\CharF[[l+1,l+2]]$} (v);

            \draw[line width=5,<-,fcolA](s) -- +(-1,0)node[anchor=east,color=colA]{$r_1=\CharF[[0,1]]$};
        \end{tikzpicture}
        }
        \caption{The network and flow $u$ considered in \Cref{ex: DualGap}, describing a situation in which the corresponding master problem does not admit strong duality.}
        \label{fig: CounterNonFinSupp}
    \end{figure}

    It is easy to verify that the master problem \eqref{opt: Master} only has one feasible solution, namely $h$ itself. 
    Hence, $h$ is optimal with tight inequality \eqref{ineq: Master}. 
	In particular, \eqref{opt: Master} is \wellposed as the objective value of $h$ is bounded by 
  	\begin{align*}
  		\dup{\wttime(u,\cdot)}{h} \Croverset{lem: aggCostsVSwalkCosts}{=} \dup{\trav(u,\cdot)}{u} = 1 + \sum_{l =0}^\infty\frac{l}{2^l} + \sum_{l =1}^\infty\frac{l}{2^l} = 1 + 2+ 1 = 4. 
  	\end{align*}
  	Hence,  the condition characterizing implementability in \Cref{thm: mainMSSS: Con} is fulfilled, yet $u$ is not implementable. Since all other 
  	requirements of	\Cref{thm: mainMSSS} are fulfilled except 
  	\eqref{opt: Master} admitting strong duality, the validity of the latter is ruled out. 
\end{example}

\section{Conclusions}
In this paper, we studied dynamic edge flows that can be implemented as tolled dynamic equilibria assuming heterogeneous user populations with 
different source, destination-pairs and private costs. Based on a novel  infinite dimensional optimization problem~\eqref{opt: Master}, we derived necessary and sufficient conditions for the implementability of a dynamic edge flow $u$. 
We further derived a combinatorial characterization for the single \sink case showing that any edge flow  is implementable if and only if   there is no flow-carrying cycle containing the \sink with non-zero private costs. 
Our characterizations rely on the master problem~\eqref{opt: Master} admitting strong duality and for the single-\sink case, we derive a sufficient condition for this which is, in particular, fulfilled for the important special case where private costs represent weighted travel times and the edge flow is finitely supported.
Note that our implementability characterizations  subsume in particular the characterizations derived in~\cite{ColeDR03,Fleischer04,Karakostas04,Yang04} for the static case by setting travel times to zero.

Our work leads to several new questions for future research. 
In particular, under which conditions the aforementioned strong duality holds for the for the multi-\sink case is still open.
For applicability in actual traffic networks, one might be interested in simple toll functions such as piece-wise linear or piece-wise constant functions. For the homogeneous case we provided sufficient conditions for the existence of such tolls, but for the case of heterogeneous populations this is still an open question. Closely related to this problem is the computational complexity of finding tolls for a given flow. As we derive our tolls from the duals of the infinite dimensional optimization problem~\eqref{opt: Master}, the computation of such tolls remains unaddressed. However, it seems reasonable that for piece-wise constant or piece-wise linear inflow functions and the Vickrey queuing model, problem~\eqref{opt: Master} and its dual can be reformulated as a finite dimensional optimization problem, possibly leading to a computationally tractable formulation.

Finally, it is also interesting to know which classes of flows are contained in the set of implementable flows and, in particular, whether (or under what additional conditions) socially optimal flows are included in it.  In the static flow case, social optima are always implementable since they are minimal. In the dynamic case, however, the corresponding minimality
concept (aka \umini edge flows) is more involved as it relies on the \auto network loading. Hence, in contrast to the static case, implementability of social optima does not follow directly from the characterization via \eqref{opt: Master} in the dynamic case. In a follow-up paper~\cite{GHS24SO}, we show that for single-\sink instances, system-optimal flows are implementable (under suitable assumptions on the flow model), while they are not always implementable for multi-\sink instances -- even for well-behaved flow models like the Vickrey model!

\section*{Acknowledgements}
\hfill\\
This research has been funded by the Deutsche Forschungsgemeinschaft
(DFG) in the project 543678993 (Aggregative gemischt-ganzzahlige
Gleichgewichtsprobleme: Existenz, Approximation und Algorithmen).
We acknowledge the support of the DFG. 

\appendix

{\section{Proof of Theorem~\ref*{thm:ImplementabilityHomogeneous}}\label{sec:CostBalancingTollsProof}

\NewDocumentCommand{\costAlt}{O{\arc}}{\psi_{#1}}
\NewDocumentCommand{\CostAlt}{O{\wa}}{\wttime_{#1}}

As discussed in \Cref{sec:CostBalancingTolls}, tolls for homogeneous users can be constructed in the form of cost balancing tolls.  Note that, since all commodities have the same \sink[,] we use $\dest$ to denote this common \sink[.] Similarly, we use~$\psi_\arc$ to denote the common private edge cost function shared by all commodities.

	\begin{proof}[Proof of \Cref{thm:ImplementabilityHomogeneous}]
		It is enough to show the existence of such tolls for the case  that the network does not have edges leaving the destination as well as that the edge costs $\psi_\arc$ are $0$ after $t_f +1$. 
		Given a network that does not fulfill this, we can instead construct tolls~$\prices$ 
		in an artificial network where we delete the edges leaving the destination and adjust the costs in a piecewise-linear and continuous way on the interval  $[t_f,\infty)$ such that the costs are equal to $0$ after $t_f+1$. 
		This transformation, therefore, preserves continuity and piecewise-linearity of $\psi_\arc$ in case the latter exhibits these properties. 
		
		Then, any implementing tolls~$\prices$ in the artificial network can be extended to implementing tolls in the original network by setting $\prices_\arc \coloneq 0$ for all $\arc \in \edgesFrom{\dest}$. This works, because any \stwalk[v] using such an edge $\arc \in \edgesFrom{\dest}$ also contains as a subwalk a (shorter) \stwalk[v] in the artificial network. Hence, \stwalk[v]s using edges leaving the \sink cannot create any shortcuts when comparing the original network to the artificial one.
		Additionally, we have for all $\wa$ with $h_\wa \neq 0$ that  $\arr_{\wa,j}(u,t) \leq t_f$ for almost all $t \in \hori$ and $j\leq \abs{\wa}$ and, subsequently, setting the edge costs back to the original ones only changes the costs of non-utilized walks (and makes those more expensive). 
		These two observations then immediately imply that $\prices$ also implement $u$ via $h$ in the original network.

		Hence, assume from now on that the network does not have edges leaving the destination as well as that the edge costs $\psi_\arc$ are $0$ after $t_f +1$.
		Let us define time dependent node labels $\pi_v$ by
		\begin{align*}
			\pi_v: [t_0,\infty) \to \Rnn, t \mapsto \pi_v(t) &\coloneqq \sup\set{\wttime_\wa(t) | \wa \in\hat\Routes_{v,\dest}}. 
		\end{align*} 
		
		\begin{claim}\label{claim: NodeLabels}
	The node labels are real-valued. Moreover, the node labels are continuous and/or piecewise linear if all edge traversal times and costs are continuous and/or piecewise linear.
		\end{claim}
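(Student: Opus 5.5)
The plan is to exploit the two normalizations made just before the claim: there are no edges leaving $\dest$, and every $\psi_\arc$ vanishes on $[t_f+1,\infty)$. Together with the uniform lower bound $\trav_\arc(u,\cdot)\geq\eps$, these let me replace the supremum over infinitely many walks by a finite backward recursion in time.

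\emph{Real-valuedness.} Fix $v$, $t\geq t_0$ and a walk $\wa\in\hat\Routes_{v,\dest}$. Along $\wa$ the arrival times satisfy $\arr_{\wa,j+1}(u,t)\geq\arr_{\wa,j}(u,t)+\eps$. Hence at most $K:=\lfloor (t_f+1-t_0)/\eps\rfloor+1$ edges of $\wa$ are entered before $t_f+1$. All later edges contribute $\psi=0$, and each earlier edge contributes at most $M$. This gives $0\leq\wttime_\wa(t)\leq KM$ uniformly in $\wa$, so $\pi_v(t)\leq KM<\infty$. I will assume, as is implicit, that every $v$ considered can reach $\dest$ (otherwise the supremum is over the empty set, and I would set $\pi_v:=0$).

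\emph{Recursion and regularity.} First I establish the Bellman identity
\[
\pi_\dest\equiv 0,\qquad \pi_v(t)=\max_{\arc=(v,w)\in\edgesFrom{v}}\bigl(\psi_\arc(t)+\pi_w(\exit_\arc(u,t))\bigr)\quad(v\neq\dest).
\]
The case $v=\dest$ holds because the only $\dest,\dest$-walk is trivial, as there are no outgoing edges. For $v\neq\dest$, every $v,\dest$-walk splits uniquely into a first edge $(v,w)$ followed by a $w,\dest$-walk, and separability (\Cref{ass: PCSep}) gives $\wttime_{(\arc,\wa')}(t)=\psi_\arc(t)+\wttime_{\wa'}(\exit_\arc(u,t))$. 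Taking suprema yields the identity; the maximum over $\edgesFrom{v}$ is finite.

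Next, $\pi_v\equiv 0$ on $[t_f+1,\infty)$ for all $v$, since every edge cost along any walk is then zero. I then argue by backward induction on $k=0,1,\dots,K$ that each $\pi_v$ is continuous (respectively piecewise linear with finitely many breakpoints) on $I_k:=[t_f+1-k\eps,\infty)\cap[t_0,\infty)$. For $t\in I_{k+1}$ we have $\exit_\arc(u,t)\geq t+\eps\in I_k$, so the right-hand side of the recursion only involves $\pi_w$ on $I_k$. There $\pi_w$ is regular by hypothesis, composed with the continuous (piecewise linear) map $\exit_\arc(u,\cdot)=\id+\trav_\arc(u,\cdot)$, then summed with $\psi_\arc$ and maximized over finitely many edges. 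After $K$ steps the intervals cover $[t_0,\infty)$.

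The main obstacle is bookkeeping in the piecewise linear case: showing that the composition $\pi_w\circ\exit_\arc(u,\cdot)$ still has finitely many breakpoints. Since $\exit_\arc$ is non-decreasing and piecewise linear, the preimage of each breakpoint of $\pi_w$ is a point or an interval, and an interval endpoint is a single breakpoint; so the count stays finite on the relevant compact range, and $\pi_w$ is constant beyond $t_f+1$. A further check is that gluing the pieces across the interval boundaries preserves continuity. This holds because the recursion defines $\pi_v$ consistently on overlapping intervals, so no new discontinuities arise at $t_f+1-k\eps$.
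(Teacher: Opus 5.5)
Your argument is correct, but it takes a different route from the paper.

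\textbf{The paper's route.} It works with whole walks. It fixes $K$ bounding the duration of every simple $v,\dest$-walk, which is where the upper bound $\trav_\arc(u,\cdot)\le M$ is used. It then defines the finite set $\bar\Routes_{v,\dest}$ of walks with $\arr_{\wa,\abs{\wa}+1}(t_0)\le t_f+1+K$ and shows $\pi_v=\bar\pi_v:=\max_{\wa\in\bar\Routes_{v,\dest}}\wttime_\wa(u,\cdot)$. The key step deletes from an arbitrary walk all cycles entered after time $t_f+1$. This leaves the cost unchanged and leaves a simple tail, so the shortened walk lies in $\bar\Routes_{v,\dest}$. Real-valuedness and regularity are then immediate, because $\bar\pi_v$ is a maximum of finitely many fixed compositions of $\psi$'s and exit-time functions.

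\textbf{Your route.} You use the first-edge identity $\pi_v(t)=\max_{\arc=(v,w)}\bigl(\psi_\arc(t)+\pi_w(\exit_\arc(u,t))\bigr)$ and a backward induction over the windows $I_k$. The induction works because $\exit_\arc(u,t)\ge t+\eps$ pushes every evaluation of $\pi_w$ into the previous window.

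\textbf{What each route buys.} Yours needs neither the cycle-removal argument nor the upper bound on traversal times. The price is the breakpoint bookkeeping for $\pi_w\circ\exit_\arc(u,\cdot)$, which you handle correctly via monotonicity and continuity of $\exit_\arc$. Your recursion also feeds directly into the paper's next claim: it gives $\prices_\arc(t)=\pi_v(t)-\psi_\arc(t)-\pi_{v'}(\exit_\arc(u,t))\ge 0$ at once, whereas the paper needs a maximizing walk from the finite set $\bar\Routes_{v',\dest}$. The paper's version, in turn, makes the finite-time computability claim more transparent, since $\bar\pi_v$ is an explicit maximum over an explicit finite walk set.

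\textbf{A minor point.} If some nodes cannot reach $\dest$, restrict the maximum in your recursion to edges whose head can reach $\dest$. Setting $\pi_w:=0$ at such nodes, as you proposed, would break the identity.
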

		\begin{proofClaim}
			We start by choosing some constant $K \geq 0$ such that we have $\arr_{\wa,\abs{\wa}+1}(t) - t < K$ for all times $t \in \R$ and all simple \stwalk[v]s $\wa$ and all $v \in V$. This is possible since there are only finitely many such walks and the edge traversal times are upper bounded by $M$ by assumption. We now define the set
			\[\bar\Routes_{v,d} \coloneqq \set{\wa \text{ a \stwalk[v]} | \arr_{w,\abs{w}+1}(t_0) \leq t_f+1+K}.\]
			This set has finite cardinality, since we have $\arr_{w,\abs{w}+1}(t) \geq t + \abs{w}\cdot\eps$ for any walk~$\wa$. Hence, 
			\[\bar\pi_v: [t_0,\infty) \to \Rnn, t \mapsto \max\set{\CostAlt(u,t) | \wa \in \bar\Routes_{v,d}}\]
			is well-defined. 
			Moreover, it follows immediately that $\bar\pi_v$ is continuous and/or piecewise linear whenever the traversal times $\trav_\arc(u,\cdot)$ and edge costs $\psi_\arc$ have the corresponding properties, since $\bar\pi_v$ is obtained from finitely many of them by compositions, sums, and maxima. 
			
			It remains to show that we have $\pi_v = \bar\pi_v$.   
			Since we clearly have $\pi_v \geq \bar\pi_v$, it suffices to show that we also have $\CostAlt(u,t) \leq \bar\pi_v(t)$ for all \stwalk[v]s~$\wa$ and times $t \in [t_0,\infty)$. If $t \geq t_f+1$, then all  edge costs are zero anyway and, therefore, we have $\CostAlt(u,t) = 0 = \bar\pi_v(t)$. Otherwise, we remove all cycles from $\wa$ which are reached after time $t_f+1$ to obtain a \stwalk[v]~$\wa'$ with $\CostAlt[\wa'](u,t) = \CostAlt(u,t)$. Moreover, we have $\arr_{\wa',\abs{\wa'}+1}(t) \leq t_f+1+K$ (since the part of $\wa'$ reached after time $t_f+1$ must be a simple path and, therefore, has a traversal time of at most~$K$). Thus, we have $\wa' \in \bar\Routes_{v,d}$ and, therefore, $\CostAlt[\wa](u,t) = \CostAlt[\wa'](u,t) \leq \bar\pi_v(t)$.		 
		\end{proofClaim}

		With this, we can now define tolls by setting 
		\begin{align*}
			\prices_\arc(t) \coloneq \pi_{v}(t) -\psi_\arc(t) - \pi_{v'}(\exit_\arc(u,t)) 
		\end{align*}
	  for all $\arc=(v,v') \in \GA$ and $t \in [t_0,\infty)$ and arbitrarily (but piecewise linearly and continuously) extending them to $(-\infty,t_0)$.
	  
	  	\begin{claim}\label{claim:PricesWellDefinedAndCostBalancing}
	  	These tolls are non-negative and satisfy $\pi_v(t)=\wttime_{\wa}(u,t)+\Pf^\prices_\wa(u,t)$ for all nodes $v \in V$, \stwalk[v]s~$\wa$ and times $t \in [t_0,\infty)$. They are, in addition, continuous and/or piecewise linear if all traversal times and edge costs are continuous and/or piecewise linear.
	  \end{claim}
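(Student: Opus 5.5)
Non-negativity comes from the definition of $\pi_v$ as a supremum over walk costs. Fix $\arc=(v,v')\in\GA$ and $t\in[t_0,\infty)$. For any $\wa'\in\hat\Routes_{v',\dest}$, the walk $(\arc,\wa')$ lies in $\hat\Routes_{v,\dest}$; by separability (\Cref{ass: PCSep}) its cost is $\psi_\arc(t)+\wttime_{\wa'}(u,\exit_\arc(u,t))$. Hence $\pi_v(t)\ge\psi_\arc(t)+\wttime_{\wa'}(u,\exit_\arc(u,t))$. Taking the supremum over $\wa'$ gives $\pi_v(t)\ge\psi_\arc(t)+\pi_{v'}(\exit_\arc(u,t))$, i.e.\ $\prices_\arc(t)\ge0$. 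By \Cref{claim: NodeLabels} all quantities are real-valued, so the subtraction is legitimate. Note also that $\exit_\arc(u,t)\ge t+\eps\ge t_0$, so $\pi_{v'}(\exit_\arc(u,t))$ is defined. The extension to $(-\infty,t_0)$ can be chosen non-negative, for instance constant equal to $\prices_\arc(t_0)$.

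The balancing identity follows by induction on $\abs{\wa}$. For the trivial walk at $v=\dest$, both sides are $0$, since $\pi_\dest(t)=0$: after deleting the edges leaving $\dest$, the only $\dest,\dest$-walk is the empty one. For $\wa=(\arc,\wa')$ with $\arc=(v,v')$, I will use the identities $\arr_{\wa,j+1}(u,t)=\arr_{\wa',j}(u,\exit_\arc(u,t))$ together with separability of both $\wttime$ and $\Pf^\prices$. These give
\[
\wttime_\wa(u,t)+\Pf^\prices_\wa(u,t)=\psi_\arc(t)+\prices_\arc(t)+\wttime_{\wa'}(u,\exit_\arc(u,t))+\Pf^\prices_{\wa'}(u,\exit_\arc(u,t)).
\]
By the induction hypothesis at time $\exit_\arc(u,t)\ge t_0$, the last two terms equal $\pi_{v'}(\exit_\arc(u,t))$. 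Substituting the definition of $\prices_\arc$ then telescopes the right-hand side to $\pi_v(t)$.

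Regularity comes by composition. Suppose all traversal times and edge costs are continuous and/or piecewise linear. Then $\pi_v$ has the same property by \Cref{claim: NodeLabels}, and so does $\exit_\arc(u,\cdot)=\id+\trav_\arc(u,\cdot)$. Compositions and differences of such functions keep these properties; for piecewise linearity, $\exit_\arc$ is non-decreasing, so the composition has finitely many break points. The extension to $(-\infty,t_0)$ was chosen to match. The only mildly delicate point is making sure every time argument stays in $[t_0,\infty)$, which the lower bound $\eps$ on the traversal times guarantees.
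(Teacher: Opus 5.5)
Your proposal is correct and follows the paper's proof. Non-negativity holds because $\pi_v(t)$ dominates the cost of every walk $(\arc,\wa')$; the balancing identity follows by induction on $\abs{\wa}$ via the telescoping of $\prices_\arc$; and regularity is inherited from $\pi_v$ through \Cref{claim: NodeLabels}. The only difference is harmless: you bound by the supremum over all $\wa'$, whereas the paper picks a maximizing walk in the finite set $\bar\Routes_{v',\dest}$, so your version skips the attainment step.
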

	  	\begin{proofClaim}
	  	Take any edge $\arc=(v,v')$ and time $t \in [t_0,\infty)$ and let $\wa' \in \bar\Routes_{v',d}$ be a \stwalk[v'] with $\pi_{v'}(\exit_\arc(t)) = \wttime_{\wa'}(u,\exit_\arc(u,t))$. Then, $\wa \coloneqq (\arc,\wa')$ is a \stwalk[v] and we have 
	  	\begin{align*}
	  		\prices_\arc(t) = \pi_v(t) - \psi_\arc(t) - \pi_{v'}(\exit_\arc(u,t)) = \pi_v(t) - \psi_\arc(t) - \wttime_{\wa'}(u,\exit_\arc(u,t)) = \pi_v(t) - \wttime_{(\arc,\wa')}(u,t) \geq 0.
	  	\end{align*}
	  	 
	  	Hence, $\prices_\arc$ is non-negative. Moreover, $\prices_\arc$ is continuous and/or piecewise linear if  $\costAlt$ and $\trav_\arc$ are likewise (since this implies the same property for $\pi_v$ by \Cref{claim: NodeLabels}). 
	  	Finally, we show that $\pi_v(t) = \CostAlt[\wa](u,t)+\Pf^\prices_\wa(u,t)$ holds for all \stwalk[v]s $\wa$, all nodes $v \in V$ and all times $t \in \hori$ via induction on $\abs{\wa}$: 
	  	\begin{proofbyinduction}
	  		\basecase{$\abs{\wa}=0$} This is only possible if $v=\dest$. Since we assume that there are no outgoing edges from $\dest$, the claim trivially holds here as the empty walk is the only \stwalk[\dest] then, and we have $\pi_\dest(t)=0=0+0=\CostAlt[()](u,t)+\Pf^\prices_{()}(u,t)$.
	  		
	  		\inductionstep{$\abs{\wa}>0$} Let $\wa$ be of the form $(\arc,\wa')$ for some edge $\arc=(v,v')$ and some \stwalk[v'] $\wa'$. Then, we have
	  		\begin{align*}
	  			\CostAlt[\wa](u,t)+\Pf^\prices_\wa(u,t) 
	  			&=\costAlt(t)+\prices_\arc(t)+\CostAlt[\wa'](u,\exit_\arc(t))+\Pf^\prices_{\wa'}(u,\exit_\arc(t)) \\
	  			&=\costAlt(t) + \pi_v(t) - \costAlt(t) - \pi_{v'}(\exit_\arc(t)) + \CostAlt[\wa'](u,\exit_\arc(t))+\Pf^\prices_{\wa'}(u,\exit_\arc(t)) \\
	  			&=\pi_v(t) + \Big(\CostAlt[\wa'](u,\exit_\arc(t)) + \Pf^\prices_{\wa'}(u,\exit_\arc(t)) - \pi_{v'}(\exit_\arc(t))\Big) \\
	  			&=\pi_v(t)
	  		\end{align*}
	  		where the last step holds by induction.  
	  	\end{proofbyinduction}
	  	This finishes the induction and, thus, the proof of \Cref{claim:PricesWellDefinedAndCostBalancing}.
	  \end{proofClaim}
	  
	  From \Cref{claim:PricesWellDefinedAndCostBalancing} for $v = \source_i$, it now follows immediately that $h$ and $\prices$ implement~$u$. 
	  If, additionally, all traversal times and edge costs are continuous and/or piecewise linear, \Cref{claim:PricesWellDefinedAndCostBalancing} also guarantees that the same is true for the tolls~$\prices_\arc$. Moreover, it is clear that we can compute all relevant $\trav_\arc$, $\arr_{\wa,j}$, $\costAlt$, $\CostAlt$, $\bar\pi_v$ and, hence, $\prices_\arc$ in finite time then.
	\end{proof}

\renewcommand{\wflow}{h}
 \section{Properties of \Auto Network Loadings}\label{sec:AppuBasedNetworkLoadings}

In this \namecref{sec:AppuBasedNetworkLoadings} we collect several structural results on \auto network loadings  which were derived in~\cite{GHS24FD} and which we use throughout this paper. As in \Cref{sec:uBasedNetworkLoadings}, we 
 consider the same general framework as in \cite{GHS24FD}, that is,  the  setting is as described in the paragraph about \auto network loadings in \Cref{sec: Model}: We are given an arbitrary but fixed (flow-independent) absolutely continuous travel time function $\trav:\hori \to \R_+^\GA, t \mapsto (\trav_\arc(t))_{\arc \in \GA}$ fulfilling FIFO with the corresponding \auto network loading operator $\Nl[\trav(\cdot)]$. This will be the only type of network loading utilized in this entire section and hence we can drop for the sake of readability the superscript and simply write $\ell := \Nl[\trav(\cdot)]$ and similarly $\edom{\wa,j}\coloneq \edom{\wa,j}[\trav(\cdot)]$, $\edom{\wa,\arc}\coloneq \edom{\wa,\arc}[\trav(\cdot)]$ and $\edom{\Routes'}\coloneq \edom{\Routes'}[\trav(\cdot)]$ for any countable collection of walks $\Routes'$.

\subsection{Existence of \Auto Network Loadings}
 
\begin{theorem}[\ourref{lem: elluExistenceProperties}]\label{lem: elluExistenceProperties}
   Consider an arbitrary countable collection of walks $\Routes'$, $h \in L_+(\hori)^{\Routes'}$,  $\wa\in \Routes'$, $j \in[|\wa|+1]$ and $\arc \in \GA$. Then, the following holds:
    \begin{thmparts}
        \item $h_\wa \in \edom{\wa,j}$   if and only if $h_\wa$ satisfies~\eqref{eq: nlexists}.  In this case $\ell_{\wa,j}(h_\wa)$ is uniquely determined. \label[thmpart]{lem: elluExistenceProperties:ExistenceInducedFlowOnJthEdge}

        \item 
         $h_\wa \in \edom{\wa,\arc}[]$ if and only if  $h_\wa \in \edom{\wa,j}$ for all $j\leq \abs{\wa}$ with $\wa[j] = \arc$. In this case, $\ell_{\wa,\arc}(h_\wa)$ is uniquely determined by $\ell_{\wa,\arc}(h_\wa) = \sum_{j:\wa[j]=\arc}\ell_{\wa,j}(h_\wa)$.
          \label[thmpart]{lem: elluExistenceProperties:ExistenceInducedFlowOnEdgeE}

        \item  $h_\wa \in \edom{\wa}[]$ if and only if  $h_\wa \in \edom{\wa,j}$ for all $j\leq \abs{\wa}$. In this case, $\ell_{\wa}(h_\wa)$ is uniquely determined by $\ell_{\wa,\arc}(h_\wa) = \sum_{j:\wa[j]=\arc}\ell_{\wa,j}(h_\wa)$ for all $\arc \in \GA$.\label[thmpart]{lem: elluExistenceProperties:ExistenceInducedFlowOnAllEdges}

        \item The maximal domains $\edom{\wa,j}$ and $\edom{\wa,\arc}$ of $\ell_{\wa,j}$ and $\ell_{\wa,\arc}$  are sequentially weakly closed convex cones of $L_+(\hori)$, that is, if $h_\wa^n\wto h_\wa$ and $\ell_{\wa,j}(h_\wa^n),n\in \N$ exist, then so does $\ell_{\wa,j}(h_\wa)$ and, analogously, for $\ell_{\wa,\arc}$.

        Moreover, both functions are linear on their respective domains.
        \label[thmpart]{lem: elluExistenceProperties:Linearity}

        \item  If the aggregated edge flow $\ell_{\Routes'}(h) \in L_+(\hori)^\GA$ exists, then it is uniquely determined. On its maximal domain $\edom{\Routes'} \subseteq L_+(\hori)^{\Routes'}$, the function  $\ell_{\Routes'}$ is  linear. \label[thmpart]{lem: elluExistenceProperties:AggLinearity}

        \item The aggregated edge flow $\ell_{\Routes'}(h) \in L_+(\hori)^\GA$ exists if and only if $\ell_{\wa}(h_\wa)$ exists for all $\wa \in \Routes'$ and $(\ell_{\wa}(h_\wa))_{\wa\in \Routes'} \in \seql[1][\Routes'][L_+(\hori)^\GA]$ holds.   In this case, $\Nl[]_{\Routes'}(h)$ is uniquely determined by $\Nl[]_{\Routes'}(h) = \sum_{\wa \in \Routes'}\Nl[]_\wa(h_\wa)$.
        \label[thmpart]{lem: elluExistenceProperties:ExistenceInducedFlow}
    \end{thmparts} 
\end{theorem}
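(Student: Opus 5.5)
The plan is to read every statement as a fact about push-forward measures and then use the Radon--Nikodym theorem. For a walk $\wa$, an index $j\leq\abs{\wa}+1$ and $h_\wa\in L_+(\hori)$, I set $\mu_{\wa,j}:=(h_\wa\cdot\leb)\circ\arr_{\wa,j}^{-1}$. This is a finite Borel measure on $\hori$ with total mass $\int_\hori h_\wa\di\leb<\infty$, and $\arr_{\wa,j}$ is Borel measurable since it is continuous. Condition \eqref{eq: DefUEdgeFlowSingle} says that the distribution function $t\mapsto\mu_{\wa,j}(\startint t])$ equals that of $\g^\wa_j\cdot\leb$. Half-lines $\startint t]$ form a $\pi$-system generating $\mathcal B(\hori)$, and both measures are finite, so the $\pi$--$\lambda$ theorem shows that \eqref{eq: DefUEdgeFlowSingle} holds if and only if $\mu_{\wa,j}=\g^\wa_j\cdot\leb$ on all of $\mathcal B(\hori)$.

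Part a) then follows directly. An $L_+(\hori)$ density $\g^\wa_j$ of $\mu_{\wa,j}$ exists if and only if $\mu_{\wa,j}\ll\leb$, by Radon--Nikodym, and it is unique up to null sets. Absolute continuity means $\int_{\arr_{\wa,j}^{-1}(\mathfrak T)}h_\wa\di\leb=0$ for every null set $\mathfrak T$, which is exactly \eqref{eq: nlexists}. For a non-Borel Lebesgue null set $\mathfrak T$, I enclose it in a Borel null set, so the condition only needs to be checked on Borel sets.

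Parts b) and c) are finite sums. If every $\ell_{\wa,j}(h_\wa)$ with $\wa[j]=\arc$ exists, their sum satisfies the edge-wise identity. Conversely, if the edge-wise identity holds with some density, then each $\mu_{\wa,j}$ with $\wa[j]=\arc$ is dominated by $\sum_{j':\wa[j']=\arc}\mu_{\wa,j'}=\g_\arc\cdot\leb$. Since all summands are nonnegative measures, each of them is then absolutely continuous, and part a) gives existence of each $\ell_{\wa,j}(h_\wa)$. Uniqueness again comes from the $\pi$--$\lambda$ argument.

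For part d), the map $h_\wa\mapsto\mu_{\wa,j}$ is additive and positively homogeneous. Hence \eqref{eq: nlexists} is preserved under nonnegative linear combinations, and $\ell_{\wa,j}$ is linear because densities add. For weak sequential closedness, take $h^n_\wa\wto h_\wa$ and a Borel null set $\mathfrak T$. Testing against $\CharF[\arr_{\wa,j}^{-1}(\mathfrak T)]\in L^\infty(\hori)$ gives $\int_{\arr_{\wa,j}^{-1}(\mathfrak T)}h_\wa\di\leb=\lim_n\int_{\arr_{\wa,j}^{-1}(\mathfrak T)}h^n_\wa\di\leb=0$. The limit is nonnegative because $L_+(\hori)$ is weakly closed, so \eqref{eq: nlexists} holds for $h_\wa$. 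Using b), the same argument applies to $\edom{\wa,\arc}$.

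Part e) follows as in a). The right-hand side of \eqref{eq: DefUEdgeFlow}/\eqref{eq: DefUEdgeFlow} determines the distribution function of $\g_\arc\cdot\leb$ on every edge, so $\g$ is unique a.e. Linearity holds because the right-hand side is additive in $h$.

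The step I expect to be the main obstacle is part f), because $\Routes'$ is countably infinite and I must exchange sums and integrals with some care. Suppose first that $\ell_{\Routes'}(h)=\g$ exists. The measure $\g_\arc\cdot\leb$ is then the countable sum $\sum_{\wa\in\Routes'}\sum_{j:\wa[j]=\arc}\mu_{\wa,j}$; this is a valid identity of measures by monotone convergence, and $\Routes'$ contains each walk only finitely often. Each summand is dominated by the finite measure $\g_\arc\cdot\leb$, so it is absolutely continuous, and a) and c) give existence of every $\ell_\wa(h_\wa)$. Monotone convergence also yields $\sum_{\wa}\norm{\ell_{\wa,\arc}(h_\wa)}=\norm{\g_\arc}<\infty$ for each edge. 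Summing over the finitely many edges gives $(\ell_\wa(h_\wa))_\wa\in\seql[1][\Routes'][L_+(\hori)^\GA]$. Conversely, if all $\ell_\wa(h_\wa)$ exist and this summability holds, then $\sum_\wa\ell_\wa(h_\wa)$ converges in $L^1$ and pointwise a.e., since the summands are nonnegative. Monotone convergence applied to $\int_{\startint t]}$ shows that this sum satisfies \eqref{eq: DefUEdgeFlow}, and uniqueness from e) gives the stated formula.
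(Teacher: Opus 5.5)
Your proof is correct. The paper does not reprove this statement but imports it from \cite{GHS24FD}; your route---reading $\ell_{\wa,j}(h_\wa)$ as the Radon--Nikodym density of the image measure $(h_\wa\cdot\leb)\circ\arr_{\wa,j}^{-1}$, and using the $\pi$--$\lambda$ theorem to turn the half-line identities into equalities of finite measures---is the same measure-theoretic viewpoint the paper relies on elsewhere (cf.\ the identification of $\g_\arc\cdot\leb$ with a sum of image measures in the proof of \Cref{lem: aggCostsVSwalkCosts}, and the definition of net outflow rates as Radon--Nikodym derivatives in the appendix).
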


\begin{lemma}[\ourref{lem: 1to1:h-f}]\label{lem: 1to1:h-f}
	Consider an arbitrary walk  $\wa$, $j\in [|\wa|+1]$ and $\g^{\wa,j} \in L_+(\hori)$. 
	Then, there exists a  ${h}_{\wa,j}\in L_+(\hori)$ with $\ell_{\wa,j}({h}_{\wa,j}) = \g^{\wa,j}$ if and only if 
	$\g^{\wa,j} = 0$ on ${\startint}\arr_{\wa,j}(-\infty)]$ where $\arr_{\wa,j}(-\infty)\coloneq \lim_{t\to -\infty}\arr_{\wa,j}(t)$. In this case, ${h}_{\wa,j}\in L_+(\hori)$ is uniquely determined. 
\end{lemma}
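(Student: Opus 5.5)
The plan is to identify $\ell_{\wa,j}$ with the push-forward of the measure $h_{\wa,j}\cdot\sigma$ under the absolutely continuous, non-decreasing map $\func:=\arr_{\wa,j}$. By \eqref{eq: DefUEdgeFlowSingle}, $\ell_{\wa,j}(h_{\wa,j})=\g^{\wa,j}$ means exactly $\g^{\wa,j}\cdot\sigma=(h_{\wa,j}\cdot\sigma)\circ\func^{-1}$ on intervals $\startint t]$, and hence on all of $\mathcal B(\hori)$. I would use two elementary facts throughout. First, $\func(t)\geq t$, since it is a composition of exit times $s\mapsto s+\trav_\arc(s)$ with $\trav\geq 0$; hence $\func(\infty)=\infty$. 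Second, $\func(\hori)$ is therefore an interval containing $(\func(-\infty),\infty)$, and it contains the point $\func(-\infty)$ only if $\func$ is constant on some ray $(-\infty,x]$.

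\emph{Necessity.} Suppose $\ell_{\wa,j}(h_{\wa,j})=\g^{\wa,j}$. For $t<\func(-\infty)$ the preimage $\func^{-1}(\startint t])$ is empty, so $\int_{\startint t]}\g^{\wa,j}\di\sigma=0$. For $t=\func(-\infty)$ the preimage equals $\func^{-1}(\{\func(-\infty)\})$, the preimage of a null set. By \Cref{lem: elluExistenceProperties:ExistenceInducedFlowOnJthEdge}, $h_{\wa,j}$ satisfies \eqref{eq: nlexists}, so $h_{\wa,j}=0$ a.e.\ on this preimage. Thus $\g^{\wa,j}$ integrates to $0$ over $\startint \func(-\infty)]$, and by nonnegativity $\g^{\wa,j}=0$ a.e.\ there.

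\emph{Sufficiency.} Fix a Borel representative of $\g^{\wa,j}$ and define $h_{\wa,j}:=(\g^{\wa,j}\circ\func)\cdot\func'$, setting it to $0$ where $\func'$ does not exist. I would then invoke the change-of-variables formula for absolutely continuous monotone functions (gathered in \Cref{lem: PropAbsCon}, or \cite{Bogachev2007I}):
\[
\int_{\func^{-1}(B)}(\g^{\wa,j}\circ\func)\,\func'\di\sigma=\int_{B\cap\func(\hori)}\g^{\wa,j}\di\sigma \quad\text{for all Borel } B.
\]
Since $\g^{\wa,j}$ vanishes a.e.\ on $\startint\func(-\infty)]$ and $\func(\hori)\supseteq(\func(-\infty),\infty)$, the right-hand side equals $\int_B\g^{\wa,j}\di\sigma$. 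Taking $B=\hori$ gives $h_{\wa,j}\in L_+(\hori)$, and taking $B=\startint t]$ gives \eqref{eq: DefUEdgeFlowSingle}.

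\emph{Uniqueness.} This is the step I expect to be the main obstacle, because $\func$ need not be injective. Let $h^1,h^2$ both induce $\g^{\wa,j}$. Each satisfies \eqref{eq: nlexists}. The set $\func(\hori_{\func'=0})$ is null by \Cref{lem: PropAbsCon:Est}, and $\func$ of the non-differentiability set is null by \Cref{lem: PropAbsCon:Lus,lem: PropAbsCon:Der}. Applying \eqref{eq: nlexists} to these null sets shows that both $h^k$ vanish a.e.\ outside $\hori_{\func'>0}$.

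On $\hori_{\func'>0}$ the map $\func$ is injective, by the argument of \Cref{claim: t_1<t_2} in the proof of \Cref{lem: absconLus}. Hence for any Borel $S\subseteq\hori_{\func'>0}$,
\[
\func^{-1}(\func(S))\cap\hori_{\func'>0}=S .
\]
The set $\func(S)$ is Lebesgue measurable by Lusin's property; if needed, I would replace it by a Borel set differing by a null set and control that difference with \Cref{lem: absconLus}. We then obtain
\[
\int_S h^k\di\sigma=\int_{\func^{-1}(\func(S))}h^k\di\sigma=\int_{\func(S)}\g^{\wa,j}\di\sigma \quad\text{for } k=1,2,
\]
so $\int_S h^1\di\sigma=\int_S h^2\di\sigma$ for every such $S$. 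This forces $h^1=h^2$ a.e., which gives uniqueness.
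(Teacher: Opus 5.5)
Your proof is correct, but it takes a different route from the one the paper relies on. The paper itself gives no proof of this lemma; it quotes it from \cite{GHS24FD}, so there is no in-text argument to match line by line.

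The tools imported alongside it (\Cref{lem: elluExistenceProperties}, \Cref{lem: elluinj}, \Cref{lem: PropAbsCon}) point to a purely measure-theoretic argument, which I sketch here as an alternative:
\begin{enumerate}
\item Set $\mu(\mathfrak T):=\int_{\arr_{\wa,j}(\mathfrak T)}\g^{\wa,j}\di\sigma$. This is a finite measure, because images of disjoint sets under the continuous non-decreasing map $\arr_{\wa,j}$ can only overlap in the countable set of plateau values.
\item By Lusin's property, $\mu\ll\sigma$. Take $h_{\wa,j}:=\mathrm{d}\mu/\mathrm{d}\sigma$.
\item Verify \eqref{eq: DefUEdgeFlowSingle} using $\arr_{\wa,j}(\arr_{\wa,j}^{-1}(\startint t]))=\startint t]\cap\arr_{\wa,j}(\hori)$ together with the vanishing hypothesis.
\item Uniqueness is then immediate. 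By \Cref{lem: elluinj}, any inducing $h$ satisfies $\int_{\mathfrak T}h\di\sigma=\int_{\arr_{\wa,j}^{-1}(\arr_{\wa,j}(\mathfrak T))}h\di\sigma=\int_{\arr_{\wa,j}(\mathfrak T)}\g^{\wa,j}\di\sigma=\mu(\mathfrak T)$.
\end{enumerate}

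Your route needs the change-of-variables formula for monotone absolutely continuous maps. That formula is not among the facts collected in \Cref{lem: PropAbsCon}, so you must cite it separately. In return, you get the explicit density $h_{\wa,j}=(\g^{\wa,j}\circ\arr_{\wa,j})\cdot\arr_{\wa,j}'$, which the Radon--Nikodym argument does not provide.

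Your uniqueness step is valid but heavier than necessary:
\begin{itemize}
\item You do not need $\hori_{\arr_{\wa,j}'>0}$, \Cref{lem: PropAbsCon:Est} or \Cref{lem: absconLus}. The map $\arr_{\wa,j}$ is injective off the preimage of the countable, hence null, set of plateau values, and \eqref{eq: nlexists} already forces $h^k=0$ a.e.\ on that preimage.
\item The Borel-replacement hedge for $\arr_{\wa,j}(S)$ is unnecessary, since \Cref{lem: PropAbsCon:ImageMeas} already gives measurability of the image.
\item In the necessity part, the endpoint $\arr_{\wa,j}(-\infty)$ needs no appeal to \eqref{eq: nlexists}, because the measure $\g^{\wa,j}\cdot\sigma$ has no atoms.
\end{itemize}
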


\subsection{Optimization Problems Involving \Auto Network Loadings}
We consider general optimization problems of the following form: 
 
     \begin{align} 
        \max_{\wflow}\;    &\objfunc(\wflow)  \tag{$\mathrm{P}^{\mathrm{App}}$} \label{opt: GeneralApp} \\
        \text{s.t.: } &\ell_{\Routes'}(h) \leq \eflow  \label{eq: ExOptSolLeqApp}\\
                    &\wflow \in \ofeas \nonumber
    \end{align}
Here, $\Routes'$ is an arbitrary countable collection of walks which may contain individual walks multiple but at most finitely many times.
The constraint vector  $\eflow$ is an arbitrary  element in $L_+(\hori)^\GA$. 
The objective~$\objfunc$ is some real-valued function on $\ofeas$,  which, in turn, is some subset of $\edom{\Routes'}$ containing at least one $\wflow$ fulfilling~\eqref{eq: ExOptSolLeqApp}, i.e.~the set of feasible solutions is non-empty. 
Here, $\edom{\Routes'}$ 
denotes the maximal domain of $\ell_{\Routes'}$, i.e.~the set of inflow rates $h \in L_+(\hori)^{\Routes'}$ for which $\ell_{\Routes'}(h)$ is well-defined and exists (cf.~\Cref{lem: elluExistenceProperties:AggLinearity}).   
In this regard, \eqref{opt: GeneralApp} is well-defined as $\ofeas \subseteq \edom{\Routes'}$ ensures that $\ell_{\Routes'}(h)$ is well-defined. 
 \begin{lemma}[\ourref{lem: elluContinuity}]\label{lem: elluContinuity}
 Consider an arbitrary walk  $\wa$, $j\in [|\wa|+1]$, $\arc \in \GA$ and a countable collection of walks~$\Routes'$. 
     Then, the following statements are true.
    \begin{thmparts}
        \item The mappings $\ell_{\wa,j}$ and $\ell_{\wa,\arc}$ are strong-strong and sequentially weak-weak continuous from their maximal domains to $L_+(\hori)$. \label[thmpart]{lem: elluContinuity:SingleWalk}

        \item For all $h \in \edom{\Routes'}$, we have $\norm{\ell_{\Routes'}(h)} = \sum_{\wa \in \Routes'}\abs{\wa}\cdot \norm{h_\wa}$. In particular, $\edom{\Routes'}\subseteq \seql[1][{\Routes'}][L_+(\hori)]$.
        \label[thmpart]{lem: elluContinuity:Subset}
        
        \item The mapping $\ell_{\Routes'}: \edom{\Routes'} \to L_+(\hori)^\GA$ is sequentially weakly lower semi-continuous in the following sense: 
        For any   sequence $(h^n)_{n\in \N}$ with $h^n\in \edom{\Routes'}, n\in \N$ and   $h^n_\wa \wto h_\wa,\wa \in \Routes'$ for some $h \in L_+(\hori)^{\Routes'}$ and all $\g \in L_+(\hori)^\GA$ for which there exists $N \in \N$ such that for all $n \geq N$ the inequality $\ell_{\Routes'}(h^n) \leq \g$ holds, we have $h \in \edom{\Routes'}$ and 
        $\ell_{\Routes'}(h) \leq \g$.  \label[thmpart]{lem: elluContinuity:Sum} 
    \end{thmparts}
\end{lemma}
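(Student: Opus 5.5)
The statement to prove is the last appendix lemma, \Cref{lem: elluContinuity}. I would treat its three parts separately, relying throughout on \Cref{lem: elluExistenceProperties}.

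\textbf{Part a).} On $\edom{\wa,j}$ the operator $\ell_{\wa,j}$ is linear and positive, with
\[
\norm{\ell_{\wa,j}(f)}=\int_\hori f\di\sigma .
\]
This follows by letting $t\to\infty$ in \eqref{eq: DefUEdgeFlowSingle}. For $f,f'$ in the domain, $\ell_{\wa,j}(f-f')$ is the signed measure image of $(f-f')\cdot\sigma$ under $\arr_{\wa,j}$, whose density has $L^1$-norm at most $\norm{f-f'}$. This gives strong continuity.

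For weak continuity I would use the adjoint identity from \Cref{lem: aggCostsVSwalkCosts}, extended by linearity to bounded signed $p$:
\[
\int_\hori p\cdot\ell_{\wa,j}(f)\di\sigma=\int_\hori (p\circ\arr_{\wa,j})\cdot f\di\sigma .
\]
Since $p\circ\arr_{\wa,j}\in L^\infty$, weak convergence $f^n\wto f$ transfers directly to $\ell_{\wa,j}(f^n)\wto\ell_{\wa,j}(f)$. The same argument applies to $\ell_{\wa,\arc}$ as a finite sum.

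\textbf{Part b).} By \Cref{lem: elluExistenceProperties:ExistenceInducedFlow}, $\ell_{\Routes'}(h)=\sum_{\wa}\ell_\wa(h_\wa)$. All terms are nonnegative, so norms add. Each walk contributes $\abs{\wa}$ edge copies, each of mass $\norm{h_\wa}$, and monotone convergence then gives
\[
\norm{\ell_{\Routes'}(h)}=\sum_{\wa}\abs{\wa}\,\norm{h_\wa}.
\]
For the inclusion $\edom{\Routes'}\subseteq\seql[1][\Routes'][L_+(\hori)]$, use $\abs{\wa}\ge1$; trivial walks need separate treatment or are excluded since walks are $s$–$d$ with $s\ne d$.

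\textbf{Part c).} This is the main step. Fix $n\ge N$.
\begin{enumerate}
\item For each walk $\wa$ and index $j$, the bound $\ell_{\wa,j}(h^n_\wa)\le\g_{\wa[j]}$ holds.
\item To show $h_\wa\in\edom{\wa,j}$ via \eqref{eq: nlexists}: for a null set $\mathfrak T$,
\[
\int_{\arr_{\wa,j}^{-1}(\mathfrak T)}h^n_\wa\di\sigma=\int_{\mathfrak T}\ell_{\wa,j}(h^n_\wa)\di\sigma\le\int_{\mathfrak T}\g\di\sigma=0 .
\]
Testing weak convergence against the indicator of $\arr_{\wa,j}^{-1}(\mathfrak T)$ yields $\int_{\arr_{\wa,j}^{-1}(\mathfrak T)} h_\wa\di\sigma=0$. (Alternatively, invoke the sequential weak closedness in \Cref{lem: elluExistenceProperties:Linearity}.)
\item Part a) gives $\ell_{\wa,j}(h^n_\wa)\wto\ell_{\wa,j}(h_\wa)$.
\item For any finite subfamily $F\subseteq\Routes'$, we have $\sum_{\wa\in F}\ell_\wa(h^n_\wa)\le\g$. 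Testing against $1_B$ for Borel $B$ preserves this inequality in the limit, so $\sum_{\wa\in F}\ell_\wa(h_\wa)\le\g$.
\item Let $F\uparrow\Routes'$. Monotone convergence bounds $\sum_{\wa}\norm{\ell_\wa(h_\wa)}$ by $\norm{\g}$, so the family lies in $\seql[1][\Routes'][L_+(\hori)^\GA]$. \Cref{lem: elluExistenceProperties:ExistenceInducedFlow} then gives $h\in\edom{\Routes'}$ with $\ell_{\Routes'}(h)\le\g$.
\end{enumerate}

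The obstacle I expect is the careful passage from finite to countable sums in step 5, and making sure that only per-walk weak convergence is needed. It is, because the bound by $\g$ controls the tails.
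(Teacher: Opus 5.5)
Your proof is correct. This paper gives no proof of its own for the lemma: it imports it from the companion work \cite{GHS24FD}, so there is nothing in the paper to compare against.

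Your argument is self-contained and uses only ingredients the paper relies on elsewhere:
\begin{itemize}
\item the mass identity $\norm{\ell_{\wa,j}(f)}=\norm{f}$, obtained from \eqref{eq: DefUEdgeFlowSingle} by letting $t\to\infty$, together with the total-variation bound for image measures, which gives strong (indeed $1$-Lipschitz) continuity;
\item the change-of-variables identity underlying \Cref{lem: aggCostsVSwalkCosts}, which gives weak continuity because $p\circ\arr_{\wa,j}$ is again a bounded Borel function for any Borel representative of $p$;
\item \Cref{lem: elluExistenceProperties:ExistenceInducedFlow}, which turns per-walk existence plus a summable bound in $\seql[1][\Routes'][L_+(\hori)^\GA]$ into existence of $\ell_{\Routes'}(h)$.
\end{itemize}

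In part c), the passage from finite to countable sums is handled correctly. Testing against indicators $1_B$ gives the bound by $\g$ for finite subfamilies, and monotone convergence then controls the tails. So per-walk weak convergence is indeed sufficient.

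Two small remarks. First, step 2 of part c) can be replaced by the sequential weak closedness of $\edom{\wa,j}$ from \Cref{lem: elluExistenceProperties:Linearity}, since each $h^n_\wa$ already lies in that domain; the bound by $\g$ is not needed at that point. Second, your caveat about trivial walks in part b) is justified. The inclusion $\edom{\Routes'}\subseteq\seql[1][\Routes'][L_+(\hori)]$ needs $\abs{\wa}\geq 1$, or at most finitely many empty walks in $\Routes'$. This holds in every application in the paper, since there either $\source_i\neq\dest_i$ or each walk occurs only finitely often.
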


\begin{theorem}[\ourref{thm: ExistenceOptSol}]\label{thm: ExistenceOptSol} 
    Assume that $\objfunc$ is sequentially weakly upper semi-continuous on the non-empty feasible set $\{h \in \ofeas\mid \ell_{\Routes'}(h)\leq \eflow\}$, i.e.\ $\limsup_{n \to\infty} \objfunc(h^n) \leq \objfunc(h)$ for any weakly converging sequence $h^n\wto h$ in $\seql[1][\Routes'][L_+(\hori)]$ with all $h^n$ and $h$ contained in the feasible set.
    Moreover, assume that  $\ofeas$ is sequentially weakly closed in~$\edom{\Routes'}$ \wrt the subspace topology induced by $\seql[1][\Routes'][L_+(\hori)]$.  Then, 
    the optimization problem~\eqref{opt: GeneralApp} has an optimal solution. 
\end{theorem}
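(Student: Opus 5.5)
The plan is the direct method: take a maximizing sequence, extract a weakly convergent subsequence in $\seql[1][\Routes'][L_+(\hori)]$, show the limit is feasible, and then use upper semi-continuity. Let $\mu\in\R\cup\{\infty\}$ be the supremum of \eqref{opt: GeneralApp} and let $(\wflow^n)$ be feasible with $\objfunc(\wflow^n)\to\mu$. The main obstacle is weak sequential compactness of the feasible set in this infinite product of $L$-spaces. I will get it from the single constraint $\ell_{\Routes'}(\wflow^n)\le\eflow$ together with the structure of the loading.

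\textbf{Uniform bounds.} By \Cref{lem: elluContinuity:Subset},
\[\sum_{\wa\in\Routes'}\abs{\wa}\cdot\norm{\wflow^n_\wa}=\norm{\ell_{\Routes'}(\wflow^n)}\le\norm{\eflow}.\]
Only finitely many walks of length $\le k$ exist in a finite graph, and each walk occurs only finitely often in $\Routes'$. Hence the tail satisfies $\sum_{\abs{\wa}>k}\norm{\wflow^n_\wa}\le\norm{\eflow}/k$, uniformly in $n$. Walks of length $0$ can be treated separately or excluded.

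For a domination bound, use $\arr_{\wa,1}=\id$, so $\ell_{\wa,1}(\wflow_\wa)=\wflow_\wa$. Summing the edge flows gives, for every edge $\arc$, $\sum_{\wa:\wa[1]=\arc}\wflow^n_\wa\le\ell_{\Routes',\arc}(\wflow^n)\le\eflow_\arc$ a.e. Thus each coordinate satisfies $0\le\wflow^n_\wa\le\eflow_{\wa[1]}$, a fixed integrable dominator, so each coordinate sequence is uniformly integrable. By Dunford--Pettis, each coordinate sequence is relatively weakly sequentially compact in $L(\hori)$. A diagonal argument over the countable set $\Routes'$ then gives a subsequence, still denoted $(\wflow^n)$, with $\wflow^n_\wa\wto\wflow_\wa$ for every $\wa\in\Routes'$.

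\textbf{Weak convergence in the product space.} Let $f\in\seql[\infty][\Routes'][L^\infty(\hori)]$. Split $\dup{f}{\wflow^n-\wflow}$ into the finitely many walks with $\abs{\wa}\le k$ and the tail. The finite part tends to $0$ by coordinatewise weak convergence. Weak lower semicontinuity of the norm on finite partial sums gives the same tail bound for $\wflow$. Hence the tail part is at most $2\norm{f}\norm{\eflow}/k$. Letting $n\to\infty$ and then $k\to\infty$ yields $\wflow^n\wto\wflow$ in $\seql[1][\Routes'][L_+(\hori)]$.

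\textbf{Feasibility and conclusion.} \Cref{lem: elluContinuity:Sum}, applied with $\g=\eflow$, gives $\wflow\in\edom{\Routes'}$ and $\ell_{\Routes'}(\wflow)\le\eflow$. Sequential weak closedness of $\ofeas$ in $\edom{\Routes'}$ gives $\wflow\in\ofeas$, so $\wflow$ is feasible. Upper semi-continuity on the feasible set then yields $\mu=\limsup_n\objfunc(\wflow^n)\le\objfunc(\wflow)\le\mu$. In particular $\mu$ is finite, because $\objfunc$ is real-valued, and $\wflow$ is optimal.
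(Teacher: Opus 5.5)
Your proof is correct and is essentially the direct-method argument behind the cited result (the paper only quotes this theorem from \cite{GHS24FD} and does not reprove it here). The constraint $\ell_{\Routes'}(h)\le\eflow$ gives both the domination $h_\wa\le\eflow_{\wa[1]}$, hence Dunford--Pettis compactness, and, via \Cref{lem: elluContinuity:Subset}, the uniform tail bound over long walks; \Cref{lem: elluContinuity:Sum}, the closedness of $\ofeas$ and upper semi-continuity then finish. The one inaccuracy is your aside on length-$0$ walks: they cannot be ``treated separately'', because the constraint places no bound on them (with $\ofeas=\edom{\Routes'}$ and $\objfunc(h)=\int_\hori h_{()}\di\leb$ the statement would fail). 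They must be excluded, as the framework tacitly does; already the inclusion $\edom{\Routes'}\subseteq\seql[1][\Routes'][L_+(\hori)]$ in \Cref{lem: elluContinuity:Subset} requires $\abs{\wa}\ge 1$.
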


\renewcommand{\eflow}{\g}

\subsection{\Auto Node Balances and  \boldmath{$\source$,$\dest$}-Flows}

The (\auto[)] node balance at node $v \in \GV$ for an arbitrary vector $\eflow \in L(\hori)^\GA$ is given by the measure  $\op_v\eflow \coloneqq \sum_{\arc \in \delta^+(v)}   \eflow_\arc \cdot \sigma -  \sum_{\arc \in \delta^-(v)}  ( \eflow_\arc \cdot\sigma) \circ \exit_\arc^{-1}$ which describes for an arbitrary $\mathfrak T \in \mathcal{B}(\hori)$ the difference between the cumulative inflow into $v$   and the cumulative outflow from $v$ during $\mathfrak T$, i.e.
\begin{align}\label{eq: FlowBalance}
    \op_v \eflow(\mathfrak T) =  
\sum_{\arc \in \delta^+(v)} \int_{\mathfrak T} \eflow_\arc \di \sigma -  \sum_{\arc \in \delta^-(v)} \int_{\exit_\arc^{-1}(\mathfrak T)} \eflow_\arc \di \sigma. 
\end{align}
 If the Radon-Nikodym derivative ${\inflow}_v$ of $\op_v \eflow$ exists, i.e.~the function satisfying for all $\mathfrak T \in \mathcal{B}(\hori)$
\begin{align*}
    \int_{\mathfrak T} {\inflow}_v\di\sigma =  
\sum_{\arc \in \delta^+(v)} \int_{\mathfrak T} \eflow_\arc \di \sigma -  \sum_{\arc \in \delta^-(v)} \int_{\exit_\arc^{-1}(\mathfrak T)} \eflow_\arc \di \sigma,
\end{align*}
 we say that $\eflow$ has the (\auto[)] net (node)   outflow rate ${\inflow}_v$ at $v$, or equivalently, the (\auto[)] net inflow rate $-{\inflow}_v$. 
 If the latter is equal to zero almost everywhere, we 
say that $\eflow$ fulfills (\auto[)] flow conservation at $v$. 
A vector $\eflow \in L_+(\hori)^\GA$ who has a net outflow rate $\inflow_s \in L_+(\hori)$ at $s$, fulfills flow conservation at 
all $v \neq \source,\dest$ and has a nonpositive \auto node balance at $\dest$ is called \emph{\auto $\source$,$\dest$-flow}. 
Here, we say that the (\auto[)] node balance $\op_v\eflow$ is nonpositive if \eqref{eq: FlowBalance} is nonpositive for any $\mathfrak  T \in \mathcal{B}(\hori)$.

  \begin{lemma}[\ourref{lem: flowconW'}]\label{lem: flowconW'}
Consider an arbitrary countable collection of walks $\Routes'$, a corresponding walk inflow rate vector $h \in \edom{\Routes'}$ with   $\g:=\ell_{\Routes'}(h)$  and a node $v \in \GV$. 
Then we have 
\begin{align*} 
    \op_v \g = \sum_{\wa \in \Routes'_{v+}} h_\wa \cdot \sigma   - \sum_{\wa \in \Routes'_{v-}} (h_\wa \cdot\sigma)\circ\arr_{\wa,|\wa|+1}^{-1},
\end{align*}
where $\Routes'_{v+}$ denotes the set of walks in $\Routes'$ starting at $v$ while $\Routes'_{v-}$ denotes the set of walks in $\Routes'$ ending in  $v$. 

If, additionally, $\ell_{\wa,\abs{\wa}+1}(h_\wa)$ exist for all $\wa \in \Routes'_{v-}$, we even  have
\begin{align*}
    \op_v \g = \sum_{\wa \in \Routes'_{v+}} h_\wa \cdot \sigma - \sum_{\wa \in \Routes'_{v-}} \ell_{\wa,\abs{\wa}+1}(h_\wa) \cdot\sigma.
\end{align*}
       \end{lemma}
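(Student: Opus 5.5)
The plan is to compute both sides of the claimed measure identity on an arbitrary Borel set $\mathfrak T$. The first step is to decompose $\g$ into walk contributions. Since $h \in \edom{\Routes'}$, \Cref{lem: elluExistenceProperties:ExistenceInducedFlow} and \Cref{lem: elluExistenceProperties:ExistenceInducedFlowOnAllEdges} give
\[
\g_\arc = \sum_{\wa\in\Routes'}\sum_{j:\wa[j]=\arc}\ell_{\wa,j}(h_\wa)\quad\text{for all }\arc\in\GA .
\]
Next, I would upgrade the defining identity \eqref{eq: DefUEdgeFlowSingle} from half-lines $(-\infty,t]$ to all Borel sets. Both $\ell_{\wa,j}(h_\wa)\cdot\sigma$ and $(h_\wa\cdot\sigma)\circ\arr_{\wa,j}^{-1}$ are finite measures, and they agree on the $\pi$-system of half-lines that generates $\mathcal B(\hori)$. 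By uniqueness of measures they therefore coincide on all of $\mathcal B(\hori)$, i.e.
\[
\int_{\mathfrak T}\ell_{\wa,j}(h_\wa)\di\sigma=\int_{\arr_{\wa,j}^{-1}(\mathfrak T)}h_\wa\di\sigma\quad\text{for all }\mathfrak T\in\mathcal B(\hori).
\]
The exit-time term then follows from the recursion $\arr_{\wa,j+1}=\exit_{\wa[j]}\circ\arr_{\wa,j}$. Applying the Borel identity to $\exit_{\wa[j]}^{-1}(\mathfrak T)$ gives
\[
\int_{\exit_{\wa[j]}^{-1}(\mathfrak T)}\ell_{\wa,j}(h_\wa)\di\sigma=\int_{\arr_{\wa,j+1}^{-1}(\mathfrak T)}h_\wa\di\sigma .
\]

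I would then substitute these identities into \eqref{eq: FlowBalance} and regroup the terms walk by walk. Each index $j\le\abs{\wa}$ with $\wa[j]\in\edgesFrom{v}$ contributes $+\int_{\arr_{\wa,j}^{-1}(\mathfrak T)}h_\wa\di\sigma$. Each index $j$ with $\wa[j]\in\edgesTo{v}$ contributes $-\int_{\arr_{\wa,j+1}^{-1}(\mathfrak T)}h_\wa\di\sigma$. These terms telescope along the walk. Every intermediate visit of $v$ happens at a position $j+1$, where $\wa[j]$ enters $v$ and $\wa[j+1]$ leaves $v$, and there the two contributions are the same integral with opposite signs. Self-loops at $v$ are covered by the same bookkeeping. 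What survives is the first-position term $j=1$ with $\arr_{\wa,1}=\id$, which gives $+\int_{\mathfrak T}h_\wa\di\sigma$ if $\wa$ starts at $v$. The only other survivor is the final-position term $j=\abs{\wa}$, which gives $-\int_{\arr_{\wa,\abs{\wa}+1}^{-1}(\mathfrak T)}h_\wa\di\sigma$ if $\wa$ ends at $v$. Summing over $\wa$ yields exactly the first formula.

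The main obstacle is justifying the regrouping and cancellation inside the infinite sums over $\Routes'$. For this I would use \Cref{lem: elluContinuity:Subset}: the total mass satisfies
\[
\sum_{\wa\in\Routes'}\sum_{j}\int_\hori \ell_{\wa,j}(h_\wa)\di\sigma=\sum_{\wa\in\Routes'}\abs{\wa}\cdot\norm{h_\wa}<\infty .
\]
Every double series that appears is therefore absolutely convergent, so the order of summation can be rearranged freely and the telescoping cancellation is between finite quantities.

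For the second formula, the additional assumption that $\ell_{\wa,\abs{\wa}+1}(h_\wa)$ exists lets me apply the Borel extension above with $j=\abs{\wa}+1$. This identifies $(h_\wa\cdot\sigma)\circ\arr_{\wa,\abs{\wa}+1}^{-1}$ with $\ell_{\wa,\abs{\wa}+1}(h_\wa)\cdot\sigma$, and substituting this into the first formula gives the claim.
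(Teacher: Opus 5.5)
Your proposal is correct. The steps are: extend \eqref{eq: DefUEdgeFlowSingle} from half-lines to all Borel sets; use $\arr_{\wa,j+1}=\exit_{\wa[j]}\circ\arr_{\wa,j}$ to rewrite the exit terms; telescope walk by walk, with the rearrangement justified by the finite total mass $\sum_{\wa\in\Routes'}\abs{\wa}\cdot\norm{h_\wa}$. This paper only imports the lemma from \cite{GHS24FD} without proof, but it manipulates the same identity $\g_\arc\cdot\sigma=\sum_{\wa}\sum_{j:\wa[j]=\arc}(h_\wa\cdot\sigma)\circ\arr_{\wa,j}^{-1}$ in the same way in the proof of \Cref{lem: aggCostsVSwalkCosts}, so your image-measure route matches the paper's approach.
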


\subsection{Properties of \Auto \boldmath{$\source$,$\dest$}-Flows}

\begin{lemma}[\ourref{lem: Relations:h>0u>0}]\label{lem: Relations:h>0u>0} 
Let $\Routes'$ be an arbitrary countable collection of walks and $h \in \edom{\Routes'}$ with $\g: = \ell_{\Routes'}(h)$.  The following statements are true: 
    \begin{thmparts}
        \item  For all $\mathfrak T \in \mathcal{B}(\hori)$, $\wa \in \Routes$ and $j \leq \abs{\wa}$ the following implication holds: ${h}_\wa(t)>0$ for a.e.~$t \in \mathfrak T$ $\implies $ ${\g}_{\wa[j]}(t)>0$ for a.e.~$t \in \arr_{\wa,j}(\mathfrak T)$. \label[thmpart]{lem: Relations:h>0u>0:Setwise}

        \item For any edge $\arc \in \GA$, any representative of $h$ and for  all $\mathfrak T \in \mathcal{B}(\hori)$ with ${\g}_\arc(t)>0$ for a.e.~$t \in \mathfrak T$, there exists for almost every $t \in \mathfrak T$ a walk $\wa \in {\Routes'}, j \leq |\wa|$ with $\wa[j] = \arc$ and $\Tilde{t} \in \arr_{\wa,j}^{-1}(t)$ such that ${h}_\wa(\Tilde{t})> 0$. 
        \label[thmpart]{lem: Relations:h>0u>0:u>0ExistsHw>0}        
       
        \item For all $\wa \in \Routes'$ and $j \leq \abs{\wa}$, an arbitrary representative of ${\g}_{\wa[j]}$ and  almost all $t \in \hori$ the following implication holds 
        \begin{align*}
             {h}_\wa(t) > 0 \implies {\g}_{\wa[j]} (\arr_{\wa,j}(t)) >0.
        \end{align*}
       \label[thmpart]{lem: Relations:h>0u>0:Pointwise}

        \item For any edge $\arc \in \GA$, any representative of $h$ and almost all $t\in\hori$, the following implication holds 
        \begin{align*} 
 					\g_\arc(t)>0 \implies \exists \wa \in {\Routes'}, j \leq |\wa| \text{ with } \wa[j] = \arc \text{ and } \Tilde{t} \in \arr_{\wa,j}^{-1}(t) \text{ such that } {h}_\wa(\Tilde{t})> 0. 
 		\end{align*}
        \label[thmpart]{lem: Relations:h>0u>0:u>0ExistsHw>0Pointwise}

        \item \label[thmpart]{lem: Relations:h>0u>0:GoodRepres} There exist representatives of $h$ and $\g$ that fulfill 
            for all $\wa \in \Routes'$,  $j\leq \abs{\wa}$, $\arc \in \GA$  
            and \emph{for all} $t \in \hori$ the implications in \ref{lem: Relations:h>0u>0:Pointwise} and \ref{lem: Relations:h>0u>0:u>0ExistsHw>0Pointwise}.

 		\item If $\Routes'$ is a collection of \stwalk s, then, for any representative of $h$,  
 		there exists for $\op_\dest\g$-almost every $t$ a walk $\wa \in {\Routes'}$  and $\Tilde{t} \in \arr_{\wa,\abs{\wa}+1}^{-1}(t)$ such that ${h}_\wa(\Tilde{t})> 0$. 
 		\label[thmpart]{lem: Relations:h>0u>0:u>0ExistsHw>0D}

        \item Consider an arbitrary representative of $h$, an arbitrary $\mathfrak T \in \mathcal{B}(\hori)$ with $\sigma(\mathfrak T)> 0$ and a countable index set $\hat{L}$ with corresponding walk-edge-index-pairs $(\wa^{\hat l},j_{\hat l}),\hat l  \in \hat{L}$ with $\wa^{\hat l}\in \Routes'$ and $j_{\hat l}\leq \abs{\wa^{\hat l}}+1$ such that the following holds: For almost every $t \in \mathfrak T$ there exists an index $\hat l \in \hat{L}$ with corresponding $\tilde{t} \in \arr_{\wa^{\hat l},j_{\hat l}}^{-1}(t)$ such that $h_{\wa^{\hat l}}(\tilde{t})>0$. 
        Then, 
        there exists a countable set $L$ with corresponding walks $\wa^{l}\in \Routes'$, indices $j_{l}\leq \abs{\wa^{l}}+1$ and departure time sets $\mathfrak D^{l} \in \mathcal{B}(\hori)$ with $\sigma(\mathfrak D^{l})> 0$ such that
        \begin{itemize}\label[thmpart]{lem: Relations:h>0u>0:u>0ExistsCountableMZwischen}  
            \item for every $l \in L$ there exists $\hat{l}\in \hat{L}$ with $(\wa^l,j_l) = (\wa^{\hat{l}},j_{\hat{l}})$, 
            \item for every $l \in L$ we have $h_{\wa^l}(t)>0$ for a.e.~$t \in \mathfrak D^l$, 
            \item the union  $\bigcup_{l \in L}  \arr_{\wa^{l},j_{l}}(\mathfrak D^{l})$ equals $\mathfrak T$ up to a null set,
            \item the union  $\bigcup_{l \in L}  \arr_{\wa^{l},j_{l}}(\mathfrak D^{l})$ is disjoint, 
            \item all $\arr_{\wa^{l},j_{l}}(\mathfrak D^{l}), l \in L$ have positive measure and 
            \item every pair $(\wa,j), \wa\in {\Routes'}, j\leq\abs{\wa}+1$  appears at most as often in~$L$ as in~$\hat{L}$, that is, $\abs{\{l \in L\mid (\wa^{l},j_{l})=(\wa,j)\}} \leq \abs{\{\hat l \in \hat{L}\mid (\wa^{\hat l},j_{\hat l})=(\wa,j)\}}$.
        \end{itemize} 

        \item \label[thmpart]{lem: Relations:h>0u>0:u>0ExistsCountableM}  
        Consider an arbitrary $\arc \in \GA$, $\mathfrak T \in \mathcal{B}(\hori)$ with $\sigma(\mathfrak T)> 0$ and ${\g}_\arc(t)>0$ for a.e.~$t \in \mathfrak T$. Then, 
        there exists a countable set $L$ with corresponding walks $\wa^{l}\in \Routes', j_{l}\leq \abs{\wa^{l}}$ and departure time sets $\mathfrak D^{l} \in \mathcal{B}(\hori)$ with $\sigma(\mathfrak D^{l})> 0$ such that 
        \begin{itemize}
            \item for every $l \in L$ we have $h_{\wa^l}(t)>0$ for a.e.~$t \in \mathfrak D^l$, 
            \item the union  $\bigcup_{l \in L}  \arr_{\wa^{l},j_{l}}(\mathfrak D^{l})$ equals $\mathfrak T$ up to a null set,
            \item the union  $\bigcup_{l \in L}  \arr_{\wa^{l},j_{l}}(\mathfrak D^{l})$ is disjoint, 
            \item all $\arr_{\wa^{l},j_{l}}(\mathfrak D^{l}), l \in L$ have positive measure and 
            \item every $\wa \in {\Routes'}$ appears at most $\abs{\wa}$ many times, i.e.~$\abs{\{l \in L\mid \wa^{l}=\wa\}} \leq \abs{\wa}$. 
        \end{itemize} 
\end{thmparts}
\end{lemma}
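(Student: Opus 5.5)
The plan is to work directly from the defining identity of the \auto network loading: $\ell_{\wa,j}(h_\wa)\cdot\sigma=(h_\wa\cdot\sigma)\circ\arr_{\wa,j}^{-1}$. Combined with \Cref{lem: elluExistenceProperties:ExistenceInducedFlow}, this gives
\[\g_\arc\cdot\sigma=\sum_{\wa\in\Routes'}\sum_{j:\wa[j]=\arc}(h_\wa\cdot\sigma)\circ\arr_{\wa,j}^{-1}.\]
I would use two measure-theoretic facts throughout. First, $\arr_{\wa,j}$ is absolutely continuous, so it has Lusin's $N$ property (\Cref{lem: PropAbsCon:Lus}). Second, images of Borel sets under continuous maps are analytic and hence Lebesgue measurable, which makes every set of the form $\arr_{\wa,j}(\mathfrak T)$ below measurable.

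For \ref{lem: Relations:h>0u>0:Setwise}, let $\mathfrak N:=\{t\in\arr_{\wa,j}(\mathfrak T)\mid\g_{\wa[j]}(t)=0\}$. Then
\[0=\int_{\mathfrak N}\g_{\wa[j]}\di\sigma\ \geq\ \int_{\arr_{\wa,j}^{-1}(\mathfrak N)\cap\mathfrak T}h_\wa\di\sigma .\]
Since $h_\wa>0$ a.e.\ on $\mathfrak T$, this forces $\sigma(\arr_{\wa,j}^{-1}(\mathfrak N)\cap\mathfrak T)=0$. Moreover $\mathfrak N$ is contained in the image of this null set under $\arr_{\wa,j}$, so $\mathfrak N$ is null by Lusin's property. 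Part \ref{lem: Relations:h>0u>0:Pointwise} follows the same way: set $\mathfrak D:=\{t\mid h_\wa(t)>0,\ \g_{\wa[j]}(\arr_{\wa,j}(t))=0\}$. Since $\arr_{\wa,j}(\mathfrak D)\subseteq\{\g_{\wa[j]}=0\}$, we get $\int_{\mathfrak D}h_\wa\di\sigma\le\int_{\arr_{\wa,j}(\mathfrak D)}\g_{\wa[j]}\di\sigma=0$, hence $\sigma(\mathfrak D)=0$.

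For \ref{lem: Relations:h>0u>0:u>0ExistsHw>0} and \ref{lem: Relations:h>0u>0:u>0ExistsHw>0Pointwise}, consider the exceptional set
\[\mathfrak N:=\{\g_\arc>0\}\setminus\bigcup_{\wa,\,j:\wa[j]=\arc}\arr_{\wa,j}(\{h_\wa>0\}).\]
It is measurable by countability. Since $\arr_{\wa,j}^{-1}(\mathfrak N)\cap\{h_\wa>0\}=\emptyset$, the displayed identity gives $\int_{\mathfrak N}\g_\arc\di\sigma=0$, so $\mathfrak N$ is null. Part \ref{lem: Relations:h>0u>0:u>0ExistsHw>0} is then the restriction to $\mathfrak T$. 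Part \ref{lem: Relations:h>0u>0:u>0ExistsHw>0D} is handled identically, replacing the identity for $\g_\arc$ by \Cref{lem: flowconW'}: its negative part of $\op_\dest\g$ is $\sum_\wa(h_\wa\cdot\sigma)\circ\arr_{\wa,\abs{\wa}+1}^{-1}$.

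Part \ref{lem: Relations:h>0u>0:GoodRepres} is where I expect the main difficulty, because fixing one implication on a null set can break the other. I would start from any representatives and let $Z_0$ be the (null) union of all exceptional sets from \ref{lem: Relations:h>0u>0:Pointwise} and \ref{lem: Relations:h>0u>0:u>0ExistsHw>0Pointwise}, collected over the countably many $\wa,j,\arc$. I would then close $Z_0$ under the operations $Z\mapsto\arr_{\wa,j}(Z)$ and $Z\mapsto\arr_{\wa,j}^{-1}(Z)\cap\{h_\wa>0\}$. The first operation preserves nullity by Lusin's property. The second preserves nullity by the integral estimate $\int_{\arr^{-1}(Z)}h_\wa\di\sigma\le\int_Z\g\di\sigma=0$, which shows $h_\wa=0$ a.e.\ there. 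Iterating countably often yields a null set $Z$ closed under both operations. I would then set $h$ and $\g$ to $0$ on $Z$ and check that neither implication can fail outside $Z$.

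For \ref{lem: Relations:h>0u>0:u>0ExistsCountableMZwischen}, I would run a greedy exhaustion. For each $\hat l$ let $S_{\hat l}:=\arr_{\wa^{\hat l},j_{\hat l}}(\{h_{\wa^{\hat l}}>0\})$. Disjointify these as $S_{\hat l}\setminus\bigcup_{\hat l'<\hat l}S_{\hat l'}$, intersect with $\mathfrak T$, and keep only those pieces of positive measure. Set $\mathfrak D^l$ to be the preimage of the kept piece intersected with $\{h>0\}$; this has positive measure by Lusin's property. Because each index $\hat l$ is used at most once, the multiplicity bound holds. Finally, \ref{lem: Relations:h>0u>0:u>0ExistsCountableM} follows by applying this to \ref{lem: Relations:h>0u>0:u>0ExistsHw>0}, with $\hat L$ consisting of the pairs $(\wa,j)$ with $\wa[j]=\arc$. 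Each such $j$ appears once per walk, so every walk $\wa$ occurs at most $\abs{\wa}$ times.
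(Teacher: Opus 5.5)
Your argument is correct. The paper does not prove this lemma itself; it is quoted from the companion work on autonomous network loadings, so there is no in-paper proof to compare against line by line. Your route is the natural one and uses the same tools the paper relies on elsewhere: the pushforward identity $\ell_{\wa,j}(h_\wa)\cdot\sigma=(h_\wa\cdot\sigma)\circ\arr_{\wa,j}^{-1}$, Lusin's property of $\arr_{\wa,j}$, and a countable disjointification for (g) and (h).

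The one place where you work harder than necessary is (e). Your closure iteration is valid, but you can keep an \emph{arbitrary} representative of $h$ and modify only $\g$. Set $U_\arc:=\bigcup_{\wa\in\Routes'}\bigcup_{j:\wa[j]=\arc}\arr_{\wa,j}(\{h_\wa>0\})$.
\begin{itemize}
\item Part (d) says $\{\g_\arc>0\}\setminus U_\arc$ is null.
\item Part (a), applied with $\mathfrak T=\{h_\wa>0\}$ for each of the countably many pairs $(\wa,j)$, says $U_\arc\cap\{\g_\arc=0\}$ is null.
\end{itemize}
Now redefine $\g_\arc$ to be $0$ off $U_\arc$ and $1$ on $U_\arc\cap\{\g_\arc=0\}$, leaving it unchanged elsewhere. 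This gives a representative with $\{\g_\arc>0\}=U_\arc$ exactly. Both implications (c) and (d) then hold for every $t$ directly from the definition of $U_\arc$. There is no interaction between the two implications to manage, and you get a slightly stronger statement: one good representative of $\g$ works for every representative of $h$. This one-sided adjustment of $\g$ via Lusin's property is also how the paper chooses good representatives in the proof of \Cref{claim: ZeroDualityGapTildeH}.

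A minor technical point concerns (g). ``Analytic, hence Lebesgue measurable'' does not by itself give $\mathfrak D^l\in\mathcal B(\hori)$, since preimages of merely Lebesgue-measurable sets need not be Borel. It is cleaner to use that images of Borel sets under the continuous non-decreasing maps $\arr_{\wa,j}$ are Borel (\Cref{lem: PropAbsCon}). Then, for Borel representatives, every set in your construction, including each $\mathfrak D^l$, is Borel.
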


 \begin{lemma}[\ourref{lem: elluPropagation}]\label{lem: elluPropagation}
  Consider an arbitrary walk $\wa$, two indices $j_1\leq j_2\leq|\wa|+1$ and $h_\wa \in \edom{\wa,j_1}$. Then, 
     $\ell_{\wa_{\geq j_1},j_2-j_1+1}(\ell_{\wa,j_1}(h_\wa))$ exists if and only if $\ell_{\wa,j_2}(h_\wa)$ exists,  in which case they coincide, i.e.~$\ell_{\wa,j_2}(h_\wa) = \ell_{\wa_{\geq j_1},j_2-j_1+1}(\ell_{\wa,j_1}(h_\wa))$. 
\end{lemma}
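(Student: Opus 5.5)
The argument works at the level of measures. Both network loadings in the statement are Radon--Nikodym densities of image measures, and the relevant arrival time functions compose. Write $B \coloneqq \arr_{\wa_{\geq j_1},\,j_2-j_1+1}$.

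\textbf{Step 1: factor the arrival time function.} Unfolding the recursive definition $\arr_{\wa,j}=\exit_{\wa[j-1]}\circ\arr_{\wa,j-1}$ from index $j_1$ up to $j_2$ gives
\begin{align*}
\arr_{\wa,j_2} = B\circ\arr_{\wa,j_1}.
\end{align*}
The edges of $\wa_{\geq j_1}$ are $\wa[j_1],\wa[j_1+1],\ldots$, so the exit functions applied in $B$ are exactly $\exit_{\wa[j_1]},\ldots,\exit_{\wa[j_2-1]}$. The degenerate case $j_1=j_2$ gives $B=\id$ and is trivial.

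\textbf{Step 2: turn the defining equation into a measure identity.} Set $\g\coloneqq\ell_{\wa,j_1}(h_\wa)$, which exists by assumption. The defining equation \eqref{eq: DefUEdgeFlowSingle} says that the finite measures $\g\cdot\sigma$ and $(h_\wa\cdot\sigma)\circ\arr_{\wa,j_1}^{-1}$ agree on all half-lines $\startint t]$. Both have total mass $\norm{h_\wa}<\infty$, obtained by letting $t\to\infty$ and using monotone convergence. Half-lines form a $\pi$-system generating $\mathcal B(\hori)$, so Dynkin's $\pi$-$\lambda$ theorem gives
\begin{align*}
\g\cdot\sigma = (h_\wa\cdot\sigma)\circ\arr_{\wa,j_1}^{-1} \quad\text{on all of } \mathcal B(\hori).
\end{align*}

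\textbf{Step 3: compose the image measures.} Image measures compose under composition of maps, so
\begin{align*}
(h_\wa\cdot\sigma)\circ\arr_{\wa,j_2}^{-1}
=\bigl((h_\wa\cdot\sigma)\circ\arr_{\wa,j_1}^{-1}\bigr)\circ B^{-1}
=(\g\cdot\sigma)\circ B^{-1}.
\end{align*}

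\textbf{Step 4: conclude existence and equality.} By the same $\pi$-system argument, a function $f\in L_+(\hori)$ is $\ell_{\wa,j_2}(h_\wa)$ if and only if $f\cdot\sigma=(h_\wa\cdot\sigma)\circ\arr_{\wa,j_2}^{-1}$. Likewise, $f$ is $\ell_{\wa_{\geq j_1},j_2-j_1+1}(\g)$ if and only if $f\cdot\sigma=(\g\cdot\sigma)\circ B^{-1}$. By Step 3 these are the same finite measure. Hence one density exists if and only if the other does. Equivalently, the measure is absolutely continuous with respect to $\sigma$, and integrability is automatic from the finite mass. When they exist, the densities coincide $\sigma$-a.e., and uniqueness follows from \Cref{lem: elluExistenceProperties:ExistenceInducedFlowOnJthEdge}.

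\textbf{Main obstacle.} The only step needing care is Step 2: passing from the half-line identity to equality on all Borel sets. This requires finiteness of the measures, which comes from $h_\wa\in L_+(\hori)$, before Dynkin's theorem can be applied. Everything else is bookkeeping of indices.

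As a cross-check, the equivalence can also be read off from criterion \eqref{eq: nlexists}. Since $\arr_{\wa,j_2}^{-1}(\mathfrak T)=\arr_{\wa,j_1}^{-1}(B^{-1}(\mathfrak T))$, we have $h_\wa=0$ a.e. there if and only if $(\g\cdot\sigma)(B^{-1}(\mathfrak T))=0$, which is exactly the condition for $\g$ along $\wa_{\geq j_1}$.
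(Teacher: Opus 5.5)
Your proof is correct. The paper does not reprove this lemma but imports it from \cite{GHS24FD}. Your argument is the natural measure-identification one that the paper itself uses for such identities (compare the proof of \Cref{lem: aggCostsVSwalkCosts}): factor $\arr_{\wa,j_2}=B\circ\arr_{\wa,j_1}$, then recognize both loadings as densities of the single image measure $(h_\wa\cdot\sigma)\circ\arr_{\wa,j_2}^{-1}=(\ell_{\wa,j_1}(h_\wa)\cdot\sigma)\circ B^{-1}$.

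As a minor simplification, the Dynkin step can be skipped. Since $B$ is continuous and non-decreasing, each $B^{-1}((-\infty,t])$ is $\emptyset$, $\R$, or a half-line $(-\infty,s]$. The two measures already agree on all such sets by the defining equation and your total-mass computation.
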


\begin{lemma}[\ourref{lem: elluinj}]\label{lem: elluinj}
       Consider an arbitrary walk $\wa$, $j\in [|\wa|+1]$ and $h_\wa \in L_+(\hori)$. If $\ell_{\wa,j}(h_\wa)$ exists, then $h_\wa = 0$ on $\arr_{\wa,j}^{-1}(\arr_{\wa,j}(\mathfrak T))\setminus \mathfrak T$ and in particular $\int_{\mathfrak T} h_\wa \di\leb = \int_{\arr_{\wa,j}^{-1}(\arr_{\wa,j}(\mathfrak T))} h_\wa \di\leb$ for any $\mathfrak T\in \mathcal{B}(\hori)$.
\end{lemma}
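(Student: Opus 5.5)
The proof rests on one observation: the set $\arr_{\wa,j}^{-1}(\arr_{\wa,j}(\mathfrak T))\setminus\mathfrak T$ lies in a countable union of level sets of $\arr_{\wa,j}$, which are preimages of a null set. Then the existence characterization \eqref{eq: nlexists} from \Cref{lem: elluExistenceProperties:ExistenceInducedFlowOnJthEdge} gives the claim.

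Write $\arr:=\arr_{\wa,j}$, which is continuous and non-decreasing, and set $\mathfrak S:=\arr^{-1}(\arr(\mathfrak T))\setminus\mathfrak T$.

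\begin{enumerate}
\item Take $t\in\mathfrak S$. Then there is $t'\in\mathfrak T$ with $t'\neq t$ and $\arr(t)=\arr(t')$. Since $\arr$ is non-decreasing, it is constant on the closed interval between $t$ and $t'$. Hence $y:=\arr(t)$ lies in
\[
N:=\{y\in\R\mid \arr^{-1}(y)\text{ is an interval of positive length}\}.
\]
\item The level sets $\arr^{-1}(y)$, $y\in N$, are pairwise disjoint intervals of positive length. Each contains a rational number, so $N$ is countable and in particular $\sigma(N)=0$.
\item From the first step, $\mathfrak S\subseteq\arr^{-1}(N)$. Since $\ell_{\wa,j}(h_\wa)$ exists, \Cref{lem: elluExistenceProperties:ExistenceInducedFlowOnJthEdge} gives that $h_\wa$ satisfies \eqref{eq: nlexists}. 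Applied to the null set $N$, this yields $h_\wa=0$ almost everywhere on $\arr^{-1}(N)$, and hence almost everywhere on $\mathfrak S$.
\end{enumerate}

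For the integral identity, write $\arr^{-1}(\arr(\mathfrak T))$ as the disjoint union $\mathfrak T\cup\mathfrak S$. Here $\mathfrak S$ is contained in the Borel set $\arr^{-1}(N)$, on which $h_\wa$ vanishes almost everywhere. The integral over $\mathfrak S$ therefore contributes nothing.

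The only delicate point is measurability, since $\arr(\mathfrak T)$ need not be Borel. To handle this, interpret the integral with respect to the completed Lebesgue measure. Alternatively, note that $\arr^{-1}(\arr(\mathfrak T))$ differs from the Borel set $\mathfrak T$ only by a subset of $\arr^{-1}(N)$, and $h_\wa\cdot\sigma$ assigns measure zero to $\arr^{-1}(N)$. Either way the two integrals coincide.
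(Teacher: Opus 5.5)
Your proof is correct. The countable, hence null, set of plateau values of the continuous non-decreasing map $\arr_{\wa,j}$ has a preimage containing $\arr_{\wa,j}^{-1}(\arr_{\wa,j}(\mathfrak T))\setminus\mathfrak T$, and \eqref{eq: nlexists} forces $h_\wa$ to vanish a.e.\ there. The paper does not reprove this lemma but imports it from \cite{GHS24FD}, so there is no in-paper proof to compare against; your argument is the natural one.

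One small correction: your claim that $\arr_{\wa,j}(\mathfrak T)$ need not be Borel does not apply here. It is Borel by \Cref{lem: PropAbsCon:ImageMeas}, and for a continuous monotone map this also follows directly. So both integrals are ordinary integrals over Borel sets, and your completion or sandwich fallback, while correct, is not needed.
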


\begin{lemma}[\ourref{lem: ellOrderPreserving}]\label{lem: ellOrderPreserving}
   Consider an arbitrary walk  $\wa$, $j\in [|\wa|+1]$ and $h_{\wa},\Tilde{h}_\wa \in \edom{\wa,j}$. Then, we have
        \[\ell_{\wa,j}(h_\wa) \leq \ell_{\wa,j}(\tilde{h}_\wa) \iff h_\wa \leq \Tilde{h}_\wa.\]
    The analogue statement holds for $<$ instead of $\leq$ where $\wflow < \tilde{\wflow}$ for $\wflow,\tilde{\wflow} \in L_+(\hori)$ means that $\wflow \leq \tilde{\wflow}$ and $\wflow \neq \tilde{\wflow}$. 
\end{lemma}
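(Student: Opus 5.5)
The plan is to use the two structural facts already available for a single walk-edge pair. The first is the defining change-of-variables identity \eqref{eq: DefUEdgeFlowSingle} for $\ell_{\wa,j}$, extended from half-lines to all Borel sets via the measure identity used in the proof of \Cref{lem: aggCostsVSwalkCosts}. The second is the ``saturation'' property of \Cref{lem: elluinj}, namely that $h_\wa=0$ on $\arr_{\wa,j}^{-1}(\arr_{\wa,j}(\mathfrak T))\setminus\mathfrak T$. Throughout I write $\arr:=\arr_{\wa,j}$.

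\emph{Direction ``$\Leftarrow$''.} Assume $h_\wa\le\tilde h_\wa$. Then $\tilde h_\wa-h_\wa\ge 0$ vanishes almost everywhere wherever $\tilde h_\wa$ does. Since $\tilde h_\wa$ satisfies \eqref{eq: nlexists}, so does the difference, and by \Cref{lem: elluExistenceProperties:ExistenceInducedFlowOnJthEdge} we get $\tilde h_\wa-h_\wa\in\edom{\wa,j}$. Linearity (\Cref{lem: elluExistenceProperties:Linearity}) then gives
\[
\ell_{\wa,j}(\tilde h_\wa)-\ell_{\wa,j}(h_\wa)=\ell_{\wa,j}(\tilde h_\wa-h_\wa)\in L_+(\hori).
\]

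\emph{Direction ``$\Rightarrow$''.} Assume $\ell_{\wa,j}(h_\wa)\le\ell_{\wa,j}(\tilde h_\wa)$, and fix Borel representatives of $h_\wa$ and $\tilde h_\wa$. Let $S:=\{t\mid h_\wa(t)>\tilde h_\wa(t)\}$, which is Borel. Take $\mathfrak T':=\arr(S)$. Since $\arr$ is continuous and non-decreasing, the image of a Borel set is Lebesgue measurable, so $\mathfrak T'$ can be sandwiched between Borel sets differing by a null set. Integrating the assumed inequality over $\mathfrak T'$ and using the change-of-variables identity gives
\[
\int_{\arr^{-1}(\arr(S))}h_\wa\di\sigma\le\int_{\arr^{-1}(\arr(S))}\tilde h_\wa\di\sigma .
\]
\Cref{lem: elluinj}, applied to both $h_\wa$ and $\tilde h_\wa$ (both lie in $\edom{\wa,j}$), lets me replace the domain $\arr^{-1}(\arr(S))$ by $S$ on each side. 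This yields $\int_S (h_\wa-\tilde h_\wa)\di\sigma\le 0$. The integrand is strictly positive on $S$, so $\sigma(S)=0$, i.e.\ $h_\wa\le\tilde h_\wa$ almost everywhere.

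The main obstacle is the measurability bookkeeping in the ``$\Rightarrow$'' direction. One must make sure the set identity of \Cref{lem: elluinj} may be applied to $S$, and that $\arr(S)$ may be replaced by a Borel set without changing the integrals. I would handle this by choosing a Borel $B\supseteq\arr(S)$ with $\sigma(B\setminus\arr(S))=0$. By Lusin's property, $\arr^{-1}(B)\setminus\arr^{-1}(\arr(S))$ is mapped into a null set, and \eqref{eq: nlexists} then forces both $h_\wa$ and $\tilde h_\wa$ to vanish there.

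\emph{Strict version.} If $h_\wa<\tilde h_\wa$, the ``$\Leftarrow$'' direction gives $\ell_{\wa,j}(h_\wa)\le\ell_{\wa,j}(\tilde h_\wa)$. Equality is impossible: equality would give both inequalities, hence $h_\wa=\tilde h_\wa$ by the ``$\Rightarrow$'' direction, contradicting $h_\wa\neq\tilde h_\wa$. The converse is symmetric, using that $\ell_{\wa,j}$ is single-valued by the uniqueness in \Cref{lem: elluExistenceProperties:ExistenceInducedFlowOnJthEdge}.
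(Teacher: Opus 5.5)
Your proof is correct. This paper gives no proof of the lemma to compare against: it is imported from \cite{GHS24FD} in the appendix. Your argument derives it cleanly from the other imported facts:
\begin{itemize}
\item for ``$\Leftarrow$'', existence via \eqref{eq: nlexists} and linearity from \Cref{lem: elluExistenceProperties};
\item for ``$\Rightarrow$'', the Borel-set version of \eqref{eq: DefUEdgeFlowSingle} together with the saturation property \Cref{lem: elluinj};
\item for the strict version, the formal reduction using uniqueness of $\ell_{\wa,j}$.
\end{itemize}

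Two small remarks on the bookkeeping.
\begin{itemize}
\item The appeal to Lusin's property is misplaced. The identity $\arr_{\wa,j}^{-1}(B)\setminus\arr_{\wa,j}^{-1}(\arr_{\wa,j}(S))=\arr_{\wa,j}^{-1}\bigl(B\setminus\arr_{\wa,j}(S)\bigr)$ exhibits this set as the preimage of a null set by pure set theory, and \eqref{eq: nlexists} then does the rest.
\item The sandwich set $B$ is not even needed. \Cref{lem: PropAbsCon:ImageMeas} already makes $\arr_{\wa,j}(S)$ Borel; this is also directly true for a continuous non-decreasing map, which is injective off countably many constancy intervals.
\end{itemize}

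Equivalently, your computation can be compressed using \Cref{lem: elluindi} and conservation of total mass:
\[
\int_\hori 1_S h_\wa\di\sigma=\int_\hori 1_{\arr_{\wa,j}(S)}\,\ell_{\wa,j}(h_\wa)\di\sigma\le\int_\hori 1_{\arr_{\wa,j}(S)}\,\ell_{\wa,j}(\tilde h_\wa)\di\sigma=\int_\hori 1_S\tilde h_\wa\di\sigma .
\]

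You might worry that \Cref{lem: elluinj} itself rests on order preservation in \cite{GHS24FD}, which would make the argument circular. It does not need to: \Cref{lem: elluinj} follows directly from \eqref{eq: nlexists}. The set $\arr_{\wa,j}^{-1}(\arr_{\wa,j}(\mathfrak T))\setminus\mathfrak T$ lies in the preimage of the countable set of values that the non-decreasing function $\arr_{\wa,j}$ takes on its nondegenerate intervals of constancy.
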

\begin{lemma}[\ourref{lem: elluindi}]\label{lem: elluindi}
       Consider an arbitrary walk $\wa$, $j\leq |\wa|+1$ and $h_\wa \in L_+(\hori)$ with $\ell_{\wa,j}(h_\wa)$ existing. 
       Then for any $\mathfrak T^* \in \mathcal{B}(\hori)$, the flow $\ell_{\wa,j}(1_{\mathfrak T^*}\cdot h_\wa)$ exists and  fulfills $\ell_{\wa,j}(1_{\mathfrak T^*}\cdot h_\wa) = 1_{\arr_{\wa,j}(\mathfrak T^*)}\cdot\ell_{\wa,j}(h_\wa)$.  
\end{lemma}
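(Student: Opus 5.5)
The plan is to reduce everything to the existence characterization in \Cref{lem: elluExistenceProperties:ExistenceInducedFlowOnJthEdge} and to the mass-concentration statement of \Cref{lem: elluinj}. I will compare the two sides as measures on $\mathcal{B}(\hori)$. Write $\arr:=\arr_{\wa,j}$ and $\g:=\ell_{\wa,j}(h_\wa)$.

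\emph{Existence.} Since $\ell_{\wa,j}(h_\wa)$ exists, $h_\wa$ satisfies \eqref{eq: nlexists}, i.e.\ $h_\wa=0$ a.e.\ on $\arr^{-1}(\mathfrak T)$ for every null set $\mathfrak T$. The function $1_{\mathfrak T^*}\cdot h_\wa$ is measurable and satisfies $0\le 1_{\mathfrak T^*}h_\wa\le h_\wa$, so it inherits \eqref{eq: nlexists}. Hence $\ell_{\wa,j}(1_{\mathfrak T^*}h_\wa)$ exists and is unique by \Cref{lem: elluExistenceProperties:ExistenceInducedFlowOnJthEdge}.

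\emph{Measurability of $\arr(\mathfrak T^*)$.} This must be checked before the right-hand side makes sense. Since $\arr$ is continuous and non-decreasing, it is constant on at most countably many maximal nondegenerate intervals $J_k$. Choose one point in each $J_k$ and remove the rest of the $J_k$; on the remaining Borel set $\arr$ is injective and continuous. By the Lusin--Souslin theorem, the image of $\mathfrak T^*$ intersected with this set is Borel. The remaining part of $\arr(\mathfrak T^*)$ lies in the countable set $\{\arr(J_k)\}_k$, so $\arr(\mathfrak T^*)$ is Borel. This is the step I expect to be the only genuinely delicate one; everything else is bookkeeping. If one prefers to avoid Lusin--Souslin, it suffices that $\arr(\mathfrak T^*)$ is analytic and hence Lebesgue measurable, since the claimed identity is between $\sigma$-equivalence classes.

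\emph{Identity of measures.} Defining \eqref{eq: DefUEdgeFlowSingle} on the sets $(-\infty,t]$ and extending to all Borel $\mathfrak S$ by the $\pi$--$\lambda$ theorem (all measures involved are finite, as $h_\wa$ is integrable) gives
\[
\int_{\mathfrak S}\ell_{\wa,j}(1_{\mathfrak T^*}h_\wa)\di\sigma=\int_{\arr^{-1}(\mathfrak S)\cap\mathfrak T^*}h_\wa\di\sigma
\]
for every Borel $\mathfrak S$. On the other side,
\[
\int_{\mathfrak S}1_{\arr(\mathfrak T^*)}\g\di\sigma=\int_{\arr^{-1}(\mathfrak S\cap\arr(\mathfrak T^*))}h_\wa\di\sigma=\int_{\arr^{-1}(\mathfrak S)\cap\arr^{-1}(\arr(\mathfrak T^*))}h_\wa\di\sigma .
\]
By \Cref{lem: elluinj}, $h_\wa=0$ a.e.\ on $\arr^{-1}(\arr(\mathfrak T^*))\setminus\mathfrak T^*$, and $\mathfrak T^*\subseteq\arr^{-1}(\arr(\mathfrak T^*))$, so the two integrals coincide. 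Equality of the integrals over all Borel sets forces the two integrable densities to agree a.e. Thus $\ell_{\wa,j}(1_{\mathfrak T^*}h_\wa)=1_{\arr(\mathfrak T^*)}\cdot\ell_{\wa,j}(h_\wa)$.
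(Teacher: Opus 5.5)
Your proof is correct and is the natural route for this lemma, which the paper imports from \cite{GHS24FD} without proof: existence follows from \Cref{lem: elluExistenceProperties:ExistenceInducedFlowOnJthEdge} because $0\le 1_{\mathfrak T^*}h_\wa\le h_\wa$, and the identity follows from comparing the pushforward measures and discarding $\arr_{\wa,j}^{-1}(\arr_{\wa,j}(\mathfrak T^*))\setminus\mathfrak T^*$ via \Cref{lem: elluinj}. The only simplification available is that you need not derive the Borel measurability of $\arr_{\wa,j}(\mathfrak T^*)$ via Lusin--Souslin, since the paper already provides it in \Cref{lem: PropAbsCon:ImageMeas}.
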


\begin{lemma}[\ourref{lem: ellOrderPreservingSharpened}]\label{lem: ellOrderPreservingSharpened}
    Consider an arbitrary walk  $\wa$, $j\in [|\wa|+1]$ and $h_{\wa},\Tilde{h}_\wa \in \edom{\wa}$. 
    For any set $\mathfrak D \in \mathcal{B}(\hori)$ the inequality 
     $\ell_{\wa,j}(h_\wa) \leq \ell_{\wa,j}(\tilde{h}_\wa)$ on $\arr_{\wa,j}(\mathfrak D)$ is equivalent to $h_\wa \leq \Tilde{h}_\wa$ on $\mathfrak D$. The analogue statement holds for $<$ instead of $\leq$. 
\end{lemma}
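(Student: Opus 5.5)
The statement immediately above is \Cref{lem: ellOrderPreservingSharpened}, the localized order-preservation of $\ell_{\wa,j}$. The plan is to reduce it to the global version, \Cref{lem: ellOrderPreserving}, by multiplying both walk inflows by the indicator $1_{\mathfrak D}$ and transporting that indicator through the loading with \Cref{lem: elluindi}.

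\textbf{Step 1 (localize).} Since $h_\wa,\tilde h_\wa\in\edom{\wa}$, the loadings $\ell_{\wa,j}(h_\wa)$ and $\ell_{\wa,j}(\tilde h_\wa)$ exist. \Cref{lem: elluindi} then shows that $\ell_{\wa,j}(1_{\mathfrak D}h_\wa)$ and $\ell_{\wa,j}(1_{\mathfrak D}\tilde h_\wa)$ exist, with
\[
\ell_{\wa,j}(1_{\mathfrak D}h_\wa)=1_{\arr_{\wa,j}(\mathfrak D)}\,\ell_{\wa,j}(h_\wa),
\]
and similarly for $\tilde h_\wa$. So both localized flows lie in $\edom{\wa,j}$.

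\textbf{Step 2 (apply the global version).} By \Cref{lem: ellOrderPreserving} applied to $1_{\mathfrak D}h_\wa$ and $1_{\mathfrak D}\tilde h_\wa$,
\[
1_{\arr_{\wa,j}(\mathfrak D)}\ell_{\wa,j}(h_\wa)\le 1_{\arr_{\wa,j}(\mathfrak D)}\ell_{\wa,j}(\tilde h_\wa)\iff 1_{\mathfrak D}h_\wa\le 1_{\mathfrak D}\tilde h_\wa .
\]
The left side says $\ell_{\wa,j}(h_\wa)\le \ell_{\wa,j}(\tilde h_\wa)$ a.e.\ on $\arr_{\wa,j}(\mathfrak D)$, and the right side says $h_\wa\le\tilde h_\wa$ a.e.\ on $\mathfrak D$. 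This is the claim for $\le$.

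\textbf{Step 3 (the case $<$) and the main obstacle.} Here $<$ means $\le$ together with inequality as elements of $L$, restricted to the respective set. The argument is identical, using the strict part of \Cref{lem: ellOrderPreserving}. The step needing care is checking that ``$\ne$ on the set'' carries over in both directions. This holds because $1_{\arr_{\wa,j}(\mathfrak D)}\ell_{\wa,j}(h_\wa)=1_{\arr_{\wa,j}(\mathfrak D)}\ell_{\wa,j}(\tilde h_\wa)$ as $L$-elements if and only if $\ell_{\wa,j}(1_{\mathfrak D}h_\wa)=\ell_{\wa,j}(1_{\mathfrak D}\tilde h_\wa)$. By the uniqueness and injectivity implicit in \Cref{lem: ellOrderPreserving} (apply it with $\le$ in both directions), this is equivalent to $1_{\mathfrak D}h_\wa=1_{\mathfrak D}\tilde h_\wa$.

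A second subtlety is that $\arr_{\wa,j}(\mathfrak D)$ must be measurable for ``on $\arr_{\wa,j}(\mathfrak D)$'' to make sense. I would take this as implicitly covered by \Cref{lem: elluindi}, which already uses $1_{\arr_{\wa,j}(\mathfrak T^*)}$. Alternatively, the image of a Borel set under the continuous map $\arr_{\wa,j}$ is analytic, hence Lebesgue measurable, which suffices because only almost-everywhere statements are involved.
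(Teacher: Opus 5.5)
Your proof is correct. By \Cref{lem: elluindi}, the condition on $\arr_{\wa,j}(\mathfrak D)$ is exactly $\ell_{\wa,j}(1_{\mathfrak D}h_\wa)\le\ell_{\wa,j}(1_{\mathfrak D}\tilde h_\wa)$, or the same with $<$, read as ``$\le$ and $\neq$'' in the sense of \Cref{lem: ellOrderPreserving}, which then converts it into the statement on $\mathfrak D$. The paper does not prove this lemma itself (it is imported from \cite{GHS24FD}), and your reduction is the natural derivation from the two lemmas listed just before it.

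One minor correction to Step~1 concerns $j=\abs{\wa}+1$. Membership in $\edom{\wa}$ only controls $\ell_{\wa,j'}$ for $j'\le\abs{\wa}$ (\Cref{lem: elluExistenceProperties}). So the existence of $\ell_{\wa,\abs{\wa}+1}(h_\wa)$ and $\ell_{\wa,\abs{\wa}+1}(\tilde h_\wa)$ is a tacit hypothesis of the statement there, not a consequence of $h_\wa,\tilde h_\wa\in\edom{\wa}$.
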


\begin{lemma}[\ourref{lem: FLowOnZeroTrav}]\label{lem: FLowOnZeroTrav}
Consider an arbitrary walk $\wa$, two edge indices $j_1<j_2\leq|\wa|+1$ and $h_\wa \in\edom{\wa,j_1}$.
 Furthermore, let $\mathfrak D \in \mathcal{B}(\hori)$ be a set for which for almost every $t \in \mathfrak D$, we have  $\arr_{\wa,j_1}(t) = \arr_{\wa,j_2}(t)$. 
Then, $\ell_{\wa,j'}(h_\wa\cdot 1_{\mathfrak{D}}), j'\in\{j_1+1,\ldots,j_2\}$ exist and fulfill  $\ell_{\wa,j'}(h_\wa \cdot 1_{\mathfrak{D}} ) = \ell_{\wa,j_1}(h_\wa) \cdot 1_{\arr_{\wa,j_1}(\mathfrak D)}$. 

In particular, if  $h_\wa = 0$ on $\hori\setminus \mathfrak D$, then $\ell_{\wa,j'}(h_\wa)$ exists and is equal to $\ell_{\wa,j_1}(h_\wa)$ on the whole set $\hori$ for all $j'\in\{j_1+1,\ldots,j_2\}$. For a zero-cycle inflow rate $h_c$ into a cycle $c$, the latter statement shows that $\ell_{c,\arc}(h_c) = \sum_{j\leq\abs{c}:c[j] = \arc}h_c$ for every $\arc \in c$. 
\end{lemma}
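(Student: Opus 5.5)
The last statement before the cut is \Cref{lem: FLowOnZeroTrav}. The plan is to reduce everything to the defining identity \eqref{eq: DefUEdgeFlowSingle} and the indicator lemma \Cref{lem: elluindi}. Replace $\mathfrak D$ by the full-measure subset $\mathfrak D'\subseteq\mathfrak D$ on which $\arr_{\wa,j_1}(t)=\arr_{\wa,j_2}(t)$ holds everywhere; this does not change $h_\wa\cdot 1_{\mathfrak D}$ as an element of $L_+(\hori)$.

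The first observation is monotonicity. Each $\arr_{\wa,j}$ satisfies $\arr_{\wa,j+1}=\exit_{\wa[j]}\circ\arr_{\wa,j}\geq\arr_{\wa,j}$ because travel times are nonnegative. Hence on $\mathfrak D'$ we have $\arr_{\wa,j'}(t)=\arr_{\wa,j_1}(t)$ for every $j'\in\{j_1,\ldots,j_2\}$, since the chain $\arr_{\wa,j_1}\le\arr_{\wa,j'}\le\arr_{\wa,j_2}$ collapses.

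For fixed $j'$ and any $t\in\hori$ we then compute
\[
\int_{\arr_{\wa,j'}^{-1}(\startint t])} h_\wa 1_{\mathfrak D'}\di\sigma
=\int_{\arr_{\wa,j_1}^{-1}(\startint t])\cap\mathfrak D'} h_\wa\di\sigma
=\int_{\startint t]} \ell_{\wa,j_1}(h_\wa 1_{\mathfrak D'})\di\sigma ,
\]
where the last equality is \eqref{eq: DefUEdgeFlowSingle} for $j_1$. Existence of $\ell_{\wa,j_1}(h_\wa 1_{\mathfrak D'})$ follows from \Cref{lem: elluindi}, which also gives $\ell_{\wa,j_1}(h_\wa 1_{\mathfrak D'})=1_{\arr_{\wa,j_1}(\mathfrak D')}\ell_{\wa,j_1}(h_\wa)$. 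This function therefore satisfies the defining identity for index $j'$, so $\ell_{\wa,j'}(h_\wa 1_{\mathfrak D})$ exists and equals it, with uniqueness from \Cref{lem: elluExistenceProperties:ExistenceInducedFlowOnJthEdge}.

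The delicate point is a measurability subtlety: $\arr_{\wa,j_1}(\mathfrak D')$ need not be Borel a priori. I would handle it either by passing to a Borel subset of full measure, or by noting that images of Borel sets under continuous monotone maps are analytic and hence Lebesgue measurable. This is exactly the setting in which \Cref{lem: elluindi} is stated, so I would simply apply that lemma as given. The main obstacle I anticipate is only this bookkeeping with null sets; the combinatorial content is trivial.

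For the particular case, if $h_\wa=0$ off $\mathfrak D$ then $h_\wa=h_\wa 1_{\mathfrak D}$. This gives $\ell_{\wa,j'}(h_\wa)=1_{\arr_{\wa,j_1}(\mathfrak D)}\ell_{\wa,j_1}(h_\wa)$, and the same indicator identity yields $\ell_{\wa,j_1}(h_\wa)=\ell_{\wa,j_1}(h_\wa 1_{\mathfrak D})=1_{\arr_{\wa,j_1}(\mathfrak D)}\ell_{\wa,j_1}(h_\wa)$, so the indicator drops. For a zero-cycle rate $h_c$, take $j_1=1$, $j_2=\abs{c}+1$ and $\mathfrak D$ the support of $h_c$; note $\ell_{c,1}(h_c)=h_c$ since $\arr_{c,1}=\id$. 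Then every $\ell_{c,j}(h_c)=h_c$, and summing over the indices $j$ with $c[j]=\arc$ via \Cref{lem: elluExistenceProperties:ExistenceInducedFlowOnEdgeE} gives the formula for $\ell_{c,\arc}(h_c)$.
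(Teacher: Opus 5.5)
Your proof is correct. This paper does not prove the lemma itself; it imports it from \cite{GHS24FD}, so there is no in-paper proof to compare against, and I can only judge your argument on its merits.

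Your route is direct and complete. On $\mathfrak D$ the chain $\arr_{\wa,j_1}\le\arr_{\wa,j'}\le\arr_{\wa,j_2}$ collapses. You then check \eqref{eq: DefUEdgeFlowSingle} for index $j'$ with the candidate $1_{\arr_{\wa,j_1}(\mathfrak D)}\cdot\ell_{\wa,j_1}(h_\wa)$, and invoke \Cref{lem: elluindi}. Beyond that indicator lemma, this uses only nonnegativity of travel times, not FIFO.

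Three points deserve a line each in a write-up:
\begin{enumerate}
\item $\mathfrak D'$ is Borel, because the coincidence set of the continuous functions $\arr_{\wa,j_1}$ and $\arr_{\wa,j_2}$ is closed. Your computation yields $1_{\arr_{\wa,j_1}(\mathfrak D')}$, so you still need to pass back to $1_{\arr_{\wa,j_1}(\mathfrak D)}$. Either use Lusin's property (\Cref{lem: PropAbsCon:Lus}) on the null set $\mathfrak D\setminus\mathfrak D'$, or apply \Cref{lem: elluindi} to both $\mathfrak D$ and $\mathfrak D'$, noting that $h_\wa 1_{\mathfrak D}=h_\wa 1_{\mathfrak D'}$ almost everywhere.
\item Your measurability concern about $\arr_{\wa,j_1}(\mathfrak D')$ is already settled by \Cref{lem: PropAbsCon:ImageMeas}, so no extra argument is needed.
\item In the zero-cycle case, state explicitly what the zero-cycle property gives. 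The condition $h_c(t)>0\Rightarrow\trav_\arc(t)=0$ for all $\arc\in c$ is the notion used in \Cref{lem: DifferenceGeneral}. By induction it yields $\arr_{c,j}(t)=t$ for all $j\le\abs{c}+1$ at such $t$. This is exactly the hypothesis of the main statement with $j_1=1$, $j_2=\abs{c}+1$ and $\mathfrak D=\{h_c>0\}$ for a Borel representative of $h_c$.
\end{enumerate}
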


 \begin{lemma}[\ourref{lem: FlowConEveryNode}]\label{lem: FlowConEveryNode} 
    Let $\eflow \in L_+(\hori)^\GA$ be \aauto $\source$,$\dest$-flow fulfilling flow conservation also at $s$. 
    Then, $\eflow$ is \aauto dynamic circulation  and we have
    \begin{align*}
        {\eflow}_\arc(t) > 0 \implies \trav_\arc(t) = 0 \text{ for almost all } t \in \hori \text{ and all } \arc \in \GA.   
    \end{align*}
\end{lemma}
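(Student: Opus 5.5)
The plan is to get both conclusions from testing the node balance measures $\op_v\eflow$ against suitable bounded functions and then summing over all nodes. Each edge contributes once with a plus sign (at its tail) and once with a minus sign (at its head), so the sum over nodes becomes a sum of per-edge terms. Throughout I use that $\eflow_\arc\in L_+(\hori)$ is integrable, so every measure $\eflow_\arc\cdot\sigma$ and its image measure $(\eflow_\arc\cdot\sigma)\circ\exit_\arc^{-1}$ is finite with the same total mass.

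\emph{Step 1: conservation at $\dest$, hence $\eflow$ is a circulation.} By assumption, $\op_v\eflow=0$ for all $v\neq\dest$ (at $\source$ by the extra hypothesis, elsewhere by the definition of an \auto $\source$,$\dest$-flow), and $\op_\dest\eflow$ is nonpositive. Evaluating \eqref{eq: FlowBalance} at $\mathfrak T=\hori$ and summing over all $v\in\GV$ gives
\[\sum_{v\in\GV}\op_v\eflow(\hori)=\sum_{\arc\in\GA}\Big(\int_\hori\eflow_\arc\di\sigma-\int_{\exit_\arc^{-1}(\hori)}\eflow_\arc\di\sigma\Big)=0,\]
since $\exit_\arc^{-1}(\hori)=\hori$. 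Every summand with $v\neq\dest$ vanishes, so $\op_\dest\eflow(\hori)=0$. A nonpositive finite measure with total mass zero is the zero measure, so $\op_\dest\eflow=0$. Thus $\eflow$ fulfills flow conservation at every node, i.e.\ it is an \auto dynamic circulation.

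\emph{Step 2: zero travel time on used edges.} Fix a bounded, strictly increasing Borel function, e.g.\ $\varphi=\arctan$. Since every $\op_v\eflow$ is now the zero measure, $\int\varphi\di(\op_v\eflow)=0$ for each $v$. Summing over $v$ and regrouping by edges gives
\[0=\sum_{\arc\in\GA}\Big(\int_\hori\varphi\,\eflow_\arc\di\sigma-\int_\hori\varphi\di\big((\eflow_\arc\cdot\sigma)\circ\exit_\arc^{-1}\big)\Big)=\sum_{\arc\in\GA}\int_\hori\big(\varphi(t)-\varphi(\exit_\arc(t))\big)\eflow_\arc(t)\di t.\]
The second equality is the change of variables formula \cite[Theorem 3.6.1]{Bogachev2007I}, exactly as in the proof of \Cref{lem: aggCostsVSwalkCosts}. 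All integrals are finite because $\varphi$ is bounded and $\eflow_\arc$ is integrable.

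Since $\exit_\arc(t)=t+\trav_\arc(t)\geq t$ and $\varphi$ is increasing, every integrand is nonpositive. A sum of nonpositive integrals equal to zero forces each integrand to vanish almost everywhere. Hence for each $\arc$ and almost all $t$ we have $\eflow_\arc(t)>0\implies\varphi(\exit_\arc(t))=\varphi(t)$. Strict monotonicity of $\varphi$ then gives $\exit_\arc(t)=t$, that is, $\trav_\arc(t)=0$.

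The only point that needs care is the regrouping of node sums into edge sums together with the image-measure integral. The same care is needed for the finiteness of the total masses in Step 1. Both are harmless here because $\GA$ is finite, $\varphi$ is bounded, and each $\eflow_\arc$ is integrable. I therefore expect no real obstacle beyond choosing a bounded strictly increasing test function, which avoids any integrability requirement on $t\,\eflow_\arc(t)$.
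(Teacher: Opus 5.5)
Your proof is correct. The paper gives no proof of this lemma; it is imported from \cite{GHS24FD}, so there is no in-paper argument to compare against.

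Your Step~1 is the standard total-mass argument: sum all node balances at $\mathfrak T=\mathbb{R}$, use $T_e^{-1}(\mathbb{R})=\mathbb{R}$, and split $\mathbb{R}$ into $\mathfrak T$ and its complement to show that a nonpositive finite signed measure of total mass zero vanishes. All finiteness issues are covered by $g_e\in L_+(\mathbb{R})$ and finiteness of $E$.

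For Step~2, an alternative is to evaluate the summed balances on half-lines. For $\mathfrak T=(-\infty,\theta]$ one has $T_e^{-1}((-\infty,\theta])\subseteq(-\infty,\theta]$ because $T_e(t)\geq t$. Hence $\sum_v\nabla_v g((-\infty,\theta])=\sum_e\int_{\{t\le\theta<T_e(t)\}}g_e\,\mathrm{d}t\geq 0$. At the same time the left side equals $\nabla_d g((-\infty,\theta])\le 0$. So both sides vanish for every $\theta$, and $\{D_e>0\}=\bigcup_{\theta\in\mathbb{Q}}\{t\le\theta<T_e(t)\}$ finishes the argument.

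Your choice $\varphi=\arctan$ is exactly a $\varphi'$-weighted average of these half-line identities, since $\varphi(T_e(t))-\varphi(t)=\int_{\mathbb{R}}1_{\{t\le\theta<T_e(t)\}}\varphi'(\theta)\,\mathrm{d}\theta$. It replaces the countable-union step by one application of the change-of-variables formula, at the price of first needing conservation at $d$, so that $\int\varphi\,\mathrm{d}(\nabla_d g)=0$ despite $\varphi$ changing sign. The half-line version yields conservation at $d$ and the zero-travel-time property in a single stroke. There, $\nabla_d g$ vanishes on all half-lines and hence is the zero measure, without doing your Step~1 separately.
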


We say for any $\arc \in \GA$ that $\g_\arc \in L_+(\hori)$
 admits  an (edge) outflow rate $\g_\arc^-\in L_+(\hori)$ (\wrt $\trav$) if there exists $\g_\arc^-\in L_+(\hori)$ fulfilling 
\begin{align}
    \int_{\mathfrak T}\g_\arc^- \di\sigma =   \int_{\exit_\arc^{-1}(\mathfrak T)}\g_\arc \di\sigma \text{ for all }\mathfrak T \in \mathcal{B}(\hori).
\end{align}

\begin{lemma}[\ourref{lem: outflow}] \label{lem: outflow}
Let  $\eflow \in L_+(\hori)^\GA$ and $\arc \in \GA$ be arbitrary. The following statements are true:
\begin{thmparts}
    \item \label[thmpart]{lem: outflow: =ell+Cons}  The outflow rate $\eflow_\arc^-$ exists if and only if $\eflow_\arc \in \edom{\wa,2}$ for $\wa = (\arc)$, in which case $\eflow_\arc^- = \ell_{\wa,2}(\eflow_\arc)$. That is, the outflow rate $\eflow_\arc^-$ equals the inflow rate into the end node of the single-edge walk $\wa=(\arc)$. As a direct consequence, we get 
    \begin{enumerate}
        \item \label{lem: outflow: Exis}  The 
    outflow rate $\eflow_\arc^-$ exists and is then uniquely determined if
     and only if 
     \begin{align}\label{eq: outflowExists}
              \eflow_\arc = 0  \text{ on }\exit_\arc^{-1}(\mathfrak T) \text{ for any null set }\mathfrak T\subseteq \hori. 
     \end{align} 
     \item \label{lem: outflow: ExisLeq}  If $\eflow_\arc \leq \tilde{\eflow}_\arc$ for some $\tilde\eflow_\arc \in L_+(\hori)$ and $\tilde{\eflow}_\arc^-$ exists, then so does $\eflow_\arc^-$.
     \item  \label{lem: outflow: ExisLeqInverted} Every $\Tilde{\eflow}_\arc^-\in L_+(\hori)$  with $\Tilde{\eflow}_\arc^- = 0$ on ${\startint}\exit_\arc(-\infty))$ for $\exit_\arc(-\infty) := \lim_{t \to -\infty}\exit_\arc(t) $ has a corresponding inflow rate $\Tilde{\eflow}_\arc\in L_+(\hori)$. 
    \item \label{lem: outflow: ExisLeqInvertedCons} If $\eflow_\arc^-$ exists, then every $\Tilde{\eflow}_\arc^-\in L_+(\hori)$  with $\Tilde{\eflow}_\arc^- \leq \eflow_\arc^-$ has a corresponding inflow rate $\Tilde{\eflow}_\arc\in L_+(\hori)$ with $\Tilde{\eflow}_\arc \leq {\eflow}_\arc$.  
    \end{enumerate}

     \item  \label[thmpart]{lem: outflow: FlowCon} For any $v \in \GV$,  the edge outflow rates  $\eflow_\arc^-,\arc \in \delta^-(v)$ exist, if  $\eflow$ has a net node outflow rate $\inflow_v \in L_+(\hori)$ at $v$, i.e.~if \eqref{eq: FlowBalanceDerivative} holds w.r.t.~$\inflow_v$.  
    
    \item  \label[thmpart]{lem: outflow: In=OutIfD=0} If $\eflow_\arc^-$ exists, then $\eflow_\arc^-(t) = \eflow_\arc(t)$ for almost all $t$ with $\trav_\arc(t) = 0$. 
\end{thmparts}
\end{lemma}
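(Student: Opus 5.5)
The plan is to reduce everything to the single-edge walk $\wa=(\arc)$. For this walk $\arr_{\wa,1}=\id$ and $\arr_{\wa,2}=\exit_\arc$, so all statements become facts about $\ell_{\wa,2}$ that are already collected in \Cref{lem: elluExistenceProperties} and the subsequent lemmas.

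\textbf{Part a).} By \eqref{eq: DefUEdgeFlowSingle}, $\ell_{\wa,2}(\eflow_\arc)=f$ means
\[
\int_{\startint t]}f\di\sigma=\int_{\exit_\arc^{-1}(\startint t])}\eflow_\arc\di\sigma \quad\text{for all } t\in\hori .
\]
The outflow condition requires the same identity for all Borel sets $\mathfrak T$. Both sides of the outflow condition are finite measures, namely $f\cdot\sigma$ and $(\eflow_\arc\cdot\sigma)\circ\exit_\arc^{-1}$, because $\eflow_\arc$ is integrable. These two measures agree on the $\pi$-system of half-lines $\startint t]$, which generates $\mathcal B(\hori)$. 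By uniqueness of measures they therefore agree on all Borel sets, which gives $\eflow_\arc^-=\ell_{\wa,2}(\eflow_\arc)$. The four consequences then follow directly.
\begin{enumerate}
\item Item 1 is \Cref{lem: elluExistenceProperties:ExistenceInducedFlowOnJthEdge} applied with $j=2$, where condition \eqref{eq: nlexists} reads exactly as \eqref{eq: outflowExists}.
\item Item 2 holds because \eqref{eq: outflowExists} is inherited by any smaller nonnegative function.
\item Item 3 is \Cref{lem: 1to1:h-f} with $j=2$.
\item For item 4, take the preimage $\tilde\eflow_\arc$ from item 3. The condition $\tilde\eflow_\arc^-=0$ on $\startint\exit_\arc(-\infty)]$ holds because $\tilde\eflow_\arc^-\le\eflow_\arc^-$ and $\eflow_\arc^-$ vanishes there. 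Then conclude $\tilde\eflow_\arc\le\eflow_\arc$ from the order-preservation in \Cref{lem: ellOrderPreserving}.
\end{enumerate}

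\textbf{Part b).} Suppose the net outflow rate $\inflow_v$ exists. Then for every Borel set $\mathfrak T$,
\[
\sum_{\arc\in\delta^-(v)}(\eflow_\arc\cdot\sigma)\circ\exit_\arc^{-1}(\mathfrak T)=\int_{\mathfrak T}\Big(\sum_{\arc\in\delta^+(v)}\eflow_\arc-\inflow_v\Big)\di\sigma .
\]
Hence the left-hand side is a finite measure that is absolutely continuous with respect to $\sigma$. Each summand is a nonnegative measure dominated by this sum, so each summand is itself absolutely continuous. The Radon–Nikodym theorem then yields $\eflow_\arc^-\in L_+(\hori)$ for every $\arc\in\delta^-(v)$. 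An equivalent route is to observe that null sets $\mathfrak T$ give zero mass, which is exactly \eqref{eq: outflowExists}, and then invoke item 1 of part a).

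\textbf{Part c).} Let $\mathfrak D:=\{t\mid\trav_\arc(t)=0\}$. On $\mathfrak D$ we have $\arr_{\wa,1}=\arr_{\wa,2}$, so \Cref{lem: FLowOnZeroTrav} with $j_1=1$, $j_2=2$ gives
\[
\ell_{\wa,2}(1_{\mathfrak D}\eflow_\arc)=1_{\mathfrak D}\eflow_\arc .
\]
On the other hand, \Cref{lem: elluindi} gives
\[
\ell_{\wa,2}(1_{\mathfrak D}\eflow_\arc)=1_{\exit_\arc(\mathfrak D)}\eflow_\arc^- ,
\]
and $\exit_\arc(\mathfrak D)=\mathfrak D$ pointwise because $\exit_\arc=\id$ on $\mathfrak D$. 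Comparing the two identities on $\mathfrak D$ shows $\eflow_\arc^-=\eflow_\arc$ almost everywhere on $\mathfrak D$.

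\textbf{Main obstacle.} The only delicate point is the measure-extension step in part a), where the identity must pass from half-lines to all Borel sets. It needs finiteness of both measures, which integrability of $\eflow_\arc$ supplies. A minor point in part c) is measurability of $\exit_\arc(\mathfrak D)$, which is immediate here because that set equals $\mathfrak D$.
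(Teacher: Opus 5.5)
Your proof is correct and follows the intended route. The paper imports this lemma from \cite{GHS24FD} without reproving it, but the statement is built on exactly your identification $\eflow_\arc^-=\ell_{(\arc),2}(\eflow_\arc)$ via $\arr_{(\arc),2}=\exit_\arc$, with the remaining claims read off from \Cref{lem: elluExistenceProperties}, \Cref{lem: 1to1:h-f}, \Cref{lem: ellOrderPreserving}, \Cref{lem: elluindi} and \Cref{lem: FLowOnZeroTrav} as you use them. The two steps you leave implicit are routine: in the $\pi$-system argument the total masses also agree because $\hori=\bigcup_{n\in\N}\startint n]$, and in item~4 the vanishing of $\eflow_\arc^-$ on $\startint\exit_\arc(-\infty)]$ is the necessity direction of \Cref{lem: 1to1:h-f} applied to $\eflow_\arc^-=\ell_{(\arc),2}(\eflow_\arc)$.
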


\begin{theorem}[\ourref{lem: ZeroCycleDecomposition}]\label{lem: ZeroCycleDecomposition}
	Any  \aauto dynamic circulation $\eflow$ can be decomposed into zero-cycle inflow rates ${h}_c \in L_+(\hori), c \in \SimpCyc$ via $\eflow_\arc = \sum_{c \in \SimpCyc} \ell_{c,\arc}({h}_c) = \sum_{c \in \SimpCyc:\arc \in c}  {h}_c$ for all $\arc \in \GA$.
\end{theorem}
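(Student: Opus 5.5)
The plan is to reduce the dynamic statement, for almost every fixed time, to the classical static cycle decomposition of a circulation in the finite graph $G$, and then to check that the resulting cycle inflows are measurable zero-cycle inflow rates. Throughout, fix an arbitrary representative of the \auto dynamic circulation $\eflow$.

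\emph{Step 1: flow only sits on zero-travel-time edges, and outflow equals inflow.} Since $\eflow$ is an \auto dynamic circulation, it is in particular an \auto $\source$,$\dest$-flow that also conserves flow at $\source$. So \Cref{lem: FlowConEveryNode} gives $\eflow_\arc(t)>0 \Rightarrow \trav_\arc(t)=0$ for almost all $t$ and all $\arc \in \GA$. Net outflow rates exist (they are $0$) at every node. Hence, by \Cref{lem: outflow: FlowCon}, every edge outflow rate $\eflow^-_\arc$ exists. By \Cref{lem: outflow: In=OutIfD=0}, we have $\eflow^-_\arc(t)=\eflow_\arc(t)$ for almost all $t$ with $\trav_\arc(t)=0$.

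For the remaining times, the measure $\eflow_\arc\cdot\sigma$ is concentrated on $\{\trav_\arc=0\}$, where $\exit_\arc=\id$. Therefore the image measure $(\eflow_\arc\cdot\sigma)\circ\exit_\arc^{-1}$ equals $\eflow_\arc\cdot\sigma$, and so $\eflow^-_\arc=\eflow_\arc$ almost everywhere on all of $\hori$. Plugging this into \eqref{eq: FlowBalanceDerivative} with $\inflow_v=0$ yields, for almost every $t$ and all $v$,
\[
\sum_{\arc\in\edgesFrom{v}}\eflow_\arc(t)=\sum_{\arc\in\edgesTo{v}}\eflow_\arc(t).
\]
Thus $\eflow(t)$ is a nonnegative static circulation for a.e.\ $t$. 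Modify the representative on a null set so that this holds for all $t$ and so that $\eflow_\arc(t)>0\Rightarrow\trav_\arc(t)=0$ holds everywhere.

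\emph{Step 2: a measurable greedy cycle decomposition.} Enumerate the finite set $\SimpCyc=\{c_1,\dots,c_K\}$. Set $r^0:=\eflow$. For $k=1,\dots,K$, define
\[
h_{c_k}(t):=\min_{\arc\in c_k} r^{k-1}_\arc(t), \qquad r^k_\arc:=r^{k-1}_\arc-h_{c_k}\cdot 1_{\{\arc\in c_k\}}.
\]
Each $h_{c_k}$ is measurable, being built from finitely many minima and differences. Each $r^k(t)$ is still a nonnegative static circulation, since we subtract a cycle flow of value equal to the bottleneck.

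After processing $c_k$, the residual has a zero edge on $c_k$, and later steps only decrease residuals, so this zero persists. Consequently, every simple cycle has a zero edge in $r^K(t)$. A nonzero nonnegative static circulation in a finite graph always contains a simple cycle with all edges positive, so $r^K(t)=0$. This gives $\eflow_\arc=\sum_{c\in\SimpCyc:\arc\in c}h_c$ pointwise.

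\emph{Step 3: these are zero-cycle inflow rates, and the formula holds.} If $h_c(t)>0$, then $\eflow_\arc(t)>0$ for every $\arc\in c$, hence $\trav_\arc(t)=0$ for all $\arc \in c$. So $h_c$ is a zero-cycle inflow rate, with $\arr_{c,j}(t)=t$ for all $j$ on its support. Integrability of $h_c$ follows from $h_c\le\eflow_\arc$ for $\arc\in c$. The last statement of \Cref{lem: FLowOnZeroTrav} then gives $\ell_{c,\arc}(h_c)=\sum_{j:c[j]=\arc}h_c$; since $c$ is simple, this is $h_c$ if $\arc\in c$ and $0$ otherwise. Summing over $c$ yields both claimed identities.

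The main obstacle I expect is Step 1: upgrading the measure-level flow conservation to a pointwise static circulation. This requires $\eflow^-_\arc=\eflow_\arc$ almost everywhere, not only on $\{\trav_\arc=0\}$. I would handle it via the image-measure argument above, using that $\eflow_\arc\cdot\sigma$ vanishes off $\{\trav_\arc=0\}$, where $\exit_\arc$ is the identity.
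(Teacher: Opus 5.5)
Your proof is correct. This paper does not prove the statement itself; it only imports it from \cite{GHS24FD}, so there is no in-paper argument to match.

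Your route is the natural self-contained one, and it is sound:
\begin{itemize}
\item All of the genuinely dynamic content sits in \Cref{lem: FlowConEveryNode}, which confines flow to edges with $\trav_\arc(t)=0$.
\item Your image-measure identity $(\eflow_\arc\cdot\sigma)\circ\exit_\arc^{-1}=\eflow_\arc\cdot\sigma$ then correctly turns measure-level conservation into a pointwise static circulation. This identity alone already gives existence of $\eflow_\arc^-$ and $\eflow_\arc^-=\eflow_\arc$, so your appeals to \Cref{lem: outflow} are superfluous.
\item The bottleneck cancellation over the finite set $\SimpCyc$ is measurable and provably exhausts the residual.
\item Since simple cycles repeat no edge, \Cref{lem: FLowOnZeroTrav} yields $\ell_{c,\arc}(h_c)=h_c\cdot 1_{\{\arc\in c\}}$.
\end{itemize}

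A shorter derivation from the appendix's black boxes would be to view the circulation as an $\source$,$\dest$-flow with zero net outflow at $\source$ and apply \Cref{thm: FlowDecomp}, as the paper does in \Cref{lem: DifferenceGeneral}. By \Cref{lem: flowconW'}, the walk inflows then sum to the net outflow at $\source$, so they vanish and only zero-cycles remain. That shortcut risks circularity, however. A dynamic flow decomposition theorem is typically proved by peeling off walks and decomposing the residual circulation via exactly this statement. Your argument avoids this, relying only on \Cref{lem: FlowConEveryNode} and \Cref{lem: FLowOnZeroTrav}.
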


\subsection{Existence of Flow Decomposition} 

\begin{theorem}[\ourref{thm: FlowDecomp}]\label{thm: FlowDecomp}
    Every \auto $\source$,$\dest$-flow has \aauto flow decomposition, that is, a vector of walk inflow rates $h \in L_+(\hori)^{\hat{\Routes}}$ together with zero-cycle inflow rates $h \in L_+(\hori)^{\mathcal{C}}$ such that $\eflow = \sum_{\wa \in \hat{\Routes}}\ell_\wa(h_\wa) + \sum_{c \in \SimpCyc}\ell_c(h_c)$.
\end{theorem}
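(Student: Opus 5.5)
The plan is to obtain the decomposition from an optimal solution of an auxiliary maximization problem. Then we show that the leftover flow is a circulation, to which \Cref{lem: FlowConEveryNode} and \Cref{lem: ZeroCycleDecomposition} apply. Let $\eflow$ be an \auto $\source$,$\dest$-flow with net outflow rate $\inflow_\source\in L_+(\hori)$ at $\source$. Consider the countable family $\hat{\Routes}_{\source,\dest}$ of all finite \stwalk s and the problem
\begin{align*}
\max_{h}\; \sum_{\wa\in\hat{\Routes}_{\source,\dest}}\int_\hori h_\wa\di\sigma \quad\text{s.t.}\quad \ell_{\hat{\Routes}_{\source,\dest}}(h)\leq \eflow,\;\; \sum_{\wa} h_\wa\leq \inflow_\source,\;\; h\in \edom{\hat{\Routes}_{\source,\dest}}.
\end{align*}
The set of feasible solutions contains $h=0$. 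The objective is weakly continuous, since it is the pairing with the all-ones element of $\seql[\infty][\hat{\Routes}_{\source,\dest}][L^\infty(\hori)]$. The extra constraint is preserved under weak limits, by the same test-function argument used in the proof of \Cref{thm: AlmostMasterZeroDual}. Hence \Cref{thm: ExistenceOptSol} provides an optimal $h^*$.

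Next I would set $\diffe:=\eflow-\ell(h^*)\geq 0$ and compute node balances with \Cref{lem: flowconW'}. The flow $\diffe$ satisfies flow conservation at every $v\neq\source,\dest$. It has net outflow rate $\inflow_\source-\sum_\wa h^*_\wa\geq 0$ at $\source$. At $\dest$ its balance is the balance of $\eflow$ plus the arrival measure of $h^*$, and this must still be nonpositive, or else be handled by the same argument as at $\source$. The core claim is that $\inflow_\source-\sum_\wa h^*_\wa=0$ almost everywhere.

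Suppose instead that it is positive on a set of positive measure. The idea is to trace an \stwalk{} through edges carrying positive $\diffe$, in the spirit of the augmenting argument in \Cref{claim: ZeroDualityGapTildeH}. Positive net outflow at $\source$ forces some $\arc_1\in\edgesFrom{\source}$ with $\diffe_{\arc_1}>0$ on a positive-measure subset. The corresponding outflow arrives at the head of $\arc_1$ on a set of positive measure: this uses \Cref{lem: outflow} together with the fact that outflow rates exist, by flow conservation. Conservation at that node then yields a next edge with positive $\diffe$ at those arrival times, and \Cref{lem:  g>0Arr'>0} keeps all the time sets of positive measure. If the tracing reaches $\dest$, we obtain a walk $\wa$ and a departure set $\mathfrak T$ of positive measure. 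Then $h^*+\varepsilon 1_{\mathfrak T}$ on $\wa$, with a ratio bound as in \eqref{eq:ZeroDualityGapTildeH:ChoiceOfTast}, is feasible and strictly better than $h^*$, a contradiction.

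I expect the main obstacle to be guaranteeing that the tracing terminates at $\dest$ rather than circulating forever. My first attempt would be a mass argument. Consider the positive-$\diffe$ subgraph on the traced time sets, and the reachable set $R$ of (edge, time set) pairs. If no walk reaches $\dest$, then all inflow from $\source$ into $R$ stays inside $R$, and integrating the balances over $R$ gives a nonnegative excess that has to be absorbed. Finite total mass of $\diffe$ then forces a contradiction, since the excess can neither leave $R$ nor be absorbed by conservation. Countably many walks suffice for this, because walks are finite and the total mass is finite.

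Once $\inflow_\source=\sum_\wa h^*_\wa$ is established, the flow $\diffe$ satisfies conservation at $\source$ as well; at $\dest$ we argue the same way, with the roles reversed. So \Cref{lem: FlowConEveryNode} shows that $\diffe$ is a circulation supported on times of zero traversal time. \Cref{lem: ZeroCycleDecomposition} then writes $\diffe=\sum_c\ell_c(h_c)$ with zero-cycle inflow rates $h_c$. Together with $h^*$ this gives the desired decomposition $\eflow=\sum_\wa\ell_\wa(h^*_\wa)+\sum_c\ell_c(h_c)$.
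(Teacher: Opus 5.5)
The treatment of $\dest$ is a genuine gap, not a missing detail. Your objective $\sum_{\wa}\int_\hori h_\wa\di\sigma$ only records how much of $\inflow_\source$ gets routed. It does not care whether flow reaching $\dest$ continues along $\dest$-cycles of $\eflow$. Hence an optimal $h^*$ may leave $\dest$-cycle flow with positive traversal time in $\diffe$. Then $\op_\dest\diffe$ is not nonpositive, \Cref{lem: FlowConEveryNode} does not apply, and $\diffe$ is not a zero-cycle flow.

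Concretely, take nodes $\source,\dest,v$ and edges $\arc_1=(\source,\dest)$, $\arc_2=(\dest,v)$, $\arc_3=(v,\dest)$, all with $\trav_{\arc_i}\equiv 1$. Let $\eflow_{\arc_1}=1_{[0,1]}$, $\eflow_{\arc_2}=\tfrac12 1_{[1,2]}$ and $\eflow_{\arc_3}=\tfrac12 1_{[2,3]}$. This is an autonomous $\source$,$\dest$-flow: its net outflow rate at $\source$ is $1_{[0,1]}$, it conserves flow at $v$, and $\op_\dest\eflow=-\tfrac12\,(1_{[1,2]}+1_{[3,4]})\cdot\sigma\leq 0$. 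The walk flow with $h^*_{(\arc_1)}=1_{[0,1]}$ and $h^*_\wa=0$ for all other walks attains the upper bound $\int_\hori\inflow_\source\di\sigma=1$ of your problem. So it is optimal, and it even satisfies your core claim $\inflow_\source=\sum_\wa h^*_\wa$. Yet $\diffe=(0,\tfrac12 1_{[1,2]},\tfrac12 1_{[2,3]})$ has $\op_\dest\diffe=\tfrac12\,(1_{[1,2]}-1_{[3,4]})\cdot\sigma$ and is carried by edges of traversal time $1$. The correct decomposition is $h_{(\arc_1)}=h_{(\arc_1,\arc_2,\arc_3)}=\tfrac12 1_{[0,1]}$.

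Arguing ``the same way, with the roles reversed'' cannot repair this. The only modification of $h^*$ available at $\dest$ is to extend part of the $h^*$-flow arriving at $\dest$ along the leftover $\dest$-cycle flow. That leaves your objective unchanged, so optimality yields no contradiction.

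One way to close the gap is to use bounded weights that strictly increase with walk length, e.g.\ maximize $\sum_\wa(2-2^{-\abs{\wa}})\int_\hori h_\wa\di\sigma$. This objective is still weakly continuous, so \Cref{thm: ExistenceOptSol} applies and your augmentation argument at $\source$ goes through unchanged. But now splicing a $\dest$-cycle of $\diffe$ onto $h^*$-flow that arrives at $\dest$ is a strict improvement. Since $\op_\dest\eflow\leq 0$, the positive part of $\op_\dest\diffe$ is dominated by the arrival measure of $h^*$ at $\dest$ (indeed by its absolutely continuous part). Start the reachable-set mass argument at $\dest$ on the support of $(\op_\dest\diffe)^+$, using conservation of $\diffe$ at all other nodes. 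It produces such a cycle unless $\op_\dest\diffe\leq 0$, and only after this is \Cref{lem: FlowConEveryNode} applicable.

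As for the step at $\source$, the mass argument is the real content and you only sketch it, but it does work. Restricting $\diffe$ to the edge-time pairs reachable from the excess set through positive $\diffe$ gives nonnegative balances at all nodes other than $\dest$ and total balance zero, so reachable flow must arrive at $\dest$. Note, though, that unlike in \Cref{claim: ZeroDualityGapTildeH}, the traced walk need not carry any $h^*$-flow. So the existence of $\ell_\wa(1_{\mathfrak T})$ is not automatic: you must take $\mathfrak T\subseteq\Dwg[\wa][\diffe]$ and argue as in \Cref{cor: LuExAltChara}.
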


\begin{theorem}[\ourref{thm: PureFlowDecompIntuitive}]\label{thm: PureFlowDecompIntuitive} 
 \Aauto $\source$,$\dest$-flow $\eflow \in L_+(\hori)^\GA$ with net outflow rate $\inflow_\dest$ at $\dest$ has \aauto pure $\source$,$\dest$-flow decomposition   if and only if 
    for every zero-cycle inflow rate $h_c'\in L_+(\hori)$ into \emph{any} (not necessary simple) cycle $c$ with $h_c' \leq \eflow_\arc,\arc \in c$, we have for almost all $t \in \hori$ with $h_c'(t)>0$ that (at least) one of the following conditions is satisfied:
    \begin{thmparts}
        \item  $\dest \in c$ and  $ \inflow_\dest (t)<0$. \label[thmpart]{thm: PureFlowDecompIntuitive: Dest}
        \item  There exists an edge $\arc=(v,v') \notin c$ with $v \in c$ and  $\eflow_\arc(t)>0$. \label[thmpart]{thm: PureFlowDecompIntuitive: NotDest}
    \end{thmparts}
\end{theorem}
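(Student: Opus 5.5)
The plan is to prove the two directions separately. Throughout we use that a zero-cycle inflow $h'_c$ only moves flow along edges of $c$ with zero traversal time. Hence all particles of $h'_c$ entering $c$ at time $t$ traverse every edge of $c$ at the same time $t$, and by \Cref{lem: FLowOnZeroTrav} we have $\ell_{c,\arc}(h'_c)=\sum_{j:c[j]=\arc}h'_c$.

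\emph{Necessity.} Assume $\eflow=\sum_{\wa}\ell_\wa(h_\wa)$ is a pure $\source$,$\dest$-decomposition. Suppose there is a zero-cycle inflow $h'_c\le\eflow_\arc$, $\arc\in c$, and a non-null set $\mathfrak T$ on which $h'_c>0$ but neither condition holds. Fix a representative as in \Cref{lem: Relations:h>0u>0:GoodRepres}. For almost every $t\in\mathfrak T$ we have $\eflow_{c[1]}(t)>0$, so by \Cref{lem: Relations:h>0u>0:u>0ExistsCountableM} there is a walk $\wa$, an index $j$ with $\wa[j]=c[1]$ and a non-null departure set $\mathfrak D$ with $h_\wa>0$ on $\mathfrak D$ and $\arr_{\wa,j}(\mathfrak D)\subseteq\mathfrak T$.

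We then follow $\wa$ from position $j$ on. By \Cref{lem: Relations:h>0u>0:Pointwise}, every subsequent edge $\wa[j']$ carries positive flow at the time $\arr_{\wa,j'}(t')$. Inductively, as long as $\wa$ stays on $c$, this time remains in $\mathfrak T$, because the edges of $c$ have zero traversal time there. Since condition~\ref{thm: PureFlowDecompIntuitive: NotDest} fails, the next used edge leaving a node of $c$ must again belong to $c$. Thus $\wa$ never leaves $c$ after position $j$, so it must end at $\dest\in c$ at some time in $\mathfrak T$. By \Cref{lem: Relations:h>0u>0:u>0ExistsHw>0D}, such arrivals occur only on a set charged by $\op_\dest\eflow$, that is, only where $\inflow_\dest<0$. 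Since condition~\ref{thm: PureFlowDecompIntuitive: Dest} fails on $\mathfrak T$, this set is null, so $\arr_{\wa,\abs{\wa}+1}(\mathfrak D)$ is null. This contradicts the existence of $\ell_{\wa,\abs{\wa}+1}(h_\wa)$ together with $h_\wa>0$ on the non-null set $\mathfrak D$ (\Cref{lem: elluExistenceProperties:ExistenceInducedFlowOnJthEdge}).

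\emph{Sufficiency.} We start from some decomposition given by \Cref{thm: FlowDecomp} and choose one that is extremal. Concretely, among all decompositions $(h_\wa,h_c)$ we pick one minimizing $\sum_{c}\norm{h_c}$. The feasible set is convex and sequentially weakly closed (\Cref{lem: elluContinuity}), and the objective is weakly continuous, so \Cref{thm: ExistenceOptSol} gives a minimizer. We then argue by contradiction and assume some $h_c>0$ on a non-null set. We apply the hypothesis to $h'_c:=h_c$, or better to the cycle obtained by merging all cycles carrying flow in the same connected component at time $t$; this is why the hypothesis is stated for non-simple cycles. It gives, for almost every such $t$, either $\dest\in c$ with $\inflow_\dest(t)<0$, or an exit edge $\arc\notin c$ with $\eflow_\arc(t)>0$.

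In the exit-edge case, $\arc$ is used either by a walk or by a cycle outside the component. The latter is excluded by the choice of the merged component, so some walk $\wa$ passes through the tail $v\in c$ of $\arc$ at time $t$ with positive inflow density; in the $\dest$ case, some walk ends at $\dest$ at time $t$ (\Cref{lem: Relations:h>0u>0:u>0ExistsHw>0D}). We now splice the cycle into this walk at $v$. Using \Cref{lem: Relations:h>0u>0:u>0ExistsCountableM} and Lusin's property, we get a non-null set $\mathfrak D$ of departure times of $\wa$ whose arrivals at $v$ lie in the cycle's active set. We move inflow $\min\{h_\wa,\,h_c\circ\arr_{\wa,j}\}$ restricted to $\mathfrak D$ from $\wa$ to the walk $(\wa_{<j},c,\wa_{\ge j})$. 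The inserted cycle has zero traversal time, so all later arrival times are unchanged and \Cref{lem: elluindi,lem: elluPropagation} show that the edge loads are unchanged. Simultaneously we reduce $h_c$ by the corresponding amount, transported by $\arr_{\wa,j}$ via \Cref{lem: 1to1:h-f}. This yields a decomposition with strictly smaller cycle mass, which contradicts minimality; hence all $h_c=0$ and the minimizer is a pure decomposition.

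The main obstacle is this splicing step, in particular the measure-theoretic bookkeeping. One has to convert the pointwise statement ``for a.e.\ $t$ some walk passes $v$ at $t$'' into a single walk with a non-null departure set. This uses \Cref{lem: Relations:h>0u>0:u>0ExistsCountableM} together with \Cref{lem:  g>0Arr'>0} to keep images and preimages of null sets null. One must also verify that the modified walk inflows stay in $\edom{}$, which is where \Cref{cor: LuExAltChara} is needed.
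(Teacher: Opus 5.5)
Your necessity direction is sound. On a non-null part of the violating set, a walk of the pure decomposition is trapped on $c$ at a frozen time and must end at $d\in c$ there. This contradicts $r_d\ge 0$: since no walk starts at $d$, \Cref{lem: flowconW'} identifies $\nabla_d g$ with minus the arrival measure of the walks. The $\nabla_d g$-a.e.\ lemma you cite states the converse inclusion, but that is cosmetic.

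The real gap is the first step of sufficiency, namely the existence of a decomposition minimizing $\sum_c\|h_c\|$. The set $\{h:\ell(h)=g\}$ of decompositions is \emph{not} sequentially weakly closed. \Cref{lem: elluContinuity} only gives $\ell(h)\le g$ for weak limits, and equality genuinely fails because length-weighted mass can escape into ever longer walks. For instance, take edges $(s,v),(v,d)$ with travel time $1$, a self-loop $(v,v)$ with travel time $0$, and $g_{(s,v)}=1_{[0,1]}$, $g_{(v,v)}=g_{(v,d)}=1_{[1,2]}$. Sending $\tfrac1n 1_{[0,1]}$ into the walk with $n$ loops and the rest into the loop-free walk gives pure decompositions. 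These converge in norm to a walk flow that puts nothing on the loop. Hence \Cref{thm: ExistenceOptSol} is not applicable. Completing a weak limit does not help either: the lost part $g-\ell(h)$ is a circulation and can be written as zero-cycle flow, but adding it increases the cycle mass. The completed decomposition therefore need not be minimal, and your splicing step has no minimality to contradict. Within this paper the statement is only imported from \cite{GHS24FD}, so there is no in-paper proof to compare with.

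What is missing is a way to obtain an extremal decomposition without a compactness argument. One option is the maximally pure decomposition of \Cref{cor: PureFlowDecomp}. Apply your hypothesis to the merged closed walk of each remaining component, with $h'=\min_{e}g_e$ on its active set; this shows that all remaining active sets are null. That route works only if the corollary is available independently of the statement you are proving.

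Otherwise you have to absorb the cycle flow constructively:
\begin{itemize}
\item There are only finitely many components.
\item \Cref{lem: Relations:h>0u>0:u>0ExistsCountableM} partitions each active set, up to null sets, into arrival sets of single walk pieces at an exit node or at $d$.
\item On each piece, absorb \emph{all} cycle flow pointwise in time by inserting each cycle with time-dependent multiplicity. Split the walk flow between $\lfloor b/a\rfloor$ and $\lceil b/a\rceil$ insertions, where $a,b$ are the walk and cycle rates.
\item Work outward from the exit node, so that already absorbed cycles supply the through-flow for adjacent ones.
\end{itemize}

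Finally, in the splicing itself the transferred amount must be fixed on the arrival-time side. Take e.g.\ $\min\{\ell_{w,j}(h_w1_{\mathfrak D}),h_c\}$ and pull it back via \Cref{lem: 1to1:h-f,lem: ellOrderPreserving}. Your choice $\min\{h_w,h_c\circ A_{w,j}\}$ mixes departure-time and arrival-time densities, and it can drive the reduced $h_c$ negative wherever $A_{w,j}'<1$.
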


\begin{definition}[\ourref{def: ConnectedComp}]\label{def: ConnectedComp}
Consider a set of zero-cycle inflow rates $h_c,c \in \mathcal{C}$ and fix some (arbitrary) representatives of those. We define 
for all $t \in \hori$ the set $\mathcal{C}(t):=\{c \in \mathcal{C}\mid  {h}_c(t)>0 \text{ and }\trav_{\arc}(t) = 0, \arc \in c\}$. 
Let  $C_1^t,\ldots,C_{m(t)}^t$   for  a $\n2(t) \in \N$ be the 
partition of $\mathcal{C}(t)$ into maximal connected components, that is, a partition with the following two properties:
\begin{itemize}
    \item For every $j \in\{1,\ldots,m(t)\}$ the edge set $\GA_{C_j^t}:=\set{\arc \in \GA | \exists c \in C_j^t: e \in c}$ induces a connected subgraph of~$G$ and 
    \item the node sets $\GV_{C_j^t}:=\set{v \in \GV | \exists c \in C_j^t: v \in c}$ are disjoint, i.e.\ $\GV_{C_j^t} \cap \GV_{C_{j'}^t} = \emptyset$ for all $j'\neq j$. 
\end{itemize} 
Next, we consider the set $\{C \subseteq \mathcal{C} \mid \sigma(\mathfrak T_C)>0\} \subseteq 2^\mathcal{C}$ where $\mathfrak T_{C} := \{t \in \hori \mid  \exists\, j: C = C_j^t \}$. We denote this set via $\{C_{\n1}\}_{\n1 \in \capn1}$ 
where   $\capn1\subseteq \N$ is a finite family of indices.  Note that the sets $\mathfrak T_{C}$  are measurable as they can be written as follows: 
    $\mathfrak T_{C} = \bigcap_{c \in C}\mathfrak T_{c} \cap \bigcap_{c \in \bar{C}\setminus C}(\hori \setminus\mathfrak T_c)$ where
    $\mathfrak T_c := \{t \in \hori \mid {h}_c(t)>0 \text{ and }\trav_{\arc}(t) = 0, \arc \in c\}$ and $\Bar{C} = \{c \in \mathcal{C}\mid \exists \,c' \in C: c \text{ shares a node with }c'\}$ with $\mathfrak T_c$ being measurable due to ${h},\trav$ being measurable. Furthermore, we denote for any $\n1 \in \capn1$ by $\GV_{C_{\n1}}:=\{v \in \GV\mid\exists c \in C_{\n1}: v \in c \}$ the nodes contained in $C_{\n1}$ and analogously by  $\GA_{C_{\n1}}:=\{\arc \in \GA\mid\exists c \in C_{\n1}: \arc \in c \}$ the edges  contained in $C_{\n1}$.
    Finally, we remark that the above sets are uniquely determined (up to enumeration) by the support of the representative $h_c,c \in \mathcal{C}$.

    In the situation as described above, we call $\mathcal{C}(t)$ the set of active cycles at $t \in \hori$,    $\{C_{\n1}\}_{\n1 \in \capn1} $ the resulting  connected components and   $\mathfrak T_{C_{\n1}},\n1\in\capn1$ the set of times at which the connected components are active. Similarly, $\mathfrak T_{c}$ for any $c \in \mathcal{C}$ is the set of times at which the cycle $c$ is active.
\end{definition}

 \begin{theorem}[\ourref{thm: PureFlowDecomp}] \label{thm: PureFlowDecomp}
    Consider an  $\source$,$\dest$-flow $\eflow \in L_+(\hori)^\GA$ with a corresponding 
    flow decomposition $h_\wa,\wa \in  {\hat{\Routes}},h_c,c \in \mathcal{C}$, an outflow rate $\inflow_\dest$ and an arbitrary representative of $h$ together with the  sets defined in \Cref{def: ConnectedComp}. 
    Then  $\eflow$ has a flow decomposition   purely into \stwalk s if and only if for every $\n1 \in \capn1$ and almost all $t \in \mathfrak T_{C_{\n1}}$ (at least) one of the following statements is true 
\begin{thmparts}
    \item $\dest \in \GV_{C_{\n1}}$ and  $\inflow_\dest (t)<0$. \label[thmpart]{thm: PureFlowDecomp: Dest} 
    \item there exists an edge $\arc=(v,v') \notin \GA_{C_{\n1}}$ with $v \in \GV_{C_{\n1}}$ and  $\eflow_\arc(t)>0$. \label[thmpart]{thm: PureFlowDecomp: NotDest}
\end{thmparts}
\end{theorem}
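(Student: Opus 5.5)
The plan is to derive the statement from the intuitive characterization \Cref{thm: PureFlowDecompIntuitive}. That theorem says $\eflow$ has a pure $\source$,$\dest$-flow decomposition if and only if every zero-cycle inflow rate $h'_c\leq\eflow$ into an arbitrary (possibly non-simple) cycle $c$ satisfies \ref{thm: PureFlowDecompIntuitive: Dest} or \ref{thm: PureFlowDecompIntuitive: NotDest} almost everywhere on $\{h'_c>0\}$. So it suffices to show that the component-wise condition of \Cref{thm: PureFlowDecomp} for the given decomposition $h_\wa,h_c$ is equivalent to the cycle-wise condition of \Cref{thm: PureFlowDecompIntuitive}. Throughout I fix the representatives of $h$ and the sets $C_{\n1},\mathfrak T_{C_{\n1}},\GV_{C_{\n1}},\GA_{C_{\n1}}$ of \Cref{def: ConnectedComp}. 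Countability of the walks and cycles lets me discard null sets freely.

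\emph{Necessity.} Assume a pure decomposition exists and fix $\n1\in\capn1$. The edge set $\GA_{C_{\n1}}$ is a connected union of cycles, so it is strongly connected. Hence there is a closed, non-simple cycle $c^{\n1}$ traversing every edge of $\GA_{C_{\n1}}$ and only those edges. On $\mathfrak T_{C_{\n1}}$ every edge of $c^{\n1}$ has zero travel time and positive flow $\eflow_\arc(t)\geq h_{c'}(t)>0$ for some active $c'\ni\arc$. Therefore $h'(t):=\Indi_{\mathfrak T_{C_{\n1}}}(t)\cdot\min_{\arc\in c^{\n1}}\eflow_\arc(t)$ is a zero-cycle inflow rate with $h'\leq\eflow_\arc$ for all $\arc\in c^{\n1}$. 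Applying \Cref{thm: PureFlowDecompIntuitive} to $h'$ and using that $c^{\n1}$ has node set $\GV_{C_{\n1}}$ and edge set $\GA_{C_{\n1}}$ turns its two alternatives literally into \ref{thm: PureFlowDecomp: Dest} and \ref{thm: PureFlowDecomp: NotDest}.

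\emph{Sufficiency.} Assume the component condition holds. Suppose, for contradiction, that some zero-cycle rate $h'_c\leq\eflow$ violates both alternatives of \Cref{thm: PureFlowDecompIntuitive} on a set $\mathfrak T$ of positive measure. Writing $V_c$ for the node set of $c$, this means that for $t\in\mathfrak T$:
\begin{itemize}
\item every edge leaving $V_c$ and not lying on $c$ carries no flow;
\item all edges of $c$ have zero travel time and positive flow;
\item either $\dest\notin V_c$ or $\inflow_\dest(t)=0$.
\end{itemize}
The proof then has three steps.
\begin{enumerate}
\item \emph{Walk flow vanishes on $c$.} The walk part $\sum_\wa\ell_\wa(h_\wa)$ contributes nothing to the edges of $c$ at almost every $t\in\mathfrak T$. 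A walk particle on an edge of $c$ at time $t$ stays at time $t$ while inside $V_c$, because travel times there are zero. It cannot leave $V_c$, since leaving edges carry no flow. It also cannot terminate at $\dest$ at time $t$, since $\inflow_\dest(t)=0$ or $\dest\notin V_c$. To make this rigorous I would use \Cref{lem: Relations:h>0u>0:u>0ExistsCountableM} to obtain positive-measure departure sets. Then I would follow those walks forward with \Cref{lem: FLowOnZeroTrav}, and use \Cref{lem: flowconW'} at $\dest$ to rule out termination. Since walks are finite, they must leave $V_c$ or end at $\dest$, which gives a contradiction.
\item \emph{$c$ lies in one component.} By step 1, $\eflow$ on $c$ at $t$ comes only from active zero-cycles, so every edge of $c$ lies in some $\GA_{C_j^t}$. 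Because $c$ is connected and the node sets of the components are disjoint, $c\subseteq\GA_{C}$ for a single component $C=C_j^t$. Countability then yields an $\n1$ with $C=C_{\n1}$ and $\sigma(\mathfrak T\cap\mathfrak T_{C_{\n1}})>0$.
\item \emph{Node sets coincide.} If $\GV_{C_{\n1}}\supsetneq V_c$, connectivity gives an edge of $\GA_{C_{\n1}}$ leaving $V_c$ and not on $c$ with positive flow. This contradicts the first bullet, so $\GV_{C_{\n1}}=V_c$. Now the assumed alternative \ref{thm: PureFlowDecomp: Dest} or \ref{thm: PureFlowDecomp: NotDest} for $C_{\n1}$ contradicts the choice of $\mathfrak T$. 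For \ref{thm: PureFlowDecomp: NotDest}, note that an edge leaving $V_c$ that lies on $c$ belongs to $\GA_{C_{\n1}}$, so the witnessing edge is not on $c$.
\end{enumerate}

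The main obstacle is step 1: the pointwise statement that walk flow present at time $t$ on zero-travel-time edges of $c$ must leave $V_c$ or terminate at $\dest$ at the same time $t$. This has to be turned into a measure-theoretic argument. I would handle it by pushing positive-measure departure sets along walk suffixes with Lusin's property of the arrival functions (\Cref{lem: PropAbsCon:Lus}) and the propagation lemma \Cref{lem: elluPropagation}.
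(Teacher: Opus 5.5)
Note that the paper gives no proof of this statement. Like \Cref{thm: PureFlowDecompIntuitive} and \Cref{cor: PureFlowDecomp}, it is imported from \cite{GHS24FD}, so there is nothing in-paper to follow. On its own terms your reduction is correct. What you actually prove is that, for the fixed decomposition $h$, the component criterion is equivalent to the cycle criterion of \Cref{thm: PureFlowDecompIntuitive}. The covering closed walk $c^n$ with $h'=\Indi_{\mathfrak T_{C_n}}\cdot\min_{e\in c^n}g_e$ gives ``cycle criterion $\Rightarrow$ component criterion'', and Steps~1--3 give the converse. The details hold up:
\begin{itemize}
\item A weakly connected union of directed cycles is strongly connected, so $c^n$ exists and has node and edge sets exactly $V_{C_n}$ and $E_{C_n}$.
\item Node-disjointness of the components places $c$ inside a single $E_{C_j^t}$, and finiteness of $2^{\mathcal C}$ lets you fix $n$.
\item In Step~1, nonpositivity of $\nabla_d g$ together with \Cref{lem: flowconW'} rules out walk arrivals at $d$ during $\mathfrak T$, since the zero-cycles cancel at $d$.
\end{itemize}

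What this route buys is brevity. What it costs is that the statement gains no content beyond \Cref{thm: PureFlowDecompIntuitive}. The substantive direction lies entirely inside the black box. That direction says that, under the criterion, the zero-cycle flow of each active component can be absorbed into $s$,$d$-walks. One does this by splicing a closed walk covering $E_{C_n}$, repeated sufficiently often, into the walk flow that leaves $V_{C_n}$ or terminates at $d$ at the same instant, with a countable measurable splitting of the walk inflow rates. The labels, and \Cref{cor: PureFlowDecomp} (which is what such a maximal rerouting produces), suggest that \cite{GHS24FD} proves the component version constructively in this way and derives the intuitive version from it. If so, your argument would be circular as a stand-alone proof, and a self-contained proof must carry out that construction for the ``if'' direction.

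The ``only if'' direction needs no black box at all: run your Step~1 argument with $c$ replaced by $E_{C_n}$ and the given decomposition replaced by a pure one. In particular, Step~1, which you flag as the main obstacle, is a routine use of \Cref{lem: Relations:h>0u>0}. The genuinely hard step is the one you outsource.
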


\begin{corollary}[\ourref{cor: PureFlowDecomp}] \label{cor: PureFlowDecomp}
    Consider a  $\source$,$\dest$-flow $\eflow \in L_+(\hori)^\GA$ with outflow rate $\inflow_\dest$ and an arbitrary representative of a corresponding flow decomposition $h$ together with the sets defined in \Cref{def: ConnectedComp}. 
    Then  there exists 
    another flow decomposition $h'_\wa,\wa \in  {\hat{\Routes}},h'_c,c \in \mathcal{C}$ with a corresponding representative and sets $\mathcal{C}'(t),t \in \hori,C'_{\n1'},\mathfrak T'_{C'_{\n1'}},\n1' \in \capn1'$ such that for every $\n1' \in \capn1'$ 
    and almost every $t \in \mathfrak T_{C'_{\n1'}}$ neither of the following statements holds:
\begin{thmparts}
    \item $\dest \in \GV_{C_{\n1'}'}$ and  $\inflow_\dest (t)<0$. \label[thmpart]{cor: PureFlowDecomp: DestCor} 
    \item There exists an edge $\arc=(v,v') \notin \GA_{C_{\n1'}'}$ with $v \in \GV_{C_{\n1'}'}$ and  $\eflow_\arc(t)>0$.  \label[thmpart]{cor: PureFlowDecomp: NotDestCor}
\end{thmparts}
Moreover, $h_c'\leq h_c,c \in \mathcal{C}$ and for every $\n1' \in \capn1'$ there exists $\n1 \in \capn1$ with ${C}'_{{\n1}'} = C_{\n1}$ and ${\mathfrak T}'_{ C'_{{\n1'}}} \subseteq \mathfrak T_{C_{\n1}}$.
\end{corollary}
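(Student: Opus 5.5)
The plan is to treat Corollary~\ref{cor: PureFlowDecomp} as an iterative "peeling" of violating zero-cycle flow, using Theorem~\ref{thm: PureFlowDecomp} as the local criterion. Start from the given decomposition $h_\wa,h_c$ and its connected components $C_{\n1}$ with activity sets $\mathfrak T_{C_{\n1}}$. For each component $C_{\n1}$, let $\mathfrak B_{\n1}\subseteq\mathfrak T_{C_{\n1}}$ be the (measurable) set of times where condition \ref{cor: PureFlowDecomp: DestCor} or \ref{cor: PureFlowDecomp: NotDestCor} holds. On $\mathfrak B_{\n1}$ we want to reroute the cycle flow of $C_{\n1}$ into walk flow. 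The time sets are fixed and the travel times vanish on all edges of active cycles, so this is a time-pointwise, static rerouting problem.

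Concretely, fix $t\in\mathfrak B_{\n1}$. Suppose some edge $\arc=(v,v')\notin\GA_{C_{\n1}}$ with $v\in\GV_{C_{\n1}}$ and $\eflow_\arc(t)>0$ exists. By Lemma~\ref{lem: Relations:h>0u>0} (using Lemma~\ref{lem: Relations:h>0u>0:u>0ExistsCountableM}), the flow on $\arc$ at $t$ is carried by some \stwalk{} $\wa$ reaching $v$ at time $t$; cycle flow cannot carry it, since $\arc$ lies on no active cycle of the component. Since all cycle edges in $C_{\n1}$ have zero travel time at $t$, we can splice: redirect the portion $\min\{h_\wa,\cdot\}$ of $\wa$ at $v$ around a cycle path inside the connected component. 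This means replacing $\wa$ by the walk that detours through cycles of $C_{\n1}$ back to $v$, and reducing those cycles' inflow by the same amount. Lemma~\ref{lem: FLowOnZeroTrav} guarantees that the edge loads are unchanged. Case \ref{cor: PureFlowDecomp: DestCor} is handled analogously, using a walk terminating at $\dest\in\GV_{C_{\n1}}$ at time $t$, which exists by Lemma~\ref{lem: Relations:h>0u>0:u>0ExistsHw>0D} because $\inflow_\dest(t)<0$. I would perform this measurably by countably enumerating the walk/cycle pairs and the time sets. At each step, the amount absorbed is chosen as the largest possible, meaning the pointwise minimum of the relevant cycle inflow and walk inflow. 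Existence of the new network loadings follows from Lemma~\ref{lem: elluindi} and Lemma~\ref{lem: elluExistenceProperties}, since we only restrict existing inflows by indicators.

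Each such step only decreases cycle inflows, giving $h'_c\le h_c$, and preserves $\eflow$. I would then take a limit, or equivalently solve the maximization "maximize total walk-absorbed cycle mass" via Theorem~\ref{thm: ExistenceOptSol} with $\ofeas$ the decompositions of $\eflow$ with $h'_c\le h_c$. An optimal solution cannot admit any further positive-measure splice, so for the new components no violating time set of positive measure remains. For the final statement, observe that because $h'_c\le h_c$, the active cycle set at every $t$ shrinks, $\mathcal C'(t)\subseteq\mathcal C(t)$. Hence every new component active at $t$ is contained in an old one. This is the step I expect to be the main obstacle. Containment alone does not give equality ${C}'_{\n1'}=C_{\n1}$: removing cycles can split a component. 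To get equality, I would use the optimization to remove either all of a component's cycle flow at a given time, or none of it. The splicing argument shows that if a component admits one exit or destination condition at $t$, then the entire component's cycle flow can be routed out, by chaining through connected cycles. After this, surviving components at $t$ are exactly old components with no violation. This gives ${C}'_{\n1'}=C_{\n1}$ and ${\mathfrak T}'_{C'_{\n1'}}\subseteq\mathfrak T_{C_{\n1}}$.
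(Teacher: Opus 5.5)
Your proposal has two genuine gaps. The first is the existence of your optimizer, or of the limit of the iterated splicing, which is not justified. \Cref{thm: ExistenceOptSol} needs the set over which you optimize to be sequentially weakly closed, and the set of \emph{exact} decompositions of $g$ with $h'_c\le h_c$ is not. For example, let a cycle through $v$ have zero travel time while the flow passes. Consider the walks that loop around it $n$ times, each with inflow $\tfrac1n$ times a fixed rate. Each puts the same load on the cycle edges, yet these inflows converge weakly to $0$. So a weak limit of exact decompositions loads the cycle edges with $0$ instead of the required amount; only the relaxed constraint $\ell(h)\le g$ survives weak limits. The ``no further splice at the optimum'' argument therefore has no optimum to act on.

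The second gap is in the ``moreover'' part. Even granting a minimizer of total cycle mass, nothing in that objective forces it to remove all or none of an old component at a given time. Your chaining argument shows that a violating component \emph{can} be absorbed completely. It does not show that a minimizer cannot end with a proper sub-component $C'\subsetneq C_n$ that is itself non-violating. Ruling that out needs a separate flow-conservation argument over $V_{C'}$ at time $t$: the absorbed cycle flow now travels on walks inside $V_{C'}$ at time $t$, and this must force an exit edge with positive flow or net absorption at $d$. That is exactly the step you flag as the main obstacle, and you do not carry it out. Note also that you never actually apply \Cref{thm: PureFlowDecomp}; the splicing re-derives its ``if'' direction by hand.

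The statement is meant as a direct corollary of \Cref{thm: PureFlowDecomp}. The missing idea is to apply that theorem once, globally, after deleting whole components.

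For $n\in N$, let $\mathfrak B_n\subseteq\mathfrak T_{C_n}$ be the measurable set of times at which (a) or (b) holds for $C_n$. Put $h'_c:=h_c\cdot 1_{\mathbb{R}\setminus\bigcup_{n:\,c\in C_n}\mathfrak B_n}$. Then
\[
\tilde g:=g-\sum_c\ell_c(h'_c)=\sum_w\ell_w(h_w)+\sum_c\ell_c(h_c-h'_c)
\]
is an $s,d$-flow with the same node balances as $g$, since zero-cycle flow changes no balance. The components of the decomposition $(h_w,h_c-h'_c)$ are the $C_n$ with activity sets $\mathfrak B_n$.

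Conditions (a) and (b) transfer to $\tilde g$:
\begin{itemize}
\item For (a), $r_d$ is unchanged.
\item For (b), an edge $e=(v,v')\notin E_{C_n}$ with $v\in V_{C_n}$ lies on no cycle active at $t$. Such a cycle would contain $v$ and hence belong to $C_n$. So $\tilde g_e(t)=g_e(t)>0$.
\end{itemize}

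\Cref{thm: PureFlowDecomp} therefore gives walk inflows $h'_w$ with $\tilde g=\sum_w\ell_w(h'_w)$, and $(h'_w,h'_c)$ decomposes $g$ with $h'_c\le h_c$. Components are node-disjoint and are kept or deleted as a whole. Hence the components of $h'$ active at $t$ are exactly the old components not violating at $t$. This gives $C'_{n'}=C_n$ and $\mathfrak T'_{C'_{n'}}=\mathfrak T_{C_n}\setminus\mathfrak B_n$, and neither (a) nor (b) holds there. No iteration, limit or optimization is needed.
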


\section{Properties of Locally Absolutely Continuous Functions}

In this section, we gather several insights into locally absolutely continuous functions $\func:\hori\to\R$, i.e.~functions that are  absolutely continuous on every closed interval $[a,b]\subseteq \hori$ in the sense of \cite[Definition 5.3.1]{Bogachev2007I}. 
These results follow almost immediately by their corresponding analogues for absolutely continuous functions on an interval. 
We refer to \cite{GHS24FD} for a proof. 
 \Cref{lem: PropAbsCon:ImageMeas} states that the image of a measurable set  under $\func$ is measurable as well. 
\Cref{lem: PropAbsCon:Lus} states that 
$\func$ fulfills Lusin's property, i.e.~takes null sets to null sets. 
\Cref{lem: PropAbsCon:Der} states that
the derivative of 
$\func$ exists almost everywhere. 
\Cref{lem: PropAbsCon:DerNonDec}
shows that for $\func$ non-decreasing, its derivative is larger or equal to $0$ in case of existence. 
\Cref{lem: PropAbsCon:Est} establishes an estimate for 
the measure of the image $\func(\mathfrak T)$ for any measurable set $\mathfrak T \in \mathcal{B}(\hori)$. 
Finally, \Cref{lem: PropAbsCon:Conca} states that the concatenation $\func\circ \hat{\func}$ of two locally absolutely continuous functions $\func,\hat{\func}$ is again locally absolutely continuous, provided that $\hat{\func}$ is non-decreasing. 
\begin{lemma}[\ourref{lem: PropAbsCon}]\label{lem: PropAbsCon} 
    For an arbitrary locally absolutely continuous function $\func:\hori\to\R$, the following statements are valid: 
\begin{thmparts} 
    \item\label[thmpart]{lem: PropAbsCon:ImageMeas} The image $\func(\mathfrak T)$ of a measurable set $\mathfrak T \in \mathcal{B}(\hori)$ under $\func$ is again (Borel-)measurable.
    \item  \label[thmpart]{lem: PropAbsCon:Lus} $\func$ fulfills Lusin's property, i.e.~$\sigma(\func(\mathfrak T)) = 0$ for all null sets $\mathfrak T$.
    \item\label[thmpart]{lem: PropAbsCon:Der} The derivative $\func'$ exists almost everywhere, any measurable extension $\Tilde{\func}'$ of $\func'$ to the whole real line is locally integrable and fulfills for all $a<b \in \R$  the equality 
    \begin{align*}
       \func(b) - \func(a) =  \int_{[a,b]}{\tilde{\func}'}\di\sigma . 
    \end{align*}
    In particular, this allows us to write simply  $\int_{[a,b]}\func'\di\sigma$ for the right hand side. 
    \item \label[thmpart]{lem: PropAbsCon:DerNonDec}  If $\func$ is non-decreasing, then $\func'(t)\geq 0$ holds if $\func'(t)$ exists. 
    \item \label[thmpart]{lem: PropAbsCon:Est} The following estimate holds for all $\mathfrak T \in \mathcal{B}(\hori)$: 
    \begin{align*}
        \sigma(\func(\mathfrak T))) \leq \int_{\mathfrak T}\abs{\func'}\di\sigma.
    \end{align*}
    \item \label[thmpart]{lem: PropAbsCon:Conca}   If $\hat{\func}:\hori\to \hori$ is a non-decreasing locally absolutely continuous function, then $\func \circ \hat{\func}$ is also locally absolutely continuous. 
\end{thmparts}
     
\end{lemma}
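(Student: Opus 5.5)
The statement is Lemma~\ref{lem: PropAbsCon}. Each part follows from the classical statement for absolutely continuous functions on a compact interval, applied on the intervals $[-n,n]$ (or $[a,b]$), combined with countable additivity. Throughout, write $\hori=\bigcup_{n\in\N}[-n,n]$ and note that $\func|_{[-n,n]}$ is absolutely continuous in the sense of \cite[Definition 5.3.1]{Bogachev2007I}.

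For \ref{lem: PropAbsCon:Lus}, take a null set $\mathfrak T$. Then $\func(\mathfrak T)=\bigcup_n\func(\mathfrak T\cap[-n,n])$. Each piece is null by the Lusin property of absolutely continuous functions on intervals, which follows from the usual $\varepsilon$-$\delta$ covering argument. A countable union of null sets is null.

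For \ref{lem: PropAbsCon:ImageMeas}, write a Borel set $\mathfrak T$ as a union of a countable union of compact sets $K_m$ and a null set $N$, using inner regularity of $\sigma$ on each $[-n,n]$. By continuity each $\func(K_m)$ is compact, and by \ref{lem: PropAbsCon:Lus} the set $\func(N)$ is null. Hence $\func(\mathfrak T)$ is Lebesgue measurable. Here one has to be careful: this gives Lebesgue and not literally Borel measurability. I expect this to be the main subtlety, and I would read ``measurable'' with respect to the completed $\sigma$-algebra, which is all that the later uses of this part require.

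For \ref{lem: PropAbsCon:Der}, a.e.\ differentiability and the fundamental theorem of calculus hold on each $[-n,n]$ \cite[Theorem 5.3.6]{Bogachev2007I}. Any measurable extension of $\func'$ differs from it only on a null set, so it is locally integrable and satisfies $\func(b)-\func(a)=\int_{[a,b]}\tilde\func'\di\sigma$.

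Part \ref{lem: PropAbsCon:DerNonDec} is immediate: every difference quotient of a non-decreasing function is nonnegative, so the limit is too.

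For \ref{lem: PropAbsCon:Est}, I would use the known estimate $\sigma(\func(\mathfrak T))\le\int_{\mathfrak T}|\func'|\di\sigma$ for absolutely continuous functions on compact intervals \cite[Lemma 7.10]{bruckner1997real}. Apply it to the disjoint pieces $\mathfrak T\cap(n-1,n]$ and $\mathfrak T\cap[-n,-n+1)$, then sum using subadditivity of $\sigma$ on the images.

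For \ref{lem: PropAbsCon:Conca}, fix $[a,b]$. Since $\hat\func$ is non-decreasing and continuous, $\hat\func([a,b])=[\hat\func(a),\hat\func(b)]$, and $\func$ is absolutely continuous there. Given $\varepsilon>0$, choose $\eta$ from the absolute continuity of $\func$ on this interval, and then $\delta$ from that of $\hat\func$ on $[a,b]$ with tolerance $\eta$. For disjoint subintervals $(a_k,b_k)$ of $[a,b]$ with $\sum_k(b_k-a_k)<\delta$, monotonicity makes the image intervals $(\hat\func(a_k),\hat\func(b_k))$ non-overlapping, and their total length is less than $\eta$. Therefore $\sum_k|\func(\hat\func(b_k))-\func(\hat\func(a_k))|<\varepsilon$. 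This is exactly where monotonicity of $\hat\func$ is needed: without it the image intervals could overlap.
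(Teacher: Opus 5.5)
Your reduction of each part to the corresponding statement on compact intervals is the route the paper takes: it only remarks that the parts follow from their interval analogues and defers to its companion paper. Your arguments for (b)--(f) are fine. One small point in (e): if you invoke the estimate in its pointwise form (differentiability at every point of the set, which is how the paper uses it), first discard the null set where $\func'$ does not exist. Its image is null by (b).

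On (a), your caution is more than a limitation of your method: the Borel claim is false for general locally absolutely continuous $\func$. Here is a counterexample.
\begin{itemize}
\item Let $C$ be the middle-thirds Cantor set and write $x\in C$ as $x=\sum_i 2a_i3^{-i}$ with $a_i\in\{0,1\}$. Set $\func(x):=\sum_i a_{2i}5^{-i}$ on $C$, extend $\func$ affinely on the gaps of $C$, and extend it constantly outside $[0,1]$.
\item This $\func$ is continuous. It has Lusin's property, since it maps $C$ onto a null Cantor set and is affine on each gap. It has bounded variation, since its piecewise affine interpolants at level $n$ have variation bounded by a multiple of $\sum_k 2^k5^{-k/2}<\infty$. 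By the Banach--Zarecki theorem it is therefore absolutely continuous.
\item However, $\func|_C$ is, up to homeomorphisms, the coordinate projection $2^{\N}\times 2^{\N}\to 2^{\N}$. Every analytic subset of $2^{\N}$, in particular a non-Borel one, is the projection of a $G_\delta$ set. Hence some Borel $B\subseteq C$ has a non-Borel image $\func(B)$.
\end{itemize}
Your argument correctly yields Lebesgue measurability, and that is the best one can get in this generality.

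Your claim that Lebesgue measurability is all the later uses need is not right. The paper repeatedly takes preimages of such images under exit and arrival time functions, e.g.\ $\arr_{\wa,2}^{-1}(\arr_{\wa,2}(\mathfrak T_{=0}))$ or $\exit_\arc^{-1}(\arr_{\wa,j}(\mathfrak D))$. Preimages of merely Lebesgue measurable sets can be non-measurable. Take $U$ open and dense with $\sigma(U\cap[0,1])<1$. Then $x\mapsto\int_0^x 1_U\,\mathrm{d}\sigma$ is strictly increasing and Lipschitz, and by (e) it maps the positive-measure set $P=[0,1]\setminus U$ onto a null set. A non-measurable $S\subseteq P$ is therefore sent to a null set whose preimage is $S$.

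The correct repair is to state (a) for non-decreasing $\func$. This covers every $\arr_{\wa,j}$ and $\exit_\arc$ to which the paper applies it, and there Borel measurability does hold:
\begin{itemize}
\item Let $I_k$ be the countably many maximal non-degenerate intervals of constancy of $\func$, with values $c_k$, and put $D:=\hori\setminus\bigcup_k I_k$.
\item Then $\func|_D$ is strictly increasing, and $\func(D)=\func(\hori)\setminus\{c_k\}_k$ is Borel.
\item For the inverse $g\colon\func(D)\to D$, which is monotone and hence Borel, we have $\func(\mathfrak T\cap D)=g^{-1}(\mathfrak T)$.
\item Finally, $\func(\mathfrak T\setminus D)\subseteq\{c_k\}_k$ is countable, so $\func(\mathfrak T)$ is Borel.
\end{itemize}
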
 

\section{List of Symbols}

{
	\newcommand{\losEntry}[2]{#1 & #2 \\}
	\renewcommand{\arraystretch}{1.2}
	
	\begin{longtable}{p{4cm}p{10cm}}
		Symbol				& Description \\\hline
		
		\hline\multicolumn{2}{l}{\textbf{General}}\\\hline
		
		\losEntry{$L(\hori)$}{space of integrable functions on $\hori$}
		\losEntry{$L_+(\hori)$}{non-negative functions in $L(\hori)$}
        \losEntry{$\seql[1][M][L(\hori)]$}{vectors  $(h_m)_{m\in M}\in (L(\hori))^M$ for an arbitrary countable set $M$ whose corresponding series $\sum_{m\in M}h_m$ converges absolutely in $L(\hori)$. }
        \losEntry{$\seql[\infty][M][L^\infty(\hori)]$}{vectors  $(h_m)_{m\in M}\in (L^\infty(\hori))^M$  whose entries are uniformly bounded, i.e.~$\sup_{m \in M}\norm{h_m}_{\infty}< \infty$. }
		\losEntry{$L^\infty(\hori)$}{space of measurable essentially bounded functions on $\hori$}
		\losEntry{$L^\infty_+(\hori)$}{non-negative functions in $L^\infty(\hori)$}

		\losEntry{$\sigma$}{the Lebesgue measure on $\hori$}
		\losEntry{$\mathfrak{T}, \mathfrak{D}$}{measurable subsets of $\hori$}

        \losEntry{$1_{\mathfrak T}$}{characteristic function of a (measurable) set~$\mathfrak T$, i.e.\ $1_{\mathfrak T}(t)=1$ if $t \in \mathfrak T$ and $0$, otherwise}

        \losEntry{$\dup{f}{g}$}{the bilinear form between the dual pair $(\seql[1][M][L(\hori)],\seql[\infty][M][L^\infty(\hori)])$, i.e.\ $\dup{f}{g} \coloneqq \sum_{m \in M}\int_\hori f_m\cdot g_m \sigma$}

		\hline\multicolumn{2}{l}{\textbf{Network}}\\\hline
		
		\losEntry{$G=(\GV,\GA)$}{directed graph with nodes $\GV$ and edges $\GA$}
		\losEntry{$\edgesFrom{v}$}{edge starting from node $v$}
		\losEntry{$\edgesTo{v}$}{edge ending at node $v$}
		\losEntry{$\source_i \in V$}{source node of commodity $i$}
		\losEntry{$\dest_i \in V$}{destination node of commodity $i$}
		\losEntry{$\hori$}{planning horizon}
		\losEntry{$t \in \hori$}{time}
		\losEntry{$\hat\wa=(\arc_1,\dots,\arc_k)$}{walk consisting of edges $\arc_j$}
		\losEntry{$\hat\wa[j]$}{$j$-th edge on walk $\wa$} 
        \losEntry{$\hat\wa_{\leq j}$ ($\hat\wa_{< j}$)}{subwalk of $\hat{\wa}$ up to and including (excluding) the $j$-th edge}
        \losEntry{$\hat\wa_{\geq j}$ ($\hat\wa_{> j}$)}{subwalk of $\hat{\wa}$ starting from the $j$-th ($j+1$-th) edge}
		\losEntry{$\hat\Routes_{v_1,v_2}$}{set of (finite) \stwalk[v_1][v_2]s (untyped)}
        \losEntry{$I$}{(finite) set of commodities}
        \losEntry{$\gamma_i > 0$}{value-of-time parameter of commodity~$i$}
		\losEntry{$\inflow_i \in L_+(\hori)$}{network inflow rate of commodity $i$}
        \losEntry{$\Routes_i:= \hat{\Routes}_{\source_i,\dest}\times \{i\}$}{set of commodity-typed \stwalk[\source_i][\dest_i]}
		\losEntry{$\Routes \coloneqq \bigcup_i\Routes_i$}{collection of all commodity-typed walks}
		\losEntry{$\Routes'$}{arbitrary collection of (finite)  walks}  
        \losEntry{$\SimpCyc$}{set of simple cycles}
        \losEntry{$\DestCyc$}{set of $\dest$-cycles (not necessarily simple!), i.e.\ cycles starting at the destination~$\dest$}
        \losEntry{$\hat{\Routes}^\dest$}{set of all finite walks starting at $\dest$}
		\losEntry{$\wir$}{set of feasible walk-inflows}

        \hline\multicolumn{2}{l}{\textbf{Initial Travel Times and Network Loading}}\\\hline
         \losEntry{$\trav(\cdot,\cdot): L_+(\hori)^\GA \times \hori \to \R_+$}{Initial flow-dependent travel time}       
	    \losEntry{$\trav_\arc(\g,t)$}{edge traversal time under $\g$ when entering edge $\arc$ at time $t$ (absolutely continuous)}
		\losEntry{$\exit_\arc(\g,t)$}{edge exit time when entering edge $\arc$ at time $t$ under $\g$: $\exit_\arc(\g,t) \coloneqq t+\trav_\arc(\g,t)$ (non-decreasing)}
		\losEntry{$\arr_{\wa,j}(\g,t)$}{arrival time in front of the $j$-th edge of walk $\wa$ when entering this walk at time $t$ under $h$}
        \losEntry{$\wir\subseteq  L_+(\hori)^\Routes$}{set of admissible walk inflows}
    	\losEntry{$\wttime_\wa(\g,t)$}{private costs for particles of commodity $i$ entering walk $\wa\in \Routes_i$ at time $t$ under $\g$ for any $i\in I$}
		\losEntry{$\Nl:\wir \to L_+(\hori)^\GA$}{Network loading operator \wrt (flow-dependent) travel time function $\trav(\cdot,\cdot)$}
		\losEntry{$h \in L_+(\hori)^{\Routes}$}{walk-inflow for the walk collection $\Routes$} 
		\losEntry{$\g = \Nl(h) \in L_+(\hori)^\GA$}{edge flow induced by $h$}

		\hline\multicolumn{2}{l}{\textbf{\Auto Network Loadings}}\\\hline  
        \losEntry{$\trav(\cdot): \hori \to \R_+$}{Some fixed flow-independent (but still time-dependent) travel time function}
		\losEntry{$\Nl[\trav(\cdot)]_{\Routes'}:\edom{\Routes'}[\trav(\cdot)] \to L_+(\hori)^\GA$}{\Auto network loading operator \wrt (flow-independent) travel time function $\trav(\cdot)$ and walk collection $\Routes'$}        
        \losEntry{$\edom{\Routes'}[\trav(\cdot)]$}{maximal set of walk inflow rates $h$ into $\Routes'$ whose induced \auto edge flow $\Nl[\trav(\cdot)]_{\Routes'}(h)$ exists} 
		\losEntry{$h \in L_+(\hori)^{\Routes'}$}{walk-inflow for the walk collection $\Routes'$} 
		\losEntry{$\g = \Nl[\trav(\cdot)](h) \in L(\hori)^\GA$}{\auto edge flow induced by $h$} 
		\losEntry{$\Nl[\trav(\cdot)]_{\wa,j}(h_\wa)$}{the flow on the $j$-th edge on walk $\wa$ induced by inflow into that walk under $ h_\wa$, without aggregating over multiple occurrences of that edge}
		\losEntry{$\Nl[\trav(\cdot)]_{\wa,\arc}(h_\wa)$}{the flow on edge $\arc$ on walk $\wa$ induced by inflow into that walk under $ h_\wa$,  aggregated over multiple occurrences of that edge}
        \losEntry{$\g^i = \Nl[\trav(\cdot)]_{\Routes_i}(h^i) \in L(\hori)^\GA$}{edge flow of commodity $i$ under $h\in L(\hori)^\Routes$ where $h^i:=(h_\wa)_{\wa \in \Routes_i}$}    
  
		\hline\multicolumn{2}{l}{\textbf{Tolls}}\\\hline

		\losEntry{$\prices: \hori \to \Rnn^\GA$}{time dependent edge-tolls}
		\losEntry{$\Pf^\prices_\wa(\g,t) $}{total toll along walk $\wa$ when entering at time $t$ under $\g$ and $\prices$, given by $\Pf^\prices_\wa(\g,t):= \sum_{j \leq|{\wa}|}  \prices_{{\wa}[j]}(\arr_{{\wa},j}(\g,t)) $}        
		
		\hline\multicolumn{2}{l}{\textbf{Special Notation}}\\\hline
		\losEntry{\ref*{opt: Master}}{the master problem \wrt $u$}
        \losEntry{$\mathfrak D_\wa^\g$}{defined \wrt~a representative of $\g \in L_+(\hori)^\GA$. Denotes the set of times at which a particle can enter the walk $\wa$ and arrive at all edges $\arc \in \wa$ only if there is inflow under $\g_\arc$.}
		
 	\end{longtable}
}

\bibliographystyle{plain}
\bibliography{master-bib}

\end{document}